\documentclass[USletter,10pt]{article}
\usepackage[dvips]{geometry}
\usepackage{graphicx,cancel}
\usepackage{graphics}
\usepackage{hyperref}
\DeclareGraphicsRule{.pstex}{eps}{*}{}
\usepackage{color, soul}
\usepackage{amsmath, amsthm, slashed}
\usepackage{amssymb}
\usepackage{rotating}
\usepackage{mathrsfs}
\usepackage{epsfig}
\usepackage{verbatim}
\usepackage[affil-it]{authblk}
\usepackage{accents}
\newtheorem{definition}{Definition}[section]
\newtheorem{theorem}{Theorem}[section]
 
\newtheorem*{conjecture*}{Conjecture}
\newtheorem{corollary}{Corollary}[section]
\newtheorem*{theorem*}{Theorem}
\newtheorem*{corollary*}{Corollary}
\newtheorem{proposition}{Proposition}[subsection]
\newtheorem{lemma}{Lemma}[subsection]
\newtheorem{remark}{Remark}[section]

\setcounter{introtheorem}{-1}

\newtheoremstyle{named}{}{}{\itshape}{}{\bfseries}{.}{.5em}{\thmnote{#3 }#1}
\theoremstyle{named}

\newcommand{\bea}{\begin{eqnarray}}
\newcommand{\eea}{\end{eqnarray}}
\newcommand{\beaa}{\begin{eqnarray*}}
\newcommand{\eeaa}{\end{eqnarray*}}
\newcommand{\bsplit}{\begin{split}}

\newcommand{\les}{\lesssim}

\newcommand{\lot}{\mbox{l.o.t.}}
\newcommand{\dual}{{\,^\star \mkern-2mu}}
\newcommand{\tr}{\mbox{tr}}
\newcommand{\nabb}{\nab\mkern-13mu /\,}

\newcommand{\nab}{\nabla}
\renewcommand{\div}{\mbox{div }}
\newcommand{\curl}{\mbox{curl }}
\newcommand{\divv}{\mbox{div}\mkern-19mu /\,\,\,\,}
\newcommand{\lapp}{\mbox{$\bigtriangleup  \mkern-13mu / \,$}}
\newcommand{\curll}{\mbox{curl}\mkern-19mu /\,\,\,\,}
\newcommand{\pr}{\partial}

\newcommand{\dk}{\mathfrak{d}}
\newcommand{\DD}{{\mathcal D}}
\newcommand{\DDs}{ \, \DD \hspace{-2.4pt}\dual    \mkern-16mu /}
\newcommand{\DDd}{ \, \DD \hspace{-2.4pt}    \mkern-8mu /}

\renewcommand{\a}{\alpha}
\renewcommand{\b}{\beta}
\newcommand{\de}{\delta}
\newcommand{\ep}{\epsilon}

\newcommand{\Si}{\Sigma}
\newcommand{\om}{\omega}

\renewcommand{\th}{\theta}

\newcommand{\ka}{\kappa}
\newcommand{\ze}{\zeta}
\newcommand{\Up}{\Upsilon}

\newcommand{\vphi}{{\varphi}}

\renewcommand{\aa}{\protect\underline{\a}}
\newcommand{\bb}{\protect\underline{\b}}
\newcommand{\omb}{{\underline{\om}}}
\newcommand{\Lb}{{\underline{L}}}

\newcommand{\chib}{\underline{\chi}}
\newcommand{\xib}{\underline{\xi}}
\newcommand{\etab}{\underline{\eta}}
\newcommand{\kab}{\underline{\kappa}}
\newcommand{\chih}{\widehat{\chi}}
\newcommand{\chibh}{\widehat{\chib}}

\newcommand{\bF}{\,^{(F)} \hspace{-2.2pt}\b}
\newcommand{\bbF}{\,^{(F)} \hspace{-2.2pt}\bb}
\newcommand{\rhoF}{\,^{(F)} \hspace{-2.2pt}\rho}
\newcommand{\sigmaF}{\,^{(F)} \hspace{-2.2pt}\sigma}

\newcommand{\MM}{{\mathcal M}}

\newcommand{\D}{{\bf D}}
\newcommand{\F}{{\bf F}}

\newcommand{\R}{{\bf R}}
\newcommand{\W}{{\bf W}}
\newcommand{\g}{{\bf g}}
\newcommand{\pf}{\frak{p}}
\newcommand{\qf}{\frak{q}}
\newcommand{\ff}{\frak{f}}
\newcommand{\ffb}{\protect\underline{\ff}}

\newcommand{\fb}{\underline{f}}

\newcommand{\gS}{ g \mkern-8.5mu/\,}

\newcommand{\hot}{\widehat{\otimes}}
\renewcommand{\c}{\cdot}

\newcommand{\Omegab}{\underline{\Omega}}
\newcommand{\vsi}{\varsigma}
\newcommand{\ov}{\overline}

\newcommand{\omblin}{\omb^{\tiny (1)}}
\newcommand{\kalin}{\ka^{\tiny (1)}}
\newcommand{\kablin}{\kab^{\tiny (1)}}
\newcommand{\rhoFlin}{\rhoF^{\tiny (1)}}
\newcommand{\sigmaFlin}{\sigmaF^{\tiny (1)}}
\newcommand{\rlin}{\rho^{\tiny (1)}}
\newcommand{\Klin}{K^{\tiny (1)}}
\newcommand{\flin}{f^{\tiny (1)}}
\newcommand{\Omegablin}{\Omegab^{\tiny (1)}}
\newcommand{\vsilin}{\vsi^{\tiny (1)}}
\newcommand{\trglin}{\tr g^{\tiny (1)}}
\newcommand{\slin}{\sigma^{\tiny (1)}}


\setcounter{tocdepth}{2}

\title{The linear stability of Reissner-Nordstr{\"o}m spacetime \\ for small charge}

\author{Elena Giorgi \footnote{Gravity Initiative, Princeton University, egiorgi@princeton.edu}}

\begin{document}

\maketitle

\begin{abstract}
In this paper, we prove the linear stability to gravitational and electromagnetic perturbations of the Reissner-Nordstr{\"o}m family of charged black holes with small charge. Solutions to the linearized Einstein-Maxwell equations around a Reissner-Nordstr\"om solution arising from regular initial data remain globally bounded on the black hole exterior and in fact decay to a linearized Kerr-Newman metric. We express the perturbations in geodesic outgoing null foliations, also known as Bondi gauge. To obtain decay of the solution, one must add a residual pure gauge solution which is proved to be itself controlled from initial data. Our results rely on decay statements for the Teukolsky system of spin $\pm2$ and spin $\pm1$ satisfied by gauge-invariant null-decomposed curvature components, obtained in earlier works. These decays are then exploited to obtain polynomial decay for all the remaining components of curvature, electromagnetic tensor and Ricci coefficients. In particular, the obtained decay is optimal in the sense that it is the one which is expected to hold in the non-linear stability problem. 
\end{abstract}

\section{Introduction}\label{intro}

  The problem of stability of   the Kerr family $(\MM, g_{M, a})$  in the context of the Einstein vacuum equations occupies a central stage  in mathematical General Relativity. Roughly speaking, the problem of stability of the Kerr metric consists in showing that all solutions of the Einstein vacuum equation 
  \bea\label{einstein-vacuum-equation}
  \operatorname{Ric}(g)=0
  \eea
  which are spacetime developments of initial data sets sufficiently close to a member of the Kerr family converge asymptotically to another member of the Kerr family.

  The problem in the generality hereby formulated remains open, but many interesting cases have been solved in the recent years. 
The only known proof of non-linear stability of the Einstein vacuum equation \eqref{einstein-vacuum-equation} with no symmetry assumption is the celebrated global stability of Minkowski spacetime \cite{Ch-Kl}. A recent work \cite{stabilitySchwarzschild} proves the non-linear stability of Schwarzschild spacetime under a restrictive symmetry class, which excludes rotating Kerr solutions as final state of the evolution. 
In the case of the Einstein equation with a positive cosmological constant, the global non-linear stability of Kerr-de Sitter spacetime for small angular momentum has been proved in \cite{hintz-vasy}. 

An important step to understand non-linear stability is proving linear stability, which means proving boundedness and decay for the linearization of the Einstein equations around the Kerr solution. The first proof of the linear stability of Schwarzschild spacetime to gravitational perturbations has been obtained in \cite{DHR}. Different results and proofs of the linear stability of the Schwarzschild spacetime have followed, using the original   Regge-Wheeler  approach  of metric perturbations \cite{mu-tao}, and using wave gauge \cite{pei-ken}, \cite{Johnson1}, \cite{Johnson2}. Steps towards the linear stabilty of Kerr solution have been made in the proof of boundedness and decay for solutions to the Teukolsky equations in Kerr in \cite{ma2} and \cite{TeukolskyDHR}. See also the recent \cite{linearkerr}, \cite{Kerr-lin2}.

In this paper we consider the above problem in the setting of Einstein-Maxwell equations for charged black holes.

   The problem of stability of charged black holes has as final goal the proof of non-linear stability of the Kerr-Newman family $(\MM, g_{M, Q, a})$ as solutions to the Einstein-Maxwell equations
\begin{equation} \label{Einsteineq}
\operatorname{Ric}(g)_{\mu\nu}=T(F)_{\mu\nu}:=2 F_{\mu \lambda} {F_\nu}^{\lambda} - \frac 1 2 g_{\mu\nu} F^{\alpha\beta} F_{\alpha\beta}
\end{equation}
where $F$ is a $2$-form satisfying the Maxwell equations 
\begin{equation} \label{Max}
D_{[\alpha} F_{\beta\gamma]}=0, \qquad D^\alpha F_{\alpha\beta}=0.
\end{equation}
In the case of the Einstein-Maxwell equations with a positive cosmological constant, the global non-linear stability of Kerr-Newman-de Sitter spacetime for small angular momentum has been proved in \cite{hintz}. 

The presence of a non-trivial right hand side in the Einstein equation \eqref{Einsteineq} and the Maxwell equations \eqref{Max} add new difficulties to the analysis of the problem, due to the coupling between the gravitational and the electromagnetic perturbations of a solution. This creates major difficulties in both the analysis of the equations and the choice of the gauge, for which the entanglement between the gravitational and the electromagnetic perturbations changes the structure of the estimates and the choice of gauge. 

An intermediate step towards the proof of non-linear stability of charged black holes is the linear stability of the simplest non-trivial solution of the Einstein-Maxwell equations, the Reissner-Nordstr{\"o}m spacetime.
   The Reissner-Nordstr{\"o}m family
  of spacetimes $(\mathcal{M},g_{M, Q})$ is most easily   expressed  in local coordinates in the form:
 \begin{equation}
  \label{RNintro}
 g_{M, Q}=-\left(1-\frac{2M}{r}+\frac{Q^2}{r^2}\right)dt^2 +\left(1-\frac{2M}{r}+\frac{Q^2}{r^2}\right)^{-1}dr^2 +r^2(d\theta^2+\sin^2\theta d\phi^2),
  \end{equation}
where $M$ and $Q$ are arbitrary parameters.  The parameters $M$ and $Q$ are interpreted as the mass and the charge of the source respectively. For physical reasons, it is normally assumed that $M > |Q|$, which excludes the case of naked singularity.  The Kerr-Newman metric $g_{M, Q, a}$ reduces to the Reissner-Nordstr\"om metric $g_{M, Q}$ for $a=0$, and Reissner-Nordstr{\"o}m reduces to Schwarzschild spacetime when $Q=0$.

 The Reissner-Nordstr{\"o}m spacetime $(\mathcal{M},g_{M, Q})$ is the simplest non-trivial solution to the
 Einstein-Maxwell equation and the unique electrovacuum spherically symmetric spacetime. It therefore plays for the Einstein-Maxwell equation the same role as the Schwarzschild metric for the Einstein vacuum equation \eqref{einstein-vacuum-equation}. It then makes sense to start the study of the stability of charged black holes from the linearized equations around the Reissner-Nordstr\"om metric. For the mode stability of the Reissner-Nordstr\"om spacetime, i.e. the proof of non exponentially growing mode solutions to the Einstein-Maxwell equation, see \cite{Moncrief1}, \cite{Moncrief2}, \cite{Moncrief3}, \cite{Chandra-RN}, \cite{Chandra-RN2}, \cite{Dotti2}. The mode stability does not imply boundedness of the linearized equations.

The purpose of the present paper is to resolve the linear stability problem to coupled gravitational and electromagnetic perturbations of the Reissner-Nordstr{\"o}m spacetime for small charge, i.e. the case of $|Q| \ll M$. This is the first result  of quantitative stability of black holes coupled with matter with no symmetry assumption, and is the electrovacuum analogue of the linear stability of the Schwarzschild solution as obtained in \cite{DHR}. In particular, the proof does not involve any decomposition in mode solutions, and is obtained in physical space.
   
 A first version of our main result can be stated as follows.
 
 \begin{theorem}(Linear stability of Reissner-Nordstr\"om: case $|Q|\ll M$ -  Rough version) All solutions to the linearized Einstein-Maxwell equations (in Bondi gauge) around Reissner-Nordstr{\"o}m with small charge arising from regular asymptotically flat initial data
 \begin{enumerate}
 \item {\bf remain uniformly bounded} on the exterior and
 \item {\bf decay} according to a specific peeling\footnote{The decay is consistent with the decay which is expected in the non-linear problem.} to a standard linearized Kerr-Newman solution 
 \end{enumerate}
 after adding a pure gauge solution which can itself be estimated by the size of the data.
 \end{theorem}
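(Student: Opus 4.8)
The plan is to follow the strategy of \cite{DHR} for the linear stability of Schwarzschild, adapted to cope with the coupling between gravitational and electromagnetic perturbations that is the distinctive feature of the Einstein--Maxwell system. First I would write the linearized Einstein--Maxwell equations in the geodesic outgoing null (Bondi) foliation, collecting all unknowns --- metric and Ricci coefficients, null-decomposed curvature components, null-decomposed electromagnetic components --- and organizing them into spin-weighted scalars governed by the linearized null structure equations, the linearized Bianchi identities and the linearized Maxwell equations. Within this system one isolates the gauge-invariant quantities: a spin $\pm 2$ scalar $\qf$ built from the extreme Weyl components $\alpha$, $\underline{\alpha}$ and corrected by the Maxwell field, together with a spin $\pm 1$ scalar $\ff$ built from the extreme null electromagnetic components, which satisfy a \emph{coupled} Teukolsky-type system. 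For these I invoke the boundedness and polynomial-decay statements --- established in the earlier works via a Chandrasekhar-type transformation to a Regge--Wheeler system and an $r^p$-weighted hierarchy --- so that $\qf$, $\ff$ and their controlled derivatives decay with prescribed $r$-weights.

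Next, the $\ell = 0$ and $\ell = 1$ spherical harmonic sectors must be treated separately, since there the gauge-invariant quantities carry no dynamics: the residual freedom on these modes is exactly the three-parameter linearized Kerr--Newman family (variations of mass, charge and angular momentum) together with pure gauge solutions. I would read off from the initial data the parameters of a specific linearized Kerr--Newman solution, subtract it, and show the remaining $\ell \le 1$ part is pure gauge and decays. For the $\ell \ge 2$ part, with decay of $\qf$ and $\ff$ in hand, I would then fix the Bondi gauge completely by normalizing the foliation on the initial data hypersurface and asymptotically (towards null infinity and towards the horizon), and construct the pure gauge solution that has to be added; estimating this pure gauge solution in terms of the size of the data is a delicate step that must be performed explicitly, since its ingredients are themselves obtained by integrating transport equations.

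The core of the proof is then to recover boundedness and optimal decay for \emph{all} remaining curvature, electromagnetic and Ricci-coefficient quantities by treating the linearized null structure equations, Bianchi identities and Maxwell equations as transport equations along the null directions --- supplemented by elliptic (Hodge) estimates on the spheres to restore the full tensorial quantities --- with $\qf$, $\ff$ and lower-order derived quantities entering as sources. This requires arranging the correct hierarchy: deciding which quantities are estimated by integration from the initial cone, which from null infinity, and which from the horizon, and propagating the sharp $r$-weights so that the peeling asserted in the theorem is reproduced. The main obstacle I anticipate is precisely the coupling: unlike the vacuum case, the transport equation for each gravitational quantity carries electromagnetic source terms and vice versa, so the hierarchy must be interleaved so that at every stage the electromagnetic quantities appearing on the right-hand side have already been controlled --- or are controlled simultaneously at the same level --- and the smallness assumption $|Q| \ll M$ is used to absorb the resulting error terms and to close the estimates.
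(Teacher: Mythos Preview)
Your proposal has a genuine gap in the treatment of the $\ell=1$ spherical harmonic sector. You assert that on $\ell\le 1$ ``the gauge-invariant quantities carry no dynamics'' and that the residual freedom is exhausted by linearized Kerr--Newman plus pure gauge. This is precisely the Schwarzschild picture, and it \emph{fails} for Reissner--Nordstr\"om. The extreme electromagnetic components are one-forms, so electromagnetic radiation is supported on $\ell\ge 1$; consequently the $\ell=1$ sector carries genuine dynamical content that is neither pure gauge nor Kerr--Newman. The paper identifies a gauge-invariant one-form $\tilde\beta = 2\,\rhoF\,\beta - 3\rho\,\bF$ (note: $\ff$ in the paper is a symmetric traceless $2$-tensor, not a spin $\pm 1$ object as you write) which satisfies a spin $\pm 1$ Teukolsky equation whose projection to $\ell=1$ is nontrivial; its Chandrasekhar transform $\pf$ satisfies a Fackerell--Ipser equation, and the decay of $\pf_{\ell=1}$ must be established by a separate wave-equation analysis. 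Without this ingredient you cannot close decay for $\bF$, $\bbF$, $\check\rhoF$, $\check\sigma$, $\ze$, etc.\ at $\ell=1$.

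A second, related gap concerns the mechanism for optimal decay of the gauge-dependent quantities. You describe integrating transport equations in a hierarchy with $|Q|\ll M$ used to absorb coupling errors, but the paper's decay argument for the gauge-dependent part does \emph{not} use smallness of $Q$ at all; smallness enters only in the gauge-invariant estimates for $\qf,\qf^\F,\pf$. What you are missing is that integration of the naive transport equations from initial data gives only boundedness (and in fact growth in $r$ for $\xib,\check\omb,\underline b,\check\Omegab$). The paper obtains optimal decay by introducing a second, ``far-away'' gauge normalization on a late ingoing cone and, crucially, by constructing new scalar quantities --- a charge aspect function $\check\nu = r^4(\divv\ze + 2\rhoF\check\rhoF)$, a mass-charge aspect function $\check\mu = r^3(\divv\ze + \check\rho - 4\rhoF\check\rhoF) - 2r^4\rhoF\divv\bF$, and a $2$-tensor $\Xi$ --- all of which satisfy $\nabb_4$-transport equations with \emph{integrable} right-hand sides. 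These are the analogues of the mass aspect function in vacuum, but their form is genuinely new and is what makes the $\ell=1$ and $\ell\ge 2$ hierarchies close with $u^{-1+\delta}$ decay. Your outline does not contain these structures, and without them the hierarchy you sketch will not produce the stated peeling.
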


The proof of linear stability roughly consists in two steps:
\begin{enumerate}
\item obtaining decay statements for gauge-invariant quantities,
\item choosing an appropriate gauge which allows to prove decay statements for the gauge-dependent quantities.
\end{enumerate}
In the first step, we need to identify the right gauge-invariant quantities which verify wave equations which can be analyzed and for which quantitative decay statements can be obtained. We completed the resolution of this part in our \cite{Giorgi4} and \cite{Giorgi5}. We summarize the results in Section \ref{summary-gauge-inv}.

The contribution of this paper is the resolution of the second step. Once we obtain decay for gauge-independent quantities from the first step, it is crucial to understand the structure of the equations in order to choose just the right gauge conditions to obtain decay for the gauge-dependent quantities. In particular, our goal here is to obtain \textit{optimal} decay for \textit{all} quantities, where with optimal we mean decay which would be consistent with bootstrap assumptions in the case of non-linear stability of Reissner-Nordstr\"om spacetime. In particular, we obtain the same peeling decay of the bootstrap assumptions in the non-linear stability of Schwarzschild \cite{stabilitySchwarzschild}.\footnote{Such decay is not necessarily sharp as one would expect for the given equations. } Having non-linear applications in mind, we aim to obtain decay for all components, since they would all show up in the non-linear terms of the wave equations. 

In order to obtain the optimal decay for all components, we choose a particular gauge ``far away'' in time and space. This choice is inspired by the gauge choice used in \cite{stabilitySchwarzschild}, which allows for optimal decay for all components. We adapt this choice to our case, where the coupling between gravitational and electromagnetic radiation makes the equations more involved, and isolating quantities which satisfy transport equations and allow us to prove decay is a difficult part of the problem. We summarize the main difficulties and the choice of gauge in Section \ref{section-choice-gauge}.

\subsection{Comparison with \cite{DHR} and \cite{stabilitySchwarzschild}}

We briefly compare here our result to the work of linear stability of Schwarzschild spacetime \cite{DHR} and non-linear stability of axially symmetric polarized perturbations of Schwarzschild \cite{stabilitySchwarzschild}. 

The main difference with the case of gravitational and electromagnetic perturbations of  Reissner-Nordstr\"om spacetime is in the behavior of the projection to the $\ell=1$ spherical harmonics of the perturbations. Because of the presence of the electromagnetic tensor, such projection is not exhausted by gauge solutions and change in angular momentum, like in Schwarzschild, but presents decay in the form of electromagnetic radiation. See Section \ref{section-l1}.

The choice of gauge here presented is  the outgoing null geodesic foliation as used in \cite{stabilitySchwarzschild}. In our intermediate proof of boundedness of the solution we make use of a normalization at the level of initial data which is evocative of the one used in \cite{DHR}. 
 On the other hand, when proving decay for the solution, we make use of a ``far-away'' normalization, which is inspired by the ``last slice'' GCM construction in \cite{stabilitySchwarzschild}. This choice of gauge allows for the derivation of the optimal decay for all quantities which is consistent with non-linear applications. See Section \ref{section-choice-gauge} and Section \ref{section-decay-gauge} for more details.

 \subsection{The gauge-invariant quantities and the Teukolsky system}\label{summary-gauge-inv}
In the proof of linear stability of Schwarzschild \cite{DHR}, the first step is the proof of boundedness and decay for the solution of the spin $\pm 2$ Teukolsky equation. 
These are   wave  equations  verified  by  the extreme null   components of the curvature tensor  which decouple,   to second order, from all other curvature components.

 In linear  theory,   the Teukolsky equation, combined with cleverly chosen gauge conditions,  allows one to prove  what is known as mode stability, i.e the lack of exponentially growing modes for all curvature components. Extensive literature by the physics community covers these results (see \cite{Chandra}, \cite{Chandra-RN}, \cite{Chandra-RN2}).     This weak version of stability is however far from sufficient to prove boundedness and decay of the solution;  one  needs instead    to derive    sufficiently strong decay estimates to hope to apply them in  the nonlinear  framework. 
 
 The first   quantitative  decay estimates  for the Teukolsky equation in Schwarzschild were obtained in \cite{DHR}. Their approach to derive   boundedness and quantitative decay for the Teukolsky equation relies on the following ingredients:
\begin{enumerate}
\item       A    map which takes a solution to the Teukolsky equation, verified by the null curvature component $\a$, to  a solution of   a wave equation which is simpler   to analyze. In the case of Schwarzschild,  this equation is known as the \textit{Regge-Wheeler  equation}.   The first such transformation was discovered by  Chandrasekhar \cite{Chandra}  in the context of mode decompositions and generalized by Wald \cite{Wald-T}. 
The physical-space version of this transformation first appears in \cite{DHR}.
\item  A vector field method    to  get   quantitative decay for the    new wave equation.
\item A method by which one can derive estimates  for  solutions to the Teukolsky equation from 
those of solutions to the transformed  Regge -Wheeler  equation.
\end{enumerate}

Similarly, in the case of charged black holes, a key step towards the proof of linear stability of Reissner-Nordstr{\"o}m spacetime is to find an analogue of the Teukolsky equation and understand the behavior of its solution. The gauge-independent quantities involved, analogous to $\alpha$ or $\underline{\alpha}$ in vacuum, as well as the structure of the equations that they verify, were identified in our earlier work \cite{Giorgi4}. We relied on the following ingredients:

\begin{enumerate}
\item Computations in physical space which derived the Teukolsky-type equations verified by the extreme null curvature components in Reissner-Nordstr{\"o}m spacetime. We obtained a system of two coupled Teukolsky-type equations.
\item       A    map which takes solutions to the above equations to  solutions of a coupled Regge-Wheeler-type  equations.
\item  A vectorfield method    to  get   quantitative decay for the system.  The analysis was strongly affected by the fact that we were dealing with a system, as opposed to a single equation. 
\item A method by which we derived estimates  for  solutions to the Teukolsky-type system from 
those of solutions to the transformed  Regge-Wheeler-type system.
\end{enumerate}

In \cite{Giorgi4}, we derived the spin $\pm2$ Teukolsky-type system verified by two gauge-independent curvature components of the gravitational and electromagnetic perturbations of Reissner-Nordstr{\"o}m. In addition to the Weyl curvature component $\alpha$, we introduce a new gauge-independent electromagnetic component $\mathfrak{f}$, which appears as a coupling term to the Teukolsky-type equation for $\alpha$. It is remarkable that such $\mathfrak{f}$ verifies itself a Teukolsky-type equation coupled back to $\alpha$, as shown in \cite{Giorgi4}. The quantities $\alpha$ and $\mathfrak{f}$ verify a system of the schematic form:
\begin{equation}\label{systemTeu}
\begin{cases}
\Box_{g_{M, Q}}\alpha+c_1(r) \underline{L}(\alpha)+c_2(r) L (\alpha)+ \tilde{V_1}(r) \alpha = Q \cdot c_3(r) L (\mathfrak{f}), \\
\Box_{g_{M, Q}}\mathfrak{f}+d_1(r) \underline{L}(\mathfrak{f})+d_2(r) L (\mathfrak{f})+ \tilde{V_2} (r)\mathfrak{f} = -Q \cdot d_3(r)\underline{L} (\alpha)
\end{cases}
\end{equation} 
where $L$ and $\underline{L}$ are outgoing and ingoing null directions, $c_i$, $d_i$ and $\tilde{V}_i$ are smooth functions of the radial function $r$, and $Q$ is the charge of the spacetime. The presence of the first order terms $\underline{L}(\alpha),  L (\alpha)$ and $\underline{L}(\mathfrak{f}), L (\mathfrak{f})$ in \eqref{systemTeu} prevents one from getting quantitative estimates to the system \eqref{systemTeu} directly.

 In order to derive appropriate decay estimates the system, new quantities $\mathfrak{q}$ and $\mathfrak{q}^\F$ were defined in \cite{Giorgi4}, at the level of two and one derivative respectively of $\alpha$ and $\mathfrak{f}$. They correspond to physical space versions of the Chandrasekhar transformations mentioned earlier. This transformation has the remarkable property of turning the system of Teukolsky type equations into a system of Regge-Wheeler-type equations. More precisely, it transforms the system \eqref{systemTeu} into the following schematic system:
 \begin{equation}\label{systemRW}
\begin{cases}
\Box_{g_{M, Q}}\mathfrak{q}+ V_1(r)\mathfrak{q} = Q \cdot a_1(r) D^{\leq 2}\mathfrak{q}^\F, \\
\Box_{g_{M, Q}}\mathfrak{q}^\F+ V_2(r) \mathfrak{q}^\F = Q \cdot a_2(r) \mathfrak{q}
\end{cases}
\end{equation}
where $D^{\leq 2}\mathfrak{q}^\F$ denotes a linear expression in terms of up to two derivative of $\mathfrak{q}^\F$, and $V_i$, $a_i$ are smooth functions of the radial coordinate $r$.
In the case of zero charge, system \eqref{systemRW} reduces to the first equation, i.e. the Regge-Wheeler equation analyzed in \cite{DHR}.

Boundedness and decay for $\mathfrak{q}$ and $\mathfrak{q}^\F$, and therefore for $\alpha$ and $\mathfrak{f}$, was obtained in \cite{Giorgi4}. We derived estimates for the system \eqref{systemRW}, by making use of the smallness of the charge to absorb the right hand side through a combined estimate for the two equations. Particularly problematic is the absorption in the trapping region, where the Morawetz bulk terms in the estimates are degenerate. The specific structure of the terms appearing in the system is exploited in order to obtain cancellation in this region.

Observe that the quantities $(\alpha, \mathfrak{f})$ are symmetric-traceless two-tensors transporting gravitational radiation, and therefore supported in $\ell\geq 2$ spherical harmonics.

\subsection{New feature in Reissner-Nordstr{\"o}m: the projection to $\ell=1$ spherical harmonics}\label{section-l1}

 In the linear stability of the Schwarzschild spacetime to gravitational perturbations \cite{DHR}, the decay for $\a$ implies specific decay estimates for all the other curvature components and Ricci coefficients supported in $\ell\geq2$ spherical harmonics, once a gauge condition is chosen. 
 In addition, an intermediate step of the proof is the following theorem: solutions of the linearized gravity system around Schwarzschild supported only on $\ell=0,1$ spherical harmonics are a linearized Kerr plus a pure gauge solution\footnote{Pure gauge solutions correspond to coordinate transformations, while linearized Kerr solutions correspond to a change in mass and angular momentum.}.

In the setting of linear stability of Reissner-Nordstr{\"o}m to coupled gravitational and electromagnetic perturbations, we expect to have electromagnetic radiation supported in $\ell \geq 1$ spherical modes, as for solutions to the Maxwell equations in Schwarzschild (see \cite{BlueMax}, \cite{Federico}).

On the other hand, the decay for the two tensors $\a$ and $\ff$ obtained in \cite{Giorgi4} will not give any decay information about the $\ell=1$ spherical mode of the perturbations. It turns out that, in the case of solutions to the linearized gravitational and electromagnetic perturbations around Reissner-Nordstr{\"o}m spacetime, the projection to the $\ell=0,1$ spherical harmonics is not exhausted by the linearized Kerr-Newman\footnote{Linearized Kerr-Newman solutions correspond to a change in mass, charge and angular momentum. See Definition \ref{def:kerrl}. } and the pure gauge solutions. Indeed, the presence of the Maxwell equations involving the extreme curvature component of the electromagnetic tensor, which is a one-form, transports electromagnetic radiation supported in $\ell\geq 1$ spherical harmonics. The gauge-independent quantities involved in the electromagnetic radiation in Reissner-Nordstr{\"o}m were identified in our earlier work \cite{Giorgi5}.

In \cite{Giorgi5}, we identified a new gauge-independent one-form $\tilde{\beta}$, which is a mixed curvature and electromagnetic component.  Such $\tilde{\beta}$ verifies a spin $\pm1$ Teukolsky-type equation, with non-trivial right hand side, which can be schematically written as 
\bea\label{Teukolksy-spin1}
\Box_{g_{M, Q}} \tilde{\beta}+c_1(r) L(\tilde{\beta})+ c_2(r) \underline{L}(\tilde{\beta})+ V_1(r) \tilde{\beta}= R.H.S.
\eea
where $c_i$, $V_1$ are smooth functions of $r$, and the right hand side involves curvature components, electromagnetic components and Ricci coefficients. 

By applying the Chandrasekhar transformation, we obtain a derived quantity $\mathfrak{p}$ at the level of one derivative of $\tilde{\beta}$. Similar physical space versions of the Chandrasekhar transformation were introduced in \cite{Federico} in the case of Schwarzschild. This transformation has the remarkable property of turning the Teukolsky-type equation \eqref{Teukolksy-spin1} into a Fackerell-Ipser-type  equation, with right hand side which vanishes in $\ell=1$ spherical harmonics. Indeed, $\pf$ verifies an equation of the schematic form:
 \begin{equation}\label{systeml1}
\Box_{g_{M, Q}}\mathfrak{p}+ V\mathfrak{p} =  Q \cdot \divv\qf^\F
\end{equation}
where the right hand side is supported in $\ell\geq 2$ spherical harmonics. 

Projecting equation \eqref{systeml1} in $\ell=1$ spherical harmonics, we obtain a scalar wave equation with vanishing right hand side, for which techniques developed in \cite{redshift}, \cite{lectures}, \cite{rp} can be straightforwardly applied. This proves boundedness and decay for the projection of $\mathfrak{p}$, and therefore $\tilde{\beta}$, to the $\ell=1$ spherical mode, in Reissner-Nordstr{\"o}m spacetimes. 
The boundedness and decay for the projection of $\tilde{\b}$ into $\ell\geq 2$ is implied by using the result for the spin $\pm2$ Teukolsky equation in \cite{Giorgi4}, for small charge.

The Main Theorems in  \cite{Giorgi4} and \cite{Giorgi5} provide decay for the three quantities $\alpha$, $\mathfrak{f}$, $\tilde{\beta}$, and their negative spin equivalent $\aa$, $\underline{\ff}$ and $\tilde{\bb}$. Since these quantities are gauge-independent, the above decay estimates do not depend on the choice of gauge.

The aim of this paper is to show that the decay rates for these gauge-invariant quantities imply boundedness and specific decay rates for all the remaining quantities in the linear stability for coupled gravitational and electromagnetic perturbations of Reissner-Nordstr{\"o}m spacetime for small charge. Observe that the proof of decay for the gauge-dependent quantities here obtained does not need assumptions on the smallness of the charge \textit{once the gauge-invariant quantities are controlled}. Since the estimates for $\qf$, $\qf^\F$ and $\pf$ in \cite{Giorgi4} and \cite{Giorgi5} are only valid for $|Q| \ll M$, the final proof of the linear stability of Reissner-Nordstr\"om in this paper only holds for arbitrarily small charge. Nevertheless, could the assumption on the smallness of the charge be relaxed for the estimates for $\qf$, $\qf^\F$ and $\pf$, this paper could be straightforwardly applied to a different range of charge.

 The optimal decay for the gauge-dependent quantities we are aiming to can be obtained only through a specific choice of gauge. Such a choice of gauge is a crucial step, and will be discussed in the next section.

\subsection{Choice of gauge}\label{section-choice-gauge}
In the linear stability of Schwarzschild \cite{DHR}, the perturbations of the metric are restricted to the form of double null gauge. This choice still allows for residual gauge freedom which in linear theory appears as the existence of pure gauge solutions. Those are obtained from linearizing the families of metrics from applying coordinate transformations which preserve the double null gauge of the metric.

In this work we use the Bondi gauge, inspired by the recent work on the non-linear stability of Schwarzschild \cite{stabilitySchwarzschild}. In particular, we consider metric perturbations on the outgoing null geodesic gauge, of the form
\beaa
  \g&=&-2 \vsi du dr+\vsi^2\Omegab du^2+\slashed{g}_{AB} \left(d\th^A-\frac 1 2 \vsi \underline{b}^A du \right) \left(d\th^B-\frac 1 2\vsi \underline{b}^B du   \right)
  \eeaa
As in \cite{DHR}, this choice still allows for residual gauge freedom, corresponding to pure gauge solutions. 
The residual gauge freedom allows us to further impose gauge conditions which are fundamental for the derivation of the specific decay rates of the gauge-dependent quantities we want to achieve. 

We make two choices of gauge-normalization: an \textit{initial data normalization} and a \textit{far-away normalization}. The motivation for the two choices of gauge-normalization is different, and can be explained as follows.

\subsubsection{The initial data normalization and proof of boundedness}

The initial data normalization consists of normalizing the solution on initial data by adding an appropriate pure gauge solution which is explicitly computable from the original solution's initial data. This normalization allows to obtain boundedness statements for the solution which is initial data normalized. Those are obtained through integration forward from initial data.

Using the integration forward from initial data, there are components of the solution which do not decay in $r$, and this would be a major obstacle in extending this result to the non-linear case. We call this type of decay for the gauge-dependent components \textit{weak decay}.

We point out here that because of the choice of Bondi gauge, which does not cover the horizon, the estimates obtained through integration forward from the initial ingoing null cone cannot be obtained in terms of inverse powers of $u$ close to the horizon. In particular, those estimates are only valid in a region $\{ r\geq r_1\} \cap \{ u_0 \leq u \leq u_1\}$ where $r_1 > r_{\mathcal{H}}$, $\{u=u_0\}$ is the outgoing initial cone and $u_1$ is the $u$-coordinate of the sphere of intersection between $\underline{C}_0$ and $\{ r=r_1\}$.

\subsubsection{The far-away normalization and proof of decay}

Suppose now that a solution to the linearized Einstein-Maxwell equation around Reissner-Nordstr\"om is bounded. In this case, we can define a normalization far-away which is the correct one to obtain the optimal decay we want to achieve for each component of the solution. We call this type of decay for the gauge-dependent components \textit{strong decay}.  

This far-away normalization is inspired by the gauge choice done in \cite{stabilitySchwarzschild}. More precisely, the normalization is realized on an ingoing null hypersurface, with big $r$ and $u$, obtained as the past incoming null cone of a sphere $S_{U, R}$. We should think of this null hypersurface as a bounded version of null infinity, from which optimal decay for all the components can be derived in the past of it. The hard of the matter in the proof of decay is to show that those decays are independent of the chosen far-away position of the null hypersurface. 

As before the estimates in terms of inverse powers of $u$ cannot be obtained close to the horizon, and therefore the integration backward holds up to a timelike hypersurface $\{ r=r_2\}$ with $r_2 > r_{\mathcal{H}}$. 
In the region close to the horizon, one has to integrate forward and obtain decay in $v$-coordinate. 

We now explain in more detail how one obtains the boundedness and the decay using the initial-data and the far-away normalization respectively.

\subsection{Decay of the gauge-dependent components}\label{section-decay-gauge}
Using the initial data normalization, we obtain by construction that some components of the solution do not decay, or even grow, in $r$. More precisely, using initial data normalization we obtain for instance\footnote{See Section \ref{linearized-equations-chapter} for the definition of the quantities.} the following weak decay:
\beaa
|\xib|+|\check{\omb}|&\leq C  ,  \qquad |\underline{b}|+|\check{\Omegab}|\leq C  r , \qquad | \check{\tr_\gamma \slashed{g}}| \leq C r^2 
\eeaa
where $r$ is the coordinate of the underlying Reissner-Nordstr\"om metric. 
The growth in $r$ of these components is intrinsic to the initial data normalization. Indeed, the transport equation\footnote{Equation  \eqref{nabb-4-check-omb-ze-eta}.} for $\check{\omb}$ does not improve in powers of $r$ in the integration forward, so any integration forward starting from a bounded region of the spacetime will not give any decay in $r$. Similarly for $\xib$.\footnote{This issue is present also in the linear stability of Schwarzschild spacetime in \cite{DHR}, where the component $\omblin$ does not decay in $r$ for the same reason. }

Strictly speaking, in linear theory this would not be an issue: it just proves a weaker result. On the other hand, if we consider the linearization of the Einstein-Maxwell equations as a first step towards the understanding of the non-linear stability of black holes, we should obtain a decay which is consistent with bootstrap assumptions in the non-linear case. Having growth in $r$ in some components will not allow to close the analysis of the non-linear terms in the wave equations, and in the remaining decay estimates.

In order to obtain the strong decay for all the components of the solution, we define the  normalization in the far-away hypersurface, inspired by the construction of the ``last slice'' in \cite{stabilitySchwarzschild} and their choice of gauge. In all spherical harmonics, the gauge is chosen so that the traces of the two null second fundamental forms vanish on said far-away hypersurface, as in \cite{stabilitySchwarzschild}. 

In addition, we define two new scalar functions, called \textit{charge aspect function} and \textit{mass-charge aspect function}, respectively denoted $\check{\nu}$ (defined in \eqref{definition-of-nu}) and $\check{\mu}$ (defined in \eqref{definition-of-mu}), which generalize the properties of the known mass-aspect function in the case of the Einstein vacuum equation. Our generalization is essential to obtain the optimal decay for all the components of the solution, and are needed to obtain decay which is independent on the chosen far-away hypersurface.  These quantities are related to the Hawking mass and the quasi local charge of the spacetime and verify good transport equations with integrable right hand sides\footnote{Here $\partial_r$ with respect to coordinates $(u, r, \th, \vphi)$ is outgoing null. See \eqref{null-frame-RN}. }:
\beaa
\partial_r(\check{\nu}_{\ell=1})&=& 0 \\
\partial_r(\check{\mu})&=& O(r^{-1-\de} u^{-1+\de})
\eeaa
In order to make use of these integrable transport equations, we impose these functions to vanish along the ``last slice". They can be chosen to vanish there as solutions to an elliptic system on the far-away hypersurface.

The strong decay can be divided into an optimal decay in $r$ (which would be relevant in regions far-away in the spacetime) and an optimal decay in $u$ (relevant in regions far in the future). In particular, our goal is to prove two different sets of estimates, which have different decay rates in $u$ and $r$, and the estimates with better decay in $r$ have necessarily worse weight in $u$, and vice-versa. If we only allow for decay in $u$ as $u^{-1/2}$, then the optimal decay in $r$ is easier to obtain, because there are quite a few transport equations which are integrable in $r$ from far-away. For example, the transport equation for $\chih$ \eqref{nabb-4-chih-ze-eta} can be written as:
\beaa
\partial_r(r^2 \chih)&=& -r^2\a
\eeaa
  Since $\alpha$ decays as $r^{-3-\de} u^{-1/2+\de}$, we see that the right hand side is integrable in $r$. On the other hand, $\a$ only decays as $r^{-2-\de} u^{-1+\de}$, which would give a non-integrable right hand side in the above transport equation. 
  
  To circumvent this difficulty, we identify a quantity $\Xi$ (defined in \eqref{definition-Xi}) which verifies a transport equation with integrable right hand side:
  \beaa
  \partial_r(\Xi)&=& O(r^{-1-\de} u^{-1+\de})
  \eeaa
   The quantity $\Xi$ is a combination of curvature, electromagnetic and Ricci coefficient terms and generalizes a quantity (also denoted $\Xi$) in \cite{stabilitySchwarzschild} which serves the same purpose. Observe that, as opposed to the charge aspect function or the mass-charge aspect function, $\Xi$ decays fast enough along the``last slice", so that we do not need to impose its vanishing along it. 
  
  Combining the decay of the above quantities we can prove that all the remaining components verify the optimal decay in $r$ and $u$ which is consistent with non-linear applications. In particular, we obtain the same decay as the bootstrap assumptions used in \cite{stabilitySchwarzschild} in the case of Schwarzschild.  More precisely, using the far-away normalization we obtain for instance the following strong decay (see Theorem \ref{linear-stability} for the complete decay rates for all the components):
\beaa
|\xib|+|\check{\omb}|&\leq C r^{-1} u^{-1+\de},  \qquad |\underline{b}|+|\check{\Omegab}|+| \check{\tr_\gamma \slashed{g}}| \leq C   u^{-1+\de}
\eeaa
where $r$ and $u$ are the coordinates of the underlying Reissner-Nordstr\"om metric. 
Comparing with the above decay rate, we see that the far-away normalization significantly improve the rate of decay and is the appropriate result to applications for non-linear theory. For a more detailed outline of the proof of the main theorem, see Section \ref{statement-theorem}.

We now briefly describe how to combine the proof of boundedness of the initial-data normalization with the proof of decay of the far-away normalization. 

Recall that in the construction of the far-away normalization we need to know that the solution is bounded up to the far-away sphere. More precisely, let $S_{U, R}$ be the far-away sphere and consider the intersection of the past outgoing null cone of $S_{U, R}$, i.e. $\{ u=U\}$, and the incoming initial null cone $\underline{C}_0$. Such intersection is a sphere with a certain radius $\tilde{r}$. Consider now a radius $r_1$ such that $r_{\mathcal{H}} < r_1 < \tilde{r}$. Applying the proof of boundedness for such $r_1$ integrating forward from initial data implies boundedness of the initial-data normalized solution in the region $\{ r\geq r_1\} \cap \{ u_0 \leq u \leq u_1\}$ where $\{u=u_0\}$ is the outgoing initial cone and $u_1$ is the $u$-coordinate of the sphere of intersection between $\underline{C}_0$ and $\{ r=r_1\}$. By construction this region contains the far-away sphere $S_{U, R}$, and therefore the solution is bounded there.

By using the far-away normalized solution and the proof of decay, integrating backward one obtains estimates in terms of inverse powers of $u$ up to a hypersurface $\{r=r_2\}$ for $\tilde{r} < r_2 < R$, i.e. in the region  $\{ r \geq r_2\} \cap \{ u \geq u_0 \} \cap I^{-}(S_{U, R})$.  Finally, in the region close to the horizon the estimates ought to be expressed in terms of the advanced time $v$ in the ingoing Eddington Finkelstein coordinates. As in \cite{stabilitySchwarzschild}, the decay obtained in the region $\{ r \geq r_2\}$ can be translated into a decay in $v$ along the time-like hypersurface $\{ r = r_2\}$, and integrating forward from this hypersurface along the ingoing null direction implies decay as $v^{-1+\de}$ for all the components. 
\begin{figure}[h]
\centering
\includegraphics[width=7cm]{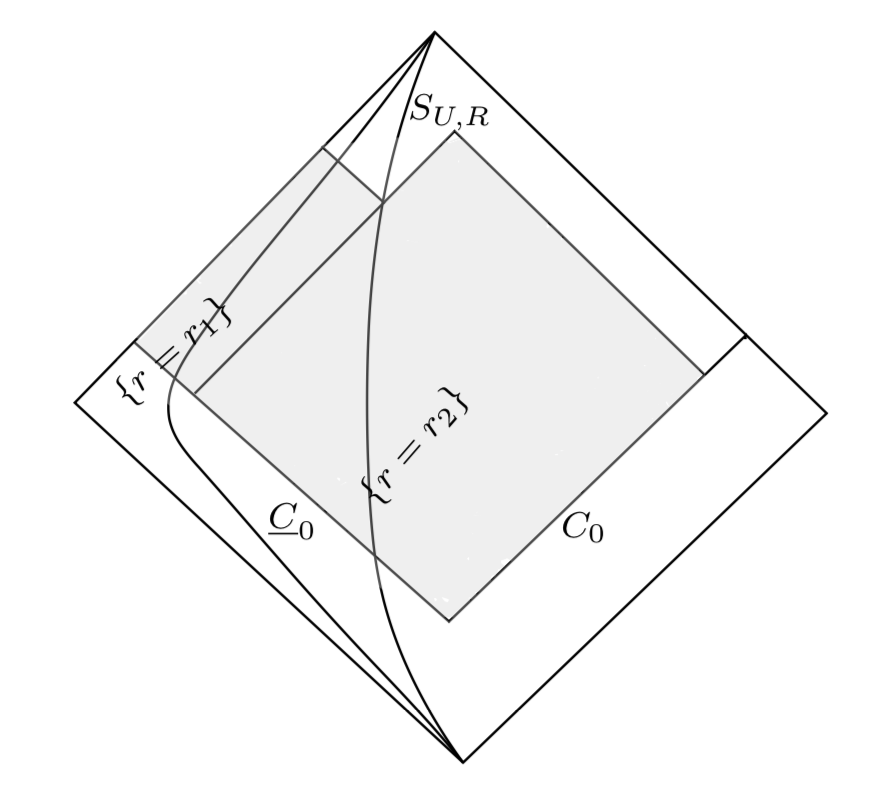}
\caption{Penrose diagram of Reissner-Nordstr{\"o}m spacetime. For every $S_{U, R}$, the gray area describes the area where the estimates are obtained.}
\end{figure}

Since the estimates in the proof of decay do not depend on $U$ and $R$, in the above construction $U$ and $R$ can be taken to be arbitrarily large. As $U$ and $R$ vary to arbitrarily large values, the regions constructed as above cover the whole exterior region in the future of $C_0 \cup \underline{C}_0$, and therefore the above proves the validity of the estimates in the whole exterior region. 
See Section \ref{conclusions} for more details about the conclusion of the proof.

We also point out here that the proof of boundedness of the solution using the initial-data normalization could have been avoided in the final proof of decay by performing a bootstrap argument. By choosing the maximal $U$ and $R$ for which the solution was assumed to be bounded, and then adding the far-away normalization, one can prove that the solution was bounded (and actually decays) in the full exterior. On the other hand, the proof of boundedness here presented has the advantage of making clear the necessity of the far-away normalization in order to obtain a decay consistent with non-linear applications, and also helps to draw a clear analogy with \cite{DHR}.

\subsection{The identification of the final parameters $M$, $Q$, and $a$}

The decay described above corresponds to the gravitational and electromagnetic radiation emitted by the perturbation of the Reissner-Nordstr\"om black hole as it settles down to a member of the Kerr-Newman family with final parameters $M$, $Q$, and $a$. 

The projection to the $\ell=0$ spherical harmonics of the perturbation contains the spherically symmetric solutions, which correspond to other Reissner-Nordstr\"om spacetimes with modified mass $M$ and charge $Q$. 
The projection to the $\ell=1$ spherical harmonics contains instead the axially symmetric solutions, which correspond to a slowly rotating Kerr-Newman spacetime with small angular momentum $a$. These special solutions are called linearized Kerr-Newman solutions, as in Definition \ref{def:kerrl}.

This change in parameters can be identified in linear theory from the projection to the $\ell=0$ and $\ell=1$ spherical harmonics at the initial data, similarly to \cite{DHR}.

  \subsection{Outline}
  
  We outline here the structure of the paper.
  
  In Section \ref{VEeqDNGsec}, we derive the general form of the Einstein-Maxwell equations written with respect to a local null frame.  In Section \ref{double-null-gauge}, we introduce our choice of gauge, the Bondi gauge, to be used in the linear perturbations of Reissner-Nordstr\"om spacetime. 
  
  In Section \ref{RN-chapter}, we describe the Reissner-Nordstr\"om spacetime, which is the background solution around which we perform the gravitational and electromagnetic perturbations. 
   In Section \ref{linearized-equations-chapter}, we derive the linearized Einstein-Maxwell equations around the Reissner-Nordstr\"om solution. We denote a linear gravitational and electromagnetic perturbation of Reissner-Nordstr\"om a set of components which is a solution to those equations. 
  In Section \ref{sec:specialsol}, we present special solutions to the linearized Einstein-Maxwell equations around the Reissner-Nordstr\"om: pure gauge solutions and linearized Kerr-Newman solutions. 
  
  In Section \ref{statement-section}, we state the precise version of our main Theorem.
   In Section \ref{Teukolsky-eqs}, we summarize the results on the boundedness and decay for the solutions to the Teukolsky system as proved in \cite{Giorgi4} and \cite{Giorgi5}.

  In Section \ref{initial-data-well-posedness-chapter}, we present the characteristic initial problem and the well-posedness of the linearized Einstein-Maxwell equations. 
 In Section \ref{gauge-normalized-solutions-chapter} we describe the two gauge normalization which we will use and the final Kerr-Newman parameters. 
  
  In Section \ref{chapter-linear-stability}, we prove boundedness of the solution using the initial data normalization and in Section \ref{decay-chapter} we finally prove decay for all the gauge-dependent components of the solution, therefore obtaining the proof of quantitative linear stability.

  In the Appendix \ref{appendix}, we present explicit computations.

  \bigskip

{\bf{Acknowledgements}}  The author would like to thank Sergiu Klainerman and Mu-Tao Wang for their guidance and support. The author is also grateful to J\'er\'emie Szeftel, Pei-Ken Hung and Federico Pasqualotto for helpful discussions. The author is grateful to the anonymous referee for several helpful suggestions.

\section{The Einstein-Maxwell equations in null frames}
\label{VEeqDNGsec}
In this section, we derive the general form of the Einstein-Maxwell equations \eqref{Einsteineq} and \eqref{Max} written with respect to a local null frame attached to a general foliation of a Lorentzian manifold. In this section, we do not restrict to a specific form of the metric and derive the main equations in their full generality. It is these equations we shall linearize in Section \ref{linearized-equations-chapter} to obtain the equations for a linear gravitational and electromagnetic perturbation of a spacetime. 

We begin in Section \ref{sec:genmfld} with preliminaries, recalling the notion of local null frame and tensor algebra. In Section \ref{sec:rccc}, we define Ricci coefficients, curvature and electromagnetic components of a solution to the Einstein-Maxwell equations. Finally, we present the Einstein-Maxwell equations in Section \ref{nseq}. 

\subsection{Preliminaries} \label{sec:genmfld}
Let $(\MM, \g)$ be a $3+1$-dimensional Lorentzian manifold, and let $\D$ be the covariant derivative associated to $\g$.

Suppose that the the Lorentzian manifold $(\MM, \g)$ can be foliated by spacelike $2$-surfaces $(S,\slashed{g}) $, where $\slashed{g}$ is the pullback of the metric $\g$ to $S$.  To each point of $\MM$, we can associate a null frame $\mathscr{N}=\left\{e_A, e_3, e_4\right\}$, with $\{ e_A \}_{A=1,2}$ being tangent vectors to $(S,\slashed{g}) $, such that the following relations hold:
\bea\label{relations-null-frame}
\begin{split}
\g\left(e_3,e_3\right) &= 0, \qquad \g\left(e_4,
e_4 \right) = 0, \qquad \g\left(e_3,e_4\right) = -2\\ 
\g\left(e_3,
e_A \right) &= 0 \ \ \ , \ \ \  \g\left(e_4, e_A\right) = 0 \ \ \ , \ \ \ \g \left(e_A, e_B \right) = \slashed{g}_{AB} \, .
\end{split}
\eea
The surfaces $S$ will be identified in Chapter \ref{double-null-gauge} as intersections of two specified hypersurfaces. Similarly, after a choice of gauge, the frame $\mathscr{N}$ can be identified explicitly in terms of coordinates. See Section \ref{local-bondi} for the identification of the null frame in the Bondi gauge.

In the following section, we will express the Ricci coefficients, curvature and electromagnetic components with respect to a null frame $\mathscr{N}$ associated to a foliation of surfaces $S$. The objects we shall define are therefore $S$-tangent tensors. We recall here the standard notations for operations on $S$-tangent tensors. (See \cite{Ch-Kl} and \cite{DHR})

We recall the definition of the projected covariant derivatives and the angular operator on $S$-tensors. We denote $\nabb_3=\nabb_{e_3}$ and $\nabb_4=\nabb_{e_4}$ the projection to $S$ of the spacetime covariant derivatives $\D_{e_3}$ and $\D_{e_4}$ respectively. We denote by $\underline{D}\chi$ and $D\chi$ the projected Lie derivative with respect to $e_3$ and $e_4$. The relations between them are the following:
\bea\label{covariant-Lie}
\begin{split}
Df&=\nabb_4(f), \\
D\xi_A&=\nabb_4 \xi_A +\chi_{AB}\xi^B, \\
D\th_{AB}&=\nabb_4 \th_{AB}+\chi_{AC} {\th^C}_{B}+\chi_{BC} {\th_A}^C
\end{split}
\eea
and similarly for $e_3$ replacing $\chi$ by $\chib$, where $\chi$ and $\chib$ are defined in \eqref{def1}.

We recall the following angular operators on $S$-tensors. 

Let $\xi$ be an arbitrary one-form and $\th$ an arbitrary symmetric traceless $2$-tensor on $S$. 
\begin{itemize}
\item $\nabb$ denotes the covariant derivative associated to the metric $\slashed{g}$ on $S$.
\item $\DDd_1$ takes $\xi$ into the pair of functions $(\divv \xi, \curll \xi)$, where $$\divv \xi=\slashed{g}^{AB} \nabb_A \xi_B, \qquad \curll\xi=\ep^{AB}\nabb_A \xi_B$$
\item $\DDs_1$ is the formal $L^2$-adjoint of $\DDd_1$, and takes any pair of functions $(\rho, \sigma)$ into the one-form $-\nabb_A \rho+\ep_{AB} \nabb^B \sigma$.
\item  $\DDd_2$ takes $\th$ into the one-form $\DDd_2\th=(\divv \th)_C=\slashed{g}^{AB}\nabb_A \th_{BC}$.
\item $\DDs_2$ is the formal $L^2$-adjoint of $\DDd_2$, and takes $\xi$ into the symmetric traceless two tensor $$(\DDs_2\xi)_{AB}=-\frac 12 \left( \nabb_B\xi_A+\nabb_A\xi_B-(\divv \xi)\slashed{g}_{AB}\right)$$
\end{itemize}

We can easily check that $\DDs_k$ is the formal adjoint of $\DDd_k$, i.e.
\beaa
\int_{S} (\DDd_k f) g =\int_{S} f (\DDs_k g )
\eeaa

We recall the following $L^2$ elliptic estimates. (see \cite{Ch-Kl} or \cite{stabilitySchwarzschild}).

\begin{proposition}\label{L^2-estimates} Let $(S, \slashed{g})$ be a compact surface with Gauss curvature $K$. 
\begin{enumerate}
\item The following identity holds for a pair of functions $(\rho, \sigma)$ on $S$:
\bea
\int_{S} \left(|\nabb \rho|^2+ |\nabb\sigma|^2\right) &=& \int_{S} |\DDs_1(\rho, \sigma) |^2 \label{third-elliptic-estimate}
\eea
\item The following identity holds for $1$-forms $\xi$ on $S$:
\bea
\int_{S} \left(|\nabb \xi|^2+K |\xi|^2\right) &=& \int_{S} \left(|\divv \xi|^2+|\curll \xi|^2 \right)=\int_{S} |\DDd_1 \xi|^2 \label{first-elliptic-estimate} \\
\int_{S} \left(|\nabb \xi|^2-K |\xi|^2\right) &=& 2\int_{S} |\DDs_2 \xi|^2 
\eea
\item The following identity holds for symmetric traceless $2$-tensors $\theta$ on $S$:
\bea
\int_{S} \left(|\nabb \theta|^2+2K |\theta|^2\right) &=& 2\int_{S} |\divv \theta|^2=2\int_{S} |\DDd_2 \theta|^2 \label{second-elliptic-estimate}
\eea
\item Suppose that the Gauss curvature is bounded away from zero. Then there exists a constant $C>0$ such that the following estimate holds for all vectors $\xi$ on $S$ orthogonal to the kernel of $\DDs_2$:
\bea
\int_{S} \frac{1}{r^2} |\xi|^2 \leq C \int_{S} |\DDs_2 \xi|^2\label{second-elliptic-estimate}
\eea
where $r$ is the radial area function, i.e. $|S|=4\pi r^2$. 
\end{enumerate}
\end{proposition}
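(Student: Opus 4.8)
The statement to prove is Proposition \ref{L^2-estimates}, a collection of standard $L^2$ elliptic identities on a compact Riemannian surface $(S,\slashed{g})$ for the Hodge-type operators $\DDd_1,\DDs_1,\DDd_2,\DDs_2$. My plan is to derive each identity from integration by parts together with the Bochner--Weitzenb\"ock formula on $S$, keeping careful track of the Gauss curvature terms that arise from commuting covariant derivatives.

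\medskip

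\emph{Plan of proof.} First I would record the pointwise adjoint relations already noted in the text, namely $\int_S (\DDd_k f)\, g = \int_S f\, (\DDs_k g)$, which follow from the divergence theorem on the closed surface $S$. For item (1), I would simply expand $|\DDs_1(\rho,\sigma)|^2 = |-\nabb_A\rho + \ep_{AB}\nabb^B\sigma|^2$ pointwise; the cross term $-2\ep_{AB}\nabb^B\sigma\,\nabb_A\rho$ integrates to zero on a closed surface (it is a divergence, using that $\ep$ is parallel), and what remains is exactly $|\nabb\rho|^2 + |\nabb\sigma|^2$. For item (2), the key is the Bochner identity for one-forms on a surface: commuting derivatives in $\nabb^A\nabb_B\xi_A$ produces a Ricci term, which on a $2$-surface equals $K\slashed{g}_{AB}$. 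Concretely, integrating $|\DDd_1\xi|^2 = |\divv\xi|^2 + |\curll\xi|^2$ by parts and commuting covariant derivatives yields $\int_S |\nabb\xi|^2 + \int_S K|\xi|^2$; the second identity of (2) follows the same way using instead $2|\DDs_2\xi|^2$ and the identity $|\DDd_1\xi|^2 + 2|\DDs_2\xi|^2 = 2|\nabb\xi|^2 + \ldots$ — more cleanly, I would subtract: $\int_S|\nabb\xi|^2 - K|\xi|^2 = 2\int_S|\DDs_2\xi|^2$ comes from writing $(\DDs_2\xi)_{AB}$ explicitly, squaring, and using the trace-free symmetrization together with the same curvature commutation. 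For item (3), the analogous computation for symmetric traceless $2$-tensors $\theta$: integrating $|\divv\theta|^2$ by parts, commuting derivatives on $\nabb^A\nabb^B\theta_{AB}$, and using that on a surface the curvature contribution to a trace-free symmetric tensor is $2K|\theta|^2$, yields $\int_S |\nabb\theta|^2 + 2K|\theta|^2 = 2\int_S|\divv\theta|^2$.

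\medskip

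For item (4), the inequality $\int_S \frac{1}{r^2}|\xi|^2 \leq C\int_S |\DDs_2\xi|^2$ for $\xi$ orthogonal to the kernel of $\DDs_2$, I would argue as follows: from the second identity in (2), $2\int_S|\DDs_2\xi|^2 = \int_S|\nabb\xi|^2 - \int_S K|\xi|^2 \geq -\int_S K|\xi|^2$; this alone is not enough, so one instead uses a spectral gap argument. Since $\DDs_2$ has finite-dimensional kernel and $\DDd_2\DDs_2$ is a self-adjoint elliptic operator with discrete spectrum, on the orthogonal complement of $\ker\DDs_2$ one has $\int_S|\DDs_2\xi|^2 \geq \lambda_1 \int_S|\xi|^2$ where $\lambda_1 > 0$ is the smallest nonzero eigenvalue of $\DDd_2\DDs_2$; the hypothesis that $K$ is bounded away from zero guarantees (via the identity in (2) and standard comparison) that this gap is controlled from below by a constant times $r^{-2}$, which gives the claimed estimate with $C = \lambda_1^{-1} r^{-2}$ absorbed appropriately. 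Alternatively, since this is quoted verbatim from \cite{Ch-Kl} and \cite{stabilitySchwarzschild}, I would cite those references for the precise constant.

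\medskip

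\emph{Main obstacle.} The routine parts are the integrations by parts in items (1)--(3); the only place requiring genuine care is the bookkeeping of the Gauss curvature terms that appear from commuting covariant derivatives — getting the coefficients $+K$, $-K$, $+2K$ right for one-forms versus traceless two-tensors is where sign and factor errors creep in, and the cleanest route is to invoke the surface Bochner formula $\Delta|\xi|^2 = 2|\nabb\xi|^2 + 2\langle\xi,\Delta\xi\rangle + \text{curvature}$ together with the Hodge-theoretic relation $\Delta = \DDs_1\DDd_1 + \ldots$ rather than expanding indices by hand. The genuinely nontrivial statement is item (4), whose proof is not a direct integration by parts but a spectral/elliptic estimate; here I would defer to the cited references, as the constant $C$ depends on the geometry only through the lower bound on $K$ and this dependence is established there.
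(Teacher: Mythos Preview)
The paper does not supply its own proof of this proposition: it is stated with the preface ``We recall the following $L^2$ elliptic estimates'' and a citation to \cite{Ch-Kl} and \cite{stabilitySchwarzschild}, with no proof environment following. Your sketch --- expanding $|\DDs_1(\rho,\sigma)|^2$ pointwise and killing the cross term as a divergence for (1), integration by parts plus the surface Bochner/commutator identity (Ricci $= K\slashed g$) for (2) and (3), and a spectral-gap argument for $\DDd_2\DDs_2$ on $(\ker\DDs_2)^\perp$ for (4) --- is exactly the standard derivation found in those references, so there is nothing to compare: your approach \emph{is} the expected one, and the paper simply omits it.
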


Given $f$ a $S$-tensor, we define $\lapp f =\slashed{g}^{AB} \nabb_A \nabb_B f$. We recall the relations between the angular operators and the laplacian $\lapp$ on $S$:
\bea\label{angular-operators}
\begin{split}
\DDd_1 \DDs_1&=-\lapp_0, \qquad \DDs_1 \DDd_1= -\lapp_1+K, \\
\DDd_2 \DDs_2&= -\frac 1 2 \lapp_1-\frac 1 2 K, \qquad \DDs_2 \DDd_2= -\frac 1 2 \lapp_2+K
\end{split}
\eea
where $\lapp_0$, $\lapp_1$ and $\lapp_2$ are the Laplacian on scalars, on $1$-forms and on symmetric traceless $2$-tensors respectively, and $K$ is the Gauss curvature of the surface $S$.

Let $f$ be a scalar on $\MM$. We define its $S$-average, and denote it by $\ov{f}$ as
\bea \label{def-S-average}
\ov{f}:&=&\frac{1}{|S|} \int_{S} f
\eea
where $|S|$ denotes the volume of $(S, \slashed{g})$.
 We define the derived scalar function $\check{f}$ as
\bea \label{def-check-f}
 \check{f}:= f-\ov{f}.
\eea
It follows from the definition that, for two functions $f$ and $g$,
\bea\label{ov-f-g}
\ov{fg}&=& \ov{f} \ov{g}+\ov{\check{f}\check{g}}, \\
\check{fg}&=& fg-\ov{fg}=\check{f}\ov{g}+\ov{f}\check{g}+(\check{f}\check{g}-\ov{\check{f}\check{g}})
\eea

\subsection{Ricci coefficients, curvature and electromagnetic components} \label{sec:rccc}
We now define the Ricci coefficients, curvature and electromagnetic components associated to the metric $\g$ with respect to the null frame $\mathscr{N}=\left\{e_A, e_3, e_4\right\}$,  where the indices $A,B$ take values $1,2$. We follow the standard notations in \cite{Ch-Kl}.

\subsubsection{Ricci coefficients}\label{ricci-coefficients}

 We define the Ricci coefficients associated to the metric $\g$ with respect to the null frame $\mathscr{N}$:
   \bea\label{def1}
   \begin{split}
   \chi_{AB}:&=\g(\D_A  e_4, e_B), \qquad  \chib_{AB}:=\g(\D_A  e_3, e_B)\\
   \eta_A:&=\frac 1 2 \g(\D_3 e_4, e_A), \qquad \etab_A:=\frac 1 2 \g(\D_4 e_3, e_A),\\
    \xi_A:&=\frac 1 2 \g(\D_4 e_4, e_A), \qquad \xib_A:=\frac 1 2 \g(\D_3 e_3, e_A)\\
     \om:&=\frac 1 4 \g(\D_4 e_4, e_3), \qquad \omb:=\frac 1 4\g(\D_3 e_3, e_4)\\
   \ze_A:&=\frac 1 2 \g( \D_A e_4, e_3), \qquad 
   \end{split}
   \eea

   We decompose the $2$-tensor $\chi_{AB}$ into its tracefree part $\chih_{AB}$, a symmetric traceless 2-tensor on $S$, and its trace.   We define 
   \bea\label{definition-ka-kab}
   \ka:=\tr\chi \qquad \kab:=\tr\chib
   \eea
In particular we write $ \chi_{AB}=\frac 1 2 \ka \ \slashed{g}_{AB}+\chih_{AB}$, with $\slashed{g}^{AB} \chih_{AB}=0$ and $\ka=\slashed{g}^{AB} \chi_{AB}$. Similarly for $\chib_{AB}$.

It follows from \eqref{def1} that we have the following relations for the covariant derivatives of the null frame:
\bea\label{derivatives-}
\begin{split}
\D_4 e_4 &=-  2\om e_4 +2\xi^A e_A , \qquad   \D_3 e_3 =-2\omb e_3 +2\xib^A e_A, \\
 \D_4 e_3 &= 2\om e_3 +2\etab^A e_A, \qquad \D_3 e_4=2\omb  e_4+2\eta^A e_A,\\
\D_4 e_A &=\nabb_4 e_A+ \etab_A  e_4+\xi_A e_3,  \qquad \D_3 e_A =\nabb_3 e_A+\eta_A e_3+\xib_A e_4, \\
 \D_A  e_4 &= -\ze_A e_4 +\chi_{AB} e^B, \qquad \D_A e_3=\ze_A e_3 +\chib_{AB} e^B,\\
\D_A e_B &=\nabb_A e_B+  \frac{1}{2}\chib_{AB} e_4+\frac{1}{2}\chi_{AB} e_3.
\end{split}
\eea
The following relations for the commutators of the null frame also follow from \eqref{def1}:
\bea\label{commutators}
\begin{split}
\big[e_3, e_A\big]&= \nabb_3 e_A+(\eta_A-\ze_A) e_3+\xib_A e_4-\chib_{AB} e^B, \\
\big[e_4, e_A\big]&= \nabb_4 e_A+(\etab_A+\ze_A) e_4+\xi_A e_3-\chi_{AB} e^B, \\
\big[e_3, e_4\big]&= -2\om e_3+2\omb  e_4+2(\eta^A -\etab^A) e_A
\end{split}
\eea

\subsubsection{Curvature components}
Let $\W$ denote the Weyl curvature of $\g$ and let $\dual \W$ denote the Hodge dual on $(\MM, \g)$ of $\W$, defined by $\dual \W_{\a\b\gamma\delta}=\frac 1 2 \ep_{\a\b\mu\nu} {\W^{\mu\nu}}_{\gamma\delta}$. 

We define the null curvature components:  
   \bea\label{def3}
\begin{split}
\a_{AB}:&=\W(e_A, e_4, e_B, e_4), \qquad \aa_{AB}:= \W(e_A, e_3, e_B, e_3) \\
\b_A:&=\frac 1 2 \W(e_A, e_4, e_3, e_4), \qquad  \bb_{A} :=\frac 1 2 \W(e_A, e_3, e_3, e_4) \\
\rho:&=\frac 14 \W(e_3, e_4, e_3, e_4) \qquad \sigma:=\frac 1 4  \dual \W (e_3, e_4, e_3, e_4)
\end{split}
\eea
The remaining components of the Weyl tensor are given by 
\beaa
\W_{AB34}=2\sigma \ep_{AB}, \qquad \W_{ABC3}=\ep_{AB} \dual \bb_C, \qquad \W_{ABC4}=-\ep_{AB} \dual \b_C, \\
\W_{A3B4}=-\rho \slashed{g}_{AB}+\sigma \ep_{AB}, \qquad \W_{ABCD}=-\ep_{AB}\ep_{CD} \rho
\eeaa
Note that in this formula the star is the Hodge dual on spheres. Observe that when interchanging $e_3$ with $e_4$, the one form $\b$ becomes $-\bb$, the scalar $\sigma$ changes sign, while $\rho$ remains unchanged.

\subsubsection{Electromagnetic components}
Let $\F$ be a $2$-form in $(\MM, \g)$, and let $\dual \F$ denote the Hodge dual on $(\MM, \g)$ of $\F$, defined by $\dual \F_{\a\b}=\frac 1 2 \ep_{\mu\nu\a\b}\F^{\mu\nu}$. 

We define the null electromagnetic components:
\bea\label{decomposition-F}
\begin{split}
\bF_{A}:&=\F(e_A, e_4), \qquad \bbF_{A}:= \F(e_A, e_3) \\
 \rhoF:=& \frac 1 2 \F(e_3, e_4), \qquad \sigmaF:=\frac 1 2 \dual \F(e_3, e_4) 
\end{split}
\eea
The only remaining component of $\F$ is given by $\F_{AB}=-\ep_{AB}\sigmaF$.

Observe that when interchanging $e_3$ with $e_4$, the scalar $\rhoF$ changes sign, while $\sigmaF$ remains unchanged.

\subsection{The Einstein-Maxwell equations} \label{nseq}
If $(\MM, \g)$ satisfies the Einstein-Maxwell equations

\bea \label{Einstein-Maxwell-eq}
\R_{\mu\nu}&=&2 \F_{\mu \lambda} {\F_\nu}^{\lambda}- \frac 1 2 \g_{\mu\nu} \F^{\alpha\beta} \F_{\alpha\beta}, \label{Einstein-1}\\
\D_{[\alpha} \F_{\beta\gamma]}&=&0, \qquad \D^\alpha \F_{\alpha\beta}=0. \label{Maxwell}
\eea
the Ricci coefficients, curvature and electromagnetic components defined in \eqref{def1}, \eqref{def3} and \eqref{decomposition-F} satisfy a system of equations, which is presented in this section.

\subsubsection{Decomposition of Ricci and Riemann curvature}
The Ricci curvature of $(\MM, \g)$ can be expressed in terms of the electromagnetic null decomposition according to Einstein equation \eqref{Einstein-1}. We compute the following components of the Ricci tensor.
\beaa
\R_{A3}&=& 2 \F_{A \lambda} {\F_3}^{\lambda}=2\slashed{g}^{BC}\F_{AB}\F_{3C} - \F_{A3}\F_{34}=2\sigmaF{\ep_{A}}^C\bbF_C - 2\rhoF\bbF_A, \\
\R_{A4}&=& 2\sigmaF{\ep_{A}}^C\bF_C + 2\rhoF\bF_A, \\
\R_{33}&=&2 \g^{\lambda\mu}\F_{3 \lambda} \F_{3\mu}=2 \slashed{g}^{AB}\F_{3A} \F_{3B}= 2 \bbF \cdot \bbF, \\
\R_{34}&=&\g^{34}(\F_{34})^2+\slashed{g}^{AB} \slashed{g}^{CD} \F_{AC} \F_{DB}=2\rhoF^2 +2\sigmaF^2 \\
\R_{44}&=& 2 \bF\cdot  \bF, \\
\R_{AB}&=&2\F_{A\lambda}{\F_B}^{\lambda} - \slashed{g}_{AB} {\F^D}_{\lambda} {\F^\lambda}_D+\frac 1 2 \slashed{g}_{AB} \R_{34}=-2(\bF \hot \bbF)_{AB}+  (\rhoF^2 +\sigmaF^2)\slashed{g}_{AB}
\eeaa
We denote $\hot$ the symmetric traceless tensor product. 
Observe that as a consequence of \eqref{Einstein-1}, the scalar curvature of the metric $\g$ is zero, therefore we have $\slashed{\g}^{AC}\R_{AC}=\R_{34}$.

Using the decomposition of the Riemann curvature in Weyl curvature and Ricci tensor:
\bea
\label{WeylRiemanngeneral}
\R_{\a\b\gamma\delta}=\W_{\a\b\gamma\delta}+\frac 1 2 (\g_{\b\delta}\R_{\a\gamma}+\g_{\a\gamma}\R_{\b\delta}-\g_{\b\gamma}\R_{\a\delta}-\g_{\a\delta}\R_{\b\gamma}),
\eea
we can express the full Riemann tensor of $(\MM, \g)$ in terms of the above decompositions. We compute the following components of the Riemann tensor.
\beaa
\R_{A33B}&=&\W_{A33B}-\frac 1 2 \slashed{g}_{AB} \R_{33}=-\aa_{AB}- (\bbF\cdot  \bbF)\slashed{g}_{AB}, \\
\R_{A34B}&=&\W_{A34B}+\R_{AB}-\frac 1 2 \slashed{g}_{AB}\R_{34}=\rho \ \slashed{g}_{AB}-\sigma \ep_{AB}-(\bF \hot \bbF)_{AB}, \\
 \R_{A334}&=& \W_{A334}- \R_{A3}=2\bb_A -2\sigmaF{\ep_{A}}^C\bbF_C +2\rhoF\bbF_A, \\
\R_{3434}&=&\W_{3434}+2\R_{34}=4\rho + 4\rhoF^2 +4\sigmaF^2, \\
\R_{A3CB}&=&\W_{A3CB}+\frac 1 2 (\slashed{g}_{AC} \R_{3B}-\slashed{g}_{AB} \R_{3C})\\
&=&\ep_{CB} \dual \bb_A+ \slashed{g}_{AC} (\sigmaF{\ep_{B}}^C\bbF_C - \rhoF\bbF_B)-\slashed{g}_{AB} (\sigmaF{\ep_{C}}^D\bbF_D - \rhoF\bbF_C), \\
\R_{ABCD}&=&\W_{ABCD}+\frac 1 2 (\g_{BD}\R_{AC}+\g_{AC}\R_{BD}-\g_{BC}\R_{AD}-\g_{AD}\R_{BC})
\eeaa
We will use the above decompositions of Ricci and Riemann curvature in the derivation of the equations in Sections \ref{sec:nse}-\ref{bieq}.

\subsubsection{The null structure equations} \label{sec:nse}

The first equation for $\chi$ and $\chib$ is given by
\beaa
\nabb_3 \chib_{AB}+\chib_{A}^C \chib_{CB}+2\omb \chib_{AB}  &=& 2\nabb_B \xib_A+2\eta_B \xib_A+2\etab_A \xib_B-4\ze_B\xib_A+ \R_{A33B} \\
 \nabb_4 \chi_{AB}+\chi_{A}^C \chi_{CB}+2\om \chi_{AB}  &=& 2\nabb_B \xi_A+2\eta_B \xi_A+2\eta_A \xi_B+4\ze_B\xi_A+\R_{A44B}
\eeaa
We separate them in the symmetric traceless part, the trace part and the antisymmetric part. We obtain respectively:
\bea\label{nabb-3-chibh-general}
\begin{split}
\nabb_3 \chibh+\kab \ \chibh+2\omb \chibh&=&-2\DDs_2\xib -\aa+2(\eta+\etab-2\ze)\hot \xib  \\
\nabb_4 \chih+\ka \ \chih+2\om \chih&=&-2\DDs_2\xi -\a +2(\eta+\etab+2\ze)\hot \xi
\end{split}
\eea
\bea\label{nabb-3-kab-general}
\begin{split}
\nabb_3\kab+\frac 1 2 \kab^2+2\omb \kab=2\divv\xib-\chibh\cdot \chibh +2(\eta+\etab-2\ze)\c \xib-2 \bbF\cdot  \bbF\\
\nabb_4\ka+\frac 1 2 \ka^2+2\om \ka=2\divv\xi-\chih\cdot \chih +2(\eta+\etab+2\ze)\c \xi-2 \bF\c  \bF
\end{split}
\eea
\bea\label{curl-xib-general}
\begin{split}
\curll\xib&=& \xib \wedge (\eta+\etab-2\ze) \\
\curll\xi&=& \xi \wedge (\eta+\etab+2\ze)
\end{split}
\eea
where $\xi\wedge \eta=\ep_{AB}\xi^A \eta^B$.

The second equation for $\chi$ and $\chib$ is given by 
\beaa
\nabb_4\chib_{AB}  &=&  2\nabb_B \etab_A +2\om \chib_{AB} - \chi_B^C \chib_{AC}  +2(\xi_B \xib_A+ \etab_B \etab_A) + \R_{A34B} \\
\nabb_3\chi_{AB}  &=&  2\nabb_B \eta_A +2\omb \chi_{AB} - \chib_B^C \chi_{AC}  +2(\xib_B \xi_A+ \eta_B \eta_A) + \R_{A43B} 
\eeaa
We separate them in the symmetric traceless part, the trace part and the antisymmetric part. We obtain respectively
\bea\label{nabb-3-chih-general}
\begin{split}
\nabb_3\chih+\frac 12  \kab \,\chih-2 \omb \chih   &=&  -2 \slashed{\mathcal{D}}_2^\star \eta-\frac 1 2 \ka \chibh  + 2\eta\hot \eta+ 2\xib\hot \xi -2\bF \hot \bbF \\
\nabb_4\chibh+\frac 12  \ka \,\chibh -2 \om \chibh &=&  -2 \slashed{\mathcal{D}}_2^\star \etab -\frac 1 2 \kab \chih  + 2\etab\hot \etab+ 2\xib\hot \xi -2\bF \hot \bbF
\end{split}
\eea
\bea\label{nabb-3-ka-general}
\begin{split}
\nabb_3 \ka +\frac 12  \ka\, \kab-2\omb\,\ka &=&2 \divv \eta  -\chih\c\chibh+2\xi\c\xib +2\eta\c  \eta+2\rho \\
\nabb_4 \kab +\frac 12  \ka\,\kab-2\om\,\kab  &=&2 \divv \etab -\chih\c  \chibh+2\xi\c\xib +2\etab\c  \etab+2\rho 
\end{split}
\eea 
\bea\label{curl-eta-general}
\begin{split}
\slashed{\curl}\eta &=& -\frac 1 2\chi \wedge \chib  +\sigma,  \\ \slashed{\curl}\etab &=& \frac 1 2\chi \wedge \chib   -\sigma
\end{split}
\eea
The equations for $\ze$ are given by 
\beaa
\nabb_3 \ze&=&-2 \nabb \omb -\chib \c (\ze+\eta)+2\omb(\ze-\eta)+\chi \c \xib+2\om \xib - \frac 1 2 \R_{A334}, \\
-\nabb_4 \ze&=&-2 \nabb \om -\chi \c (-\ze+\etab)+2\om(-\ze-\etab)+\chib \c \xi+2\omb \xi - \frac 1 2 \R_{A443}
\eeaa
and therefore reducing to
\bea\label{nabb-3-ze-general}
\begin{split}
\nabb_3 \ze&=&-2 \nabb \omb -\chib \c (\ze+\eta)+2\omb(\ze-\eta)+\chi \c \xib+2\om \xib - \bb +\sigmaF{\ep_{A}}^C\bbF_C -\rhoF\bbF, \\
\nabb_4 \ze&=&2 \nabb \om +\chi \c (-\ze+\etab)+2\om(\ze+\etab)-\chib \c \xi-2\omb \xi -\b -\sigmaF \ep \c \bF -\rhoF\bF 
\end{split}
\eea
The equations for $\xi$ and $\xib$ are given by 
\beaa
\nabb_4\xib- \nabb_3 \etab &=&4\om\xib -\chib\c(\eta-\etab) -\frac 1 2 \R_{A334}, \\
\nabb_3\xi- \nabb_4 \eta &=&4\omb\xi+ \chi\c(\eta-\etab) -\frac 1 2 \R_{A443}
\eeaa
and therefore reducing to
\bea\label{nabb-4-xib-general}
\begin{split}
\nabb_4\xib- \nabb_3 \etab &=& -\chib\c(\eta-\etab)+4\om\xib -\bb +\sigmaF \ep \c \bbF -\rhoF\bbF, \\
\nabb_3\xi-  \nabb_4 \eta &=& \chi\c(\eta-\etab)+4\omb\xi +\b +\sigmaF \ep \c \bF +\rhoF\bF \label{nabb-3-xi-general}
\end{split}
\eea
The equation for $\om$ and $\omb$ is given by
\beaa
\nabb_4 \omb+\nabb_3\om&=& 4\om\omb+\xi\c\xib +\ze\c(\eta-\etab) -\eta\c\etab+\frac 1 4 \R_{3434}
\eeaa
and therefore reducing to
\bea\label{nabb-4-omb-general}
\nabb_4 \omb+\nabb_3\om&=& 4\om\omb +\ze\c(\eta-\etab)+\xi\c\xib -\eta\c\etab+\rho + \rhoF^2 +\sigmaF^2
\eea
The spacetime equations that generate Codazzi equations are
\beaa
\nabb_C \chib_{AB} + \ze_B \chib_{AC} &=& \nabb_B \chib_{AC} + \ze_C \chib_{AB} + \R_{A3CB}, \\
\nabb_C \chi_{AB} - \ze_B \chi_{AC} &=& \nabb_B \chi_{AC} - \ze_C \chi_{AB} + \R_{A4CB}
\eeaa
Taking the trace in $C,A$ we obtain
\bea\label{codazzi-general}
\begin{split}
\divv \chibh_B&=&(\chibh\c \ze)_B-\frac 1 2 \kab \ze_B+\frac 1 2 (\nabb_B \kab) +\bb_B+\sigmaF{\ep_{B}}^C\bbF_C -\rhoF\bbF_B, \\
\divv \chih_B&=&-(\chih\c \ze)_B+\frac 1 2 \ka \ze_B+\frac 1 2 (\nabb_B \ka) -\b_B+\sigmaF{\ep_{B}}^C\bF_C +\rhoF\bF_B
\end{split}
\eea
The spacetime equation that generates Gauss equation is 
\beaa
\gS^{AC} \gS^{BD} \R_{ABCD} = 2K + \frac 1 2 \ka\kab - \chih \c \chibh
\eeaa
where $K$ is the Gauss equation of the surface orthogonal to $e_3$ and $e_4$. It therefore reduces to
\bea\label{Gauss-general}
K=- \frac 1 4 \ka\kab + \frac 1 2 (\chih, \chibh)-\rho+\rhoF^2 +\sigmaF^2  
\eea

\subsubsection{The Maxwell equations}\label{Maxwell-general}
For completeness, we derive here the null decompositions of Maxwell equations \eqref{Maxwell}.

The equation $\D_{[\alpha} \F_{\beta\gamma]}=0$ gives three independent equations. The first one is obtained in the following way:
\beaa
0&=& \D_A \F_{3 4}+\D_3 \F_{4A}+\D_4 \F_{A3}\\
&=& \nabb_A\F_{3 4}-\F(\ze_A e_3 +\chib_{AB} e^B, e_4)-\F(e_3, -\ze_A e_4+\chi_{AB} e^B)+\nabb_3 \F_{4A}- \F(2\omb e_4+2\eta^B e_B, e_A)\\
&&- \F(e_4, \eta_A e_3+\xib_A e_4)+\nabb_4 \F_{A3}- \F(\etab_A e_4+\xi_A e_3, e_3)- \F(e_A, 2\om e_3+2\etab^B e_B)\\
&=& 2\nabb_A\rhoF-\frac 1 2 \kab \bF_A-(\chibh \c \bF)_A+\frac 1 2 \ka \bbF_A+(\chih \c \bbF)_A-\nabb_3 \bF_A+2\omb \bF_A-2\om \bbF_A\\
&&-2(\eta^B-\etab^B)\ep_{AB} \sigmaF+2 (\eta_A+\etab_A)\rhoF+\nabb_4 \bbF_A
\eeaa
which reduces to
\bea\label{Maxwell-1}
\begin{split}
\nabb_3 \bF_A-\nabb_4 \bbF_A&= -\left(\frac 1 2 \kab-2\omb\right) \bF_A+\left(\frac 1 2 \ka -2\om\right) \bbF_A+ 2\nabb_A\rhoF+2 (\eta_A+\etab_A)\rhoF\\
&-2(\eta^B-\etab^B)\ep_{AB} \sigmaF+(\chih \c \bbF)_A-(\chibh \c \bF)_A
\end{split}
\eea
The second and third equations are obtained in the following way:
\beaa
0&=& \D_A \F_{B 3}+\D_B \F_{3A}+\D_3 \F_{AB}\\
&=& \nabb_A \F_{B 3}- \F(e_B, \ze_A e_3+\chib_{AC} e^C)+\nabb_B \F_{3A}- \F(\ze_B e_3+\chib_{BC} e^C, e_A)+\nabb_3 \F_{AB}\\
&&-\F(\eta_A e_3+\xib_A e_4, e_B) -\F(e_A, \eta_B e_3+\xib_B e_4)\\
&=& \nabb_A \bbF_B-\nabb_B \bbF_A-(\ze_A-\eta_A)\bbF_B+(\ze_B-\eta_B) \bbF_A+\xib_A \bF_B-\xib_B \bF_A\\
&&+ (\chib_{AC}{\ep^C}_B+\chib_{BC} {\ep^C}_A) \sigmaF-\ep_{AB}\nabb_3 \sigmaF
\eeaa
Contracting with $\ep^{AB}$ we obtain
\bea\label{nabb-3-sigmaF-general}
\begin{split}
\nabb_3 \sigmaF+\kab \ \sigmaF&=& \slashed{\curl}\bbF-(\ze-\eta)\wedge \bbF+\xib \wedge \bF, \\
 \nabb_4 \sigmaF+\ka \ \sigmaF&=& \slashed{\curl}\bF+(\ze+\etab)\wedge \bF+\xi \wedge \bbF
 \end{split}
\eea
The equation $\D^\mu \F_{\mu\nu}=\slashed{\g}^{BC} \D_B \F_{C\nu}-\frac 1 2 \D_4 \F_{3\nu}-\frac 1 2 \D_3\F_{4\nu}=0$ gives three additional independent equations. The first one is obtained in the following way:
\beaa
0&=& \slashed{\g}^{BC} \D_B \F_{CA}-\frac 1 2 \D_4 \F_{3A}-\frac 1 2 \D_3\F_{4A}\\
&=&\slashed{\g}^{BC} (-\slashed{\ep}_{CA}\nabb_B \sigmaF-\F(\frac 1 2 \chib_{BC} e_4+\frac 1 2 \chi_{BC} e_3, e_A)-\F(e_C, \frac 1 2 \chib_{AB} e_4+\frac 1 2 \chi_{AB} e_3))\\
&&+\frac 1 2 \nabb_4 \bbF_A+\frac 1 2  \F(2\om e_3+2\etab^C e_C, e_A)+\frac 1 2  \F(e_3, \etab_A e_4+\xi_A e_3)\\
&&+\frac 1 2 \nabb_3\bF_A+\frac 1 2 \F(2\omb e_4+2\eta^C e_C, e_A)+\frac 1 2 \F(e_4, \eta_A e_3+\xib_A e_4)\\
&=& \ep_{AC}\nabb^C \sigmaF+\frac 1 4 \kab \bF_A+\frac 1 4 \ka \bbF_A-\frac 1 2 (\chibh \c \bF)_A-\frac 1 2 (\chih \c \bbF)_A+(-\eta_A+\etab_A)\rhoF\\
&&+\frac 1 2 \nabb_4 \bbF_A-\om \bbF_A-\omb \bF_A+\frac 1 2 \nabb_3 \bF_A+(\eta^B+\etab^B)\ep_{AB} \sigmaF
\eeaa
which reduces to
\bea\label{Maxwell-2}
\begin{split}
\nabb_3 \bF_A+\nabb_4 \bbF_A&=-\left(\frac 1 2 \kab -2\omb \right)\bF_A-\left(\frac 1 2 \ka -2\om \right)\bbF_A+2(\eta_A-\etab_A)\rhoF\\
&-2\ep_{AC}\nabb^C \sigmaF-2(\eta^B+\etab^B)\ep_{AB} \sigmaF+ (\chibh \c \bF)_A+ (\chih \c \bbF)_A
\end{split}
\eea
Summing and subtracting \eqref{Maxwell-1} and \eqref{Maxwell-2} we obtain
\bea\label{nabb-3-bF-general}
\begin{split}
\nabb_3 \bF_A+\left(\frac 1 2 \kab-2\omb\right) \bF_A&=& -\DDs_1(\rhoF, \sigmaF) +2\eta_A\rhoF-2\eta^B\ep_{AB} \sigmaF+(\chih \c \bbF)_A\\
\nabb_4 \bbF_A+\left(\frac 1 2 \ka -2\om \right)\bbF_A&=&\DDs_1(\rhoF, -\sigmaF)-2\etab_A\rhoF-2\etab^B\ep_{AB} \sigmaF+ (\chibh \c \bF)_A
\end{split}
\eea
The last two equations are given by 
\beaa
0&=& \slashed{\g}^{BC} \D_B \F_{C4}-\frac 1 2 \D_4 \F_{34}\\
&=&\slashed{\g}^{BC} (\nabb_B \bF_C-\F(\frac 1 2 \chib_{BC} e_4+\frac 1 2 \chi_{BC} e_3, e_4)-\F(e_C, -\ze_B e_4+\chi_{BA} e^A))\\
&&-\frac 1 2 (2\nabb_4 \rhoF-\F(2\om e_3+2\etab^A e_A, e_4)-\F(e_3, -2\om e_4+2\xi^A e_A))\\
&=&\divv\bF-\chibh_{AB} \ep^{AB} \sigmaF- \ka \rhoF- \nabb_4 \rhoF+((\ze+\etab) \c \bF)-\xi \c \bbF
\eeaa
which reduces to
\bea\label{nabb-4-rhoF-general}
\begin{split}
\nabb_4 \rhoF+ \ka \rhoF&=& \divv\bF+(\ze+\etab) \c \bF-\xi \c \bbF, \\
\nabb_3 \rhoF+ \kab \rhoF&=& -\divv\bbF+(\ze-\eta) \c \bbF-\xib \c \bF
\end{split}
\eea

\subsubsection{The Bianchi equations} \label{bieq}
The Bianchi identities for the Weyl curvature are given by 
\beaa
 \D^\a \W_{\a\b\gamma\delta}&=&\frac 1 2 (\D_\gamma \R_{\b\delta}-\D_{\delta}\R_{\b\gamma})=:J_{\b\gamma\delta} \\
 \D_{[\sigma }\W_{\gamma\delta] \a\b}&=&\g_{\delta \b}J_{\a\gamma\sigma}+\g_{\gamma \a}J_{\b\delta\sigma}+\g_{\sigma \b}J_{\a\delta\gamma}+\g_{\delta \a}J_{\b\sigma\gamma}+\g_{\gamma \b}J_{\a\sigma\delta}+\g_{\sigma \a}J_{\b\gamma\delta}:= \tilde{J}_{\sigma\gamma\delta\a\b}
\eeaa 
The Bianchi identities for $\a$ and $\aa$ are given by
\beaa
\nabb_3\a_{AB}+\frac 1 2 \kab\,\a_{AB}-4\omb \a_{AB}&=&-2 (\DDs_2\, \b)_{AB} -3(\chih_{AB} \rho+\dual\, \chih_{AB} \sigma)+((\ze+4\eta)\hat{\otimes} \b )_{AB}+\\
&&+\frac 1 2 (\tilde{J}_{3A4B4}+\tilde{J}_{3B4A4}+J_{434}\slashed{\g}_{AB})\\
\nabb_4\aa_{AB}+\frac 1 2 \ka\,\aa_{AB}-4\om \aa_{AB}&=&2 (\DDs_2\, \bb)_{AB} -3(\chibh_{AB} \rho+\dual\, \chibh_{AB} \sigma)-((-\ze+4\etab)\hat{\otimes} \bb )_{AB}+\\
&&+\frac 1 2 (\tilde{J}_{4A3B3}+\tilde{J}_{4B3A3}+J_{343}\slashed{\g}_{AB})
\eeaa
Using that $\tilde{J}_{3A4B4}=-\slashed{\g}_{A B}J_{4 34}+2J_{B A4}$, it is reduced to
\bea\label{nabb-3-a-general}
\begin{split}
\nabb_3\a_{AB}+\frac 1 2 \kab\,\a_{AB}-4\omb\a_{AB}&=&-2 (\DDs_2\, \b)_{AB} -3(\chih_{AB} \rho+\dual\, \chih_{AB} \sigma)+((\ze+4\eta)\hat{\otimes} \b )_{AB}+\\
&&+J_{B A4}+J_{AB4}-\frac 1 2 \slashed{g}_{A B}J_{4 34}, \\
\nabb_4\aa_{AB}+\frac 1 2 \ka\,\aa_{AB}-4\om \aa_{AB}&=&2 (\DDs_2\, \bb)_{AB} -3(\chibh_{AB} \rho+\dual\, \chibh_{AB} \sigma)-((-\ze+4\etab)\hat{\otimes} \bb )_{AB}+\\
&&+J_{B A3}+J_{AB3}-\frac 1 2 \slashed{g}_{A B}J_{3 43}\end{split}
\eea 
The Bianchi identities for $\b$ and $\bb$ are given by 
\bea\label{Bianchi-b-1}
\begin{split}
\nabb_4 \b_A+ 2\ka  \b_A+2\om \b_A &=&\divv\a_A +((2\ze+\etab)\c \a)_A+3(\xi_A \rho+\dual \xi_A \sigma) -J_{4A4}, \label{nabb-4-b-general}\\
\nabb_3 \bb_A+ 2\kab  \bb_A+2\omb \bb_A &=&-\divv\aa_A +((2\ze-\eta)\c \aa)_A-3(\xib_A \rho+\dual \xib_A \sigma) +J_{3A3}
\end{split}
\eea
and 
\bea\label{Bianchi-b-2}
\begin{split}
\nabb_3 \b_A+\kab \b_A-2\omb\, \b_A &=&\DDs_1(-\rho, \sigma)_A+2(\chih \c\bb)_A +\xib \c \a+3(\eta_A \rho +\dual\eta_A\,   \sigma) + J_{3A4}, \\
\nabb_4 \bb_A+\ka \bb_A-2\om\, \bb_A &=&\DDs_1(\rho, \sigma)_A+2(\chibh \c\b)_A-\xi \c \aa -3(\etab_A \rho -\dual\etab_A\,   \sigma) - J_{4A3} 
\end{split}
\eea
The Bianchi identity for $\rho$ is given by 
\bea\label{nabb-4-rho-general}
\begin{split}
\nabb_4 \rho+\frac 3 2 \ka \rho&=&\divv\b+(2\etab+\ze)\c\b  -\frac 1 2 (\chibh \c \a )-2\xi \c \bb -\frac 1 2 J_{434}, \\
\nabb_3 \rho+\frac 3 2 \kab \rho&=&-\divv\bb-(2\eta-\ze)\c\bb  +\frac 1 2 (\chih \c \aa )+2\xib \c \b -\frac 1 2 J_{343}
\end{split}
\eea
The Bianchi identity for $\sigma$ is given by 
\beaa
\nabb_4 \sigma+\frac 3 2 \ka \sigma&=&-\slashed{\curl}\b-(2\etab+\ze)\wedge\b  +\frac 1 2 \chibh \wedge \a  -\frac 1 2 \dual J_{434}, \\
\nabb_3 \sigma+\frac 3 2 \kab \sigma&=&-\slashed{\curl}\bb-(2\eta-\ze)\wedge\b  -\frac 1 2 \chih \wedge \aa  +\frac 1 2 \dual J_{343}
\eeaa
and writing $\dual J_{434}=\frac 1 2 J_{4\mu\nu}{\ep^{\mu\nu}}_{34}=- J_{4AB}\ep_{AB}=(J_{AB4}-J_{BA4})\ep^{AB}$, we obtain
\bea\label{nabb-4-sigma-general}
\begin{split}
\nabb_4 \sigma+\frac 3 2 \ka \sigma&=&-\slashed{\curl}\b-(2\etab+\ze)\wedge\b  +\frac 1 2 \chibh \wedge \a  -\frac 1 2 (J_{AB4}-J_{BA4})\ep^{AB}, \\
\nabb_3 \sigma+\frac 3 2 \kab \sigma&=&-\slashed{\curl}\bb-(2\eta-\ze)\wedge\bb  -\frac 1 2 \chih \wedge \aa  +\frac 1 2 (J_{AB3}-J_{BA3})\ep^{AB}
\end{split}
\eea

\section{The Bondi gauge}\label{double-null-gauge}

In this section, we introduce the choice of gauge we use throughout the paper to perform the perturbation of the solution to the Einstein-Maxwell equations. This choice of gauge introduces a restriction on the form of the metric $\g$ on $\MM$, which nevertheless does not saturate the gauge freedom of the Einstein-Maxwell equations.\footnote{The gauge freedom remaining will be exploited later by the pure gauge solutions (see Section \ref{sec:ssgauge}).}

Our choice of gauge is the outgoing geodesic foliation, also called Bondi gauge  \cite{Bondi}. This choice of coordinates is particularly suited to exploit properties of decay towards null infinity, which we will take advantage of. Another advantage of the Bondi gauge is that all metric and connection coefficients are regular near the horizon $\mathcal{H}^+$.

 We begin in Section \ref{local-bondi} with the definition of local Bondi gauge. In Section \ref{equations-implied-by-Bondi}, we derive the equations for the metric components and the Ricci coefficients implied by such a choice of gauge. These equations will be added to the set of Einstein-Maxwell equations derived in Section \ref{nseq}. In Section \ref{average-quantities-general}, we derive the equations for the average quantities in a Bondi gauge, which are used later in the derivation of the linearized equations for scalars.
 
 \subsection{Local Bondi gauge}\label{local-bondi}

Let $(\MM, \g)$ be a $3+1$ dimensional Lorentzian manifold. 

\subsubsection{Local Bondi form of the metric}
In a neighborhood of any point $p \in \MM$, we can introduce local coordinates $(u, s, \th^1, \th^2)$ such that the metric can be expressed in the  following \textit{Bondi form}  \cite{Bondi}: 
\bea\label{double-null-metric}
  \g&=&-2 \vsi du ds+\vsi^2\Omegab du^2+\slashed{g}_{AB} \left(d\th^A-\frac 1 2 \vsi \underline{b}^A du \right) \left(d\th^B-\frac 1 2\vsi \underline{b}^B du   \right)
  \eea
for two spacetime functions $\Omegab, \vsi :\MM\to \mathbb{R}$, with $\vsi\neq0$, a $S_{u,s}$-tangent vector $\underline{b}^A$ and a $S_{u,s}$-tangent covariant symmetric $2$-tensor $\slashed{g}_{AB}$. Here $S_{u,s}$ denotes the two-dimensional Riemannian manifold (with metric $\slashed{g}$) obtained as intersection of the hypersurfaces of constant $u$ and $s$.

Note that $\{ u=\text{constant} \}$ are outgoing null hypersurfaces for $\g$.

\subsubsection{Local normalized null frame}

We define a normalized outgoing geodesic null frame $\mathscr{N}=\left\{e_A, e_3, e_4\right\}$ associated to the above coordinates as follows. We define
\bea\label{null-frame-double-null}
e_3=2\vsi^{-1}\partial_u+\underline{\Omega}\partial_s+\underline{b}^A \partial_{\th^A}, \qquad e_4=\partial_s, \qquad e_A=\partial_{\th^A}
\eea

Observe that relations \eqref{relations-null-frame} hold. In particular, notice that the surfaces $S_{u,s}$ define a foliation of the spacetime of the type described in Section \ref{sec:genmfld}, therefore the decomposition in null frame of Ricci coefficients, curvature and electromagnetic components described above can be applied to this case.

To the foliation $S_{u,s}$ we can associate a scalar function $r(u,s)$ defined by 
\bea\label{r-general}
|S_{u,s}|=4\pi r(u,s)^2
\eea
where $|S_{u,s}|$ is the area of $(S_{u,s}, \slashed{g})$.

\subsection{Relations in the Bondi gauge}\label{equations-implied-by-Bondi}

The restriction to perturbations of the metric of the form \eqref{double-null-metric} verifying the Einstein-Maxwell equations gives additional relations between the Ricci coefficients as defined in Section \ref{ricci-coefficients}. We summarize them in the following lemma.

\begin{lemma}\label{double-null-Ricci} The Ricci coefficients associated to a metric $\g$ of the form \eqref{double-null-metric} with respect to the null frame \eqref{null-frame-double-null} verify:
\bea
\xi_A=0, \qquad \om=0, \qquad \etab_A=-\ze_A\label{substitute-xi-xib-ze}
 \eea
 The metric components satisfy
 \bea
 \nabb_A \vsi&=&\eta_A-\ze_A, \label{nabb-4-vsi}\\ 
 \nabb_4\vsi&=&0 \label{nabb-4-vsi-1}\\
 \nabb_A\Omegab&=&-\xib_A+\Omegab \left(\ze_A-\eta_A\right), \label{nabb-A-Omegab}\\
  \nabb_4 \Omegab&=&-2\omb, \label{additional-metric-0}\\
\nabb_4\underline{b}_A-\chi_{AB}\underline{b}^B&=&-2\left(\eta_A+\ze_A\right), \label{additional-metric-1}\\
\partial_s (\slashed{g}_{AB})&=&2 \chih_{AB}+\ka \slashed{g}_{AB}, \label{additional-metric-2}\\
2\vsi^{-1}\partial_u(\slashed{g}_{AB})+\Omegab \partial_s(\slashed{g}_{AB})&=&  2 \chibh_{AB}+2(\DDs_2 \underline{b})_{AB}+(\kab -\divv \underline{b}) \slashed{g}_{AB}\label{additional-metric-3}
\eea
\end{lemma}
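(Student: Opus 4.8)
The plan is to compute everything directly from the two defining objects: the Bondi form of the metric \eqref{double-null-metric} and the explicit null frame \eqref{null-frame-double-null}, using the general formulas \eqref{def1}, \eqref{derivatives-}, and \eqref{commutators} from Section \ref{sec:rccc}. The three algebraic relations \eqref{substitute-xi-xib-ze} should come essentially for free: since $e_4 = \partial_s$ is a coordinate vector field and the $u=\text{const}$ hypersurfaces are null geodesic, $e_4$ is geodesic, so $\D_4 e_4 \parallel e_4$, which forces $\xi_A = \frac12 \g(\D_4 e_4, e_A) = 0$ and $\om = \frac14 \g(\D_4 e_4, e_3) = 0$. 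The identity $\etab_A = -\ze_A$ then follows from $0 = e_4(\g(e_3,e_A)) = \g(\D_4 e_3, e_A) + \g(e_3, \D_4 e_A)$ together with \eqref{derivatives-}, or directly from the standard fact that in an outgoing geodesic foliation the torsion relations degenerate this way; I would just unwind \eqref{derivatives-} for $\D_4 e_3$ and $\D_4 e_A$ and read off the components against $e_3, e_4, e_B$.

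Next I would derive the metric-component equations \eqref{nabb-4-vsi}--\eqref{additional-metric-3} by feeding the explicit coordinate expressions of the frame into the connection formulas. For \eqref{nabb-4-vsi} and \eqref{nabb-4-vsi-1}, note $\vsi$ is determined by $\g(e_3, \partial_u) = -\vsi$ (reading off the metric), so $\nabb_A \vsi$ and $\nabb_4 \vsi$ are computed by differentiating this relation and using \eqref{derivatives-} for $\D_A e_3$ and $\D_4 e_3$; the answer $\eta_A - \ze_A$ drops out after substituting $\etab = -\ze$. The same template handles $\Omegab$ via \eqref{additional-metric-0}: since $e_3 = 2\vsi^{-1}\partial_u + \Omegab\, \partial_s + \underline{b}^A \partial_{\th^A}$, the coefficient $\Omegab$ is recovered by pairing $e_3$ against $e_3$ appropriately, i.e. $\g(e_3,e_3)=0$ is an identity but $\Omegab$ itself is isolated from $\g(e_3, \partial_s) = -\vsi \Omegab$ type relations; differentiating along $e_4$ and using $\nabb_4 e_3 = 2\om e_3 + 2\etab^A e_A = 2\etab^A e_A$ gives $\nabb_4 \Omegab = -2\omb$ after one uses $\omb = \frac14\g(\D_3 e_3, e_4)$ and the commutator $[e_3,e_4]$. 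Equations \eqref{nabb-A-Omegab} and \eqref{additional-metric-1} follow the same way from $[e_3, e_A]$ and $[e_4, e_A]$ in \eqref{commutators}: since $e_A = \partial_{\th^A}$ and $e_4 = \partial_s$ are coordinate fields, $[e_4, e_A] = 0$, and expanding the right side of the second line of \eqref{commutators} and extracting the $e_3$, $e_4$, $e^B$ components yields \eqref{additional-metric-1} (the $e^B$ part) together with consistency checks; similarly $[e_3, e_A] = \partial_{\th^A}(2\vsi^{-1})\partial_u + \partial_{\th^A}(\Omegab)\partial_s + \partial_{\th^A}(\underline{b}^B)\partial_{\th^B}$ is purely coordinate-computable, and matching against the first line of \eqref{commutators} gives \eqref{nabb-A-Omegab} and \eqref{additional-metric-1} after using \eqref{substitute-xi-xib-ze}.

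For the two tensorial evolution equations \eqref{additional-metric-2} and \eqref{additional-metric-3}, the strategy is to compute $e_4(\slashed{g}_{AB}) = e_4(\g(e_A,e_B))$ and $e_3(\slashed{g}_{AB})$ using metric compatibility: $e_4(\g(e_A,e_B)) = \g(\D_4 e_A, e_B) + \g(e_A, \D_4 e_B)$, then substitute $\D_4 e_A = \nabb_4 e_A + \etab_A e_4 + \xi_A e_3 = \nabb_4 e_A + \etab_A e_4$ and use $\chi_{AB} = \g(\D_A e_4, e_B) = \g(\D_4 e_A, e_B)$ up to the $e_4$-component which is annihilated by $e_B$ being $S$-tangent. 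This gives $\partial_s \slashed{g}_{AB} = 2\chi_{AB} = \ka\,\slashed{g}_{AB} + 2\chih_{AB}$ after the trace/traceless split \eqref{definition-ka-kab}. For \eqref{additional-metric-3} one does the same with $e_3$, using $e_3 = 2\vsi^{-1}\partial_u + \Omegab \partial_s + \underline{b}^A\partial_{\th^A}$ so that $e_3(\slashed{g}_{AB}) = 2\vsi^{-1}\partial_u \slashed{g}_{AB} + \Omegab \partial_s \slashed{g}_{AB} + \underline{b}^C \partial_C \slashed{g}_{AB}$; pairing with $\D_3 e_A = \nabb_3 e_A + \eta_A e_3 + \xib_A e_4$ produces $2\chib_{AB}$ plus Lie-derivative-of-$\underline{b}$ terms, and collecting the $\underline{b}^C\partial_C$ piece together with the $\nabb \underline{b}$ terms into the symmetrized derivative $2(\DDs_2\underline{b})_{AB}$ plus the trace $-(\divv\underline{b})\slashed{g}_{AB}$ (using the definition of $\DDs_2$ in Section \ref{sec:genmfld}) gives \eqref{additional-metric-3}.

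The main obstacle will be the bookkeeping in \eqref{additional-metric-3}: correctly accounting for the difference between the coordinate $\partial_u$-derivative and the covariant $\nabb_3$-derivative of $\slashed{g}$, tracking the Lie-derivative terms generated by the shift vector $\underline{b}^A$, and repackaging the antisymmetric and trace parts of $\nabb \underline{b}$ into the operators $\DDs_2$ and $\divv$ with the right signs and factors. Everything else is a short, mechanical computation once \eqref{substitute-xi-xib-ze} is in hand; the only subtlety elsewhere is making sure the identification of $\vsi$ and $\Omegab$ from the metric is paired against the correct dual objects (e.g. $\partial_u \lrcorner\, \g$ versus $\partial_s \lrcorner\, \g$) before differentiating.
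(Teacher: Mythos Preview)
Your overall strategy---use the commutator formulas \eqref{commutators} together with the explicit coordinate expressions of the frame---is the same as the paper's, and the treatment of \eqref{additional-metric-2}--\eqref{additional-metric-3} via $e_4(\slashed{g}_{AB})$, $e_3(\slashed{g}_{AB})$ is essentially what the paper does (it phrases it through the projected Lie derivatives \eqref{covariant-Lie}). The paper's implementation of the scalar equations is, however, cleaner than yours: instead of identifying $\vsi$, $\Omegab$, $\underline{b}$ from metric pairings and then differentiating those pairings, it simply applies the commutators $[e_3,e_A]$, $[e_3,e_4]$, $[e_4,e_A]$ to the \emph{coordinate functions} $u$, $s$, $\th^A$, using that $e_3(u)=2\vsi^{-1}$, $e_3(s)=\Omegab$, $e_3(\th^A)=\underline{b}^A$ while $e_4$ and $e_A$ act trivially on most of them. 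This avoids pairing against $\partial_u$ (which is not a frame vector) and reads off each relation in one line.

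A few of your specific attributions would not go through as written. First, $\g(e_3,\partial_u)$ equals $\vsi\,\Omegab$, not $-\vsi$; the correct pairing is $\g(\partial_s,\partial_u)=-\vsi$. Second, your proposed derivation of $\etab_A=-\ze_A$ from $e_4(\g(e_3,e_A))=0$ is tautological: substituting \eqref{derivatives-} gives $2\etab_A-2\etab_A=0$. The actual input is $[e_4,e_A]=0$ (coordinate fields commute), whose $e_4$-component in \eqref{commutators} is precisely $\etab_A+\ze_A=0$; the paper gets this by applying $[e_4,e_A]$ to $s$. Third, equation \eqref{additional-metric-1} cannot come from $[e_4,e_A]=0$, which contains no $\underline{b}$; it comes from $[e_3,e_4]$ applied to $\th^A$ (since $e_3(\th^A)=\underline{b}^A$), giving $-D\underline{b}^A=2(\eta^A+\ze^A)$ and then \eqref{covariant-Lie}. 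These are local fixes rather than structural problems, and once you adopt the ``apply commutators to coordinate functions'' device the whole lemma falls out mechanically.
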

\begin{proof} The vectorfield $\pr_s$ is geodesic, i.e. $\D_{e_4}e_4=0$. Using \eqref{derivatives-}, this implies $\om=0$ and $\xi_A=0$.

Since $e_3(u)=2\vsi^{-1}$ and $e_4(u)=e_A(u)=0$, we can apply $\big[e_3, e_A\big]$ and $[e_3, e_4]$ to $u$ and using \eqref{commutators} we obtain
\beaa
\big[e_3, e_A\big]u&=& \nabb_3 \nabb_A(u)-\nabb_A(\nabb_3(u))=-2\nabb_A(\vsi^{-1})=2\vsi^{-2}\nabb_A(\vsi)\\
\big[e_3, e_A\big]u&=&(\eta_A-\ze_A) e_3(u)+\xib_A e_4(u)-\chib_{AB} e^B(u)=2(\eta_A-\ze_A) \vsi^{-1} 
\eeaa
\beaa
\big[e_3, e_4\big]u&=& e_3 e_4(u)-e_4 e_3(u)=-2\nabb_4(\vsi^{-1})=2\vsi^{-2}\nabb_4(\vsi)\\
\big[e_3, e_4\big]u&=&2\omb  e_4(u)+2(\eta^B -\etab^B) e_B(u)=0
\eeaa
Since $e_3(s)=\Omegab$, $e_4(s)=1$, $e_A(s)=0$, we can apply $\big[e_4, e_A\big]$, $\big[e_3, e_A\big]$ and $\big[e_3, e_4 \big]$ to $s$, using \eqref{commutators}, and obtain
\beaa
\big[e_4, e_A\big]s&=& e_4 e_A(s)-e_A(e_4(s))=0\\
\big[e_4, e_A\big]s&=&(\etab_A+\ze_A) e_4(s)-\chi_{AB} e^B(s)=\etab_A+\ze_A\\
\big[e_3, e_A\big]s&=& e_3 e_A(s)-e_A(e_3(s))=-\nabb_A \Omegab\\
\big[e_3, e_A\big]s&=&(\eta_A-\ze_A) e_3(s)+\xib_A e_4(s)-\chib_{AB} e^B(s)=(\eta_A-\ze_A) \Omegab+\xib_A, \\
\big[e_3, e_4\big]s&=& e_3 e_4(s)-e_4 e_3(s)=-\nabb_4 \Omegab\\
\big[e_3, e_4\big]s&=&2\omb  e_4(s)+2(\eta^B -\etab^B) e_B(s)=2\omb
\eeaa
Since $e_3(\th^A)=\underline{b}^A$, $e_4(\th_A)=0$, $e_A(\th_B)=\de_{AB}$ we can apply $\big[e_3, e_4\big]$ to $\th^A$, and obtain
\beaa
\big[e_3, e_4\big]\th^A&=& e_3 e_4(\th^A)-e_4 e_3(\th^A)=-D\underline{b}^A\\
\big[e_3, e_4\big]\th^A&=&2\omb  e_4(\th^A)+2(\eta^B -\etab^B) e_B(\th^A)=2(\eta^A +\ze^A) 
\eeaa
Using \eqref{covariant-Lie}, we obtain the desired relation. 
We now derive the equation for the metric $\slashed{g}$. Using \eqref{covariant-Lie}, we obtain
\beaa
\underline{D}\slashed{g}_{AB}&=&\nabb_3 \slashed{g}_{AB}+\chib_{AC}\slashed{g}^C_B+\chib_{BC}\slashed{g}_A^C= 2 \chib_{AB}=2 \chibh_{AB}+ \kab \slashed{g}_{AB}, \\
D\slashed{g}_{AB}&=& 2 \chih_{AB}+ \ka \slashed{g}_{AB}
\eeaa
In view of the formula for the projected Lie-derivative and the null frame \eqref{null-frame-double-null}, 
\beaa
\underline{D}\slashed{g}_{AB}&=&2\vsi^{-1}\partial_u(\slashed{g}_{AB})+\Omegab \partial_s(\slashed{g}_{AB})+(\nabb_A \underline{b}_B+\nabb_B \underline{b}_A) \\
D\slashed{g}_{AB}&=&\partial_s(\slashed{g}_{AB})
\eeaa
Combining the above, we obtain the desired relations.
\end{proof}

In considering solutions $(\MM, \g)$ to the Einstein-Maxwell equations of the form \eqref{double-null-metric}, we use relations \eqref{substitute-xi-xib-ze} to set $\xi$ and $\om$ to vanish and to substitute $\etab$ in terms of $\ze$ in the equations. We will also add equations \eqref{nabb-4-vsi}-\eqref{additional-metric-3} to the set of linearized equations (see Section \ref{section-linearized-metric}).

\subsection{Transport equations for average quantities}\label{average-quantities-general}

Recall the definition of $S$-average given in \eqref{def-S-average}. We specialize here to the foliation in surfaces given by the Bondi gauge, and we derive the transport equations for average quantities. They shall be used in Chapter \ref{linearized-equations-chapter} to derive the linearized equations for the scalars involved in the perturbation.

To simplify the notation, we denote in the following $S=S_{u,s}$ and $r=r(u,s)$.

\begin{proposition}[Proposition 2.2.9 in \cite{stabilitySchwarzschild}] \label{nabb-4-int-f} For any scalar function $f$, we have 
\beaa
\nabb_4 \left(\int_S f\right)&=&\int_S( \nabb_4f+\ka f), \\
\nabb_3 \left(\int_S f\right)&=&\int_S (\nabb_3f+\kab f)+\text{Err}[\nabb_3(\int_{S} f)]
\eeaa
where the error term is given by the formula
\beaa
\text{Err}[\nabb_3(\int_{S} f)]:&=&-\vsi^{-1}\check{\vsi}\int_{S}(\nabb_3f+\kab f) +\vsi^{-1}\int_{S} \check{\vsi}(\nabb_3 f+\kab f)+ (\check{\Omegab}+\vsi^{-1}\ov{\Omegab} \check{\vsi}) \int_S( \nabb_4f+\ka f)\\
&&-\vsi^{-1}\ov{\Omegab}\int_{S} \check{\vsi}(\nabb_4f+\ka f)-\vsi^{-1} \int_{S} \check{\Omegab} \vsi(\nabb_4 f+\ka f)
\eeaa
In particular, we have 
\beaa
\nabb_4 r=\frac r 2  \ov{\ka}, \qquad \nabb_3 r=\frac r 2 \left(\ov{\kab}+\underline{A}\right)
\eeaa
where 
\beaa
\underline{A}:= -\vsi^{-1}\ov{\kab}\check{\vsi}+\ov{\ka}(\check{\Omegab}+\vsi^{-1}\ov{\Omegab}\check{\vsi})+\vsi^{-1}\ov{\check{\vsi}\check{\kab}}-\vsi^{-1}\ov{\Omegab}\ov{\check{\vsi}\check{\ka}}-\vsi^{-1}\ov{\check{\Omegab}\vsi\ka}
\eeaa
\end{proposition}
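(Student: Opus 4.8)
The plan is to derive the two formulas for $\nabb_4(\int_S f)$ and $\nabb_3(\int_S f)$ from the variation of the area element along the null directions, and then specialize $f\equiv 1$ to obtain the transport equations for $r$. First I would record the basic fact that the area form $d\mu_{\slashed{g}}$ of $(S_{u,s},\slashed{g})$ evolves according to the trace of the second fundamental form: using the Bondi relations \eqref{additional-metric-2}--\eqref{additional-metric-3} from Lemma \ref{double-null-Ricci}, or directly the formulas $D\slashed{g}_{AB}=2\chih_{AB}+\ka\slashed{g}_{AB}$ and $\underline{D}\slashed{g}_{AB}=2\chibh_{AB}+2(\DDs_2\underline{b})_{AB}+(\kab-\divv\underline{b})\slashed{g}_{AB}$ derived in the proof of that lemma, one gets $D(d\mu_{\slashed{g}})=\ka\, d\mu_{\slashed{g}}$ and $\underline{D}(d\mu_{\slashed{g}})=(\kab-\divv\underline{b})\,d\mu_{\slashed{g}}$, since $\tr\chih=0$, $\tr(\DDs_2\underline{b})=0$, and the $\slashed{g}$-trace of $\frac12\ka\slashed{g}_{AB}$ is $\ka$ in two dimensions. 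This is just the standard first-variation-of-area computation.

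Next I would convert the Lie-derivative statement into the claimed $\nabb_4,\nabb_3$ statement. For $e_4$, since $e_4=\partial_s$ and $\underline{b}$ does not enter, $\nabb_4(\int_S f)=\partial_s\int_S f=\int_S D(f\,d\mu_{\slashed{g}})/d\mu_{\slashed{g}}\cdot d\mu_{\slashed{g}}$; using $Df=\nabb_4 f$ (from \eqref{covariant-Lie}) and $D(d\mu_{\slashed{g}})=\ka\,d\mu_{\slashed{g}}$ gives exactly $\int_S(\nabb_4 f+\ka f)$. For $e_3$ the subtlety is that the integral $\int_S f$ is a function of $(u,s)$ and $e_3=2\vsi^{-1}\partial_u+\Omegab\partial_s+\underline{b}^A\partial_{\th^A}$ mixes the coordinate derivatives; moreover $\underline{D}f=\nabb_3 f$ but the divergence term $-\divv\underline{b}$ appears in $\underline{D}(d\mu_{\slashed{g}})$. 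Writing $e_3(\int_S f)=2\vsi^{-1}\partial_u(\int_S f)+\Omegab\,\partial_s(\int_S f)$ and expressing everything in terms of $\underline{D}$ acting on $f\,d\mu_{\slashed{g}}$, the total-divergence piece $\int_S\divv(f\underline{b})\,d\mu_{\slashed{g}}$ integrates to zero, which kills the $-\divv\underline{b}$ term and leaves the main term $\int_S(\nabb_3 f+\kab f)$. The remaining discrepancy between $e_3$ and the averaged-$\nabb_3$ normalization — caused by the fact that $\vsi$ and $\Omegab$ are not constant on $S$, i.e. the difference between $\vsi^{-1}$ and $\ov{\vsi^{-1}}$, etc. — is precisely what gets collected into $\text{Err}[\nabb_3(\int_S f)]$; I would expand $\vsi=\ov{\vsi}+\check{\vsi}$, $\Omegab=\ov{\Omegab}+\check{\Omegab}$ and track which averaged products survive, using the elementary identities \eqref{ov-f-g}.

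Finally, specializing $f=1$: from \eqref{r-general}, $\int_S 1=|S_{u,s}|=4\pi r^2$, so $\nabb_4(4\pi r^2)=\int_S \ka=|S|\ov{\ka}=4\pi r^2\ov{\ka}$, giving $8\pi r\,\nabb_4 r=4\pi r^2\ov{\ka}$, i.e. $\nabb_4 r=\frac r2\ov{\ka}$; and similarly $\nabb_3(4\pi r^2)=\int_S\kab+\text{Err}[\nabb_3(\int_S 1)]=4\pi r^2\ov{\kab}+\text{Err}$, yielding $\nabb_3 r=\frac r2(\ov{\kab}+\underline{A})$ with $\underline{A}:=\text{Err}[\nabb_3(\int_S 1)]/(4\pi r^2)$; plugging $f=1$ into the general error formula and simplifying with \eqref{ov-f-g} gives the stated expression for $\underline{A}$. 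The main obstacle, and the only place real care is needed, is the bookkeeping of the error term for $\nabb_3$: correctly passing between the coordinate vector fields $\partial_u,\partial_s$ and the geometric operator $\nabb_3$ while keeping track of the non-constancy of $\vsi$ and $\Omegab$ on each sphere, and verifying that every term produced matches the asserted formula — everything else (first variation of area, vanishing of total divergences, the two-dimensional trace identities) is routine. Since this is quoted as Proposition 2.2.9 in \cite{stabilitySchwarzschild}, one may also simply invoke it, but the self-contained derivation proceeds as above.
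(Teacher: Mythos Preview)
Your proposal is correct and follows essentially the same approach as the paper: both derive the $\nabb_4$ formula from the first variation of area (the paper computes $\slashed{g}(\D_A e_4,e^A)=\ka$ directly, you obtain $D(d\mu_{\slashed g})=\ka\,d\mu_{\slashed g}$ from the metric evolution --- these are equivalent), and both specialize $f\equiv 1$ to get the transport equations for $r$. The paper defers the $\nabb_3$ bookkeeping to \cite{stabilitySchwarzschild}, so your more explicit outline of how the $\check\vsi,\check\Omegab$ terms and the vanishing total divergence produce the error is simply filling in what the paper omits.
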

\begin{proof} Recalling that $e_4= \partial_s$, we compute 
\beaa
\partial_s\left(\int_S f\right)&=&\int_S (\partial_s f +\slashed{g}(\D_A \partial_s, e^A)f)
\eeaa
We have $\D_A \partial_s=\D_A e_4$ and using the relations \eqref{derivatives-}, we obtain 
\beaa
\slashed{g}(\D_A \partial_s, e^A)&=&\slashed{g}(-\ze_A e_4+\chi_{AC} e^C, e^A)=\ka 
\eeaa
We easily deduce the desired relation along $e_4$. The formula for derivative along $e_3$ is obtained in a similar way. See \cite{stabilitySchwarzschild}.

The equality for $r$ follows by applying the Lemma to $f=1$.
\end{proof}

\begin{corollary}[Corollary 2.2.11 in \cite{stabilitySchwarzschild}]\label{average-check} For any scalar function $f$ we have
\beaa
\nabb_4(\ov{f})&=& \ov{\nabb_4f}+\ov{\check{\ka}\check{f}}, \\
\nabb_4(\check{f})&=& \check{\nabb_4 f}-\ov{\check{\ka}\check{f}}
\eeaa
and
\beaa
\nabb_3(\ov{f})&=& \ov{\nabb_3f}+\text{Err}[\nabb_3(\ov{f})] \\\nabb_3(\check{f})&=&\check{\nabb_3 f}-\text{Err}[\nabb_3(\ov{f})]
\eeaa
where 
\beaa
\text{Err}[\nabb_3(\ov{f})]:&=&-\vsi^{-1}\check{\vsi}(\ov{\nabb_3f+\kab f}-\ov{\kab}\ov{f}) +\vsi^{-1}(\ov{ \check{\vsi}(\nabb_3 f+\kab f)}-\ov{\check{\vsi}\check{\kab}} \ov{f})\\
&&+ (\check{\Omegab}+\vsi^{-1}\ov{\Omegab} \check{\vsi}) ( \ov{\nabb_4f+\ka f}-\ov{\ka}\ov{f})-\vsi^{-1}\ov{\Omegab}(\ov{ \check{\vsi}(\nabb_4f+\ka f)}-\ov{\check{\vsi}\check{\ka}}\ov{f})\\
&&-\vsi^{-1} (\ov{\check{\Omegab} \vsi(\nabb_4 f+\ka f)}-\ov{\check{\Omegab}\vsi\ka}\ov{f})+\ov{\check{\kab}\check{f}}
\eeaa
\end{corollary}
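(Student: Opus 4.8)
The plan is to obtain both identities by differentiating the defining relation $\ov f=\frac{1}{|S|}\int_S f=\frac{1}{4\pi r^2}\int_S f$ (using $|S|=4\pi r^2$) via the product rule, feeding in the transport equations for $\int_S f$ and for $r$ supplied by Proposition \ref{nabb-4-int-f}, and then simplifying the outcome with the algebraic identity \eqref{ov-f-g}. The companion formulas for $\check f=f-\ov f$ then follow at once by subtraction, since $\nabb_4(\check f)=\nabb_4 f-\nabb_4(\ov f)$ and $\nabb_3(\check f)=\nabb_3 f-\nabb_3(\ov f)$, so only the identities for $\ov f$ need to be established.

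For the outgoing direction, write $\nabb_4(\ov f)=-\tfrac{2\nabb_4 r}{4\pi r^3}\int_S f+\tfrac{1}{4\pi r^2}\nabb_4\!\big(\int_S f\big)$. Substituting $\nabb_4 r=\tfrac r2\ov\ka$ turns the first term into $-\ov\ka\,\ov f$, while Proposition \ref{nabb-4-int-f} identifies the second term with $\ov{\nabb_4 f+\ka f}=\ov{\nabb_4 f}+\ov{\ka f}$. Applying \eqref{ov-f-g} in the form $\ov{\ka f}=\ov\ka\,\ov f+\ov{\check\ka\check f}$ cancels the two $\ov\ka\,\ov f$ contributions and leaves exactly $\nabb_4(\ov f)=\ov{\nabb_4 f}+\ov{\check\ka\check f}$.

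The ingoing direction runs along the same lines but now carries the error term. Using $\nabb_3 r=\tfrac r2(\ov{\kab}+\underline A)$ together with $\nabb_3\!\big(\int_S f\big)=\int_S(\nabb_3 f+\kab f)+\text{Err}[\nabb_3(\int_S f)]$ gives $\nabb_3(\ov f)=-(\ov{\kab}+\underline A)\ov f+\ov{\nabb_3 f+\kab f}+\tfrac{1}{4\pi r^2}\text{Err}[\nabb_3(\int_S f)]$, and again $\ov{\kab f}=\ov{\kab}\,\ov f+\ov{\check{\kab}\check f}$ absorbs the $\ov{\kab}\,\ov f$ term, so that $\text{Err}[\nabb_3(\ov f)]=\ov{\check{\kab}\check f}-\underline A\,\ov f+\tfrac{1}{4\pi r^2}\text{Err}[\nabb_3(\int_S f)]$. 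The remaining — and only slightly tedious — step is to match this with the displayed expression: one converts each $\int_S(\cdots)$ appearing in $\text{Err}[\nabb_3(\int_S f)]$ into $4\pi r^2\,\ov{(\cdots)}$, inserts the definition of $\underline A$ from Proposition \ref{nabb-4-int-f}, and checks the five groups of contributions one at a time. In each group the relevant $\underline A\,\ov f$ piece is precisely what converts a bare average $\ov X$ into the combination $\ov X-\ov{(\text{coefficient})}\,\ov f$ required by the statement, so the terms reassemble exactly. The only point demanding care is this bookkeeping in the ingoing direction; there is no analytic content whatsoever.
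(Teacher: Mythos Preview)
Your proof is correct and follows the natural route: differentiate $\ov f=\tfrac{1}{4\pi r^2}\int_S f$ via the product rule, plug in Proposition~\ref{nabb-4-int-f} for both $\nabb_4 r$, $\nabb_3 r$ and the derivatives of $\int_S f$, and simplify using $\ov{\ka f}=\ov\ka\,\ov f+\ov{\check\ka\check f}$ (and its $\kab$ analogue). The paper itself does not supply a proof of this corollary, merely citing it from \cite{stabilitySchwarzschild}; your argument is precisely the computation that citation points to, and your term-by-term matching of the five groups in the $e_3$ direction is accurate.
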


\section{The Reissner-Nordstr{\"o}m spacetime}\label{RN-chapter}

In this section, we introduce the Reissner-Nordstr{\"o}m exterior metric, as well as relevant background structure. For completeness, we collect here standard coordinate transformations relevant to the study of Reissner-Nordstr{\"o}m spacetime (see for example \cite{Exact}), even if not directly used in our proof. 

 We first fix in Section \ref{diff-structure} an ambient manifold-with-boundary $\MM$ on which we define the Reissner-Nordstr{\"o}m exterior metric $ \g_{M, Q}$ with parameters $M$ and $Q$ verifying $|Q| < M$. We shall then pass to more convenient sets of coordinates, like double null coordinates, outgoing and ingoing Eddington-Finkelstein coordinates, and we shall show how these sets of coordinates relate to the standard form of the metric as given in \eqref{RNintro}.

 In Section \ref{Outgoing-coords}, we show that the Reissner-Nordstr{\"o}m metric admits a Bondi form as described in the previous chapter.  We then describe the null frames associated to such coordinates and the values of Ricci coefficients, curvature and electromagnetic components. 
 
Finally, in Section \ref{sec:commutation} we recall the symmetries of Reissner-Nordstr\"om spacetime and present the main operators and commutation formulae. We also recall the main properties of decomposition in spherical harmonics in Reissner-Nordstr{\"o}m spacetime. 

We will follow closely Section 4 of \cite{DHR}, where the main features of the Schwarzschild metric and differential structure are easily extended to the Reissner-Nordstr{\"o}m solution. 

\subsection{Differential structure and metric}\label{diff-structure}

We define in this section the underlying differential structure and metric in terms of the Kruskal coordinates. 

\subsubsection{Kruskal coordinate system}

Define the manifold with boundary
\begin{align} \label{SchwSchmfld}
\mathcal{M} := \mathcal{D} \times S^2 := \left(-\infty,0\right] \times \left(0,\infty\right) \times S^2
\end{align}
with coordinates $\left(U,V,\theta^1,\theta^2\right)$.
We will refer to these coordinates as \emph{Kruskal coordinates}.
 The boundary 
\[
\mathcal{H}^+:=\{0\}  \times \left(0,\infty\right) \times S^2
\]
will be referred to as the \emph{horizon}. 
We denote by $S^2_{U,V}$ the $2$-sphere $\left\{U,V\right\} \times S^2 \subset \mathcal{M}$ in $\mathcal{M}$.

\subsubsection{The Reissner-Nordstr{\"o}m metric}

We define the Reissner-Nordstr{\"o}m metric on $\mathcal{M}$ as follows.
Fix two parameters $M>0$ and $Q$, verifying $|Q|<M$. Let the function $r : \mathcal{M} \rightarrow \left[M+\sqrt{M^2-Q^2},\infty\right)$ be 
given implicitly as a function of the coordinates $U$ and $V$ by
\begin{align}\label{implicit-definition-r-U-V}
-UV = \frac{4r_{+}^4}{(r_{+}-r_{-})^2} \Big|\frac{r-r_{+}}{r_{+}}\Big| \Big|\frac{r_{-}}{r-r_{-}}\Big|^{\left(\frac{r_{-}}{r_{+}}\right)^2} \exp\Big(\frac{r_{+}-r_{-}}{r_{+}^2} r\Big) \, ,
\end{align}
where 
\bea\label{definiion-rpm}
r_{\pm}=M\pm \sqrt{M^2-Q^2}
\eea
 We will also denote 
\bea\label{def-rH}
r_{\mathcal{H}}=r_+=M+\sqrt{M^2-Q^2}
\eea
Define also
\beaa
\Up_K \left(U,V\right) &=& \frac{r_{-}r_{+}}{4r(U,V)^2} \Big( \frac{r(U,V)-r_{-}}{r_{-}}\Big)^{1+\left(\frac{r_{-}}{r_{+}}\right)^2}\exp\Big(-\frac{r_{+}-r_{-}}{r_{+}^2} r(U,V)\Big) \\
 \gamma_{AB} &=& \textrm{standard metric on $S^2$} \, .
\eeaa
Then the Reissner-Nordstr{\"o}m metric $\g_{M, Q}$ with parameters $M$ and $Q$ is defined to be the metric:
\begin{align} \label{sskruskal}
\g_{M, Q} = -4 \Up_K \left(U,V\right) d{U} d{V} +   r^2 \left(U,V\right) \gamma_{AB} d{\theta}^A d{\theta}^B.
\end{align}

 The Reissner-Nordstr{\"o}m family
  of spacetimes $(\mathcal{M},\g_{M, Q})$ is the unique electrovacuum spherically symmetric spacetime. It is a static and asymptotically flat spacetime. The parameter $Q$ may be interpreted as the charge of the source. This metric clearly reduces to the Schwarzschild metric when $Q=0$, therefore $M$ can be interpreted as the mass of the source.

Note that the horizon $\mathcal{H}^+=\partial\mathcal{M}$ 
is a null hypersurface with respect to $\g_{M, Q}$. We will use the standard
spherical coordinates $(\theta^1,\theta^2)=(\theta, \phi)$, in which case
the metric $\gamma$ takes the explicit form
\begin{equation}
\label{gammaexplicit}
\gamma= d\theta^2+\sin^2\theta d\phi^2.
\end{equation}
The above metric \eqref{sskruskal} can be extended to define the maximally-extended Reissner-Nordstr{\"o}m solution on the ambient manifold $(-\infty, \infty) \times (\infty, \infty) \times S^2$. In this paper, we will only consider the manifold-with-boundary $\MM$, corresponding to the exterior of the spacetime.

Using definition \eqref{sskruskal}, the metric $\g_{M, Q}$ is manifestly smooth in the maximally extended Reissner-Nordstr\"om. We will now describe different sets of coordinates which do not cover maximally extended Reissner-Nordstr\"om, but which are nevertheless useful for computations. 

\subsubsection{Double null coordinates $u$, $v$}
\label{EFdnulldef}

We define another double null coordinate system that covers  the interior of $\mathcal{M}$, modulo the degeneration of the 
angular coordinates. This coordinate system, 
$\left(u,v,\theta^1, \theta^2\right)$, is called  \emph{double null coordinates} and are defined via the relations
\begin{align} \label{UuVv}
U = -\frac{2r_{+}^2}{r_{+}-r_{-}}\exp\left(-\frac{r_{+}-r_{-}}{4r_{+}^2} u\right) \ \ \ \textrm{and} \ \ \ \ V = \frac{2r_{+}^2}{r_{+}-r_{-}}\exp \left(\frac{r_{+}-r_{-}}{4r_{+}^2} v\right) \, .
\end{align}
Using (\ref{UuVv}), we obtain the Reissner-Nordstr{\"o}m metric on the interior of $\MM$ in $\left(u,v,\theta^1, \theta^2\right)$-coordinates:
\begin{align} \label{ssef}
\g_{M, Q} =  - 4 \Up \left(u,v\right) \, d{u} \, d{v} +   r^2 \left(u,v\right) \gamma_{AB} d{\theta}^A d{\theta}^B  
\end{align}
with
\begin{align}
\label{officialOmegadef}
\Up:= 1-\frac{2M}{r}+\frac{Q^2}{r^2}
\end{align}
and the function $r: \left(-\infty,\infty\right) \times \left(-\infty,\infty\right) \rightarrow \left(M+\sqrt{M^2-Q^2},\infty\right)$ defined implicitly via the relations between $(U,V)$ and $(u,v)$. In $\left(u,v,\theta^1, \theta^2\right)$-coordinates, the horizon $\mathcal{H}^+$ can still be formally parametrised by $\left(\infty, v,\theta^1,\theta^2\right)$ with $v \in \mathbb{R}$, $\left(\theta^1,\theta^2\right) \in S^2$.

Note that $u, v$ are  regular optical functions. Their corresponding   null geodesic generators are
\bea\label{definition-L-Lb}
\Lb:=-g^{ab}\pr_a v  \pr_b=\frac{1}{\Up} \pr_u,         \qquad  L:=-g^{ab}\pr_a u  \pr_b=\frac{1}{\Up} \pr_v,
\eea
They verify
\beaa
g(L, L)=g(\Lb,\Lb)=0, \quad  g(L, \Lb)=-2\Up^{-1},\qquad D_L L=D_\Lb \Lb=0.
\eeaa

\subsubsection{Standard coordinates $t$, $r$}
Recall the form of the metric \eqref{ssef} in double null coordinates. Setting $$t=u+v$$ we may rewrite the above metric in coordinates $(t, r, \th, \phi)$ in the usual form \eqref{RNintro}:
\bea\label{RNintro-1}
 \g_{M, Q}=-\Up(r) dt^2 +\Up(r)^{-1}dr^2 +r^2(d\theta^2+\sin^2\theta d\phi^2),
\eea
which covers the interior of $\MM$. Observe that  
\beaa
\Up(r):=1-\frac{2M}{r}+\frac{Q^2}{r^2}=\frac{(r-r_{-})(r-r_{+})}{r^2}
\eeaa
where $r_{-}$ and $r_+$ are defined in \eqref{definiion-rpm}. 
The null vectors $L$ and $\Lb$ defined in \eqref{definition-L-Lb}, in $(t, r)$ coordinates can be written as
\bea
\Lb=\Up^{-1}\pr_t -\pr_r,         \qquad  L=\Up^{-1}\pr_t +\pr_r,
\eea

\subsubsection{ Ingoing  Eddington-Finkelstein coordinates $v$, $r$}\label{ingoing-coords}
We define another coordinate system that covers the interior of $\MM$. This coordinate system, $(v, r, \th, \phi)$ is called \emph{ingoing  Eddington-Finkelstein coordinates} and makes use of the above defined functions $v$ and $r$.  The Reissner-Nordstr{\"o}m metric on the interior of $\MM$ in $\left(v,r, \th, \phi\right)$-coordinates is given by 
\bea
\label{Schw:EF-coordinates}
\g_{M,Q}=-\Up(r) dv^2 + 2 dv dr + r^2(d\theta^2+\sin^2\theta d\phi^2). 
\eea

\subsection{The Bondi form of the Reissner-Nordstr\"om metric}\label{Outgoing-coords}
We define here another coordinate system that covers the topological interior of the manifold $\MM$, and which achieves the Bondi form of the Reissner-Nordstr\"om metric as described in Chapter \ref{double-null-gauge}. These coordinate system covers therefore the open exterior of the Reissner-Nordstr\"om black hole spacetime. 

Recall the function $r$ implicitly defined by \eqref{implicit-definition-r-U-V} and the function $u$ defined by \eqref{UuVv}. 
 In the coordinate system $(u, r, \th, \phi)$, called \emph{outgoing  Eddington-Finkelstein coordinates},  the Reissner-Nordstr{\"o}m metric on the interior of $\MM$ is given by 
\bea
\label{Schw:EF-coordinates-out}
\g_{M, Q}=- 2 du  dr-\Up(r) du^2  + r^2(d\theta^2+\sin^2\theta d\phi^2). 
\eea
Notice that this metric is of the Bondi form \eqref{double-null-metric} with the coordinate function\footnote{Notice that $r(u,s)=s$ verifies the definition given by \eqref{r-general}, since at $u=constant$ and $s=constant$, the metric $\slashed{g}$ induced on $S_{u,s}$ is given by $r^2\left( d\th^2+\sin^2 \th d\phi^2\right)$ which verifies $|S_{u,s}|=4\pi r^2$.}  $s=r$ and 
\bea
\vsi=1, \qquad \Omegab=-\Up, \qquad \underline{b}_A=0, \qquad \slashed{g}_{AB}=r^2 \gamma_{AB}
\eea
The normalized outgoing geodesic null frame $\mathscr{N}$ associated to the above is given by 
\begin{align}\label{null-frame-RN}
e_3 = 2 \partial_u +\Omegab \partial_r, \qquad e_4 =  \partial_r  
\end{align}
together with a local frame field $(e_1, e_2)$ on $S_{u,r}$. 
This frame does not extend regularly to the horizon $\mathcal{H}^+$, while the rescaled null frame 
\beaa
\mathscr{N}_*= \{ \Omegab^{-1} e_3,\  \Omegab e_4 \}
\eeaa
extends regularly to a non-vanishing null frame on $\mathcal{H}^+$. 

We will always compute with respect to the normalized null frame $\mathscr{N}$, but nevertheless passing to $\mathscr{N}_*$ will be useful to understand which quantities are regular on the horizon.

\subsubsection{Ricci coefficients and curvature components}
\label{hereforSchconcur}
We recall here the connection coefficients, curvature and electromagnetic components with respect to the null frame \eqref{null-frame-RN}.

The Ricci coefficients are given by 
\bea\label{ricci-coefficients-RN-0}
\chih_{AB}=\chibh_{AB}=0, \qquad \eta=\etab=\xi=\xib=\ze=0 \qquad \om=0
\eea
\bea
 \ka=\frac{2}{r}, \qquad \kab=\frac{2\Omegab}{r}=-\frac{2}{r}\left( 1-\frac{2M}{r}+\frac{Q^2}{r^2}\right),  \qquad \omb=\frac{M}{r^2}-\frac{Q^2}{r^3}
\eea

\begin{remark}\label{remark-Bondi} As opposed to the Ricci coefficients in double null gauge used in \cite{DHR}, in the Bondi gauge all the quantities are regular near the horizon $\mathcal{H}^+$. 
\end{remark}

The electromagnetic components are given by
\bea\label{electromagnetic-components-RN}
\bF=\bbF=0, \qquad \sigmaF=0, \qquad \rhoF=\frac{Q}{r^2}
\eea

The curvature components are given by 
\bea\label{curvature-components-RN}
\a=\aa=0, \qquad \b=\bb=0, \qquad \sigma=0, \qquad \rho= -\frac{2M}{r^3}+\frac{2Q^2}{r^4} 
\eea

 We also have that
\begin{align} \label{GCdef}
K = \frac{1}{r^2}
\end{align}
for the Gauss curvature of the round $S^2$-spheres.

Recalling the definition for a scalar function \eqref{def-check-f}, the above values in particular imply that for the scalar functions which do not vanish in Reissner-Nordstr\"om, their ``checked'' values always vanish:
\bea\label{no-check}
\check{\ka}=\check{\kab}=\check{\omb}=\check{\rhoF}=\check{\rho}=\check{K}=0
\eea

\subsection{Reissner-Nordstr{\"o}m symmetries and operators} \label{sec:commutation}
In this section, we recall the symmetries of the Reissner-Nordstr\"om metric, and specialize the operators discussed in Section \ref{sec:genmfld} to the Reissner-Nordstr{\"o}m metric in the Bondi form \eqref{Schw:EF-coordinates-out}.

\subsubsection{Killing fields of the Reissner-Nordstr{\"o}m metric}

We discuss the Killing fields associated to the metric $\g_{M, Q}$. Notice that the Reissner-Nordstr{\"o}m metric possesses the same symmetries as the ones possessed by Schwarzschild spacetime.

We define the vectorfield $T$ to be the timelike Killing vector field $\pr_t$ of the $(t, r)$ coordinates in \eqref{RNintro-1}. In outgoing Eddington-Finkelstein coordinates it is given by 
\beaa
T=\pr_u
\eeaa
 The vector field extends to a smooth Killing field on the horizon $\mathcal{H}^+$, which is moreover null and tangential to the null generator of $\mathcal{H}^+$. 
In terms of the null frames defined above, the Killing vector field $T$ can be written as 
\bea\label{definition-T}
T=\frac 12 (\Up e_3^*+e_4^*)=\frac 1 2 (e_3+\Up e_4)
\eea
Notice that at on the horizon, $T$ corresponds up to a factor with the null vector of $\mathscr{N}^*$ frame, $T=\frac 1 2 e_4^*$. 

We can also define a basis of angular momentum operators $\Omega_i$, $i=1,2,3$. Fixing standard spherical coordinates on $S^2$, we have 
\beaa
\Omega_1=\partial_\phi, \qquad \Omega_2=-\sin \phi \partial_\th-\cot \th \cos \phi \partial_\phi, \qquad \Omega_3=\cos \phi \partial_\th-\cot \th \sin \phi \partial_\phi
\eeaa
 The Lie algebra of Killing vector fields of $\g_{M,Q}$ is then generated by $T$ and $\Omega_i$, for $i=1,2,3$.

\subsubsection{The $S_{u,r}$-tensor algebra in Reissner-Nordstr{\"o}m}
We now specialize the general definitions of the projected Lie and covariant differential operators of Section \ref{sec:genmfld} to the Reissner-Nordstr{\"o}m metric with null directions given by \eqref{null-frame-RN}. 

If $\xi$ is a $S_{u,r}$ tensor of rank $n$ on $(\mathcal{M}, \g_{M, Q})$ we have in components
\bea\label{lie-derivative-RN}
(D \xi)_{A_1, \dots A_n}=\pr_r (\xi_{A_1, \dots A_n}), \qquad (\underline{D} \xi)_{A_1, \dots A_n}=2 \pr_u(\xi_{A_1, \dots A_n})+\Omegab \pr_r (\xi_{A_1, \dots A_n})
\eea
 
 Since $\chi, \chib$ only have a trace-component in Reissner-Nordstr{\"o}m, one can specialize formulas \eqref{covariant-Lie} as 
 \bea
 (\nabb_4 \xi)_A=\pr_r (\xi_A) -\frac 1 2 \ka \xi_A, \qquad  (\nabb_3 \xi)_A=2\pr_u (\xi_A)+ \Omegab \pr_r (\xi_A) -\frac 1 2 \kab \xi_A \\
  (\nabb_4 \xi)^A=\pr_r (\xi^A) +\frac 1 2 \ka \xi^A, \qquad  (\nabb_3 \xi)^A=2\pr_u (\xi^A)+ \Omegab \pr_r (\xi^A) +\frac 1 2 \kab \xi^A \label{Liederivative-cov}
 \eea
 for $1$-forms and $1$-vectors and 
 \bea\label{Lie-covariant-2}
 (\nabb_4 \th)_{AB}=\pr_r (\th_{AB})-\ka \th_{AB}, \qquad (\nabb_3 \th)_{AB}=2\pr_u (\th_{AB})+ \Omegab \pr_r (\th_{AB}) - \kab \th_{AB} \\
  (\nabb_4 \th)^{AB}=\pr_r (\th^{AB})+\ka \th_{AB}, \qquad (\nabb_3 \th)^{AB}=2\pr_u (\th_{AB})+ \Omegab \pr_r (\th_{AB}) + \kab \th_{AB} 
 \eea
for symmetric traceless $2$-tensors.

\subsubsection{Commutation formulae in Reissner-Nordstr{\"o}m}

Adapting the commutation formulae \eqref{commutators} to the Reissner-Nordstr{\"o}m metric, we obtain the following commutation formulae. For projected covariant derivatives for $\xi = \xi_{A_1...A_n}$ any $n$-covariant $S^2_{u,r}$-tensor in Reissner-Nordstr{\"o}m metric $\left(\mathcal{M},\g_{M,Q}\right)$ in Bondi gauge we have
\begin{align}\label{commutation-formulas}
\nabb_3 \slashed{\nabla}_B \xi_{A_1...A_n} - \slashed{\nabla}_B \slashed{\nabla}_3 \xi_{A_1...A_n} &= - \frac{1}{2} \kab  \slashed{\nabla}_B \xi_{A_1...A_n} \, , \nonumber \\
\slashed{\nabla}_4 \slashed{\nabla}_B \xi_{A_1...A_n} - \slashed{\nabla}_B \slashed{\nabla}_4 \xi_{A_1...A_n} &= - \frac{1}{2} \ka  \slashed{\nabla}_B \xi_{A_1...A_n} \, , \\
\slashed{\nabla}_3 \slashed{\nabla}_4 \xi_{A_1...A_n} - \slashed{\nabla}_4 \slashed{\nabla}_3 \xi_{A_1...A_n} &=2\omb \slashed{\nabla}_4 \xi_{A_1...A_n} \, . \nonumber
\end{align}
In particular, we have
\begin{align}\label{commutator-rnabla}
\left[\slashed{\nabla}_4, r \slashed{\nabla}_A \right] \xi = 0 \ \ \ , \ \ \ \left[\slashed{\nabla}_3, r \slashed{\nabla}_A \right] \xi = 0  \, .
\end{align}
We summarize here the commutation formulae for the angular operators defined in Section \ref{sec:genmfld}. Let $\rho, \sigma$ be scalar functions, $\xi$ be a $1$-tensor and $\th$ be a symmetric traceless $2$-tensor on the Reissner-Nordstr{\"o}m manifold. Then:
\bea
\left[ \nabb_4, \DDd_1\right]\xi &=&-\frac 1 2 \ka \DDd_1\xi, \qquad \left[ \nabb_3, \DDd_1\right]\xi =-\frac 1 2 \kab \DDd_1\xi\label{commutator-nabb-4-divv} \\
\left[ \nabb_4, \DDs_1\right](\rho,\sigma)&=&-\frac 1 2 \ka\DDs_1(\rho,\sigma), \qquad \left[ \nabb_3, \DDs_1\right](\rho,\sigma)=-\frac 1 2 \kab\DDs_1(\rho,\sigma)\label{commutator-nabb-4-DDs-1}, \\
\left[ \nabb_4, \DDd_2\right]\th &=&-\frac 1 2 \ka \DDd_2\th, \qquad \left[ \nabb_3, \DDd_2\right]\th =-\frac 1 2 \kab \DDd_2\th\label{commutator-nabb-4-DDd_2} \\
\left[ \nabb_4, \DDs_2\right]\xi&=&-\frac 1 2 \ka\DDs_2\xi, \qquad
\left[ \nabb_3, \DDs_2\right]\xi=-\frac 1 2 \kab\DDs_2\xi\label{commutator-nabb-4-DDs}
\eea

\subsubsection{The $\ell=0,1$ spherical harmonics}

We collect here some known definitions and properties of the Hodge decomposition of scalars, one forms and symmetric traceless two tensors in spherical harmonics. We also recall some known elliptic estimates. See Section 4.4 of \cite{DHR} for more details.

We denote by $\dot{Y}_m^\ell$, with $|m|\leq \ell$, the well-known spherical harmonics on the unit sphere, i.e. $$\lapp_0 \dot{Y}^\ell_m=- \ell(\ell+1) \dot{Y}^\ell_m$$ where $\lapp_0$ denotes the laplacian on the unit sphere $S^2$.  The $\ell=0,1$ spherical harmonics are given explicitly by 
\bea
\dot{Y}_{m=0}^{\ell=0}&=&\frac{1}{\sqrt{4\pi}}, \label{spherical-l=0}\\
\dot{Y}_{m=0}^{\ell=1}&=&\sqrt{\frac{3}{8\pi}} \cos \th, \qquad \dot{Y}_{m=-1}^{\ell=1}=\sqrt{\frac{3}{4\pi}} \sin\th \cos \phi, \qquad \dot{Y}_{m=1}^{\ell=1}=\sqrt{\frac{3}{4\pi}} \sin\th \sin \phi \label{spherical-l=1}
\eea
This family is orthogonal with respect to the standard inner product on the sphere, and any arbitrary function $f \in L^2(S^2)$ can be expanded uniquely with respect to such a basis. 

In the foliation of Reissner-Nordstr\"om spacetime, we are interested in using the spherical harmonics with respect to the sphere of radius $r$. For this reason, we normalize the definition of the spherical harmonics on the unit sphere above to the following.

We denote by $Y_m^\ell$, with $|m|\leq \ell$, the spherical harmonics on the sphere of radius $r$, i.e. $$\lapp Y^\ell_m=-\frac{1}{r^2} \ell(\ell+1) Y^\ell_m$$ where $\lapp$ denotes the laplacian on the sphere $S_{u,r}$ of radius $r$. Such spherical harmonics are normalized to have $L^2$ norm in $S_{u, r}$ equal to $1$, so they will in particular be given by $Y_m^\ell=\frac{1}{r}\dot{Y}_m^\ell$. We use this basis to project functions on Reissner-Nordstr\"om manifold in the following way.  

\begin{definition}\label{lemma-spherical-harmonics} We say that  a function $f$ on $\mathcal{M}$ is supported on $\ell\geq 2$ if the projections 
\beaa
\int_{S_{u, r}}  f \c Y^\ell_m=0
\eeaa
vanish for $Y_m^{\ell=1}$ for $m=-1, 0, 1$. Any function $f$ can be uniquely decomposed orthogonally as 
\bea\label{decomposition-f-spherical-harmonics}
f=c(u, r) Y^{\ell=0}_{m=0}+\sum_{i=-1}^{1}c_i(u, r) Y^{\ell=1}_{m=i}(\th, \vphi)+f_{\ell\ge 2}
\eea
where $f_{\ell\ge 2}$ is supported in $\ell\ge2$.
\end{definition} 

In particular, we can write the orthogonal decomposition
\beaa
f=f_{\ell=0}+f_{\ell=1}+f_{\ell\geq2}
\eeaa
where 
\bea\label{explicit-formula-projection=l1}
f_{\ell=0}&=& \frac{1}{4\pi r^2} \int_{S_{u,r}}  f \\
f_{\ell=1}&=&\sum_{i=-1}^{1} \left( \int_{S_{u, r}} f \c Y^{\ell=1}_{m=i} \right) Y^{\ell=1}_{m=i} 
\eea

Recall that an arbitrary one-form $\xi$ on $S_{u,r}$ has a unique representation $\xi=r \DDs_1(f, g)$, for two uniquely defined functions $f$ and $g$ on the unit sphere, both with vanishing mean, i.e. $f_{\ell=0}=g_{\ell=0}=0$. In particular, the scalars $\divv \xi$ and $\curll \xi$ are supported in $\ell\ge1$. 
\begin{definition}\label{decomposition-xi} We say that a smooth $S_{u,r}$ one form $\xi$ is supported on $\ell\ge 2$ if the functions $f$ and $g$ in the unique representation $$\xi=r \DDs_1(f, g)$$ are supported on $\ell\geq 2$. Any smooth one form $\xi$ can be uniquely decomposed orthogonally as $$\xi=\xi_{\ell=1}+\xi_{\ell\ge 2}$$ where the two scalar functions $r\DDd_1\xi=(r\divv\xi_{\ell=1}, r\curll\xi_{\ell=1})$ are in the span of \eqref{spherical-l=0} and $\xi_{\ell\ge 2}$ is supported on $\ell\ge 2$.
\end{definition}

Recall that an arbitrary symmetric traceless two-tensor $\th$ on $S_{u,r}$ has a unique representation $\th=r^2\DDs_2\DDs_1(f, g)$ for two uniquely defined functions $f$ and $g$ on the unit sphere, both supported in $\ell\ge 2$. In particular, the scalars $\divv \divv \th$ and $\curll \divv \th$ are supported in $\ell\ge 2$.

For future reference, we recall the following lemma.

\begin{lemma}[Lemma 4.4.1 in \cite{DHR}]\label{lemma-kernel-DDs2} The kernel of the operator $\mathcal{T}=r^2 \DDs_2 \DDs_1$ is finite dimensional. More precisely, if the pair of functions $(f_1, f_2)$ is in the kernel, then 
\beaa
f_1=c Y^{\ell=0}_{m=0}+\sum_{i=-1}^{1}c_i Y^{\ell=1}_{m=i}(\th, \vphi), \qquad f_2=\tilde{c} Y^{\ell=0}_{m=0}+\sum_{i=-1}^{1}\tilde{c}_i Y^{\ell=1}_{m=i}(\th, \vphi)
\eeaa
for constants $c, c_i, \tilde{c}, \tilde{c}_i$.
\end{lemma}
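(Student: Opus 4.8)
The plan is to show that the single equation $\mathcal{T}(f_1,f_2)=0$ decouples, after applying a first-order operator, into two independent scalar elliptic equations — one for $f_1$ and one for $f_2$ — each of which is then solved by expanding in spherical harmonics. Since $r$ is a nonzero constant on each sphere $S_{u,r}$, the hypothesis $\mathcal{T}(f_1,f_2)=r^2\DDs_2\DDs_1(f_1,f_2)=0$ is equivalent to $\DDs_2\xi=0$, where $\xi:=\DDs_1(f_1,f_2)$ is an $S_{u,r}$ one-form.

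First I would apply $\DDd_2$ to $\DDs_2\xi=0$ and use the composition identity $\DDd_2\DDs_2=-\frac12\lapp_1-\frac12 K$ from \eqref{angular-operators}, which gives $(\lapp_1+K)\xi=0$. Next I would commute $\lapp_1$ through $\DDs_1$: from $\DDd_1\DDs_1=-\lapp_0$ one has $\DDs_1\DDd_1\DDs_1=-\DDs_1\lapp_0$, while $\DDs_1\DDd_1=-\lapp_1+K$ gives $\DDs_1\DDd_1\DDs_1=(-\lapp_1+K)\DDs_1$; comparing, and using that $K=r^{-2}$ is constant on the round spheres, one obtains $\lapp_1\DDs_1(\rho,\sigma)=\DDs_1\big((\lapp_0+K)\rho,(\lapp_0+K)\sigma\big)$. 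Substituting into $(\lapp_1+K)\xi=0$ turns it into $\DDs_1\big((\lapp_0+2K)f_1,(\lapp_0+2K)f_2\big)=0$. Since $\DDs_1(\rho,\sigma)_A=-\nabb_A\rho+\ep_{AB}\nabb^B\sigma$ vanishes only when $\rho$ and $\sigma$ are both constant, I conclude that $(\lapp_0+2K)f_1$ and $(\lapp_0+2K)f_2$ are each constant on $S_{u,r}$.

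Finally I would expand $f_1$ in the basis $Y^\ell_m$, using $\lapp_0 Y^\ell_m=-\frac{\ell(\ell+1)}{r^2}Y^\ell_m$, so that $(\lapp_0+2K)Y^\ell_m=\frac{2-\ell(\ell+1)}{r^2}Y^\ell_m$. Requiring $(\lapp_0+2K)f_1$ to be constant (i.e.\ supported on $Y^{\ell=0}_{m=0}$) forces all $\ell\geq2$ coefficients of $f_1$ to vanish, since $2-\ell(\ell+1)\neq0$ there, while the $\ell=0$ and $\ell=1$ coefficients remain unconstrained; the identical argument applies to $f_2$. This yields exactly the stated representation, and exhibits $\ker\mathcal{T}$ as a subspace of the $8$-dimensional span of pairs of $\ell\leq1$ harmonics, hence finite-dimensional. (Conversely one checks easily that every such pair does lie in the kernel: $\DDs_1$ annihilates the $\ell=0$ part, and $\DDs_1(Y^{\ell=1}_m,0)$, $\DDs_1(0,Y^{\ell=1}_m)$ are conformal Killing, respectively Killing, one-forms on $S^2$, so $\DDs_2$ kills them; this is not needed for the statement.)

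The only delicate point is the bookkeeping with the angular operators: one must track which space of $S$-tensors each operator acts on (pairs of scalars $\to$ one-forms $\to$ symmetric traceless two-tensors) and must invoke the constant-curvature property of $S_{u,r}$ so that $K$ commutes freely past $\DDs_1$ and $\lapp_0$. An alternative, slightly more computational route would start from $0=\int_S|\DDs_2\xi|^2$ and integrate by parts using Proposition \ref{L^2-estimates}, but that approach has to contend with the sign of the $K|\xi|^2$ term and is less transparent than the operator-identity argument above.
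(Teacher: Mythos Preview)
Your proof is correct. The paper does not actually supply its own proof of this lemma---it is quoted as Lemma~4.4.1 of \cite{DHR} and left unproved here---so there is no ``paper's proof'' to compare against. Your argument via the operator identities $\DDd_2\DDs_2=-\tfrac12(\lapp_1+K)$ and $\DDs_1\DDd_1=-\lapp_1+K$, together with the injectivity of $\DDs_1$ modulo constants (which follows immediately from the identity $\int_S|\DDs_1(\rho,\sigma)|^2=\int_S(|\nabb\rho|^2+|\nabb\sigma|^2)$ in Proposition~\ref{L^2-estimates}), is clean and complete; the spherical-harmonic computation $(\lapp_0+2K)Y^\ell_m=\tfrac{2-\ell(\ell+1)}{r^2}Y^\ell_m$ then pins down the kernel exactly as stated.
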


Consider a one-form $\xi$ on $\MM$ and its decomposition $\xi=\xi_{\ell=1}+\xi_{\ell\geq 2}$ as in Definition \ref{decomposition-xi}. Then Proposition \ref{L^2-estimates} implies the following elliptic estimate.

\begin{lemma}\label{main-elliptic-estimate} Let $\xi$ be a one-form on $\MM$. Then there exists a constant $C >0$ such that the following estimate holds:
\beaa
\int_{S} |\xi|^2 \leq C \left(\int_{S} |r\divv \xi_{\ell=1}|^2+|r\curll \xi_{\ell=1}|^2 + |r\DDs_2 \xi|^2 \right)
\eeaa
\end{lemma}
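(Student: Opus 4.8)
The plan is to decompose $\xi = \xi_{\ell=1} + \xi_{\ell\geq 2}$ as in Definition \ref{decomposition-xi}, estimate each piece separately using the identities of Proposition \ref{L^2-estimates}, and then recombine. Since this decomposition is $L^2$-orthogonal on each sphere $S = S_{u,r}$, we have $\int_S |\xi|^2 = \int_S |\xi_{\ell=1}|^2 + \int_S |\xi_{\ell\geq 2}|^2$, so it suffices to bound the two terms on the right-hand side by the claimed quantities.

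For the $\ell\geq 2$ part, I would invoke the elliptic estimate \eqref{second-elliptic-estimate} of Proposition \ref{L^2-estimates}: since $\xi_{\ell\geq 2}$ is supported on $\ell\geq 2$, it is orthogonal to the kernel of $\DDs_2$ (which by Lemma \ref{lemma-kernel-DDs2}, or rather by the characterization of $\ell\leq 1$ one-forms, consists of $\ell=1$ modes), and the Gauss curvature $K = r^{-2}$ is bounded away from zero on $S_{u,r}$. Hence $\int_S \frac{1}{r^2}|\xi_{\ell\geq 2}|^2 \leq C \int_S |\DDs_2 \xi_{\ell\geq 2}|^2$. Since $\DDs_2$ kills the $\ell=1$ piece entirely (as $\xi_{\ell=1} = r\DDs_1(f,g)$ with $f, g$ in the span of \eqref{spherical-l=0}, and $\DDs_2\DDs_1$ annihilates $\ell\leq 1$), one has $\DDs_2\xi = \DDs_2\xi_{\ell\geq 2}$, so $\int_S |\xi_{\ell\geq 2}|^2 \leq C\, r^2 \int_S |\DDs_2\xi|^2 \leq C \int_S |r\DDs_2\xi|^2$, absorbing the $r^2$ into the $r$ inside the norm.

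For the $\ell=1$ part, write $\xi_{\ell=1} = r\DDs_1(f,g)$ with $f,g$ supported on the $\ell=0,1$ spherical harmonics and with vanishing mean, so in fact $f,g$ are pure $\ell=1$. Applying identity \eqref{third-elliptic-estimate} (and using $\DDd_1\DDs_1 = -\lapp_0$) on the unit sphere, together with the eigenvalue relation $\lapp_0 f = -2f$ for $\ell=1$ functions, gives $\int_S |\xi_{\ell=1}|^2 \sim r^2 \int_S(|\nabb f|^2 + |\nabb g|^2) \sim \int_S(|\DDd_1\xi_{\ell=1}|^2)\cdot r^2$, i.e. $\int_S |\xi_{\ell=1}|^2 \leq C \int_S \left(|r\divv\xi_{\ell=1}|^2 + |r\curll\xi_{\ell=1}|^2\right)$; concretely, on $\ell=1$ the Poincaré-type inequality is an equality up to the universal constant coming from the eigenvalue $2$. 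Summing the two bounds yields the claim.

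The main obstacle — really the only subtle point — is the verification that $\DDs_2\xi = \DDs_2\xi_{\ell=1} + \DDs_2\xi_{\ell\geq 2} = \DDs_2\xi_{\ell\geq 2}$, i.e. that the $\ell=1$ component carries no $\DDs_2$-content, so that the right-hand side term $\int_S|r\DDs_2\xi|^2$ genuinely controls only the $\ell\geq 2$ part and one is not accidentally double-counting or losing a mode. This follows from the unique-representation statements recalled before Definition \ref{decomposition-xi} (an $\ell=1$ one-form is $r\DDs_1$ of $\ell=1$ scalars, and $\DDs_2\DDs_1$ annihilates $\ell\leq 1$), but it must be stated carefully. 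A secondary technical point is tracking the powers of $r$: the elliptic identities in Proposition \ref{L^2-estimates} are stated with $K$ appearing explicitly, and since $K = r^{-2}$ here one must be careful that the constant $C$ in the final inequality is $r$-independent — which it is, precisely because the normalized spherical harmonics on $S_{u,r}$ differ from those on the unit sphere by the explicit factor $r^{-1}$, making all the scaling exact.
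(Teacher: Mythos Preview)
Your proposal is correct and follows the same overall strategy as the paper: orthogonally decompose $\xi = \xi_{\ell=1} + \xi_{\ell\geq 2}$, note that $\xi_{\ell=1}$ lies in the kernel of $\DDs_2$ so that $\DDs_2\xi = \DDs_2\xi_{\ell\geq 2}$, and then apply the elliptic identities of Proposition~\ref{L^2-estimates} separately to each piece. The only minor difference is in the $\ell=1$ estimate: the paper applies \eqref{first-elliptic-estimate} directly to the one-form $\xi_{\ell=1}$ (which immediately gives $K\int_S|\xi_{\ell=1}|^2 \leq \int_S|\DDd_1\xi_{\ell=1}|^2$), whereas you pass through the representation $\xi_{\ell=1} = r\DDs_1(f,g)$, invoke \eqref{third-elliptic-estimate}, and then use the $\ell=1$ eigenvalue to close --- a correct but slightly more circuitous route to the same bound.
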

\begin{proof} Using the orthogonal decomposition of $\xi$, we have
\beaa
\int_{S} |\xi|^2&=& \int_{S} |\xi_{\ell=1}|^2+\int_{S} |\xi_{\ell\geq 2}|^2
\eeaa
Observe that, according to Lemma \ref{lemma-kernel-DDs2}, $\xi_{\ell = 1}$ is in the kernel of $\DDs_2$.  
Applying \eqref{first-elliptic-estimate} to $\xi_{\ell=1}$ and \eqref{second-elliptic-estimate} to $\xi_{\ell\geq 2}$ we obtain the desired estimate. 
\end{proof}

Here we collect the useful properties associated to the decomposition in average and check quantities.

\begin{lemma}\label{average-check-spherical} Any scalar function $f:\MM \to \mathbb{R}$ verifies 
\bea
\ov{f}_{\ell\ge 1}=0, \qquad \check{f}_{\ell=0}=0
\eea
Therefore $f=\ov{f}_{\ell=0}+\check{f}_{\ell\ge1}$.
\end{lemma}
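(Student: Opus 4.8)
The statement to prove is Lemma \ref{average-check-spherical}: for any scalar function $f:\MM\to\mathbb{R}$, one has $\ov{f}_{\ell\geq 1}=0$ and $\check{f}_{\ell=0}=0$, and hence $f=\ov{f}_{\ell=0}+\check{f}_{\ell\geq 1}$. The plan is to argue entirely on a fixed sphere $S=S_{u,r}$, since both the $S$-average and the projection to spherical harmonics are defined sphere-by-sphere. The key observation is that $\ov{f}$ is, by \eqref{def-S-average}, a \emph{constant} on each sphere $S$, i.e. it is proportional to the lowest spherical harmonic $Y^{\ell=0}_{m=0}=\frac1r\dot Y^{\ell=0}_{m=0}$ of \eqref{spherical-l=0}, and the higher spherical harmonics $Y^{\ell}_m$ with $\ell\geq 1$ are $L^2(S)$-orthogonal to the constants.

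First I would recall from \eqref{explicit-formula-projection=l1} that for any scalar $g$ on $S$, $g_{\ell=0}=\frac{1}{4\pi r^2}\int_S g$, i.e. $g_{\ell=0}=\ov{g}$ (using $|S|=4\pi r^2$), and that $g_{\ell=1}=\sum_i\big(\int_S g\cdot Y^{\ell=1}_{m=i}\big)Y^{\ell=1}_{m=i}$. Apply this to $g=\ov{f}$: since $\ov f$ is constant on $S$, and $\int_S Y^{\ell=1}_{m=i}= r\int_S Y^{\ell=0}_{m=0}Y^{\ell=1}_{m=i}/\dot Y^{\ell=0}_{m=0}\cdot(\ldots)$ — more directly, $\int_S Y^{\ell=1}_{m=i}=0$ by orthogonality of the $Y^\ell_m$ with $\ell\geq 1$ against the constant function (which is a multiple of $Y^{\ell=0}_{m=0}$) — we get $(\ov f)_{\ell=1}=0$, and likewise $(\ov f)_{\ell\geq 2}=0$ because for $\ell\geq 2$ the projection coefficients $\int_S \ov f\cdot Y^\ell_m$ vanish by the same orthogonality. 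Hence $\ov f_{\ell\geq 1}=0$ and in fact $\ov f=\ov f_{\ell=0}=\ov f$, consistent with $\ov f$ being its own $\ell=0$ part.

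Next I would prove $\check f_{\ell=0}=0$. By \eqref{def-check-f}, $\check f=f-\ov f$, and projection to $\ell=0$ is linear, so $\check f_{\ell=0}=f_{\ell=0}-(\ov f)_{\ell=0}$. By \eqref{explicit-formula-projection=l1}, $f_{\ell=0}=\frac{1}{4\pi r^2}\int_S f=\ov f$, and $(\ov f)_{\ell=0}=\ov f$ since $\ov f$ is already constant on $S$ (its $S$-average is itself). Therefore $\check f_{\ell=0}=\ov f-\ov f=0$. Finally, combining $\check f_{\ell=0}=0$ with the orthogonal decomposition $\check f=\check f_{\ell=0}+\check f_{\ell=1}+\check f_{\ell\geq 2}$ of Definition \ref{lemma-spherical-harmonics}, we get $\check f=\check f_{\ell\geq 1}$; adding $\ov f=\ov f_{\ell=0}$ gives $f=\ov f+\check f=\ov f_{\ell=0}+\check f_{\ell\geq 1}$, as claimed.

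I do not expect any real obstacle here: the lemma is essentially an unpacking of definitions together with the $L^2(S^2)$-orthogonality of distinct spherical harmonics, which is standard (and recalled in the discussion around \eqref{spherical-l=0}–\eqref{explicit-formula-projection=l1}). The only point requiring a line of care is the identification $f_{\ell=0}=\ov f$, i.e. matching the normalization $|S|=4\pi r^2$ in \eqref{r-general} with the $\frac{1}{4\pi r^2}$ in \eqref{explicit-formula-projection=l1}; once that is noted, everything is immediate.
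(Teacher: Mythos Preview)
Your proposal is correct and follows essentially the same approach as the paper: both arguments rest on the fact that $\ov{f}$ is constant on each sphere together with the $L^2(S)$-orthogonality of $Y^{\ell}_m$, $\ell\geq 1$, against constants. The paper writes out the spherical-harmonic expansion \eqref{decomposition-f-spherical-harmonics} of $f$ and integrates term-by-term, whereas you invoke the identification $f_{\ell=0}=\ov{f}$ from \eqref{explicit-formula-projection=l1} directly; these are two phrasings of the same computation.
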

\begin{proof} Using \eqref{def-S-average} and \eqref{decomposition-f-spherical-harmonics}, we compute
\beaa
|S| \ov{f}&=&  \int_S \left( c(u, r) Y^{\ell=0}_{m=0}+\sum_{i=-1}^{\ell=1}c_i(u, r) Y^{\ell=1}_{m=i}(\th, \vphi)+f_{\ell\ge 2} \right) \sin\th d\th d\vphi\\
&=&  |S| c(u,r) Y^{\ell=0}_{m=0}+\sum_{i=-1}^{\ell=1}c_i(u, r)\int_S\left( Y^{\ell=1}_{m=i}(\th, \vphi)+f_{\ell\ge 2} \right) \sin\th d\th d\vphi
\eeaa
and recalling that, by orthogonality of the spherical harmonics, $\int_SY^{\ell}_{m}(\th, \vphi)Y^{\ell'}_{m'}(\th, \vphi)\sin\th d\th d\vphi=\delta_{\ell \ell'} \delta_{m m'}$, the integral on the right hand side vanishes. Therefore $\ov{f}_{\ell\ge 1}=0$ and $f_{\ell=0}=\ov{f}_{\ell=0}$. On the other hand, 
\beaa
\check{f}&=& f-\ov{f}=c(u, s) Y^{\ell=0}_{m=0}+\sum_{i=-1}^{\ell=1}c_i(u, s) Y^{\ell=1}_{m=i}(\th, \vphi)+f_{\ell\ge 2}-c(u, s) Y^{\ell=0}_{m=0}\\
&=&\sum_{i=-1}^{\ell=1}c_i(u, s) Y^{\ell=1}_{m=i}(\th, \vphi)+f_{\ell\ge 2}
\eeaa
therefore $\check{f}$ is supported in $\ell\ge 1$.
\end{proof}

 We derive the transport equation for the projection to the $\ell=1$ spherical harmonics of a function $f$ on $\MM$.
 
 \begin{lemma}\label{commutation-projection-l1} Let $f$ be a scalar function on $\MM$. Then
 \beaa
 \nabb_4(f_{\ell=1})&=& (\nabb_4f)_{\ell=1} \\
 \nabb_3(f_{\ell=1})&=& (\nabb_3f)_{\ell=1}
 \eeaa
 \end{lemma}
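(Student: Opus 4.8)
The plan is to prove the commutation relation $\nabb_4(f_{\ell=1}) = (\nabb_4 f)_{\ell=1}$ (and analogously for $\nabb_3$) by unwinding the explicit formula \eqref{explicit-formula-projection=l1} for the projection, namely
\[
f_{\ell=1} = \sum_{i=-1}^{1}\left(\int_{S_{u,r}} f \cdot Y^{\ell=1}_{m=i}\right) Y^{\ell=1}_{m=i},
\]
and then differentiating along $e_4 = \pr_r$. First I would recall from Lemma \ref{nabb-4-int-f} (Proposition \ref{nabb-4-int-f}) that $\nabb_4\left(\int_S g\right) = \int_S(\nabb_4 g + \ka g)$ for any scalar $g$, apply it to $g = f \cdot Y^{\ell=1}_{m=i}$, and use the Leibniz rule $\nabb_4(f \cdot Y^{\ell=1}_{m=i}) = (\nabb_4 f) Y^{\ell=1}_{m=i} + f \nabb_4(Y^{\ell=1}_{m=i})$, where for scalars $\nabb_4 = \pr_r$.

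The key input is that the normalized spherical harmonics on $S_{u,r}$ satisfy $Y^{\ell}_m = \frac{1}{r}\dot{Y}^\ell_m$ with $\dot{Y}^\ell_m$ independent of $r$ (and of $u$), so $\pr_r(Y^{\ell=1}_{m=i}) = -\frac{1}{r} Y^{\ell=1}_{m=i}$. Plugging this in, the coefficient becomes
\[
\nabb_4\left(\int_{S_{u,r}} f\, Y^{\ell=1}_{m=i}\right) = \int_{S_{u,r}}\left((\nabb_4 f) Y^{\ell=1}_{m=i} - \tfrac1r f\, Y^{\ell=1}_{m=i} + \tfrac2r f\, Y^{\ell=1}_{m=i}\right) = \int_{S_{u,r}} (\nabb_4 f) Y^{\ell=1}_{m=i} + \tfrac1r \int_{S_{u,r}} f\, Y^{\ell=1}_{m=i},
\]
using $\ka = 2/r$. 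Then differentiating the full product $f_{\ell=1}$ by Leibniz, the outer factor $Y^{\ell=1}_{m=i}$ contributes another $-\frac1r \int_{S_{u,r}} f\, Y^{\ell=1}_{m=i}$ term, and the two $\pm\frac1r$ contributions cancel exactly, leaving $\nabb_4(f_{\ell=1}) = \sum_i \left(\int_{S_{u,r}} (\nabb_4 f) Y^{\ell=1}_{m=i}\right) Y^{\ell=1}_{m=i} = (\nabb_4 f)_{\ell=1}$. For the $e_3$ statement, I would run the same computation with the $\nabb_3$ version of Lemma \ref{nabb-4-int-f}, using $(\underline{D}Y^{\ell=1}_{m=i})$ from \eqref{lie-derivative-RN}: here $\pr_u(\dot Y^{\ell=1}_{m=i})=0$ and $\pr_r Y^{\ell=1}_{m=i} = -\frac1r Y^{\ell=1}_{m=i}$, and since the underlying metric is exactly Reissner-Nordström all the error terms $\text{Err}[\nabb_3(\int_S \cdot)]$ in Proposition \ref{nabb-4-int-f} vanish (by \eqref{no-check}, $\check\vsi = \check\Omegab = 0$), so $\nabb_3(\int_S g) = \int_S(\nabb_3 g + \kab g)$ and the same cancellation between $\kab$ and the Lie-derivative correction terms goes through.

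The only mild subtlety — and the part I would be most careful about — is bookkeeping the covariant-versus-Lie-derivative discrepancy and the $r$-dependence of the $L^2$ normalization of $Y^{\ell=1}_{m=i}$ consistently: one must remember that $\nabb_4$ acting on a scalar is just $\pr_r$ but that the integration measure $\sqrt{\det\slashed{g}}\,d\th d\vphi = r^2 \sin\th\, d\th d\vphi$ carries its own $r^2$, which is precisely what produces the $\ka$ term in Proposition \ref{nabb-4-int-f}. Once that is tracked, the cancellation is automatic and the proof is short; there is no genuine analytic obstacle. An alternative, essentially equivalent, route would be to note that $\nabb_4$ commutes with the angular Laplacian $\lapp$ up to the lower-order term $-\frac12\ka\lapp$ (which follows from \eqref{commutation-formulas}), hence preserves the eigenspace decomposition of $\lapp$ at fixed eigenvalue $-\ell(\ell+1)/r^2$, i.e.\ preserves the $\ell=1$ projection; but the direct computation via Proposition \ref{nabb-4-int-f} is cleaner to write out.
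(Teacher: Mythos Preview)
Your proposal is correct and follows essentially the same approach as the paper's proof: both differentiate the explicit projection formula using Proposition~\ref{nabb-4-int-f}, compute $\nabb_4(Y^{\ell=1}_{m=i}) = -\tfrac{1}{r}Y^{\ell=1}_{m=i} = -\tfrac12\ka\,Y^{\ell=1}_{m=i}$ from $Y^{\ell}_m = \tfrac1r\dot Y^{\ell}_m$, and observe the cancellation between the $\ka$ term from the area element and the $r$-derivative of the normalization. Your remark that the error terms $\mathrm{Err}[\nabb_3(\int_S\cdot)]$ vanish on the Reissner--Nordstr\"om background (by \eqref{no-check}) is exactly the content of the paper's ``Similarly for $\nabb_3 f$.''
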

 \begin{proof} Applying $\nabb_4=\pr_r$ to the expression for the projection to the $\ell=1$ spherical harmonics given by \eqref{explicit-formula-projection=l1}, we obtain
 \beaa
 \nabb_4(f_{\ell=1})&=&\sum_{i=-1}^{1}  \nabb_4\Big(\left(\int_{S} f \c Y^{\ell=1}_{m=i}\right) Y^{\ell=1}_{m=i} \Big)
 \eeaa
 Recall that the normalized spherical harmonics are defined as $Y^{\ell=1}_m=\frac{1}{r} \dot{Y}^{\ell=1}_m$, where $\dot{Y}^{\ell=1}_m$ are given by \eqref{spherical-l=1}, and therefore $\nabb_4(\dot{Y}^{\ell=1}_m)=0$. This implies
 \beaa
 \nabb_4(Y^{\ell=1}_m)=\nabb_4\left(\frac{1}{r} \dot{Y}^{\ell=1}_m\right)=\nabb_4\left(\frac{1}{r}\right) \dot{Y}^{\ell=1}_m=-\frac{1}{2r}\ov{\ka}  \dot{Y}^{\ell=1}_m=-\frac{1}{2}\ov{\ka} Y^{\ell=1}_m
 \eeaa
 where we used Proposition \ref{nabb-4-int-f}. 
 Using again Proposition \ref{nabb-4-int-f}, the computation gives
 \beaa
 \nabb_4(f_{\ell=1})&=&\sum_{i=-1}^{1}  \nabb_4\left(\int_{S} f \c Y^{\ell=1}_{m=i}\right) Y^{\ell=1}_{m=i} +\sum_{i=-1}^{1} \left(\int_{S} f \c Y^{\ell=1}_{m=i}\right) \nabb_4Y^{\ell=1}_{m=i} \\
 &=&\sum_{i=-1}^{1}  \left(\int_{S}\nabb_4( f \c Y^{\ell=1}_{m=i})+\ka  f \c Y^{\ell=1}_{m=i}\right) Y^{\ell=1}_{m=i} +\sum_{i=-1}^{1} \left(\int_{S} f \c Y^{\ell=1}_{m=i}\right) (-\frac{1}{2}\ov{\ka} Y^{\ell=1}_{m=i}) \\
 &=&\sum_{i=-1}^{1}  \left(\int_{S}\nabb_4( f) \c Y^{\ell=1}_{m=i}+f \c \nabb_4(Y^{\ell=1}_{m=i})+\frac 1 2 \ka  f \c Y^{\ell=1}_{m=i}\right) Y^{\ell=1}_{m=i}  \\
  &=&\sum_{i=-1}^{1}  \left(\int_{S}\nabb_4( f) \c Y^{\ell=1}_{m=i}\right) Y^{\ell=1}_{m=i} = (\nabb_4f)_{\ell=1}
 \eeaa
 as desired. Similarly for $\nabb_3 f$. 
 \end{proof}

\section{The linearized gravitational and electromagnetic perturbations around Reissner-Nordstr{\"o}m}\label{linearized-equations-chapter}
In this section, we present the equations of linearized gravitational and electromagnetic perturbations around Reissner-Nordstr{\"o}m.
In Section \ref{formal-derivation} we describe the procedure to the linearization of the equations of Section \ref{nseq}. In Section \ref{all-equations} we summarize the complete set of equations describing the dynamical evolution of a linear perturbation of Reissner-Nordstr\"om spacetime. 

\subsection{A guide to the formal derivation}\label{formal-derivation}
We give in this section a formal derivation of the system from the equations of Section \ref{nseq} and of Section \ref{equations-implied-by-Bondi}.

\subsubsection{Preliminaries}
We identify the general manifold $\MM$ and its Bondi coordinates $(u, s, \th^1, \th^2)$ of Section \ref{local-bondi} with the interior of the Reissner-Nordstr{\"o}m spacetime in its Bondi form in Section \ref{Outgoing-coords}.

On $\MM$, we consider a one-parameter family of Lorentzian metrics $\g(\ep)$ of the form \eqref{double-null-metric}. More precisely:
\bea\label{metric-g-epsilon}
\begin{split}
  \g(\ep)&=-2\vsi(\ep) du ds+\vsi(\ep)^2\underline{\Omega}(\ep)du^2\\
  &+\slashed{g}_{AB}(\ep) \left(d\th^A-\frac 1 2\vsi(\ep)\underline{b}(\ep)^A du \right) \left(d\th^B-\frac 1 2\vsi(\ep)\underline{b}(\ep)^B du  \right)
  \end{split}
  \eea
  such that $\g(0)=\g_{M,Q}$ expressed in the outgoing Eddington-Finkelstein coordinates \eqref{Schw:EF-coordinates-out}, i.e.
  \beaa
\vsi(0)=1, \qquad  \Omegab(0)= -\left(1-\frac{2M}{r}+\frac{Q^2}{r^2}\right), \qquad \underline{b}_A(0)=0, \qquad \slashed{g}_{AB}(0)=r^2 \gamma_{AB}
  \eeaa
 
 In view of the general discussion in Section \ref{local-bondi}, associated to the metric \eqref{metric-g-epsilon} there is an associated family of normalized frames of the form 
\beaa
e_3=2\vsi^{-1}(\ep)\partial_u+\underline{\Omega}(\ep)\partial_s+\underline{b}(\ep)^A \partial_{\th^A}, \qquad e_4=\partial_s, \qquad e_A=\partial_{\th^A}
\eeaa
Note that this frame does not extend smoothly to the event horizon $\mathcal{H}^+$. On the other hand, the rescaled null frame 
\beaa
\Omegab^{-1}(\ep) e_3, \qquad \Omegab(\ep) e_4
\eeaa
 is smooth up to the horizon.

\subsubsection{Outline of the linearization procedure}

We now linearize the smooth one-parameter family of metrics \eqref{metric-g-epsilon} in terms of $\epsilon$. We linearize the full system of equations obtained in Section \ref{nseq} around the values of the connection coefficients and curvature components in Reissner-Nordstr\"om obtained in Section \ref{hereforSchconcur}. We outline of the procedure in few different cases.

From \eqref{ricci-coefficients-RN-0}, \eqref{electromagnetic-components-RN}, \eqref{curvature-components-RN}, we notice that all the one-forms and symmetric traceless $2$-tensors appearing in the Einstein-Maxwell equations of Section \ref{nseq} vanish in Reissner-Nordstr{\"o}m. 
 Formally, we have 
\beaa
\chih(\ep)&=& 0+ \chih, \qquad \eta(\ep)= 0+\eta, \qquad \ze(\ep)= 0+\ze \\
\chibh(\ep)&=& 0+ \chibh, \qquad \xib(\ep)=0+\xib \\
\bF(\ep)&=& 0+\bF, \qquad \b(\ep)= 0+\b, \qquad \a(\ep)= 0+\a \\
\bbF(\ep)&=& 0+\bbF, \qquad \bb(\ep)= 0+\bb, \qquad \aa(\ep)= 0+\aa  
\eeaa
The linearization of the equations involving the above tensors simply consists in discarding terms containing product of those, while keeping the other terms. In doing so, we will make sure to include the information obtained by the equation \eqref{substitute-xi-xib-ze} for the Bondi form of the metric.

To give an example, consider equations \eqref{nabb-3-chibh-general}:
\beaa
\nabb_3 \chibh+\kab \ \chibh+2\omb \chibh&=&-2\DDs_2\xib -\aa+2(\eta+\etab-2\ze)\hot \xib  \\
\nabb_4 \chih+\ka \ \chih+2\om \chih&=&-2\DDs_2\xi -\a +2(\eta+\etab+2\ze)\hot \xi
\eeaa
In linearizing them, we observe that the term $2(\eta+\etab-2\ze)\hot \xib$ is quadratic, and $\om=\xi=0$ by \eqref{substitute-xi-xib-ze}. Their linearization therefore give
\beaa
\nabb_3 \chibh+\kab \ \chibh+2\omb \chibh&=&-2\DDs_2\xib -\aa  \\
\nabb_4 \chih+\ka \ \chih&=& -\a 
\eeaa
 which are \eqref{nabb-3-chibh-ze-eta} and \eqref{nabb-4-chih-ze-eta}.
 In this way we linearize \eqref{nabb-3-chibh-general}, \eqref{curl-xib-general}, \eqref{nabb-3-chih-general}, \eqref{curl-eta-general}, \eqref{nabb-3-ze-general}, \eqref{nabb-4-xib-general}, \eqref{codazzi-general}, \eqref{nabb-3-bF-general}, \eqref{nabb-3-a-general}, \eqref{nabb-4-b-general}.

Recall the non-vanishing values of the scalars $\ka$, $\kab$, $\omb$, $\rhoF$, $\rho$ and $K$ in Reissner-Nordstr\"om given by \eqref{ricci-coefficients-RN-0}, \eqref{electromagnetic-components-RN}, \eqref{curvature-components-RN}, \eqref{GCdef}. Nevertheless, by \eqref{no-check} all check-quantities vanish. Moreover, the scalars $\sigma$ and $\sigmaF$ vanish.
We take advantage of this fact by using the decomposition into average and check as defined in \eqref{def-check-f}.
For instance, we write
\beaa
\ka(\ep)&=& \frac{2}{r}+ \left(\ov{\ka(\ep)}-\frac{2}{r} \right) +\check{\ka(\ep)} = \frac{2}{r}+ \kalin(\ep) +\check{\ka(\ep)}
\eeaa
 where we define $\kalin(\ep)= \ov{\ka(\ep)}-\frac{2}{r}$. 
 
 We therefore define two scalar functions for each non-vanishing scalar: the average to which we subtract the value in Reissner-Nordstr\"om (denoted by a superscript $(1)$) and the check quantity.

In particular,  we define 
\beaa
\kalin(\ep)&=&\ov{\ka(\ep)}-\frac{2}{r}, \\
\kablin(\ep)&=&\ov{\kab(\ep)}+\frac{2}{r}\left(1-\frac{2M}{r}+\frac{Q^2}{r^2} \right), \\
\omblin (\ep)&=&\ov{\omb(\ep)}-  \left(\frac{M}{r^2}-\frac{Q^2}{ r^3}\right), \\
\rhoFlin(\ep)&=& \ov{\rhoF(\ep)}-\frac{Q}{r^2} \\
\sigmaFlin(\ep)&=& \ov{\sigmaF(\ep)} \\
\rlin(\ep)&=& \ov{\rho(\ep)}+\frac{2M}{r^3}-\frac{2Q^2}{r^4} \\
\sigma(\ep)&=& \ov{\sigma(\ep)} \\
\Klin(\ep)&=& \ov{K(\ep)}-\frac{1}{r^2} 
\eeaa
The check quantities linearize in the obvious way.

In linearizing the equations for $\ka$, we will obtain equations for the quantities $\kalin$ and $\check{\ka}$. To simplify the notation, we can therefore denote $\ka$ the value of the quantity in Reissner-Nordstr\"om. This gives $\kalin=\ov{\ka}-\ka$. Similarly for all the other quantities.

In Section \ref{average-quantities-general}, we computed the transport equations of average quantities. Using those, we compute the equations for the linearized quantities above. 
For instance, consider equation \eqref{nabb-3-kab-general}:
\beaa
\nabb_4\ka(\ep)+\frac 1 2 \ka(\ep)^2=-\chih(\ep)\cdot \chih(\ep) -2 \bF(\ep)\cdot \bF(\ep)
\eeaa
The right hand side is quadratic, therefore in linearizing we have 
\beaa
\nabb_4\ka(\ep)+\frac 1 2 \ka(\ep)^2=0
\eeaa
We can use Corollary \ref{average-check}, to compute $\nabb_4(\ov{\ka}(\ep))$:
\beaa
\nabb_4(\ov{\ka}(\ep))&=& \ov{\nabb_4 \ka(\ep)}=\ov{-\frac 1 2 \ka(\ep)^2}=-\frac 1 2 \ov{\ka(\ep)}^2=-\frac 1 2 (\kalin(\ep)+\ka)^2=- \ka\kalin(\ep)-\frac 1 2 \ka^2
\eeaa
On the other hand, using Proposition \ref{nabb-4-int-f}, we have 
\beaa
\nabb_4\left( \frac{2}{r} \right)&=& -\frac{2}{r^2} \nabb_4 r=-\frac{1}{r}  \ov{\ka}(\ep) 
\eeaa
Therefore, writing $\ka=\ov{\ka}(\ep)-\kalin(\ep)$, we have 
\beaa
\nabb_4(\kalin(\ep))&=& \nabb_4(\ov{\ka}(\ep))-\nabb_4\left( \frac{2}{r} \right)\\
&=&-\ka\kalin(\ep)-\frac 1 2 \ka^2+\frac{1}{r}  \ov{\ka}(\ep) \\
&=&-\ka \kalin(\ep)-\frac 1 2 \ka (\ov{\ka}(\ep)-\kalin(\ep))+\frac{1}{r}  \ov{\ka}(\ep) \\
&=&-\frac 1 2 \ka \kalin(\ep)+\left(-\frac 1 2 \ka+\frac{1}{r}\right)  \ov{\ka} (\ep)=-\frac 1 2 \ka\kalin(\ep)
\eeaa
which gives equation \eqref{nabb-4-kalin-ze-eta}. 
All the other equations for the average quantities are obtained in a similar manner.
The equation for the check part for a scalar quantity is obtained applying again Corollary \ref{average-check}. In this way, we linearize \eqref{nabb-3-kab-general}, \eqref{nabb-3-ka-general}, \eqref{nabb-4-omb-general}, \eqref{Gauss-general}, \eqref{nabb-3-sigmaF-general}, \eqref{nabb-4-rhoF-general}, \eqref{nabb-4-rho-general}, \eqref{nabb-4-sigma-general}.

It seems that we have doubled the equations involving scalar quantities. In reality, the separation between $\flin$ and $\check{f}$ for a scalar quantity $f$ reflects the projection into spherical harmonics.
Indeed, we have that $\flin(\ep)_{\ell\ge1}=0$ and $\check{f}(\ep)_{\ell=0}=0$, where the projections are intended to be with respect to the Reissner-Nordstr{\"o}m metric. This is proved in the following way. Since $\ov{f}(\ep)$ is constant on the spheres of constant $r$ determined by the metric $\g(\ep)$, we have $\DDs_1(\ep)(\ov{f}(\ep), 0)=0$ and therefore
\beaa
0=\DDs_1(\ep)(\ov{f}(\ep), 0)=(\DDs_1(0)+\ep)(\flin(\ep)+f_{M,Q}, 0)=\DDs_1(0)(\flin(\ep), 0)+O(\ep^2)
\eeaa
where $\DDs_1(0)$ is the angular operator $\DDs_1$ in Reissner-Nordstr{\"o}m. Similarly, in taking the mean of $\check{f}(\ep)$ we see that it has vanishing mean with respect to the Reissner-Nordstr\"om spacetime, modulo quadratic terms.

We now outline the linearization of the metric coefficients in Bondi form, verifying the equations given by Lemma \ref{double-null-Ricci}. The metric coefficients are $\vsi(\ep)$, $\Omegab(\ep)$, $\underline{b}(\ep)$ and $\slashed{g}(\ep)$. 

We decompose the scalar functions $\vsi(\ep)$ and $\Omegab(\ep)$ as above. We define 
\beaa
\Omegablin(\ep)&=& \ov{\Omegab(\ep)}+\left(1-\frac{2M}{r}+\frac{Q^2}{r^2} \right)\\
\vsilin(\ep)&=&\ov{\vsilin(\ep)}-1
\eeaa
and define $\check{\vsi}(\ep)=\vsi(\ep)-\ov{\vsi(\ep)}$, $\check{\Omegab}(\ep)=\Omegab(\ep)-\ov{\Omegab(\ep)}$.

The vector $\underline{b}$ vanishes on Reissner-Nordstr\"om, therefore the linearization of \eqref{additional-metric-1} is straightforward.
We now show how to linearize the equations for the metric $\slashed{g}(\ep)$ \eqref{additional-metric-2} and \eqref{additional-metric-3}.

Since $\slashed{g}(0)=r^2 \gamma_{AB}$, we decompose $\slashed{g}$ into:
\bea\label{linearisation-metric-1}
\slashed{g}_{AB}=\frac 1 2 (\tr_{\gamma} \slashed{g})  \gamma_{AB}+\hat{\slashed{g}}_{AB}
\eea
where the trace and the traceless part are computed in terms of the round sphere metric, i.e.
\beaa
\tr_{\gamma} \slashed{g}= \gamma^{AB} \slashed{g}_{AB}, \qquad  \gamma^{AB} \hat{\slashed{g}}_{AB}=0
\eeaa
Plugging in the decomposition \eqref{linearisation-metric-1} in the equations for the metric \eqref{additional-metric-2}, we obtain 
\bea\label{computation-linearization-metric}
\partial_s (\frac 1 2 (\tr_{\gamma} \slashed{g}) \gamma_{AB}+\hat{\slashed{g}}_{AB})=2 \chih_{AB}+\ka \left(\frac 1 2 (\tr_{\gamma} \slashed{g}) \gamma_{AB}+\hat{\slashed{g}}_{AB} \right)
\eea
Recalling that $\pr_s( \gamma_{AB})=0$ in Reissner-Nordstr{\"o}m background, the left hand side of \eqref{computation-linearization-metric} becomes
\beaa
\partial_s (\frac 1 2 (\tr_{\gamma} \slashed{g})  \gamma_{AB}+\hat{\slashed{g}}_{AB})&=& \frac 1 2 \partial_s(\tr_{\gamma} \slashed{g})  \gamma_{AB}+\pr_s\hat{\slashed{g}}_{AB}
\eeaa
Observe that $\pr_s \hat{\slashed{g}}_{AB}$ is traceless with respect to $\gamma$, since $$0=\pr_s( ( \gamma)^{AB} \hat{\slashed{g}}_{AB})=( \gamma)^{AB} \pr_s\hat{\slashed{g}}_{AB}$$ 
The right hand side of \eqref{computation-linearization-metric} is given by $\frac 1 2 \ka(\tr_{\gamma} \slashed{g})  \gamma_{AB}+2 \chih_{AB}+\ka \hat{\slashed{g}}_{AB}$. 
Observe that $\chih$ is traceless with respect to $\gamma$ modulo quadratic terms, therefore separating the equation into its traceless and trace part we obtain:
\beaa
\pr_s\hat{\slashed{g}}_{AB}-\ka \hat{\slashed{g}}_{AB}&=& 2 \chih_{AB}, \qquad \partial_s(\tr_{\gamma} \slashed{g})= \ka (\tr_{\gamma} \slashed{g})
\eeaa
Using \eqref{Lie-covariant-2}, we have 
\beaa
\nabb_4\hat{\slashed{g}}_{AB}&=& 2 \chih_{AB}  \qquad \nabb_4(\tr_{\gamma} \slashed{g})= \ka(\tr_{\gamma} \slashed{g})
\eeaa
 Define 
\beaa
\trglin&=& \ov{\tr_{\gamma} \slashed{g}}-2r^2 \qquad \check{\tr_{\gamma}\slashed{g}}= \tr_{\gamma}\slashed{g}-\ov{\tr_{\gamma} \slashed{g}}
\eeaa
By Corollary \ref{average-check}, we have
\beaa
\nabb_4(\trglin)&=& \nabb_4(\ov{\tr_{\gamma} \slashed{g}})-2\nabb_4(r^2)=\ov{\nabb_4(\tr_{\gamma} \slashed{g})}- 2r^2 \ov{\ka}=\ov{\ka} \ov{\tr_{\gamma} \slashed{g}}-2r^2 \ov{\ka}=\ka \ \trglin, \\
\nabb_4(\check{\tr_{\gamma}\slashed{g}})&=& \nabb_4(\tr_{\gamma} \slashed{g})-\ov{\nabb_4(\tr_{\gamma} \slashed{g})}=(\ka -\divv b)(\tr_{\gamma} \slashed{g})-\ov{\ka} \ov{\tr_{\gamma} \slashed{g}}=\ka \check{\tr_{\gamma}\slashed{g}}+2 r^2 \check{\ka} -2r^2 \divv b
\eeaa
Similarly, for $\nabb_3$ we have 
\beaa
\nabb_3\hat{\slashed{g}}_{AB}&=& 2 \chibh_{AB}+2(\DDs_2 \underline{b})_{AB}, \\
\nabb_3(\trglin)&=& \kab \trglin, \\
\nabb_3(\check{\tr_{\gamma}\slashed{g}})&=& 2r^2\check{\kab} -2r^2\divv \underline{b}-2r^2\ka \check{\Omegab}
\eeaa
Using that $r^2\nabb_3(r^{-2}f)= \nabb_3 f-\ka f$, we obtain the equations for the metric components.  

We linearize the Gauss curvature $K(\ep)$ of the metric $\slashed{g}$ as for the above scalar functions. We define 
\beaa
\Klin(\ep)&=&\ov{K(\ep)}-\frac{1}{r^2} \qquad \check{K}(\ep)= K(\ep)-\ov{K(\ep)}
\eeaa
Observe that, by Gauss-Bonnet theorem,  $\int_S K(\ep)=4\pi$, therefore $\ov{K(\ep)}=\frac{1}{4\pi r^2} \int_S K(\ep)=\frac {1}{ r^2}$, and consequently $\Klin(\ep)=0$.

In general, the linearization of the Gauss curvature for a metric $g=g_0+\de g$ is given by
\beaa
2(\de K)&=& -\frac 1 2 \lapp_{g_0} (\tr(\de g))-K \tr(\de g)+\div\div \widehat{(\de g)}
\eeaa
where $K=K_0+\de K$.
Writing \eqref{linearisation-metric-1} as 
\beaa
\slashed{g}_{AB}=r^2\gamma_{AB}+\de g=r^2 \gamma_{AB}+\frac 1 2 (\tr_{\gamma} \slashed{g}-2r^2)  \gamma_{AB}+\hat{\slashed{g}}_{AB}
\eeaa
 we obtain 
\beaa
2\left(K(\ep)-\frac {1}{r^2} \right)&=& -\frac 1 2 \lapp \left(\tr_\gamma \slashed{g}(\ep)-2r^2 \right)-\frac{1}{r^2} \left( \tr_\gamma \slashed{g}(\ep)-2r^2 \right)+\divv \divv \hat{\slashed{g}}(\ep)
\eeaa
Projecting into the $\ell=0$ mode, since $\Klin=0$, this implies $\trglin=0$. The projection to the $\ell\geq 1$ mode gives 
\bea\label{check-K-tr}
2\check{K}&=& -\frac 1 2 \lapp \left(\check{\tr_\gamma \slashed{g}}\right)-\frac{1}{r^2} \left(\check{ \tr_\gamma \slashed{g}}\right)+\divv \divv \hat{\slashed{g}}
\eea
In particular, projecting to the $\ell=1$ mode, we obtain that $\left(\frac 1 2 \lapp +\frac{1}{r^2}\right) \check{\tr_\gamma \slashed{g}}_{\ell=1}=0$ and $\divv \divv \hat{\slashed{g}}_{\ell=1}=0$, therefore 
\bea\label{check-K-l=1}
\check{K}_{\ell=1}=0
\eea 
The vanishing of the $\ell=1$ spherical harmonics of the Gauss curvature will be crucial later in the proof of linear stability for the lower mode of the perturbations.

\subsection{The full set of linearized equations}\label{all-equations}
In the following, we present the equations arising from the formal linearization outlined above.

\subsubsection{The complete list of unknowns}
The equations will concern the following set of quantities, separated into symmetric traceless $2$-tensors, one-tensors and scalar functions on the Reissner-Nordstr{\"o}m manifold $(\MM,\g_{M,Q})$.
\beaa
\mathscr{S}_2&=&\{\a_{AB},\quad \aa_{AB}, \quad \chih_{AB}, \quad \chibh_{AB}, \quad \hat{\slashed{g}}_{AB} \}\\
\mathscr{S}_1&=&\{\ze_A, \quad \eta_A,  \quad \xib_A, \quad \b_{A}, \quad \bb_{A}, \quad \bF_{A}, \quad \bbF_{A}, \quad \underline{b}_A \}\\
\mathscr{S}_0&=&\{ \check{\ka}, \quad \check{\kab},  \quad \check{\omb}, \quad \check{\rho},\quad \check{\sigma},  \quad \check{\rhoF},  \quad\check{\sigmaF},\quad \check{\tr_{\gamma} \slashed{g}}, \quad \check{\Omegab}, \quad \check{\vsi}, \quad \check{K} \\
&& \kalin, \quad \kablin,  \quad  \omblin, \quad \rlin , \quad  \rhoFlin , \quad \sigmaFlin, \quad \Omegablin, \quad \vsilin \}
\eeaa

\begin{definition}\label{linear-perturbation-definition}
We say that $\mathscr{S}=\mathscr{S}_0 \cup \mathscr{S}_1 \cup \mathscr{S}_2$ is a \textbf{linear gravitational and electromagnetic perturbation around Reissner-Nordstr{\"o}m spacetime} if the quantities in $\mathscr{S}$ satisfy the equations \eqref{nabb-4-g}-\eqref{nabb-4-check-sigma-ze-eta} below.
\end{definition}

Observe that in the definition we omitted $\slin$ (indeed, $\slin=0$ is the linearization of \eqref{curl-eta-general}), $\Klin$ and $\trglin$, as they are implied to be identically zero by the previous subsection.

In what follows, the scalar functions without any superscript or check are to be intended as quantities in the background spacetime $(\MM,\g_{M,Q})$.

\subsubsection{Equations for the linearised metric components}\label{section-linearized-metric}

The linearization of \eqref{additional-metric-2} and \eqref{additional-metric-3} for $2$-tensors are the following:
\bea
\nabb_4 \hat{\slashed{g}}&=& 2\chih, \label{nabb-4-g} \\
\nabb_3\hat{\slashed{g}}&=& 2\chibh+2\DDs_2 \underline{b} \label{nabb-3-g}
\eea
The linearization of \eqref{nabb-4-vsi}, \eqref{nabb-A-Omegab} and \eqref{additional-metric-1} are the following:
\bea
\DDs_1(\check{\vsi}, 0)&=& \ze-\eta \label{nabb-check-vsi}\\
\DDs_1(\check{\Omegab}, 0)&=& \xib+\Omegab(\eta-\ze) \label{xib-Omegab-ze-eta}\\
\nabb_4 \underline{b}-\frac 1 2 \ka \underline{b}&=&-2(\eta+\ze)\label{derivatives-b-ze-eta}
\eea
The linearization of \eqref{nabb-4-vsi-1}, \eqref{additional-metric-0}, \eqref{additional-metric-2} and \eqref{additional-metric-3} are the following:
\bea
\nabb_4 \vsilin&=& 0 \label{nabb-4-vsilin} \\
\nabb_4 \Omegablin&=& -2\omblin +\left(\frac{M}{r}-\frac{Q^2}{r^2} \right)\kalin \label{nabb-4-Omegablin}
\eea
and
\bea
\nabb_4\check{\vsi}&=& 0 \label{nabb-4-check-vsi}\\
\nabb_4\check{\underline{\Omega}}&=&-2\check{\omb},\label{nabb-4-check-Omegab}\\
\nabb_4\left(r^{-2}\check{\tr_{\gamma}\slashed{g}}\right)&=&2\check{\ka}  \label{nabb-4-check-tr}\\
\nabb_3\left(r^{-2}\check{\tr_{\gamma}\slashed{g}}\right)&=&2(\check{\kab}-\ka\check{\Omegab} ) -2\divv \underline{b} \label{nabb-3-check-tr}
\eea

\subsubsection{Linearized null structure equations}

We collect here the linearisation of the equations in Section \ref{sec:nse}. 

The linearization of \eqref{nabb-3-chibh-general} and \eqref{nabb-3-chih-general} are the following:
\bea
\nabb_3 \chibh+\left( \kab +2\omb\right) \chibh&=&-2\DDs_2\xib -\aa, \label{nabb-3-chibh-ze-eta} \\
\nabb_4 \chih+\ka \ \chih&=& -\a , \label{nabb-4-chih-ze-eta}\\
\nabb_3\chih+\left(\frac 12  \kab -2 \omb\right) \chih   &=&  -2 \slashed{\mathcal{D}}_2^\star \eta-\frac 1 2 \ka \chibh   \label{nabb-3-chih-ze-eta}\\
\nabb_4\chibh+\frac 12  \ka \,\chibh &=&  2 \slashed{\mathcal{D}}_2^\star \ze -\frac 1 2 \kab \chih \label{nabb-4-chibh-ze-eta}, 
\eea

The linearisation of \eqref{nabb-3-ze-general}, \eqref{nabb-4-xib-general} and \eqref{codazzi-general} are the following:
\bea
\nabb_3 \ze+ \left(\frac 1 2 \kab-2\omb \right)\ze&=&2 \DDs_1( \check{\omb}, 0)-\left(\frac 1 2 \kab+2\omb \right)\eta +\frac 1 2 \ka \xib - \bb  -\rhoF\bbF, \label{nabb-3-ze-ze-eta} \\
\nabb_4 \ze+  \ka\ze&=& - \b  -\rhoF\bF, \label{nabb-4-ze-ze-eta} \\
\nabb_4\xib +\frac 1 2 \ka \xib &=&-2 \DDs_1( \check{\omb}, 0)+2\omb (\eta-\ze), \label{nabb-4-xib-ze-eta}\\
\nabb_4 \eta+\frac 1 2 \ka \eta&=&-\frac 1 2 \ka\ze  -\b-\rhoF \bF , \label{nabb-4-eta-ze-eta} \\
\divv \chibh&=&-\frac 1 2 \kab \ze-\frac 1 2 \DDs_1( \check{\kab}, 0) +\bb-\rhoF\bbF, \label{Codazzi-chib-ze-eta} \\
\divv \chih&=&\frac 1 2 \ka \ze-\frac 1 2 \DDs_1(\check{\ka}, 0) -\b +\rhoF\bF \label{Codazzi-chi-ze-eta}
\eea

The linearization of \eqref{nabb-3-kab-general}, \eqref{nabb-3-ka-general} and \eqref{nabb-4-omb-general} are the following:
\bea
\nabb_3 \kalin+\frac 1 2 \kab\kalin&=& 2\omb\ \kalin+\frac{4}{r}\omblin+2\rlin, \label{nabb-3-kalin-ze-eta}\\
\nabb_4 \kalin+\frac 1 2 \ka \kalin&=&0, \label{nabb-4-kalin-ze-eta}\\
\nabb_3\kablin+\frac 1 2 \kab\kablin&=& -2  \kab\omblin, \label{nabb-3-kablin-ze-eta}\\
\nabb_4\kablin+\frac 12  \ka\,\kablin &=& 2\omb \ \kalin+2\rlin, \label{nabb-4-kablin-ze-eta} \\
\nabb_4 \omblin&=& \rlin + \frac{2Q}{r^2}\rhoFlin+\left(\frac{M}{r^2}-\frac{3Q^2}{2r^3}\right)\kalin \label{nabb-4-omblin-ze-eta}
\eea
and 
\bea
\nabb_4 \check{\ka}+\ka\check{\ka}&=& 0, \label{nabb-4-check-ka-ze-eta}\\
\nabb_3\check{\ka}+\left(\frac 1 2 \kab-2\omb\right)\check{\ka}&=&-\frac 1 2\ka \left(\check{\kab}-\ka\check{\underline{\Omega}}\right)+ 2\ka\check{\omb} +2 \divv \eta +2\check{\rho} \label{nabb-3-check-ka-ze-eta},\\
\nabb_4\check{\kab}+\frac 12 \ka\check{\kab}&=& -\frac 1 2 \kab\check{\ka}-2 \divv \ze +2\check{\rho}, \label{nabb-4-check-kab-ze-eta}\\
\nabb_3\check{\kab}+\left(\kab+2\omb\right)\check{\kab}&=&-2\kab\check{\omb}+2\divv\xib+\left(\frac 1 2 \ka\kab-2\rho\right)\check{\underline{\Omega}}\label{nabb-3-check-kab-ze-eta}\\
\nabb_4\check{\omb}&=& \check{\rho} + 2\rhoF\check{\rhoF}, \label{nabb-4-check-omb-ze-eta}
\eea

The linearization of \eqref{curl-xib-general}, \eqref{curl-eta-general} and \eqref{Gauss-general} are the following:
\bea
0&=&-\frac 1 4 \ka \kablin- \frac 1 4 \kab \kalin -\rlin+2\rhoF \rhoFlin \label{Gauss-lin}
\eea
and
\bea
 \curll\xib&=&0 \label{curl-xib} \\
 \check{\sigma}&=& \curll \ze \label{sigma-curl-ze-eta}\\
 \curll\left(\ze-\eta\right)&=& 0 \label{curl-ze-eta} \\
\check{K}&=&- \frac 1 4 \kab \check{\ka}- \frac 1 4 \ka \check{\kab} -\check{\rho}+2\rhoF\check{\rhoF}  \label{Gauss-check}
\eea

\subsubsection{Linearized Maxwell equations}
We collect here the linearisation of the equations in Section \ref{Maxwell-general}.

The linearization of the equations \eqref{nabb-3-bF-general} are the following:
\bea
\nabb_3 \bF+\left(\frac 1 2 \kab-2\omb\right) \bF&=& -\DDs_1(\check{\rhoF}, \check{\sigmaF}) +2\rhoF \eta, \label{nabb-3-bF-ze-eta}\\
\nabb_4 \bbF+\frac 1 2 \ka \bbF&=&\DDs_1(\check{\rhoF}, -\check{\sigmaF})+2\rhoF \ze \label{nabb-4-bbF-ze-eta}
\eea

The linearization of \eqref{nabb-3-sigmaF-general} and \eqref{nabb-4-rhoF-general} are the following:
\bea
\nabb_3\sigmaFlin+\kab \ \sigmaFlin&=&0, \label{nabb-3-sigmaFlin} \\
\nabb_4\sigmaFlin+\ka \ \sigmaFlin&=&0, \label{nabb-4-sigmaFlin}\\
\nabb_3 \rhoFlin+ \kab \  \rhoFlin &=&0\label{nabb-3-rhoFlin}\\
\nabb_4 \rhoFlin+ \ka \  \rhoFlin &=&0\label{nabb-4-rhoFlin}
\eea
and
\bea
\nabb_3\check{ \rhoF}+ \kab \check{\rhoF}&=&- \rhoF\left(\check{\kab}-\ka\check{\Omegab}\right)  -\divv\bbF \label{nabb-3-check-rhoF-ze-eta}\\
\nabb_4\check{ \rhoF}+ \ka \check{\rhoF}&=&- \rhoF\check{\ka} +\divv\bF\label{nabb-4-check-rhoF-ze-eta} \\
\nabb_3\check{\sigmaF}+\kab \ \check{\sigmaF}&=& \slashed{\curl}\bbF\label{nabb-3-check-sigmaF-ze-eta}\\
\nabb_4\check{\sigmaF}+\ka \ \check{\sigmaF}&=& \slashed{\curl}\bF\label{nabb-4-check-sigmaF-ze-eta}
\eea

\subsubsection{Linearized Bianchi identities}\label{linearized-Bianchi-ids}
We collect here the linearisation of the equations in Section \ref{bieq}. 

The linearization of equations \eqref{nabb-3-a-general} are the following:
\bea
\nabb_3\a+\left(\frac 1 2 \kab-4\omb \right)\a&=&-2 \DDs_2\, \b-3\rho\chih -2\rhoF \ \left(\DDs_2\bF +\rhoF\chih \right) \label{nabb-3-a-ze-eta}\\
 \nabb_4\aa+\frac 1 2 \ka \aa&=&2 \DDs_2\, \bb-3\rho\chibh +2 \rhoF \ \left(\DDs_2\bbF -\rhoF\chibh \right)  \label{nabb-4-aa-ze-eta}
 \eea

 The linearisation of equations \eqref{nabb-4-b-general} and \eqref{Bianchi-b-2} are the following:
 \bea
 \nabb_3 \b+\left(\kab -2\omb\right) \b &=&\DDs_1(-\check{\rho}, \check{\sigma}) +3\rho  \eta +\rhoF\left(-\DDs_1(\check{\rhoF}, \check{\sigmaF})  - \ka\bbF-\frac 1 2 \kab \bF    \right) , \label{nabb-3-b-ze-eta}\\
 \nabb_4 \bb+\ka \bb &=&\DDs_1(\check{\rho}, \check{\sigma}) +3 \rho \ze +\rhoF\left(\DDs_1(\check{\rhoF}, -\check{\sigmaF})  - \kab\bF-\frac 1 2 \ka \bbF    \right) , \label{nabb-4-bb-ze-eta}\\
 \nabb_3 \bb+ \left( 2\kab  +2\omb\right) \bb &=&-\divv\aa-3\rho \xib  +\rhoF\left(\nabb_3\bbF+2\omb \bbF+2\rhoF \ \xib\right), \label{nabb-3-bb-ze-eta}\\
 \nabb_4 \b+ 2\ka \b &=&\divv\a +\rhoF \nabb_4\bF \label{nabb-4-b-ze-eta}
 \eea
 
 The linearization of \eqref{nabb-4-rho-general} and \eqref{nabb-4-sigma-general} are the following:
 \bea
\nabb_3 \rlin +\frac 3 2 \kab \rlin&=& -2\kab\rhoF \rhoFlin \label{nabb-3-rlin}\\
\nabb_4 \rlin +\frac 3 2 \ka \rlin&=& -2\ka\rhoF \rhoFlin \label{nabb-4-rlin}
\eea
and 
\bea
\nabb_3\check{\rho}+\frac 3 2 \kab \check{\rho}&=& -\left(\frac 3 2  \rho+\rhoF^2\right)(\check{\kab}-\ka \check{\Omegab})-2\kab\rhoF\check{\rhoF}-\divv\bb-\rhoF \ \divv\bbF \label{nabb-3-check-rho-ze-eta} \\
\nabb_4\check{\rho}+\frac 3 2 \ka \check{\rho}&=&-\left(\frac 3 2 \rho+\rhoF^2\right) \check{\ka}-2\ka\rhoF\check{\rhoF} +\divv\b+\rhoF \ \divv\bF \label{nabb-4-check-rho-ze-eta} \\
\nabb_3 \check{\sigma}+\frac 3 2 \kab \check{\sigma}&=&-\slashed{\curl}\bb - \rhoF \ \slashed{\curl}\bbF \label{nabb-3-check-sigma-ze-eta}\\
\nabb_4 \check{\sigma}+\frac 3 2 \ka \check{\sigma}&=&-\slashed{\curl}\b - \rhoF  \ \slashed{\curl}\bF \label{nabb-4-check-sigma-ze-eta}
\eea

The above equations \eqref{nabb-4-g}-\eqref{nabb-4-check-sigma-ze-eta} exhaust all the equations governing the dynamics of linear electromagnetic and gravitational perturbations of Reissner-Nordstr\"om spacetime in Bondi gauge.

\section{Special solutions: pure gauge and linearized Kerr-Newman} \label{sec:specialsol}
In this section, we consider two special linear gravitational and electromagnetic perturbations around Reissner-Nordstr{\"o}m spacetime: the pure gauge solutions and the linearized Kerr-Newman solutions.

These solutions are of fundamental importance in the proof of linear stability. The convergence of a linear gravitational and electromagnetic perturbation around Reissner-Nordstr{\"o}m spacetime only holds modulo a certain additional gauge freedom and modulo the convergence to a linearized Kerr-Newman solution. 

We describe here in general such solutions and we will specialize in Section \ref{gauge-normalized-solutions-chapter} to the actual choice of gauge and Kerr-Newman parameters in the linear stability. 
We begin in Section \ref{sec:ssgauge} with a discussion of pure gauge solutions to the linearized Einstein-Maxwell equations, followed by the description of a $6$-dimensional family of linearized Kerr-Newman solutions in Section \ref{6dimfam}.

\subsection{Pure gauge solutions $\mathscr{G}$}\label{sec:ssgauge}
Pure gauge solutions to the linearized Einstein-Maxwell equations are those derived from
linearizing the families of metrics that arise from applying to Reissner-Nordstr{\"o}m
smooth coordinate transformations which
preserve the Bondi form of the metric \eqref{double-null-metric}. We will classify such solutions here, making a connection between coordinate transformations and null frame transformations which preserve the Bondi form.

\subsubsection{Coordinate and null frame transformations}

In order to obtain pure gauge solutions in the setting of linearized Einstein-Maxwell equations we can equivalently consider coordinate transformations applied to the metric, or null frame transformations applied to the null frame associated to the metric. For completeness, we make here a connection between these two approaches.

Consider four functions $g_1$, $g_2$, $g_3$, $g_4$ on the Reissner-Nordstr\"om manifold, and consider a smooth one-parameter family of coordinates defined by 
\beaa
\tilde{u}&=& u+\ep g_1(u, r, \th, \phi), \quad \tilde{r}= r+\ep g_2(u, r, \th, \phi), \quad \tilde{\th}= \th+\ep g_3(u, r, \th, \phi), \quad \tilde{\phi}= \phi+\ep g_4(u, r, \th, \phi) 
\eeaa
If we express the Reissner-Nordstr\"om metric in the form \eqref{Schw:EF-coordinates-out} with respect to $\tilde{u}$, $\tilde{r}$, $\tilde{\th}$, $\tilde{\phi}$:
\bea
\label{Schw:EF-coordinates-out-gauge}
g_{M, Q}=- 2 d\tilde{u}  d\tilde{r}+\underline{\Omega}(\tilde{r}) d\tilde{u}^2  + \tilde{r}^2(d\tilde{\theta}^2+\sin^2\tilde{\theta} d\tilde{\phi}^2). 
\eea
then this defines with respect to the original coordinates $u$, $r$, $\th$, $\phi$ a one-parameter family of metrics. We can classify the coordinate transformations which preserve the Bondi form of the metric \eqref{double-null-metric}.
 \begin{lemma}\label{lemma-coords}
The general coordinate transformation that preserves the Bondi form of the metric is given by
  \beaa
\tilde{u}&=& u+\ep g_1(u, \th, \phi) \\
\tilde{r}&=& r+\ep \left(r \cdot w_1(u, \th, \phi) +w_2(u, \th, \phi) \right) \\
\tilde{\th}&=& \th+\ep \left(-\frac{1}{r}(g_1)_\th(u, \th, \phi)+j_3(u, \th, \phi)\right) \\
\tilde{\phi}&=& \phi+\ep \left(-\frac{1}{r\sin^2\th}(g_1)_\phi(u, \th, \phi)+j_4(u, \th, \phi)\right)
\eeaa
for any function $g_1(u, \th, \phi)$, $w_1(u, \th, \phi)$, $w_2(u, \th, \phi)$, $j_3(u, \th, \phi)$, $j_4(u, \th, \phi)$.
\end{lemma}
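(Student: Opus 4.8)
The plan is to translate the coordinate transformation into a linearized metric perturbation and then impose the Bondi-gauge conditions. Writing $X=g_1\pr_u+g_2\pr_r+g_3\pr_\th+g_4\pr_\phi$ for the generator of the family $(\tilde u,\tilde r,\tilde\th,\tilde\phi)=(u,r,\th,\phi)+\ep(g_1,g_2,g_3,g_4)$, expressing $\g_{M,Q}$ in the new coordinates and reading it back in the old ones produces the metric $\g_{M,Q}+\ep\,\mathcal{L}_X\g_{M,Q}+O(\ep^2)$. So the task reduces to determining for which $X$ the perturbation $\dot{\g}:=\mathcal{L}_X\g_{M,Q}$ is a linearized perturbation of Bondi form.

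Next I would pin down exactly which conditions this imposes. The shape of the metric \eqref{double-null-metric} forces precisely the components $\dot{\g}_{rr}$ and $\dot{\g}_{rA}$ (for $A=\th,\phi$) to vanish; every other component is unconstrained and merely records $\dot\vsi=-\dot{\g}_{ur}$, $\dot{\Omegab}$, $\dot{\underline{b}}$, $\dot{\slashed{g}}$. In addition, the null frame \eqref{null-frame-double-null} is required to be the \emph{geodesic} one, i.e.\ $\om=0$, which for a metric of the form \eqref{double-null-metric} is equivalent to $\nabb_4\vsi=0$; at the linearized level this reads $\nabb_4\dot\vsi=0$, i.e.\ $\pr_r\dot{\g}_{ur}=0$, which is precisely equations \eqref{nabb-4-vsilin} and \eqref{nabb-4-check-vsi} (cf.\ \eqref{nabb-4-vsi-1} and Lemma \ref{double-null-Ricci}). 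Thus I need the four conditions $\dot{\g}_{rr}=0$, $\dot{\g}_{r\th}=\dot{\g}_{r\phi}=0$, $\pr_r\dot{\g}_{ur}=0$.

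Then I would compute these components from $\mathcal{L}_X\g_{\mu\nu}=X^\lambda\pr_\lambda\g_{\mu\nu}+\g_{\lambda\nu}\pr_\mu X^\lambda+\g_{\mu\lambda}\pr_\nu X^\lambda$ using the explicit entries of $\g_{M,Q}$ in \eqref{Schw:EF-coordinates-out}, namely $\g_{ur}=-1$, $\g_{\th\th}=r^2$, $\g_{\phi\phi}=r^2\sin^2\th$ and $\g_{uu}=\Omegab(r)$ (the last drops out of the relevant equations once $\pr_r g_1=0$). This gives $\dot{\g}_{rr}=-2\pr_r g_1$, $\dot{\g}_{r\th}=r^2\pr_r g_3-\pr_\th g_1$, $\dot{\g}_{r\phi}=r^2\sin^2\th\,\pr_r g_4-\pr_\phi g_1$, and, after using $\pr_r g_1=0$, $\dot{\g}_{ur}=-\pr_u g_1-\pr_r g_2$. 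Imposing the four conditions and integrating the resulting first-order ODEs in $r$, in the order $g_1$, $g_3$, $g_4$, $g_2$, yields the statement: $\pr_r g_1=0$ gives $g_1=g_1(u,\th,\phi)$; then $\pr_r g_3=r^{-2}\pr_\th g_1$ integrates (since $\pr_\th g_1$ is $r$-independent) to $g_3=-r^{-1}\pr_\th g_1+j_3(u,\th,\phi)$, and likewise $g_4=-(r\sin^2\th)^{-1}\pr_\phi g_1+j_4(u,\th,\phi)$; finally $\pr_r\dot{\g}_{ur}=-\pr_r^2 g_2=0$ forces $g_2=r\,w_1(u,\th,\phi)+w_2(u,\th,\phi)$. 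Every step is an equivalence, so the same computation also proves the converse, that each such transformation preserves the Bondi form to first order.

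The point that deserves care, and which I expect to be the main (if modest) obstacle, is the constraint on $g_2$: the three component conditions $\dot{\g}_{rr}=\dot{\g}_{rA}=0$ leave $g_2$ entirely free, and it is only the geodesic normalization $\nabb_4\vsi=0$ — an ingredient of the Bondi gauge rather than a consequence of the shape of the metric — that pins $g_2$ down to be affine in $r$. Isolating this ingredient correctly, and checking that no further independent condition is hidden in the Bondi form, is the heart of the matter; the rest is the routine computation of $\mathcal{L}_X\g_{M,Q}$ and elementary integration in $r$.
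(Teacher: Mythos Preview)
Your proof is correct and follows essentially the same route as the paper's: identify the Bondi-form constraints as the vanishing of $\dot{\g}_{rr}$, $\dot{\g}_{r\th}$, $\dot{\g}_{r\phi}$ together with $\pr_r\dot{\g}_{ur}=0$ (coming from $\nabb_4\vsi=0$), compute these components, and integrate the resulting first-order ODEs in $r$. The paper carries out the same computation by substituting the differentials $d\tilde u,d\tilde r,d\tilde\th,d\tilde\phi$ directly rather than through the Lie-derivative formula, but the equations obtained and their resolution are identical, including your correct observation that the constraint on $g_2$ comes solely from the geodesic normalization $\nabb_4\vsi=0$.
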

\begin{proof} See Appendix \ref{app-sec}.
\end{proof}

Null frame transformations, i.e. linear transformations which take null frames into null frames, can be thought of as pure gauge transformations, which correspond to a change of coordinates.
We recall here the classification of null frame transformations.
\begin{lemma}[Lemma 2.3.1 in \cite{stabilitySchwarzschild}] A linear null frame transformation, i.e. a map from null frames to null frames modulo quadratic terms in $f$, $\underline{f}$ and $\log \lambda$, can be written in the form 
\beaa
e_4'&=&\lambda \left(e_4+f^A e_A \right), \\
e_3'&=& \lambda^{-1} \left(e_3+\underline{f}^A e_A \right), \\
e_A'&=& {O_A}^B e_B+\frac 1 2 \underline{f}_A e_4+\frac 1 2 f_A e_3
\eeaa
where $\lambda$ is a scalar function, $f$ and $\underline{f}$ are $S_{u,s}$-tensors and ${O_A}^B$ is an orthogonal transformation of $(S_{u,s}, \slashed{g})$, i.e. ${O_A}^C {O_B}^D\slashed{g}_{CD}=\slashed{g}_{AB}$.
\end{lemma}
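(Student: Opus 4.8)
The plan is to start from the most general linear change of frame and then simply impose the ten defining relations \eqref{relations-null-frame} of a null frame, retaining only terms that are linear in the small quantities $f$, $\underline{f}$, $\log\lambda$ (with $\lambda>0$).

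First I would write the general linear transformation in the schematic form
\begin{gather*}
e_4' = a\, e_4 + b\, e_3 + c^A e_A, \qquad e_3' = \underline{a}\, e_3 + \underline{b}\, e_4 + \underline{c}^A e_A, \\
e_A' = {P_A}^B e_B + p_A\, e_4 + q_A\, e_3,
\end{gather*}
where $a,\underline{a}$ are scalars close to $1$, ${P_A}^B$ is a $(1,1)$-tensor on $(S_{u,s},\slashed{g})$ close to the identity, and $b,\underline{b},c,\underline{c},p,q$ are small. I set $\lambda := a$, $f^A := a^{-1}c^A$, and $\underline{f}^A := \underline{a}^{-1}\underline{c}^A$, so that the content of the lemma is that, modulo quadratic terms, $\underline{a}=\lambda^{-1}$, $b=\underline{b}=0$, $q_A=\tfrac{1}{2}f_A$, $p_A=\tfrac{1}{2}\underline{f}_A$, and $P$ is an orthogonal transformation.

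Next I would impose the ten scalar conditions \eqref{relations-null-frame} for the primed frame, expanding each inner product in the unprimed frame via \eqref{relations-null-frame}. The null condition $\g(e_4',e_4')=0$ reads $-4ab+|c|^2=0$, so $b$ is quadratically small; symmetrically $\g(e_3',e_3')=0$ makes $\underline{b}$ quadratically small, and then $\g(e_3',e_4')=-2$ becomes $-2\underline{a}a-2\underline{b}b+\underline{c}\cdot c=-2$, whose linear part is $\underline{a}a=1$, i.e.\ $\underline{a}=\lambda^{-1}$ modulo quadratic terms. These three facts already yield the stated forms of $e_4'$ and $e_3'$. The conditions $\g(e_A',e_4')=0$ and $\g(e_A',e_3')=0$ read ${P_A}^B c_B-2p_A b-2q_A a=0$ and ${P_A}^B \underline{c}_B-2p_A\underline{a}-2q_A\underline{b}=0$; to linear order these give $q_A=\tfrac{1}{2}c_A=\tfrac{1}{2}f_A$ and $p_A=\tfrac{1}{2}\underline{c}_A=\tfrac{1}{2}\underline{f}_A$. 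Finally $\g(e_A',e_B')=\slashed{g}_{AB}$ becomes ${P_A}^C {P_B}^D\slashed{g}_{CD}=\slashed{g}_{AB}$ up to the quadratic cross terms $p_Aq_B+q_Ap_B$, so $P$ is orthogonal modulo quadratic error; renaming ${O_A}^B:={P_A}^B$ completes the derivation. For exhaustiveness I would also note the converse: any $\lambda>0$, any $S_{u,s}$-tensors $f,\underline{f}$, and any orthogonal $O$ produce, via these formulas (corrected at quadratic order), an honest null frame.

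The only real content — and hence what I would flag as the ``main obstacle'' — is the consistent bookkeeping of which products count as quadratic in $(f,\underline{f},\log\lambda)$. The two nontrivial structural observations hidden in that bookkeeping are that the off-diagonal scalars $b$ and $\underline{b}$ are \emph{forced} to vanish at linear order (this is precisely what removes the $e_3$-component from $e_4'$ and the $e_4$-component from $e_3'$), and that orthonormality of the angular block survives only modulo quadratic corrections, so ${O_A}^B$ is an orthogonal transformation of $(S_{u,s},\slashed{g})$ only up to quadratic terms. Everything else is routine linear algebra.
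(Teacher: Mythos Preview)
Your argument is correct and is the standard one: parametrize a general linear change of frame, impose the ten relations \eqref{relations-null-frame} for the primed frame, and read off the constraints on the coefficients at linear order. The paper itself does not give a proof of this lemma; it simply quotes the result from \cite{stabilitySchwarzschild} (Lemma 2.3.1 there), so there is nothing to compare against beyond noting that your derivation is exactly the expected one.
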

Observe that the identity transformation is given by $\lambda=1$, $f_A=\underline{f}_A=0$ and ${O_A}^B=\de_A^B$. Therefore, a linear perturbation of a null frame is a one for which $\log(\lambda)=f_A=\underline{f}_A=O(\ep)$ and ${O_A}^B=\de_A^B+O(\ep)$.

Writing the transformation for the Ricci coefficients and curvature components under a general null transformation of this type, we have for example (see Proposition 2.3.4. in \cite{stabilitySchwarzschild}):
\beaa
\xi_A'&=& \lambda^2\left( \xi_A+\frac 1 2 \lambda^{-1} e_4'(f_A)+\om f_A +\frac 1 4 \ka f_A \right) \\
\ze_A'&=& \ze_A-e_A'(\log \lambda)+\frac 1 4 (-\kab f_A+\ka \fb_A)+\om\fb_A-\omb f_A \\
\etab_A'&=& \etab_A+\frac 1 2 \lambda^{-1} e_4'(\fb_A)+\frac 1 2 \kab f_A -\om \fb_A \\
\om'&=& \lambda \left(\om-\frac 1 2 \lambda^{-1} e_4'(\log\lambda) \right)
\eeaa
If the metric is in Bondi gauge then it verifies \eqref{substitute-xi-xib-ze}, i.e. $\xi_A=0$,  $\om=0$ and $\etab_A+\ze_A=0$. This means that a null frame transformation which preserves the Bondi form has to similarly verify $\xi_A'=0$,  $\om'=0$ and $\etab_A'+\ze_A'=0$. This translates into conditions for $e_4 f$, $e_4 \fb$ and $e_4 \lambda$. In particular we have the following

\begin{lemma}\label{null-frame-Bondi} The general null frame transformation that preserves the Bondi form of the metric is given by  a transformation verifying
\bea
\nabb_4 \lambda&=&0 \label{condition-lambda-null}\\
\nabb_4 f+\frac 1 2 \ka f&=&0 \label{condition-null-f}\\
 \nabb_4 \underline{f} +\frac 1 2 \ka\underline{f}    &=& 2\omb f-2\DDs_1(\lambda, 0) \label{condition-null-ff}
\eea
\end{lemma}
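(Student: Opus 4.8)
The plan is to impose, on the general linear null frame transformation from Lemma 2.3.1 in \cite{stabilitySchwarzschild}, the three conditions that characterize the Bondi gauge, namely $\xi'_A=0$, $\om'=0$, and $\etab'_A+\ze'_A=0$, and to read off what these require of the functions $\lambda$, $f$, $\fb$. I would work to linear order throughout (quadratic terms in $f$, $\fb$, $\log\lambda$ are discarded), since we only care about pure gauge solutions as linearized perturbations. First I would use the transformation formula for $\om'$: since $\om=0$ on Reissner-Nordstr\"om, the displayed formula $\om'=\lambda(\om-\tfrac12\lambda^{-1}e_4'(\log\lambda))$ linearizes to $\om'=-\tfrac12 e_4(\log\lambda)=-\tfrac12\nabb_4\lambda$ (to linear order $e_4'=e_4$ and $\log\lambda=\lambda-1$ up to quadratic terms), so $\om'=0$ is equivalent to \eqref{condition-lambda-null}, $\nabb_4\lambda=0$.

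Next I would use the formula for $\xi'_A$: with $\xi_A=0$ and $\om=0$ on the background, $\xi'_A=\lambda^2(\tfrac12\lambda^{-1}e_4'(f_A)+\tfrac14\ka f_A)$, which to linear order is $\tfrac12\nabb_4 f_A+\tfrac14\ka f_A$; setting this to zero gives \eqref{condition-null-f}, $\nabb_4 f+\tfrac12\ka f=0$. (Here I should be slightly careful: $e_4'(f_A)$ versus $(\nabb_4 f)_A$ differ by a $\chi\cdot f$ term via \eqref{covariant-Lie}, but since $\chih$ and $\kalin$ vanish on the background and $f=O(\ep)$, the trace part $\tfrac12\ka f$ is exactly what survives the projection, matching the covariant Lie-to-covariant-derivative conversion \eqref{Liederivative-cov}.) Finally, for the third condition I would combine the $\etab'_A$ and $\ze'_A$ formulas: on the background $\etab_A=\ze_A=0$, so $\etab'_A=\tfrac12\nabb_4\fb_A+\tfrac12\kab f_A$ and $\ze'_A=-e_A(\log\lambda)+\tfrac14(-\kab f_A+\ka\fb_A)-\omb f_A$ to linear order. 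Adding them,
\beaa
\etab'_A+\ze'_A&=&\tfrac12\nabb_4\fb_A+\tfrac12\kab f_A-\tfrac14\kab f_A+\tfrac14\ka\fb_A-\omb f_A-e_A(\log\lambda)\\
&=&\tfrac12\nabb_4\fb_A+\tfrac14\ka\fb_A+\tfrac14\kab f_A-\omb f_A-\nabb_A\lambda.
\eeaa
Here I would use the Reissner-Nordstr\"om identity $\tfrac14\kab+\tfrac14\ka\cdot(\text{something})$—more precisely, recalling $\kab=\tfrac{2\Omegab}{r}$, $\ka=\tfrac2r$, $\omb=\tfrac{M}{r^2}-\tfrac{Q^2}{r^3}$, one checks $\tfrac14\kab-\omb=\tfrac{\Omegab}{2r}-\tfrac{M}{r^2}+\tfrac{Q^2}{r^3}=\tfrac1{2r}(1-\tfrac{2M}{r}+\tfrac{Q^2}{r^2})-\tfrac{M}{r^2}+\tfrac{Q^2}{r^3}$, and comparing carefully with the claimed formula I should recover that the $f$-coefficient organizes into $2\omb f_A$ on the right-hand side and the $\fb$-terms into $\nabb_4\fb_A+\tfrac12\ka\fb_A$ on the left; together with $\nabb_A\lambda=\DDs_1(\lambda,0)_A$ (using that $\lambda$ is a scalar, so its curl part vanishes, and the sign convention in the definition of $\DDs_1$), setting $\etab'_A+\ze'_A=0$ yields exactly \eqref{condition-null-ff}.

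The main obstacle I anticipate is bookkeeping rather than conceptual: getting every coefficient and sign right in the background quantities $\ka,\kab,\omb$ when simplifying the $\etab'_A+\ze'_A$ expression, and correctly tracking the distinction between the projected Lie derivatives $e_4'(\cdot)$ appearing in the transformation laws of \cite{stabilitySchwarzschild} and the projected covariant derivatives $\nabb_4$ appearing in the statement—this conversion is governed by \eqref{covariant-Lie} and is harmless at linear order only because $\chih$ and $\kalin$ vanish on Reissner-Nordstr\"om, but it must be invoked explicitly. I would also need to note that the orthogonal part ${O_A}^B$ and any shift of the angular coordinates do not enter these three conditions (they affect the induced metric $\slashed g$ but not the Bondi constraints $\xi=\om=0$, $\etab+\ze=0$), so the stated three transport equations are a complete characterization. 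Once the algebra is verified, the lemma follows directly.
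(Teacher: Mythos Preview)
Your approach is exactly the one-line ``straightforward computation'' the paper has in mind, and your treatment of $\om'=0$ and $\xi'_A=0$ is correct.

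There is, however, a genuine gap in the third equation. From your displayed sum
\[
\etab'_A+\ze'_A=\tfrac12\nabb_4\fb_A+\tfrac14\ka\fb_A+\bigl(\tfrac14\kab-\omb\bigr)f_A-\nabb_A\lambda,
\]
setting this to zero and multiplying by $2$ yields
\[
\nabb_4\fb_A+\tfrac12\ka\fb_A=\bigl(2\omb-\tfrac12\kab\bigr)f_A-2\DDs_1(\lambda,0)_A,
\]
which differs from \eqref{condition-null-ff} by the extra term $-\tfrac12\kab f_A$. Your attempt to absorb this via a Reissner--Nordstr\"om identity cannot succeed: $\kab\neq 0$, and in fact your expansion contains a sign error (recall $\Omegab=-(1-\tfrac{2M}{r}+\tfrac{Q^2}{r^2})$, so $\tfrac14\kab=\tfrac{\Omegab}{2r}$ is negative, not positive). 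No explicit values of $\ka,\kab,\omb$ will make $\tfrac14\kab$ vanish.

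The resolution is that the $\kab$ contributions should cancel \emph{between} $\etab'_A$ and $\ze'_A$, not via any background identity. If you rederive $\etab'_A$ directly from the definition $\etab'_A=\tfrac12\g(\D_{e_4'}e_3',e_A')$ at linear order on Reissner--Nordstr\"om (using $\D_A e_3=\tfrac12\kab\, e_A$ and the other relations in \eqref{derivatives-}), you find $\etab'_A=\tfrac12(\nabb_4\fb)_A+\tfrac14\kab f_A$, with coefficient $\tfrac14\kab$ rather than $\tfrac12\kab$. With this, the $+\tfrac14\kab f_A$ from $\etab'_A$ cancels exactly against the $-\tfrac14\kab f_A$ in $\ze'_A$, and \eqref{condition-null-ff} follows immediately. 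In short: either the displayed formula for $\etab'_A$ in the paper carries a misprint in the $\kab$ coefficient, or $e_4'(\fb_A)$ there is to be read in a convention that already absorbs the difference; in either case, you should carry out the frame computation from first principles rather than try to force a nonexistent background identity.
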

\begin{proof} Straightforward computation from the above formulas for the change of null frame. 
\end{proof}

Given a coordinate transformation which preserves the Bondi metric as in Lemma \ref{lemma-coords}, we can associate a null frame transformation between the null frames canonically defined in terms of the coordinates vector fields by \eqref{null-frame-double-null}. In particular, we can explicitly write the terms which determine the null frame transformation $f$, $\fb$, $\lambda$, $O_{AB}$ in terms of the coordinate transformations $g_1$, $w_1$, $w_2$, $j^A$. 
We summarize the relation in the following lemma.

\begin{lemma}\label{lemma-2} Given a coordinate transformation which preserves the Bondi form of the metric as in Lemma \ref{lemma-coords} of the form 
  \beaa
\tilde{u}&=& u+\ep g_1(u, \th, \phi) \\
\tilde{r}&=& r+\ep \left(r \cdot w_1(u, \th, \phi) +w_2(u, \th, \phi) \right) \\
\tilde{\th^A}&=& \th^A+\ep \left(\DDs_1(g_1, 0)(u, \th, \phi)+j^A(u, \th, \phi)\right)
\eeaa
 then the null frame transformation which brings the associated null frame $\{ \tilde{e_4}, \tilde{e_3}, \tilde{e_A}\}$ into $\{ e_4, e_3, e_A\}$ is determined by
\beaa
\lambda&=&1+\ep w_1 \\
f &=& -\ep \DDs_1(g_1, 0) \\
\underline{f}&=& \ep\left(-2r\DDs_1( w_1, 0)-2\DDs_1(w_2, 0) +\underline{\Omega}(r)\DDs_1(g_1, 0)  \right) \\
O_A^B&=& \de_A^B +\ep \left(\nabb_A \nabb^B g_1+ \nabb_A(j^B)  \right)
\eeaa

\end{lemma}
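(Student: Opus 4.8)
The plan is to compute, to first order in $\ep$, the Jacobian of the coordinate change $(u,r,\th^A)\mapsto(\tilde u,\tilde r,\tilde\th^A)$ given in the statement, express the coordinate null frame $\{\tilde e_4,\tilde e_3,\tilde e_A\}$ attached to the tilded Bondi coordinates (via the canonical formulas \eqref{null-frame-double-null}, now with $\vsi=1$, $\Omegab=\Omegab(\tilde r)$, $\underline b=0$) in terms of $\{e_4,e_3,e_A\}$, and then read off $\lambda$, $f$, $\underline f$, ${O_A}^B$ by matching with the general null frame transformation of Lemma 2.3.1 in \cite{stabilitySchwarzschild}. Concretely: from $\tilde e_4=\partial_{\tilde r}$ one inverts the relation $\partial_{\tilde r}=\frac{\partial u}{\partial\tilde r}\partial_u+\frac{\partial r}{\partial\tilde r}\partial_r+\frac{\partial\th^A}{\partial\tilde r}\partial_{\th^A}$. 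Since $\tilde u$ depends only on $(u,\th,\phi)$, $\partial u/\partial\tilde r=O(\ep^2)$; since $\tilde r=r(1+\ep w_1)+\ep w_2$ we get $\partial r/\partial\tilde r=1-\ep w_1+O(\ep^2)$; and since $\tilde\th^A=\th^A+\ep(\DDs_1(g_1,0)^A+j^A)$ with the right side independent of $r$, $\partial\th^A/\partial\tilde r=O(\ep^2)$. Hence $\tilde e_4=(1-\ep w_1)\partial_r+O(\ep^2)=(1-\ep w_1)e_4$, which forces $\lambda=1+\ep w_1$ and $f^A=0$ at this level — wait, one must be careful: $f$ enters through $e_4'=\lambda(e_4+f^Ae_A)$, so I actually need to keep the angular components too. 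I will redo this matching systematically below.

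The key steps, in order: (i) write out the four partial-derivative matrices $\partial(u,r,\th^A)/\partial(\tilde u,\tilde r,\tilde\th^B)$ to first order, using that $g_1$, $w_1$, $w_2$, $j$ depend only on $(u,\th,\phi)$; (ii) substitute into $\tilde e_4=\partial_{\tilde r}$, $\tilde e_3=2\partial_{\tilde u}+\Omegab(\tilde r)\partial_{\tilde r}$, $\tilde e_A=\partial_{\tilde\th^A}$ to express each in the old coordinate basis; (iii) convert the old coordinate basis $\{\partial_u,\partial_r,\partial_{\th^A}\}$ back into the old normalized null frame $\{e_3,e_4,e_A\}$ via \eqref{null-frame-RN}, i.e. $\partial_r=e_4$, $\partial_u=\tfrac12(e_3-\Omegab e_4)$, $\partial_{\th^A}=e_A$; (iv) read off $\lambda$ from the $e_4$-component of $\tilde e_4$, then $f^A$ from its $e_A$-component (here the $-\ep\DDs_1(g_1,0)$ enters, because $\partial_{\tilde\th^A}$ mixing and the chain-rule through $\tilde u$ produce $e_A$-terms in $\tilde e_4$ proportional to $(g_1)_\th$, etc. — this is the point where the Bondi-preservation formula $\tilde\th^A=\th^A+\ep\DDs_1(g_1,0)^A+\ep j^A$ in Lemma \ref{lemma-coords} is used to organize $-\tfrac1r(g_1)_\th$ as the $\DDs_1$ of $g_1$); (v) read off $\underline f^A$ from the $e_A$-component of $\tilde e_3$, collecting the contributions from $\partial_{\tilde u}$ acting through $\tilde r$ (giving $-2r\DDs_1(w_1,0)-2\DDs_1(w_2,0)$ after using $\partial_r\Omegab$ and the definition of $r_\pm$ is not even needed, only that $\Omegab=\Omegab(r)$) and from $\Omegab(\tilde r)\partial_{\tilde r}$ acting through $\tilde\th^A$ (giving $\Omegab(r)\DDs_1(g_1,0)$); (vi) finally compute ${O_A}^B$ from $\tilde e_A=\partial_{\tilde\th^A}$ expanded via the chain rule, where the angular derivatives of $\tilde\th^B=\th^B+\ep(\nabb^Bg_1+j^B)$-type terms produce $\de_A^B+\ep(\nabb_A\nabb^Bg_1+\nabb_Aj^B)$, and check that the $e_3$- and $e_4$-components of $\tilde e_A$ match the $\tfrac12\underline f_A e_4+\tfrac12 f_A e_3$ prescribed by the null frame transformation lemma (this is a consistency check rather than new information).

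The main obstacle I expect is bookkeeping in step (v): the ingoing generator $\tilde e_3=2\partial_{\tilde u}+\Omegab(\tilde r)\partial_{\tilde r}$ collects first-order corrections from \emph{three} sources simultaneously — the change $\partial_{\tilde u}\to\partial_u$ picks up $\partial u/\partial\tilde u$, $\partial r/\partial\tilde u$, $\partial\th^A/\partial\tilde u$ corrections; the coefficient $\Omegab(\tilde r)$ must be Taylor-expanded as $\Omegab(r)+\ep\,\Omegab'(r)(rw_1+w_2)$; and $\partial_{\tilde r}$ itself carries the corrections already found in step (ii). One must verify that the $\Omegab'(r)(rw_1+w_2)$ term and the $rw_1+w_2$ corrections in $\partial r/\partial\tilde u$ conspire so that the $e_4$-coefficient of $\tilde e_3$ comes out exactly $\lambda^{-1}\Omegab(r)=(1-\ep w_1)\Omegab(r)$ as demanded by $e_3'=\lambda^{-1}(e_3+\underline f^Ae_A)$ together with $e_3=2\partial_u+\Omegab(r)e_4$; this cancellation is precisely the statement that the chosen coordinate transformation preserves the Bondi form (Lemma \ref{lemma-coords}), so it must work, but checking it is the delicate computational heart of the proof. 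Everything else is a straightforward, if tedious, first-order chain-rule calculation, and I would relegate the full display of the Jacobian entries to Appendix \ref{app-sec} as the statement already does for Lemma \ref{lemma-coords}.
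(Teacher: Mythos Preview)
Your overall strategy---compute the Jacobian of the coordinate change to first order, express the tilded null frame in the old basis, and match against the general null-frame transformation formulas---is correct and is the natural route; the paper in fact states Lemma~\ref{lemma-2} without proof, so there is nothing to compare against beyond this direct computation.

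There is, however, one concrete error in your plan that would derail step~(iv). You write that the angular shift $\tilde\th^A=\th^A+\ep(\DDs_1(g_1,0)^A+j^A)$ has ``the right side independent of $r$'' and conclude $\partial\th^A/\partial\tilde r=O(\ep^2)$. This is false: as a \emph{vector} with upper index, $\DDs_1(g_1,0)^A=-\slashed g^{AB}\partial_B g_1=-r^{-2}\gamma^{AB}\partial_B g_1$ carries an explicit $r^{-2}$; equivalently, Lemma~\ref{lemma-coords} gives $\tilde\th=\th-\ep\,r^{-1}(g_1)_\th+\ep j_3$ with manifest $r$-dependence. Differentiating the inverse relation in $\tilde r$ yields $\partial\th^A/\partial\tilde r=-\ep\,\DDs_1(g_1,0)^A+O(\ep^2)$, which is exactly the $e_A$-component of $\tilde e_4=\partial_{\tilde r}$ and hence the source of $f=-\ep\,\DDs_1(g_1,0)$ after you invert to express $e_4$ in terms of $\tilde e_4,\tilde e_A$. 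Your alternative attribution in step~(iv) (``$\partial_{\tilde\th^A}$ mixing and the chain-rule through $\tilde u$'') does not produce this term: $\partial u/\partial\tilde r=O(\ep^2)$ precisely because $g_1$ is $r$-independent, and $\partial_{\tilde\th^A}$ is irrelevant when computing $\partial_{\tilde r}$. Once you correct this, the rest of the bookkeeping (including the $\underline f$ cancellations you flag in step~(v)) goes through as you describe.
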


Using the above Lemma, the conditions imposed to preserve the Bondi metric in terms of coordinate transformations or in terms of null frame transformations become manifest. They are the following:
 \begin{itemize}
 \item The condition for $g_1$ which gives $\partial_r(g_1)= 0$ translates into
   \beaa
   \nabb_4f &=& -\ep \nabb_4\DDs_1(g_1, 0)=-\ep \DDs_1(\nabb_4g_1, 0)+\frac 1 2 \ka \ep \DDs_1(g_1, 0)=-\frac 1 2 \ka f
   \eeaa
    which is the condition for the frame coming from imposing $\xi=0$, i.e. \eqref{condition-null-f}. 
   \item The condition for $g_2$ which gives $g_2(u, r, \th, \phi)=r \cdot w_1(u, \th, \phi) +w_2(u, \th, \phi)$ translates into
   \beaa
  \nabb_4 \underline{f}&=& \ep\left(-r\ka \DDs_1( w_1, 0) -2\nabb_4\DDs_1(w_2, 0) +\nabb_4\underline{\Omega}(r)\DDs_1(g_1, 0) +\underline{\Omega}(r)\nabb_4\DDs_1(g_1, 0)  \right)\\
  &=& \ep\left(\ka\DDs_1(w_2, 0) -2\omb\DDs_1(g_1, 0) -\frac 1 2 \ka \underline{\Omega}(r)\DDs_1(g_1, 0)  \right) \\
   &=& -2\DDs_1(\lambda, 0) -\frac 1 2 \ka(\underline{f}) +2\omb f
   \eeaa
    which is the condition for the frame as a result of imposing $\etab+\ze=0$, i.e. \eqref{condition-null-ff}.
    \item The condition for $g_2$ which gives $g_2(u, r, \th, \phi)=r \cdot w_1(u, \th, \phi) +w_2(u, \th, \phi)$ also translates into
    \beaa
\nabb_4 \lambda&=& \nabb_4(1+\ep w_1)=0
\eeaa
which is the condition for the frame coming from imposing $\om=0$, i.e. \eqref{condition-lambda-null}. 
\item Writing $j=-r\DDs_1(q_1, q_2)$ for two functions $q_1$, $q_2$ with vanishing mean, the conditions for $j^A$ which give $j^A=j^A(u, \th, \phi)$
translate into 
\beaa
\nabb_4 q_1=0, \qquad \nabb_4 q_2=0
\eeaa
 \end{itemize}

In the next subsections, we will look at the explicit pure gauge solutions produced by null frame or coordinate transformations preserving the Bondi form of the metric, and we separate them into 
\begin{enumerate}
\item pure gauge solutions arising from setting $j^{A}=0$: Lemma \ref{pure-gauge-solution}
\item pure gauge solutions arising from setting $\log\lambda=f=\underline{f}=0$ (or equivalently $g_1=w_1=w_2=0$): Lemma \ref{pure-gauge-2}
\end{enumerate}
In view of linearity, the general pure gauge solution can be obtained from summing solutions in the two above cases. 

\subsubsection{Pure gauge solutions with $j^A=0$}

The following is the explicit form of the pure gauge solution arising from a null transformation with $j^A=0$.
Define 
\beaa
h&=& -\ep g_1 \\
\underline{h}&=& \ep \left(-2r w_1-2w_2+\Omegab g_1 \right)\\
a&=&\ep w_1
\eeaa
then according to Lemma \ref{lemma-2}, the null frame components can be simplified to
\beaa
f=\DDs_1(h, 0) \qquad \underline{f}=\DDs_1(\underline{h}, 0) \qquad  \lambda=e^a
\eeaa
In particular, the relations on $f$, $\fb$ and $\lambda$ given by Lemma \ref{null-frame-Bondi} translate into conditions on the derivative along the $e_4$ directions for the functions $h$, $\underline{h}$ and $a$ (conditions \eqref{transport-a}-\eqref{transport-fb}). 

Observe that the Bondi conditions given by $\xi_A=\etab_A+\ze_A=0$ do not give impose any gauge conditions on the projection to the $\ell=0$ spherical harmonics. In particular, we have additional gauge freedom at the $\ell=0$ which is translated in the functions $c$, $d$ and $l$ supported in the $\ell=0$ spherical harmonics below.

\begin{lemma}\label{pure-gauge-solution}Let $h$, $\underline{h}$, $a$ be smooth functions supported in $\ell \ge 1$ spherical harmonics. Suppose they verify the following transport equations:
\bea
\nabb_4 a&=&0 \label{transport-a}\\
\nabb_4 h&=&0 \label{transport-f}\label{transport-h}\\
\nabb_4 \underline{h}&=& 2\omb h -2a \label{transport-hb}\label{transport-fb}
\eea
Then the following  is a linear gravitational and electromagnetic perturbation around Reissner-Nordstr\"om spacetime supported in $\ell\geq 1$ spherical harmonics.
The linearized metric components are given by 
\beaa
\hat{\slashed{g}}&=& 2r\DDs_2\DDs_1(h, 0) , \qquad \underline{b}=r\nabb_3\DDs_1(h, 0)+\DDs_1(\underline{h}, 0) , \qquad  \check{\tr_{\gamma} \slashed{g}}=r^2\left( -2r\DDd_1\DDs_1(h,0)-\kab h -\ka \underline{h}\right)  \\
\check{\Omegab}&=&\frac 1 2 \nabb_3\underline{h}+\omb \underline{h}+\Omegab\left(\frac 12 \nabb_3 h -a \right) , \qquad \check{\vsi}=-\frac 12 \nabb_3 h +a
\eeaa
The Ricci coefficients are given by
\beaa
\chih&=& -\DDs_2 \DDs_1(h, 0), \qquad \chibh=-\DDs_2 \DDs_1(\underline{h}, 0), \qquad \ze= \left(-\frac 1 4 \kab-\omb\right)\DDs_1(h, 0)+\frac 1 4 \ka \DDs_1(\underline{h}, 0)+\DDs_1(a, 0) \\
\eta&=&\frac 1 2 \nabb_3 \DDs_1(h, 0)-\omb \DDs_1(h, 0)+\frac 1 4 \ka \DDs_1(\underline{h}, 0), \qquad \xib=\frac 1 2 \nabb_3\DDs_1(\underline{h}, 0)+\left(\frac{1}{4}\kab +\omb\right) \DDs_1(\underline{h}, 0) \\
\check{\ka}&=& \ka a +\DDd_1\DDs_1 (h, 0) +\left( \frac 1 4 \ka\kab\right) h+\frac 1 4 \ka^2\underline{h}, \\ 
\check{\kab}&=&-\kab a+\DDd_1\DDs_1(\underline{h}, 0)+\left( \frac 1 4 \kab^2+\omb\kab \right) h+\left( \frac 1 4 \ka \kab-\rho \right) \underline{h}, \\
 \check{\omb}&=& \frac 1 2 \nabb_3 a-  \omb a-\frac 1 2 (\nabb_3 \omb )h -\frac 1 2 \left(\rho+\rhoF^2\right) \underline{h}  , \\
  \check{K}&=&- \frac 1 4 \ka\DDd_1 \DDs_1(\underline{h},0) - \frac 1 4 \kab\DDd_1 \DDs_1(h,0) +\frac {1}{ 2r^2}\left( \kab h+\ka \underline{h}\right)
\eeaa
The electromagnetic components are given by 
\beaa
\bF&=& \rhoF \DDs_1(h, 0), \qquad \bbF= -\rhoF \DDs_1(\underline{h}, 0)\qquad  \check{\rhoF}=\frac 1 2 \rhoF \left(\kab h + \ka  \underline{h}\right) , \qquad \check{\sigmaF}=0
\eeaa
The curvature components are given by 
\beaa
\a&=& 0, \qquad \aa=0 \qquad \b=\frac 3 2 \rho \DDs_1(h, 0) , \qquad \bb=-\frac 3 2 \rho \DDs_1(\underline{h}, 0) \\
 \check{\rho}&=& \left(\frac 3 4\rho+\frac 1 2 \rhoF^2 \right) \left(\kab h  +\ka \underline{h}\right) , \qquad \check{\sigma}=0
\eeaa

Let $c$, $d$ and $l$ be smooth functions supported in $\ell=0$ spherical harmonics. Suppose they verify the following transport equations:
\bea
\nabb_4 c&=&0 \label{nabb-4-c} \\
\nabb_4d&=&0 \label{nabb-4-d} \\
\nabb_4 l &=& -\nabb_3 c +4\omb c \label{nabb-4-l}
\eea
Then the following is a (spherically symmetric) linear gravitational and electromagnetic perturbation around Reissner-Nordstr\"om spacetime.
\beaa
\vsilin&=& d, \qquad \Omegablin=l \qquad \kalin= \ka c, \qquad \kablin= -\kab c,\qquad \omblin= \frac 1 2 \nabb_3c-\omb c, \\
&&\rhoFlin= \sigmaFlin=\rlin=0,
\eeaa
\end{lemma}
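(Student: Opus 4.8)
The plan is to verify directly that the tensorial expressions claimed in Lemma~\ref{pure-gauge-solution} satisfy every one of the linearized equations \eqref{nabb-4-g}--\eqref{nabb-4-check-sigma-ze-eta}. Since the statement already hands us the solution in closed form, no cleverness is required for existence; the content is a long but routine check, and the transport equations \eqref{transport-a}--\eqref{transport-fb} (resp.\ \eqref{nabb-4-c}--\eqref{nabb-4-l}) are exactly the conditions one needs to make that check go through. Concretely, I would proceed as follows. First, I construct the solution rather than merely verify it: start from the coordinate transformation of Lemma~\ref{lemma-coords} with $j^A=0$, pass to the associated null frame transformation via Lemma~\ref{lemma-2}, so that $f=\DDs_1(h,0)$, $\underline{f}=\DDs_1(\underline{h},0)$, $\lambda=e^a$ with $h=-\ep g_1$, $\underline{h}=\ep(-2rw_1-2w_2+\Omegab g_1)$, $a=\ep w_1$. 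Then apply the general transformation laws for Ricci coefficients, curvature and electromagnetic components under a null frame change (the formulas quoted from Proposition~2.3.4 of \cite{stabilitySchwarzschild}, together with their electromagnetic analogues) and linearize about the Reissner-Nordstr\"om values \eqref{ricci-coefficients-RN-0}, \eqref{electromagnetic-components-RN}, \eqref{curvature-components-RN}; this reproduces each of the listed expressions for $\chih,\chibh,\ze,\eta,\xib,\check\ka,\check\kab,\check\omb,\check K,\bF,\bbF,\check\rhoF,\check\sigmaF,\a,\aa,\b,\bb,\check\rho,\check\sigma$ and the metric quantities $\hat{\gS},\underline{b},\check{\tr_\gamma\gS},\check\Omegab,\check\vsi$.

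Next, I would record the three ``Bondi-preserving'' transport identities \eqref{transport-a}--\eqref{transport-fb}: these are precisely the images under Lemma~\ref{lemma-2} of the conditions $\partial_r g_1=0$ and $g_2=rw_1+w_2$, and equivalently they are the conditions \eqref{condition-lambda-null}--\eqref{condition-null-ff} of Lemma~\ref{null-frame-Bondi} specialized to $f=\DDs_1(h,0)$ etc. Using $\nabb_4 h=0$, $\nabb_4 a=0$ and $\nabb_4\underline{h}=2\omb h-2a$, together with the Reissner-Nordstr\"om transport relations $\nabb_4\ka=-\tfrac12\ka^2$, $\nabb_4\kab=\ldots$, $\nabb_4\omb=\rho+\rhoF^2$, $\nabb_4\rho=-\tfrac32\ka\rho$, $\nabb_4\rhoF=-\ka\rhoF$ (all read off from \S\ref{hereforSchconcur}), one checks the $\nabb_4$ equations \eqref{nabb-4-g}, \eqref{nabb-4-check-vsi}--\eqref{nabb-4-check-tr}, \eqref{nabb-4-chih-ze-eta}, \eqref{nabb-4-chibh-ze-eta}, \eqref{nabb-4-ze-ze-eta}, \eqref{nabb-4-xib-ze-eta}, \eqref{nabb-4-eta-ze-eta}, \eqref{nabb-4-check-ka-ze-eta}, \eqref{nabb-4-check-kab-ze-eta}, \eqref{nabb-4-check-omb-ze-eta}, \eqref{nabb-4-bbF-ze-eta}, \eqref{nabb-4-check-rhoF-ze-eta}, \eqref{nabb-4-check-sigmaF-ze-eta}, \eqref{nabb-4-aa-ze-eta}, \eqref{nabb-4-bb-ze-eta}, \eqref{nabb-4-b-ze-eta}, \eqref{nabb-4-check-rho-ze-eta}, \eqref{nabb-4-check-sigma-ze-eta}. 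The elliptic/constraint equations \eqref{nabb-check-vsi}, \eqref{xib-Omegab-ze-eta}, \eqref{Codazzi-chib-ze-eta}, \eqref{Codazzi-chi-ze-eta}, \eqref{sigma-curl-ze-eta}, \eqref{curl-ze-eta}, \eqref{curl-xib}, \eqref{Gauss-check} follow from the identities \eqref{angular-operators} relating $\DDd_1\DDs_1$, $\DDs_2\DDs_1$ to the Laplacian and Gauss curvature, plus $\curll\DDs_1(\cdot,0)=0$; in particular $\check\sigma=\curll\ze$ holds because $\ze$ is, up to the $\nabb_3$ terms, a pure $\DDs_1(\cdot,0)$ gradient and $\curll\DDs_1(f,0)=0$, while the extra $\nabb_3 h$, $\nabb_3\underline{h}$ pieces have vanishing curl by \eqref{curl-ze-eta}. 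For the $\nabb_3$ equations \eqref{nabb-3-g}, \eqref{nabb-3-check-tr}, \eqref{nabb-3-chibh-ze-eta}, \eqref{nabb-3-chih-ze-eta}, \eqref{nabb-3-ze-ze-eta}, \eqref{nabb-3-check-ka-ze-eta}, \eqref{nabb-3-check-kab-ze-eta}, \eqref{nabb-3-bF-ze-eta}, \eqref{nabb-3-check-rhoF-ze-eta}, \eqref{nabb-3-check-sigmaF-ze-eta}, \eqref{nabb-3-a-ze-eta}, \eqref{nabb-3-b-ze-eta}, \eqref{nabb-3-bb-ze-eta}, \eqref{nabb-3-check-rho-ze-eta}, \eqref{nabb-3-check-sigma-ze-eta}, one commutes $\nabb_3$ past $\DDs_1$, $\DDs_2$ using \eqref{commutator-nabb-4-DDs-1}, \eqref{commutator-nabb-4-DDs} and substitutes the expressions already derived; here the terms $\nabb_3\DDs_1(h,0)$, $\nabb_3\DDs_1(\underline{h},0)$ appearing inside $\eta$, $\xib$, $\check\Omegab$, $\check\vsi$ are treated as auxiliary unknowns whose own $\nabb_3$-derivatives are forced by the definitions, so the system closes without needing a second-order ODE for $h$ or $\underline h$. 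Finally the $\ell=0$ (spherically symmetric) part is a short separate computation: with $\a=\aa=\b=\bb=\chih=\chibh=\ze=\eta=\xib=\bF=\bbF=0$ and $\check{(\cdot)}=0$, the only nontrivial equations are \eqref{nabb-4-vsilin}, \eqref{nabb-4-Omegablin}, \eqref{nabb-3-kalin-ze-eta}, \eqref{nabb-4-kalin-ze-eta}, \eqref{nabb-3-kablin-ze-eta}, \eqref{nabb-4-kablin-ze-eta}, \eqref{nabb-4-omblin-ze-eta}, \eqref{nabb-3-rlin}, \eqref{nabb-4-rlin}, \eqref{Gauss-lin}, and each reduces, via $\nabb_4 c=\nabb_4 d=0$ and $\nabb_4 l=-\nabb_3 c+4\omb c$ together with the Reissner-Nordstr\"om relations $\nabb_3\ka=\ldots$, $\nabb_4\omb=\rho+\rhoF^2$, etc., to an algebraic identity among the background functions $\ka,\kab,\omb,\rho,\rhoF,M,Q,r$ — e.g.\ \eqref{Gauss-lin} becomes $-\tfrac14\ka(-\kab c)-\tfrac14\kab(\ka c)-0+0=0$, which is immediate.

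The support statements are also quick: since $h,\underline h,a$ are supported in $\ell\ge1$ and every listed quantity is built from them by applying $\DDs_1(\cdot,0)$, $\DDd_1\DDs_1$, $\DDs_2\DDs_1$, or multiplication by background scalars and $\nabb_3,\nabb_4$ — none of which lowers the harmonic degree (and $\DDs_1(f,0)$ with $f$ of mean zero is supported in $\ell\ge1$, $\DDs_2\DDs_1(f,0)$ with $f\in\ell\ge1$ is supported in $\ell\ge2$ hence in $\ell\ge1$) — the perturbation is supported in $\ell\ge1$; Lemma~\ref{commutation-projection-l1} and the commutators \eqref{commutator-nabb-4-divv}--\eqref{commutator-nabb-4-DDs} guarantee these projections are preserved under the flow, so consistency with the $\ell$-projected equations is automatic. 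The main obstacle is purely organizational: the sheer number of equations to check (several dozen) and the care needed when commuting $\nabb_3$ through the angular operators and through the $r$-dependent coefficients $\ka,\kab,\omb,\rho,\rhoF$, where sign errors and misplaced factors of $\ka$ or $\omb$ are easy to make. A clean way to contain this is to verify the null frame transformation laws once and for all (citing \cite{stabilitySchwarzschild}), so that the linearized equations become automatically satisfied as the linearization of the exact Einstein--Maxwell system written in two different Bondi frames; the transport conditions \eqref{transport-a}--\eqref{transport-fb} are then exactly what is needed for the transformed frame to again be a geodesic (Bondi) frame, and the remaining work is only to read off the component formulas. I would present the proof in that structured way rather than equation-by-equation, relegating the bookkeeping verification to Appendix~\ref{appendix}.
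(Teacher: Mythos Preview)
Your proposal is correct and follows essentially the same approach as the paper: direct verification that the given expressions satisfy the linearized system \eqref{nabb-4-g}--\eqref{nabb-4-check-sigma-ze-eta}, using the transport conditions \eqref{transport-a}--\eqref{transport-fb} and the commutation formulae \eqref{commutator-nabb-4-DDs-1}--\eqref{commutator-nabb-4-DDs}. The paper's proof carries out a representative sample of these checks explicitly and then states that the remaining equations are verified in the same manner; your added suggestion to first \emph{derive} the component formulas from the null frame transformation laws of \cite{stabilitySchwarzschild} (rather than merely verify them) is a sensible organizational improvement, but it is not a different argument---the paper already motivates the formulas this way in the discussion preceding the lemma.
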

\begin{proof} 
We check that the quantities defined above verify the equations in Section \ref{all-equations}. 

We verify some of the equations for metric coefficients. Equations \eqref{nabb-4-g} and \eqref{nabb-3-g} are verified using \eqref{transport-h} and the fact that $\ka=\frac 2 r $:
\beaa
\nabb_4 \hat{\slashed{g}}&=&\nabb_4 (2r\DDs_2\DDs_1(h, 0))=2\DDs_2\DDs_1(h, 0)+2r(-\ka \DDs_2\DDs_1(h, 0)) = 2\chih \\
\nabb_3\hat{\slashed{g}}&=& \nabb_3 (2r\DDs_2\DDs_1(h, 0))=r\kab \DDs_2\DDs_1(h, 0)+2r(\DDs_2(\nabb_3\DDs_1(h, 0))-\frac 1 2 \kab \DDs_2\DDs_1(h, 0)) =2\chibh+2\DDs_2\underline{b}
\eeaa
 Equation \eqref{derivatives-b-ze-eta} is verified using \eqref{transport-h} and the fact that $2\omb-r\rho=4\omb$:
\beaa
&&\nabb_4 \underline{b}-\frac 1 2 \ka \underline{b}=\nabb_3\DDs_1(h, 0)+r\nabb_4\nabb_3\DDs_1(h, 0)+\nabb_4\DDs_1(\underline{h}, 0)-\frac 1 2 \ka (r\nabb_3\DDs_1(h, 0)+\DDs_1(\underline{h}, 0) ) \\
&=&r((\frac 1 4 \ka\kab-\rho) \DDs_1(h, 0)-\frac 1 2 \ka \nabb_3\DDs_1(h, 0))+2\omb \DDs_1(h, 0)-\frac 1 2 \ka \DDs_1(\underline{h}, 0)-2\DDs_1(a, 0)-\frac 1 2 \ka (\DDs_1(\underline{h}, 0) ) \\
&=&- \nabb_3 \DDs_1(h, 0)- \ka \DDs_1(\underline{h}, 0)+\left(\frac 1 2 \kab+4\omb\right)\DDs_1(h, 0)-2\DDs_1(a, 0)=-2(\eta+\ze)
\eeaa
Observe that $\check{\kab}-\ka\check{\Omegab}=\DDd_1\DDs_1(\underline{h}, 0)+\left( \frac 1 4 \kab^2+\omb\kab \right) h+ \frac 1 4 \ka \kab  \underline{h}-\frac 1 2 \ka\nabb_3\underline{h}-\frac 12 \kab \nabb_3 h  $. Equations \eqref{nabb-4-check-tr} and \eqref{nabb-3-check-tr} are verified:
\beaa
\nabb_4(r^{-2} \check{\tr_{\gamma} \slashed{g}})&=&-2\DDd_1\DDs_1(h,0)-2r\nabb_4\DDd_1\DDs_1(h,0)-\nabb_4\kab h -\nabb_4\ka \underline{h}-\ka \nabb_4\underline{h}\\
&=&-2\DDd_1\DDs_1(h,0)-2r(-\ka\DDd_1\DDs_1(h,0))-(-\frac 1 2 \ka\kab+2\rho) h -(-\frac 1 2 \ka^2) \underline{h}-\ka (2\omb h -2a)=2\check{\ka}\\
\nabb_3(r^{-2}\check{\tr_{\gamma} \slashed{g}})&=&-r\kab\DDd_1\DDs_1(h,0)-2r(\DDd_1\nabb_3\DDs_1(h,0)-\frac 1 2 \kab  \DDd_1\DDs_1(h, 0))-\nabb_3\kab h-\kab \nabb_3h -\nabb_3\ka \underline{h}-\ka \nabb_3\underline{h}\\
&=& 2(\check{\kab}-\ka\check{\Omegab})-2\divv\underline{b}
\eeaa
We verify some of the null structure equations. Equation \eqref{nabb-3-chibh-ze-eta} is verified using \eqref{commutator-nabb-4-DDs}:
\beaa
&&\nabb_3 \chibh+\left( \kab +2\omb\right)\chibh+2\DDs_2\xib +\aa\\
&=& \nabb_3(-\DDs_2 \DDs_1(\underline{h}, 0))+\left( \kab +2\omb\right)(-\DDs_2 \DDs_1(\underline{h}, 0))+2\DDs_2(\frac 1 2 \nabb_3(\DDs_1(\underline{h}, 0))+(\frac{1}{4}\kab +\omb) \DDs_1(\underline{h}, 0))=0
\eeaa
Equation \eqref{nabb-3-kalin-ze-eta} is verified, using that $2\omb+r\rho=0$: 
\beaa
\nabb_3 \kalin+\frac 1 2 \kab\kalin&=&\nabb_3(c \ka)+\frac 1 2 c\kab \ka=\nabb_3(c)\ka= 2\omb\ \kalin+\frac{4}{r}\omblin+2\rlin
\eeaa
Equation \eqref{nabb-4-kablin-ze-eta} is verified, using \eqref{transport-a}:
\beaa
\nabb_4\kablin+\frac 12  \ov{\ka}\,\kablin &=&\nabb_4(-c \kab)+\frac 12  \ov{\ka}\,(-c\kab) =-c\left(\nabb_4(\kab)+\frac 12  \ka\,\kab\right)=-2c \rho\\
\left(\frac{2M}{r^2}-\frac{2Q^2}{r^3} \right)\kalin+2\rlin&=& \left(\frac{2M}{r^2}-\frac{2Q^2}{r^3} \right)c\ka=-2c\rho
\eeaa
Equation \eqref{nabb-4-omblin-ze-eta} is verified, using \eqref{transport-a}:
\beaa
\nabb_4 \omblin&=&\nabb_4 (-c\omb+\frac 1 2 \nabb_3(c))=-c\nabb_4\ov{\omb}=-c(\rho + \rhoF^2)\\
\left(\frac{M}{r^2}-\frac{3Q^2}{2r^3}\right)\kalin &=&\left(\frac{M}{r^2}-\frac{3Q^2}{2r^3}\right)c\ka=\left(\frac{2M}{r^3}-\frac{3Q^2}{r^4}\right)c=-c\left(-\frac{2M}{r^3}+\frac{2Q^2}{r^4}+\frac{Q^2}{r^4}\right)
\eeaa

We verify some of the Maxwell equations. Equation \eqref{nabb-3-bF-ze-eta} is verified:
\beaa
&&\nabb_3 \bF+\left(\frac 1 2 \kab-2\omb\right) \bF+\DDs_1(\check{\rhoF}, \check{\sigmaF}) -2\rhoF \eta\\
&=&-\kab \rhoF \DDs_1(h, 0)+\rhoF \nabb_3\DDs_1(h, 0)+\left(\frac 1 2 \kab-2\omb\right) (\rhoF \DDs_1(h, 0))+\DDs_1(\frac 1 2 \rhoF \left(\kab h + \ka  \underline{h}\right), 0) \\
&&-2\rhoF (\frac 1 2 \nabb_3 \DDs_1(h, 0)-\omb \DDs_1(h, 0)+\frac 1 4 \ka \DDs_1(\underline{h}, 0))=0
\eeaa

We verify some of the Bianchi identities. 
 Equation \eqref{nabb-3-bb-ze-eta} is verified:
 \beaa
 &&\nabb_3 \bb+ \left( 2\kab  +2\omb\right) \bb+\divv\aa+3\rho \xib  -\rhoF\left(\nabb_3\bbF+2\omb \bbF+2\rhoF \ \xib\right)\\
 &=&\nabb_3 (-\frac 3 2 \rho \DDs_1(\underline{h}, 0))+ \left( 2\kab  +2\omb\right)(-\frac 3 2 \rho \DDs_1(\underline{h}, 0))+3\rho (\frac 1 2 \nabb_3(\DDs_1(\underline{h}, 0))+(\frac{1}{4}\kab +\omb) \DDs_1(\underline{h}, 0)) \\
 && -\rhoF\left(\nabb_3(-\rhoF \DDs_1(\underline{h}, 0))+2\omb (-\rhoF \DDs_1(\underline{h}, 0))+2\rhoF \ (\frac 1 2 \nabb_3(\DDs_1(\underline{h}, 0))+(\frac{1}{4}\kab +\omb) \DDs_1(\underline{h}, 0))\right)\\
  &=& -\frac 3 2 (-\frac 3 2 \kab \rho-\kab \rhoF^2) \DDs_1(\underline{h}, 0)+ \left( 2\kab \right)(-\frac 3 2 \rho \DDs_1(\underline{h}, 0))+3\rho ((\frac{1}{4}\kab ) \DDs_1(\underline{h}, 0)) \\
 && -\rhoF\left(-(-\kab \rhoF) \DDs_1(\underline{h}, 0)+2\rhoF \ ((\frac{1}{4}\kab) \DDs_1(\underline{h}, 0))\right)=0
 \eeaa
The remaining equations are verified in a similar manner. 
\end{proof}

\subsubsection{Pure gauge solutions with $g_1=w_1=w_2=0$}
The following is the explicit form of the pure gauge solution arising from a null transformation for which $e_3$ and $e_4$ are unchanged, while the frame on the spheres change.  They only generate non-trivial values for the metric components, while all other quantities of the solution vanish. 

Define
\beaa 
j=-r\DDs_1(q_1, q_2)
\eeaa
 for two functions $q_1$, $q_2$ with vanishing mean.

\begin{lemma}\label{pure-gauge-2} Let $q_1$, $q_2$ be smooth functions, with $q_1$ supported in $\ell\geq 1$ and $q_2$ supported in $\ell\geq 2$ spherical harmonics.\footnote{The pure gauge solution corresponding to $q_1=0$ and $q_2=\dot{Y}^{\ell=1}_m$ generates the trivial solution.} Suppose they verify the following transport equations:
\bea
\nabb_4q_1&=& 0 \label{nabb-4-q1}\\
\nabb_4q_2&=& 0 \label{nabb-4-q2}
\eea
Then the following  is a linear gravitational and electromagnetic perturbation around Reissner-Nordstr\"om spacetime. The linearized metric components are given by
\beaa
\hat{\slashed{g}}=2 r^2 \DDs_2 \DDs_1 (q_1, q_2), \qquad \check{\tr_\gamma \slashed{g}}=-2 r^4 \DDd_1\DDs_1( q_1, 0),  \qquad \underline{b}=r^2\DDs_1(\nabb_3 q_1, \nabb_3 q_2)
\eeaa
while all other components of the solution vanish. 
\end{lemma}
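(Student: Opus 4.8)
The plan is to verify directly that the quantities displayed in the statement satisfy the full system of linearized equations \eqref{nabb-4-g}--\eqref{nabb-4-check-sigma-ze-eta}, exactly as in the proof of Lemma \ref{pure-gauge-solution}. The decisive first observation is that almost all of this is vacuous: by construction every Ricci coefficient, every curvature component, every electromagnetic component, and every scalar metric quantity ($\check{\vsi},\check{\Omegab},\vsilin,\Omegablin$, and also $\check K$) vanishes, so all the linearized null structure equations, Maxwell equations and Bianchi identities reduce to $0=0$ — none of them contains $\hat{\slashed{g}}$, $\check{\tr_{\gamma}\slashed{g}}$ or $\underline{b}$ on either side. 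The same is true of the algebraic relations \eqref{nabb-check-vsi}, \eqref{xib-Omegab-ze-eta} and of \eqref{nabb-4-vsilin}--\eqref{nabb-4-check-Omegab}. Hence the entire content of the lemma is the verification of the five equations whose left-hand side is a derivative of one of the three non-vanishing metric tensors, namely \eqref{nabb-4-g}, \eqref{nabb-3-g}, \eqref{derivatives-b-ze-eta}, \eqref{nabb-4-check-tr} and \eqref{nabb-3-check-tr}, together with the trivial remark that $\Klin,\trglin,\slin$ are identically zero.

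For these five equations the only tools needed are the Reissner--Nordstr\"om background values $\ka=2/r$, $\kab=2\Omegab/r$, $\omb=M/r^2-Q^2/r^3$ and $\nabb_4 r=\tfrac r2\ka$, $\nabb_3 r=\tfrac r2\kab$ from Proposition \ref{nabb-4-int-f}; the commutation identities \eqref{commutator-nabb-4-divv}--\eqref{commutator-nabb-4-DDs}, which say each of $\DDd_1,\DDs_1,\DDd_2,\DDs_2$ commutes with $\nabb_4$ up to $-\tfrac12\ka$ and with $\nabb_3$ up to $-\tfrac12\kab$; and the scalar commutator $\nabb_3\nabb_4 f-\nabb_4\nabb_3 f=2\omb\,\nabb_4 f$, which with the hypotheses $\nabb_4 q_1=\nabb_4 q_2=0$ forces $\nabb_4\nabb_3 q_i=0$. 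Then, since two commutations give $\nabb_4(\DDs_2\DDs_1(q_1,q_2))=-\ka\,\DDs_2\DDs_1(q_1,q_2)$ while $\nabb_4(r^2)=r^2\ka$, the Leibniz rule yields $\nabb_4\hat{\slashed{g}}=0=2\chih$, which is \eqref{nabb-4-g}; and $\nabb_3(\DDs_2\DDs_1(q_1,q_2))=\DDs_2\DDs_1(\nabb_3 q_1,\nabb_3 q_2)-\kab\,\DDs_2\DDs_1(q_1,q_2)$ together with $\nabb_3(r^2)=r^2\kab$ gives $\nabb_3\hat{\slashed{g}}=2r^2\DDs_2\DDs_1(\nabb_3 q_1,\nabb_3 q_2)=2\DDs_2\underline{b}=2\chibh+2\DDs_2\underline{b}$, which is \eqref{nabb-3-g}.

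The remaining three are identical in spirit. For \eqref{derivatives-b-ze-eta} one uses $\nabb_4\nabb_3 q_i=0$ and $[\nabb_4,\DDs_1]=-\tfrac12\ka\DDs_1$ to obtain $\nabb_4\underline{b}=r\,\DDs_1(\nabb_3 q_1,\nabb_3 q_2)=\tfrac12\ka\,\underline{b}$, matching the right-hand side $-2(\eta+\ze)=0$. For \eqref{nabb-4-check-tr}, writing $r^{-2}\check{\tr_{\gamma}\slashed{g}}=-2r^2\DDd_1\DDs_1(q_1,0)$ and using $\nabb_4 q_1=0$, the same cancellation gives $\nabb_4(r^{-2}\check{\tr_{\gamma}\slashed{g}})=0=2\check{\ka}$; for \eqref{nabb-3-check-tr} the $\nabb_3$-derivative produces $-2r^2\DDd_1\DDs_1(\nabb_3 q_1,0)$, which is $-2\divv\underline{b}=2(\check{\kab}-\ka\check{\Omegab})-2\divv\underline{b}$ once one identifies $\divv\underline{b}$ — the divergence part of $\DDd_1\underline{b}=r^2\DDd_1\DDs_1(\nabb_3 q_1,\nabb_3 q_2)$ — as $-r^2\lapp\nabb_3 q_1$; note that only the ``electric'' potential $q_1$ feeds both $\check{\tr_{\gamma}\slashed{g}}$ and $\divv\underline{b}$, consistently. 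Finally one records that the choice $q_1=0$, $q_2=\dot{Y}^{\ell=1}_m$ makes all of $\hat{\slashed{g}},\check{\tr_{\gamma}\slashed{g}},\underline{b}$ vanish, since $\dot{Y}^{\ell=1}_m$ lies in the kernel of $r^2\DDs_2\DDs_1$ by Lemma \ref{lemma-kernel-DDs2} and $\check{\tr_{\gamma}\slashed{g}}$ only sees $q_1$ — this is the footnote claim.

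I do not expect a genuine obstacle: once the commutation lemmas and background values are in hand the verification is mechanical. The only place that requires a little care is the bookkeeping in the two $\check{\tr_{\gamma}\slashed{g}}$ equations — keeping the $r$-weights straight and handling the convention by which $\DDd_1\DDs_1(q_1,0)$ is read as the scalar $-\lapp q_1$ (its curl part being automatically zero), so that a scalar equation is compared with a scalar. Conceptually the lemma is of course forced: the configuration is precisely the linearization of the one-parameter family of metrics produced by the coordinate change $\tilde{\th}^A=\th^A+\ep\, j^A$ of Lemma \ref{lemma-2} with $g_1=w_1=w_2=0$ and $j=-r\DDs_1(q_1,q_2)$, so it solves the linearized Einstein--Maxwell system by diffeomorphism covariance; the computation above is simply the explicit incarnation of this fact.
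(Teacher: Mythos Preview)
Your proposal is correct and follows essentially the same approach as the paper's proof: direct verification that the five metric equations \eqref{nabb-4-g}, \eqref{nabb-3-g}, \eqref{derivatives-b-ze-eta}, \eqref{nabb-4-check-tr}, \eqref{nabb-3-check-tr} hold, with the remaining system reducing to $0=0$. You supply somewhat more detail than the paper --- explicitly invoking the commutation formulae \eqref{commutator-nabb-4-divv}--\eqref{commutator-nabb-4-DDs} and the scalar commutator to justify $\nabb_4\nabb_3 q_i=0$, and verifying the footnote claim via Lemma \ref{lemma-kernel-DDs2} --- but the underlying argument is identical.
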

\begin{proof} Equations \eqref{nabb-4-g} and \eqref{nabb-3-g} are verified using \eqref{nabb-4-q1} and \eqref{nabb-4-q1}:
\beaa
\nabb_4 \hat{\slashed{g}}&=&2\nabb_4(r^2 \DDs_2 \DDs_1 (q_1, q_2))=2r^2 \DDs_2 \DDs_1 (\nabb_4q_1, \nabb_4q_2)=0 \\
\nabb_3\hat{\slashed{g}}&=&2\nabb_3(r^2 \DDs_2 \DDs_1 (q_1, q_2))=2r^2 \DDs_2 \DDs_1 (\nabb_3q_1, \nabb_3q_2)=2\DDs_2\underline{b}
\eeaa
Equation \eqref{derivatives-b-ze-eta} is verified:
\beaa
\nabb_4 \underline{b}-\frac 1 2 \ka \underline{b}&=&\nabb_4 (r^2\DDs_1(\nabb_3 q_1, \nabb_3 q_2))-\frac 1 2 \ka r^2\DDs_1(\nabb_3 q_1, \nabb_3 q_2)\\
&=&r\DDs_1(\nabb_3 q_1, \nabb_3 q_2)+r\DDs_1(\nabb_4\nabb_3 q_1, \nabb_4\nabb_3 q_2)-\frac 1 2 \ka r^2\DDs_1(\nabb_3 q_1, \nabb_3 q_2)=0
\eeaa
 Equations \eqref{nabb-4-check-tr} and \eqref{nabb-3-check-tr} are verified:
\beaa
\nabb_4 (r^{-2}\check{\tr_\gamma \slashed{g}})&=& \nabb_4\left(-2 r^2 \DDd_1\DDs_1( q_1, 0)\right)=\left(-2 r^2 \DDd_1\DDs_1( \nabb_4q_1, 0)\right)=0 \\
\nabb_3 (r^{-2}\check{\tr_\gamma \slashed{g}})&=& \nabb_3\left(-2 r^2 \DDd_1\DDs_1( q_1, 0)\right)=\left(-2 r^2 \DDd_1\DDs_1( \nabb_3q_1, 0)\right)=-2\divv\underline{b}
\eeaa
All the other equations are trivially satisfied.
\end{proof}

We identify the solutions given by Lemma \ref{pure-gauge-solution} and Lemma \ref{pure-gauge-2} to pure gauge solutions, and we denote their linear sum as $\mathscr{G}_I$ with $I=(h, \underline{h}, a, c, d, l,  q_1, q_2)$.

\subsubsection{Gauge-invariant quantities}\label{section-gauge-inv}
We can identify quantities which vanish for any pure gauge solution $\mathscr{G}_I$. Such quantities are referred to as gauge-invariant.

The symmetric traceless two tensors $\a$, $\aa$ are clearly gauge-invariant from Lemma \ref{pure-gauge-solution}.  These curvature components are important because in the case of the Einstein vacuum equation they verify a decoupled wave equation, the celebrated Teukolsky equation, first discovered in the Schwarzschild case in \cite{Bardeen} and generalized to the Kerr case in \cite{Teukolsky}. In the Einstein-Maxwell case, the tensors $\a$ and $\aa$ verify Teukolsky equations coupled with new quantities, denoted $\ff$ and $\underline{\ff}$. 

The symmetric traceless $2$-tensors
\bea\label{definition-ff}
\ff:=\DDs_2 \bF+\rhoF \chih \qquad \text{and} \qquad  \underline{\ff}:=\DDs_2 \bbF-\rhoF \chibh
\eea
 are gauge-invariant quantities.  Indeed, using Lemma \ref{pure-gauge-solution}, we see that for every pure gauge solution 
 \beaa
 \ff=\DDs_2\bF+\rhoF \chih=\DDs_2(\rhoF \DDs_1(h, 0))+\rhoF (-\DDs_2 \DDs_1(h, 0))=0
 \eeaa
 Similarly for $\ffb$. 
 
 Notice that the quantities $\ff$ and $\ffb$ appear in the Bianchi identities for $\a$ and $\aa$. The equations \eqref{nabb-3-a-ze-eta} and \eqref{nabb-4-aa-ze-eta} can be rewritten as 
 \bea
  \nabb_3\a+\left(\frac 1 2 \kab-4\omb \right)\a&=&-2 \DDs_2\, \b-3\rho\chih -2\rhoF \ \ff, \label{nabb-3-a-ff}\\
 \nabb_4\aa+\left(\frac 1 2 \ka-4\om\right) \aa&=&2 \DDs_2\, \bb-3\rho\chibh +2 \rhoF \ \ffb\label{nabb-4-aa-ff} 
 \eea
Using the above, it is clear that $\ff$ and $\ffb$ shall appear on the right hand side of the wave equation verified by $\a$ and $\aa$. 
The quantities $\ff$ and $\ffb$ themselves verify Teukolsky-type equations, which are coupled with $\a$ and $\aa$ respectively. The equations for $\a$ and $\ff$ and for $\aa$ and $\ffb$ constitute the generalized spin $\pm 2$ Teukolsky system obtained in Section \ref{spin2}.

Observe that the extreme electromagnetic component $\bF$ and $\bbF$ are not gauge-invariant if $\rhoF$ is not zero in the background.\footnote{In the case of the Maxwell equations in Schwarzschild, the components $\bF$ and $\bbF$ are gauge-invariant, and satisfy a spin $\pm1$ Teukolsky equation, see \cite{Federico}. } On the other hand, the one-forms 
\bea\label{definition-tilde-b}
\tilde{\b}:= 2\rhoF \b-3\rho \bF \qquad \text{and} \qquad \underline{\tilde{\b}}:= 2\rhoF \bb-3\rho \bbF
\eea
 are gauge invariant.  Indeed, using Lemma \ref{pure-gauge-solution}, we see that for every gauge solution 
\beaa
\tilde{\b}=2\rhoF \b-3\rho \bF=2\rhoF \left(\frac 3 2 \rho \DDs_1(h, 0)\right)-3\rho \left(\rhoF \DDs_1(h, 0)\right)=0
\eeaa
Similarly for $\tilde{\bb}$. 

\subsection{A $6$-dimensional linearised Kerr-Newman family $\mathscr{K}$}
\label{6dimfam}
The other class of special solutions
corresponds to the family 
that arises  by linearizing one-parameter representations of 
Kerr-Newman  around Reissner-Nordstr{\"o}m. We will present
such a family here, giving first in Section~\ref{sec:linss}  a $3$-dimensional family corresponding to
Kerr-Newman with fixed angular momentum $a$ (supported in $\ell=0$ spherical harmonics) and then in Section~\ref{sec:nontrivkerr}, a $3$-dimensional family
corresponding to Kerr-Newman with fixed mass $M$ and charge $Q$ (supported in $\ell=1$ spherical harmonics).

\subsubsection{Linearized Kerr-Newman solutions with no angular momentum} \label{sec:linss}
Reissner-Nordstr{\"o}m spacetimes are obviously solutions to the nonlinear Einstein-Maxwell equation. Therefore, linearization around the parameters $M$ and $Q$ give rise to solution of the linearized system of gravitational and electromagnetic perturbations, which can be interpreted as the solution converging to another Reissner-Nordstr{\"o}m solution with a small change in the mass or in the charge. 

In addition to those, there is a family of solutions with non-trivial magnetic charge, which can arise as solution of the Einstein-Maxwell equations. Indeed, the following expression gives stationary solutions to the Maxwell system on Reissner-Nordstr{\"o}m:
\beaa
\F=\frac{\mathfrak{b}}{r^2} \epsilon_{AB}+\frac{\mathfrak{Q}}{r^2} dt \wedge dr
\eeaa
where $\mathfrak{b}$ and $\mathfrak{Q}$ are two real parameters, respectively the magnetic and the electric charge. 

We summarize these solutions in the following Proposition. 

\begin{proposition} \label{lem:linss}
For every $\mathfrak{M}, \mathfrak{Q}, \mathfrak{b} \in \mathbb{R}$, the following is a (spherically symmetric) solution of the system of gravitational and electromagnetic perturbations in $\mathcal{M}$. The non-vanishing quantities are
\beaa
\rlin&=& \left(-\frac{2\frak{M}}{r^3}+\frac{4Q\frak{Q}}{r^4}\right) , \qquad \rhoFlin= \frac{\frak{Q}}{r^2} \qquad \sigmaFlin= \frac{\frak{b}}{r^2} \\
\kablin&=& \left(\frac{4\frak{M}}{r^2}-\frac{4Q\frak{Q}}{r^3}\right),\qquad \omblin= \left(\frac{\frak{M}}{r^2}-\frac{2Q\frak{Q}}{r^3}\right), \qquad \Omegablin= \left(\frac{2\frak{M}}{r}-\frac{2Q\frak{Q}}{r^2} \right) 
\eeaa
\end{proposition}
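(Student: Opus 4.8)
The statement is purely computational: one must verify that the explicit ansatz (the values of $\rlin$, $\rhoFlin$, $\sigmaFlin$, $\kablin$, $\omblin$, $\Omegablin$, with all other linearized quantities vanishing) satisfies every equation in Section \ref{all-equations}, namely \eqref{nabb-4-g}--\eqref{nabb-4-check-sigma-ze-eta}. Since the solution is spherically symmetric, all $\ell\geq1$ quantities vanish (all one-forms, all symmetric traceless $2$-tensors, and all ``checked'' scalars), so the vast majority of the equations reduce to $0=0$ trivially and the only content lies in the handful of transport equations for the average scalar quantities $\kalin$, $\kablin$, $\omblin$, $\rlin$, $\rhoFlin$, $\sigmaFlin$, $\vsilin$, $\Omegablin$. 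My plan is to go through exactly these equations in the order they appear.

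First I would record the relevant background values from \eqref{ricci-coefficients-RN-0}--\eqref{curvature-components-RN}, namely $\ka=\frac2r$, $\kab=-\frac2r\Up$, $\omb=\frac{M}{r^2}-\frac{Q^2}{r^3}$, $\rhoF=\frac{Q}{r^2}$, $\rho=-\frac{2M}{r^3}+\frac{2Q^2}{r^4}$, together with the action of $\nabb_4=\pr_r$ and $\nabb_3=2\pr_u+\Omegab\pr_r$ on scalars. Since the ansatz is $u$-independent, $\nabb_3$ acts on our scalars simply as $\Omegab\,\pr_r=-\Up\,\pr_r$. Then I would check in turn: \eqref{nabb-4-kalin-ze-eta} and \eqref{nabb-3-kalin-ze-eta} (trivial since $\kalin=0$, but the right-hand side of \eqref{nabb-3-kalin-ze-eta} must vanish: $\frac4r\omblin+2\rlin = \frac4r(\frac{\frak M}{r^2}-\frac{2Q\frak Q}{r^3})+2(-\frac{2\frak M}{r^3}+\frac{4Q\frak Q}{r^4})=0$, good); \eqref{nabb-4-kablin-ze-eta} ($\nabb_4\kablin+\frac12\ka\kablin = \pr_r(\frac{4\frak M}{r^2}-\frac{4Q\frak Q}{r^3})+\frac1r(\frac{4\frak M}{r^2}-\frac{4Q\frak Q}{r^3})$ must equal $2\omb\kalin+2\rlin=2\rlin$; a short computation confirms this); \eqref{nabb-3-kablin-ze-eta} similarly; \eqref{nabb-4-omblin-ze-eta} ($\nabb_4\omblin=\rlin+\frac{2Q}{r^2}\rhoFlin+(\frac{M}{r^2}-\frac{3Q^2}{2r^3})\kalin=\rlin+\frac{2Q\frak Q}{r^4}$, and indeed $\pr_r(\frac{\frak M}{r^2}-\frac{2Q\frak Q}{r^3})=-\frac{2\frak M}{r^3}+\frac{6Q\frak Q}{r^4}=\rlin+\frac{2Q\frak Q}{r^4}$, good); \eqref{nabb-3-rlin}, \eqref{nabb-4-rlin} (e.g. $\nabb_4\rlin+\frac32\ka\rlin = -2\ka\rhoF\rhoFlin$, i.e. $\pr_r(-\frac{2\frak M}{r^3}+\frac{4Q\frak Q}{r^4})+\frac3r(-\frac{2\frak M}{r^3}+\frac{4Q\frak Q}{r^4}) = -\frac{4Q\frak Q}{r^5}$, which holds); \eqref{nabb-4-sigmaFlin}, \eqref{nabb-3-sigmaFlin} ($\pr_r(\frac{\frak b}{r^2})+\frac2r\cdot\frac{\frak b}{r^2}=0$, good); \eqref{nabb-3-rhoFlin}, \eqref{nabb-4-rhoFlin} ($\pr_r(\frac{\frak Q}{r^2})+\frac2r\cdot\frac{\frak Q}{r^2}=0$); \eqref{Gauss-lin} ($-\frac14\ka\kablin-\frac14\kab\kalin-\rlin+2\rhoF\rhoFlin = -\frac{1}{2r}(\frac{4\frak M}{r^2}-\frac{4Q\frak Q}{r^3})-\rlin+\frac{2Q\frak Q}{r^4}$, which vanishes); \eqref{nabb-4-Omegablin} ($\nabb_4\Omegablin=-2\omblin+(\frac Mr-\frac{Q^2}{r^2})\kalin=-2\omblin$, and $\pr_r(\frac{2\frak M}{r}-\frac{2Q\frak Q}{r^2})=-\frac{2\frak M}{r^2}+\frac{4Q\frak Q}{r^3}=-2\omblin$, good); and \eqref{nabb-4-vsilin}, \eqref{nabb-4-check-vsi}, etc., which are trivial. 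I would also note that \eqref{nabb-4-check-omb-ze-eta} and all the other ``check'' equations are automatically satisfied since every checked quantity is zero and each such equation has only checked quantities on both sides (in particular $\check\rho,\check\rhoF$ vanish so the right-hand side of \eqref{nabb-4-check-omb-ze-eta} vanishes).

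The only mildly subtle point — and the one I would flag as the ``main obstacle'', though it is still elementary — is bookkeeping: making sure that no equation with a non-checked scalar on the right-hand side has been overlooked, and that the coefficient functions (which are written in terms of the background $\ka,\kab,\omb,\rho,\rhoF$, often in the compact form $\frac Mr-\frac{Q^2}{r^2}=-\frac r2(\kab+\tfrac2r\Up)\cdot(\ldots)$, etc.) are correctly substituted. A clean way to organize this is to observe that the magnetic part ($\frak b$) decouples entirely — it only enters $\sigmaFlin$, which satisfies its own pair of homogeneous transport equations and appears nowhere else in the linearized system (it would appear only multiplied by another vanishing quantity) — so one can treat the $\frak b$ piece and the $(\frak M,\frak Q)$ piece separately. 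For the $(\frak M,\frak Q)$ piece, a conceptual shortcut is available: differentiating the one-parameter family of Reissner--Nordström metrics $\g_{M+s\frak M,\, Q+s\frak Q}$ in $s$ at $s=0$ manifestly produces a solution of the linearized equations, and reading off the derivatives of the background values $\kab(M,Q)$, $\omb(M,Q)$, $\Omegab=-\Up(M,Q)$, $\rho(M,Q)$, $\rhoF(Q)$ with respect to $M$ and $Q$ reproduces exactly the stated formulas (e.g. $\pr_M\kab + \frak Q\text{-terms}$); this removes essentially all the verification burden for that half. Then one only needs to check the $\frak b$ (purely magnetic) piece by hand, and the above transport-equation checks for $\sigmaFlin$ do that. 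I would present the proof by first invoking this differentiation argument for the $(\frak M,\frak Q)$ part and then verifying directly the $\frak b$ part.
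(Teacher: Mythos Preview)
Your proposal is correct and follows essentially the same approach as the paper: direct verification that the ansatz satisfies each of the averaged-scalar equations \eqref{nabb-4-kalin-ze-eta}--\eqref{nabb-4-omblin-ze-eta}, \eqref{Gauss-lin}, the Maxwell equations \eqref{nabb-3-sigmaFlin}--\eqref{nabb-4-rhoFlin}, and the Bianchi identities \eqref{nabb-3-rlin}--\eqref{nabb-4-rlin}, with all other equations trivially satisfied by spherical symmetry. Your additional remark---that the $(\frak M,\frak Q)$ part can be obtained by differentiating the Reissner--Nordstr\"om family in its parameters---is exactly the motivation the paper gives just before the proposition, though the paper's formal proof (like yours) simply carries out the explicit checks.
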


\begin{proof} We verify some of the equations in Section \ref{all-equations} which are not trivially satisfied.
Equation \eqref{nabb-4-kablin-ze-eta} is verified:
\beaa
\nabb_4\kablin+\frac 12  \ka\,\kablin &=& \nabb_4\left(\frac{4\frak{M}}{r^2}-\frac{4Q\frak{Q}}{r^3}\right)+\frac 1 2 \ka\left(\frac{4\frak{M}}{r^2}-\frac{4Q\frak{Q}}{r^3}\right)=\left(-\frac{4\frak{M}}{r^3}+\frac{8Q\frak{Q}}{r^4}\right)=2\rlin
\eeaa
Equation \eqref{nabb-4-omblin-ze-eta} reads:
\beaa
\nabb_4 \omblin&=& \nabb_4 \left(\frac{\frak{m}}{r^2}-\frac{2Q\frak{Q}}{r^3}\right)=  \left(-\frac{2\frak{m}}{r^2}+\frac{6Q\frak{Q}}{r^3}\right)= \rlin + \frac{2Q}{r^2}\rhoFlin
\eeaa
Equation \eqref{Gauss-lin} is verified:
\beaa
-\frac 1 4 \ka \kablin- \frac 1 4 \kab \kalin -\rlin+2\rhoF \rhoFlin &=& -\frac 1 4 \frac 2 r  \left(\frac{4\frak{m}}{r^2}-\frac{4Q\frak{Q}}{r^3}\right) -\left(-\frac{2\frak{m}}{r^3}+\frac{4Q\frak{Q}}{r^4}\right) +2\frac{Q}{r^2} \frac{\frak{Q}}{r^2}=0
\eeaa
The Maxwell equations \eqref{nabb-3-sigmaFlin}-\eqref{nabb-4-sigmaFlin} and \eqref{nabb-3-rhoFlin}-\eqref{nabb-4-rhoFlin} are verified: 
  \beaa
 \nabb_3\rhoFlin+\kab\rhoFlin&=&\nabb_3(\frac{\frak{Q}}{r^2})+\kab\frac{\frak{Q}}{r^2}=-2\frac{\frak{Q}}{r^3}\nabb_3r+\kab\frac{\frak{Q}}{r^2}=-2\frac{\frak{Q}}{r^2}\frac 1 2 \kab +\kab\frac{\frak{Q}}{r^2}=0
  \eeaa
The Bianchi identities \eqref{nabb-3-rlin}-\eqref{nabb-4-rlin} are verified:
\beaa
\nabb_3\rlin+\frac 3 2 \kab \rlin+\frac{2Q}{r^2}\kab\rhoFlin&=&\nabb_3( \left(-\frac{2\frak{m}}{r^3}+\frac{4Q\frak{Q}}{r^4}\right))+\frac 3 2 \kab \left(-\frac{2\frak{m}}{r^3}+\frac{4Q\frak{Q}}{r^4}\right)+\frac{2Q}{r^2}\kab\frac{\frak{Q}}{r^2}\\
&=& \left(\frac{6\frak{m}}{r^4}-\frac{16Q\frak{Q}}{r^5}\right)\nabb_3(r)+ \kab  \left(-\frac{3\frak{m}}{r^3}+\frac{8Q\frak{Q}}{r^4}\right)\\
&=& \left(\frac{6\frak{m}}{r^3}-\frac{16Q\frak{Q}}{r^4}\right)\frac 1 2 \kab + \kab \left(-\frac{3\frak{m}}{r^3}+\frac{8Q\frak{Q}}{r^4}\right)=0
\eeaa 
which proves the Proposition. 
\end{proof}

\subsubsection{Linearized Kerr-Newman solutions leaving the mass and the charge unchanged} \label{sec:nontrivkerr}

In addition to the variation of mass and change in the Reissner-Nordstr\"om solution, a variation in the angular momentum is also possible. We therefore have to take into account the perturbation into a Kerr-Newman solution for small $a$. 

We start from the Kerr-Newman metric expressed in outgoing Eddington-Finkelstein coordinates ignoring all terms quadratic or higher in $a$:
\beaa
g_{K-N}&=&- 2 dr du- \left(1-\frac{2M}{r}+\frac{Q^2}{r^2}\right) du^2  + r^2  \left(d\th^2+ \sin^2\th d\phi^2\right)\\
&&-\left(\frac{4Mr-2Q^2}{r^2} \right)a \sin^2\th du d\phi+2a\sin^2\th dr d\phi
\eeaa
Notice that these coordinates do not realize the Bondi gauge, because of the presence of the last term $(dr d\phi)$, which is not allowed in the Bondi form \eqref{double-null-metric}. 
Performing the change of coordinates $\phi'= \phi+\frac{a}{r}$, 
we obtain, ignoring all terms quadratic in $a$:
\beaa
g_{K-N}&=&- 2 dr du- \left(1-\frac{2M}{r}+\frac{Q^2}{r^2}\right) du^2  + r^2  \left(d\th^2+ \sin^2\th (d\phi')^2-2\sin^2\th\frac{a}{r^2}dr d\phi'\right)\\
&&-\left(\frac{4Mr-2Q^2}{r^2} \right)a \sin^2\th du d\phi'+2a\sin^2\th dr d\phi'\\
&=&- 2 dr du- \left(1-\frac{2M}{r}+\frac{Q^2}{r^2}\right) du^2  + r^2  \left(d\th^2+ \sin^2\th (d\phi')^2\right)-\left(\frac{4Mr-2Q^2}{r^2} \right)a \sin^2\th du d\phi'
\eeaa
which is the Kerr-Newman metric in Bondi gauge, and clearly of the form \eqref{metric-g-epsilon} with $a=\ep \mathfrak{a}$. Linearizing in $a$, we can read off the linearized metric coefficients, and notice that the only one non-vanishing is $\underline{b}$. We obtain the following solutions:

\begin{proposition} \label{lem:explicitkerr} 
Let $\dot{Y}^{\ell=1}_m$ for $m=-1,0,1$ denote the $\ell=1$ spherical harmonics on the unit sphere. 
For any $\mathfrak{a} \in \mathbb{R}$, the following is a smooth
solution of the system of gravitational and electromagnetic perturbations  on $\mathcal{M}$. 

The only non-vanishing metric coefficient is 
\beaa
 \underline{b}=\left(-\frac{8M}{r}+\frac{4Q^2}{r^2} \right) \mathfrak{a} \ \epsilon^{AB} \partial_B \dot{Y}^{\ell=1}_m 
\eeaa
The only non-vanishing Ricci coefficient is
\beaa
\ze=\eta&=& \left(\frac{-6M}{r^2}+\frac{4Q^2}{r^3} \right) \mathfrak{a} \ \epsilon^{AB} \partial_B \dot{Y}^{\ell=1}_m 
\eeaa
The non-vanishing electromagnetic components are 
\beaa
\bF&=& \frac{Q}{r}\ka\mathfrak{a} \ \epsilon^{AB} \partial_B \dot{Y}^{\ell=1}_m , \qquad \bbF= \frac{Q}{r}\kab\mathfrak{a} \ \epsilon^{AB} \partial_B \dot{Y}^{\ell=1}_m, \qquad \check{\sigmaF}= -\frac{4Q}{r^3}\mathfrak{a} \   \dot{Y}^{\ell=1}_m
\eeaa
The non-vanishing curvature components are
\beaa
\b &=&   \left(\frac{-3M}{r^2}+\frac{3Q^2}{r^3} \right) \ka\mathfrak{a} \ \epsilon^{AB} \partial_B \dot{Y}^{\ell=1}_m  \ \ \ , \ \ \  \bb= \left(\frac{-3M}{r^2}+\frac{3Q^2}{r^3} \right) \kab\mathfrak{a} \ \epsilon^{AB} \partial_B \dot{Y}^{\ell=1}_m  \\
 \check{\sigma} &=& \left(\frac{-12M}{r^4}+\frac{8Q^2}{r^5} \right) \mathfrak{a} \  \dot{Y}^{\ell=1}_m
\eeaa
\end{proposition}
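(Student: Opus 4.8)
The plan is to verify directly that the quantities listed in the statement satisfy the linearized system \eqref{nabb-4-g}--\eqref{nabb-4-check-sigma-ze-eta}, in the same style as the proofs of Proposition \ref{lem:linss} and Lemma \ref{pure-gauge-solution}. As a consistency check one observes that, by construction, this family is the $a$-linearization of the Kerr-Newman metric brought to the Bondi form displayed just above the statement; since Kerr-Newman solves the nonlinear Einstein-Maxwell equations \eqref{Einstein-1}--\eqref{Maxwell} and the change of coordinates $\phi'=\phi+a/r$ puts it into the form \eqref{metric-g-epsilon}, its linearization in $a$ is automatically a linear gravitational and electromagnetic perturbation around Reissner-Nordstr\"om, so the only real content is that the explicit component values have been read off correctly.

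First I would use the structure of the ansatz to collapse most of the equations. Every non-vanishing quantity is $u$-independent, supported in the $\ell=1$ spherical harmonics, and of ``curl type'': the one-forms $\underline b$, $\ze=\eta$, $\bF$, $\bbF$, $\b$, $\bb$ are all $r$-dependent multiples of $\epsilon^{AB}\partial_B\dot Y^{\ell=1}_m$, i.e.\ of (the raised components of) $\DDs_1(0,\dot Y^{\ell=1}_m)$, while the scalars $\check{\sigmaF}$, $\check{\sigma}$ are multiples of $\dot Y^{\ell=1}_m$; all remaining members of $\mathscr{S}$ vanish. Consequently: (i) $\divv$, $\DDd_1$, $\DDd_2$ and $\DDs_2$ all annihilate these one-forms (they are divergence-free on the round sphere, and $\ell=1$ one-forms lie in the kernel of $\DDs_2$ by Lemma \ref{lemma-kernel-DDs2}), so every equation for a $2$-tensor, namely \eqref{nabb-4-g}, \eqref{nabb-3-g}, \eqref{nabb-3-chibh-ze-eta}--\eqref{nabb-4-chibh-ze-eta} and \eqref{nabb-3-a-ze-eta}--\eqref{nabb-4-aa-ze-eta}, is trivially satisfied, while the Codazzi equations \eqref{Codazzi-chib-ze-eta}--\eqref{Codazzi-chi-ze-eta} reduce to the algebraic relations $\b=\tfrac12\ka\ze+\rhoF\bF$ and $\bb=\tfrac12\kab\ze+\rhoF\bbF$, which hold by inspection of the radial profiles; (ii) the constraints \eqref{nabb-check-vsi}, \eqref{xib-Omegab-ze-eta} give $\DDs_1(\check{\vsi},0)=\ze-\eta=0$ and $\DDs_1(\check{\Omegab},0)=\xib+\Omegab(\eta-\ze)=\xib$, consistent with $\check{\vsi}=\check{\Omegab}=\xib=0$; (iii) $\curll$ of a curl-type one-form and $\lapp$ of $\dot Y^{\ell=1}_m$ both produce the multiplier $\lapp_0\dot Y^{\ell=1}_m=-2\dot Y^{\ell=1}_m$, which is exactly what \eqref{sigma-curl-ze-eta} ($\check{\sigma}=\curll\ze$), the Gauss relation \eqref{Gauss-check}, and the curl-Maxwell equations \eqref{nabb-3-check-sigmaF-ze-eta}--\eqref{nabb-4-check-sigmaF-ze-eta} reduce to; (iv) since all of $\check{\ka},\check{\kab},\check{\omb},\check{\rho},\check{\rhoF},\check{\vsi},\check{\Omegab}$ and all $\ell=0$ scalars $\kalin,\dots$ vanish, and $\DDs_1(\check{\rho},\check{\sigma})=\DDs_1(0,\check{\sigma})$ stays curl-type, the remaining scalar and metric transport equations \eqref{nabb-3-kalin-ze-eta}--\eqref{nabb-4-check-omb-ze-eta}, \eqref{nabb-3-rlin}--\eqref{nabb-4-rlin}, \eqref{nabb-3-check-rho-ze-eta}--\eqref{nabb-4-check-rho-ze-eta} are either $0=0$ or follow from $\divv\bF=\divv\bbF=0$, and the Bianchi equations for $\b,\bb$, \eqref{nabb-3-b-ze-eta}--\eqref{nabb-4-b-ze-eta}, remain inside this class.

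After these reductions one is left with a short list of radial identities: the transport equation \eqref{derivatives-b-ze-eta} for $\underline b$, equations \eqref{nabb-4-eta-ze-eta}, \eqref{nabb-4-ze-ze-eta} and \eqref{nabb-3-ze-ze-eta} for $\ze=\eta$, \eqref{nabb-4-xib-ze-eta} for $\xib$, the Maxwell equations \eqref{nabb-3-bF-ze-eta}--\eqref{nabb-4-bbF-ze-eta} and \eqref{nabb-3-check-rhoF-ze-eta}--\eqref{nabb-4-check-sigmaF-ze-eta}, and the Bianchi identities \eqref{nabb-3-b-ze-eta}--\eqref{nabb-4-b-ze-eta}, \eqref{nabb-3-check-sigma-ze-eta}--\eqref{nabb-4-check-sigma-ze-eta}. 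Each of these I would check by substituting the explicit profiles ($-8M/r+4Q^2/r^2$ for the coefficient of $\underline b$, $-6M/r^2+4Q^2/r^3$ for $\ze=\eta$, $-4Q/r^3$ for $\check{\sigmaF}$, $-12M/r^4+8Q^2/r^5$ for $\check{\sigma}$, and so on), together with $\nabb_4=\partial_r$ on scalars, the formulas \eqref{Liederivative-cov} for $\nabb_4,\nabb_3$ on one-forms, the commutation relations \eqref{commutator-rnabla}, \eqref{commutator-nabb-4-divv}--\eqref{commutator-nabb-4-DDs}, and the Reissner-Nordstr\"om background values $\ka=2/r$, $\kab=2\Omegab/r$, $\omb=M/r^2-Q^2/r^3$, $\rhoF=Q/r^2$, $\rho=-2M/r^3+2Q^2/r^4$; each collapses to a polynomial identity in $1/r$.

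There is no genuine conceptual difficulty here, this being a verification, so the hard part will be purely bookkeeping: keeping the Hodge and spherical-harmonic normalizations consistent. In particular one must pin down that the $\epsilon^{AB}$ in $\epsilon^{AB}\partial_B\dot Y^{\ell=1}_m$ is the $S_{u,r}$ volume form (so the stated profiles already carry the right powers of $r$), confirm that the $S_{u,r}$ operators $\DDd_1,\DDd_2,\DDs_2$ annihilate it, and track carefully the factors of $r$ and the sign of $\epsilon$ when passing between unit-sphere and $S_{u,r}$ angular operators, as in Lemma \ref{commutation-projection-l1} and the commutation formulas of Section \ref{sec:commutation}. Once these conventions are fixed in agreement with Section \ref{sec:genmfld}, all the equations of Section \ref{all-equations} are seen to hold and the Proposition follows.
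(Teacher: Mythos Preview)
Your proposal is correct and follows essentially the same approach as the paper's proof: both exploit the curl-type, $\ell=1$ structure of the ansatz to observe that $\divv$, $\DDs_2$ annihilate all the one-forms (rendering most equations trivial), then verify the surviving transport equations for $\underline{b}$, $\ze$, $\check{\sigmaF}$, $\check{\sigma}$ by direct substitution of the radial profiles. Your write-up is somewhat more systematic in cataloguing which equations collapse and why, and adds the useful preliminary remark that the family arises by linearizing Kerr-Newman in Bondi form (which the paper places before the statement rather than in the proof), but the substance is identical.
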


Note that the above family may be parametrised by the $\ell=1$-modes of the electromagnetic component $\check{\sigmaF}$.

\begin{proof} We verify the equations in Section \ref{all-equations}.
Since $\DDs_2\underline{b},\divv \underline{b}=0$, the only non-trivial equation for the metric coefficients is \eqref{derivatives-b-ze-eta}, which is verified:
\beaa
\nabb_4 \underline{b}-\frac 1 2 \ka \underline{b}&=&\nabb_4 \Big(\left(\frac{-8M}{r}+\frac{4Q^2}{r^2} \right) \mathfrak{a}\epsilon^{AB} \partial_B \dot{Y}^{\ell=1}_m \Big)-\frac 1 2 \ka \left(\frac{-8M}{r}+\frac{4Q^2}{r^2} \right) \mathfrak{a}\epsilon^{AB} \partial_B \dot{Y}^{\ell=1}_m\\
&=&\left(\frac{8M}{r^2}-\frac{8Q^2}{r^3} \right) \mathfrak{a}\epsilon^{AB} \partial_B \dot{Y}^{\ell=1}_m- \ka \left(\frac{-8M}{r}+\frac{4Q^2}{r^2} \right) \mathfrak{a}\epsilon^{AB} \partial_B \dot{Y}^{\ell=1}_m\\
&=&\left(\frac{24M}{r^2}-\frac{16Q^2}{r^3} \right) \mathfrak{a}\epsilon^{AB} \partial_B \dot{Y}^{\ell=1}_m=-2(\eta+\ze)
 \eeaa
Since $\DDs_2\ze,\divv \ze,\DDs_2\eta,\divv\eta=0$, the only non-trivial null structure equations are \eqref{nabb-3-ze-ze-eta}-\eqref{nabb-4-ze-ze-eta}, the Codazzi equations  \eqref{Codazzi-chib-ze-eta}-\eqref{Codazzi-chi-ze-eta} and \eqref{sigma-curl-ze-eta}.
To verify \eqref{nabb-3-ze-ze-eta} and \eqref{nabb-4-ze-ze-eta}, recalling that $\ze=\eta$, we compute
\beaa
\nabb_3 \ze+ \kab \ze&=&\nabb_3 ( \left(-\frac{6M}{r^2}+\frac{4Q^2}{r^3} \right) \mathfrak{a}\epsilon^{AB} \partial_B \dot{Y}^{\ell=1}_m)+ \kab  \left(\frac{-6M}{r^2}+\frac{4Q^2}{r^3} \right) \mathfrak{a}\epsilon^{AB} \partial_B \dot{Y}^{\ell=1}_m\\
&=& \left(\frac{6M}{r^2}-\frac{6Q^2}{r^3} \right)\kab \mathfrak{a}\epsilon^{AB} \partial_B \dot{Y}^{\ell=1}_m+\frac 1 2 \kab  \left(-\frac{6M}{r^2}+\frac{4Q^2}{r^3} \right) \mathfrak{a}\epsilon^{AB} \partial_B \dot{Y}^{\ell=1}_m\\
&=& \left(\frac{3M}{r^2}-\frac{ 4Q^2}{r^3} \right)\kab \mathfrak{a}\epsilon^{AB} \partial_B \dot{Y}^{\ell=1}_m=- \bb  -\rhoF\bbF
\eeaa
To verify \eqref{sigma-curl-ze-eta}, we compute:
\beaa
\curl \ze&=& \epsilon^{BA} \partial_B \ze=  \left(-\frac{6M}{r^2}+\frac{4Q^2}{r^3} \right) \mathfrak{a} \epsilon^{BA} \partial_B\slashed{\epsilon}_{AC} \partial^C \dot{Y}^{\ell=1}_m= \left(\frac{6M}{r^4}-\frac{4Q^2}{r^5} \right) \mathfrak{a}  \lapp_{S^2}\dot{Y}^{\ell=1}_m \\
&=& \left(-\frac{12M}{r^4}+\frac{8Q^2}{r^5} \right) \mathfrak{a} \dot{Y}^{\ell=1}_m=\check{\sigma}
\eeaa

Since $\divv \bF,\divv\bbF=0$, the Maxwell equations \eqref{nabb-3-sigmaFlin}-\eqref{nabb-4-check-rhoF-ze-eta} are trivially satisfied. To verify \eqref{nabb-3-check-sigmaF-ze-eta}-\eqref{nabb-4-check-sigmaF-ze-eta}, we compute:
\beaa
\nabb_3 \check{\sigmaF}+\kab \ \check{\sigmaF}&=& \nabb_3 \left(-\frac{4Q}{r^3}\right)\mathfrak{a} \cdot \dot{Y}^{\ell=1}_m- \kab  \frac{4Q}{r^3}\mathfrak{a} \cdot \dot{Y}^{\ell=1}_m=  \frac{2Q}{r^3}\mathfrak{a} \kab\cdot \dot{Y}^{\ell=1}_m\\
 \slashed{\curl}\bbF&=&  -\frac{Q}{r^3}\kab\mathfrak{a}\lapp_{S^2} \dot{Y}^{\ell=1}_m= \frac{2Q}{r^3}\kab\mathfrak{a} \dot{Y}^{\ell=1}_m
\eeaa

Since $\DDs_2\b,\DDs_2\bb=0$ and $\divv\b,\divv\bb=0$, the Bianchi identities \eqref{nabb-3-a-ze-eta}-\eqref{nabb-4-aa-ze-eta} and \eqref{nabb-3-check-rho-ze-eta}-\eqref{nabb-4-check-rho-ze-eta} are trivially satisfied. To verify \eqref{nabb-3-check-sigma-ze-eta} and \eqref{nabb-4-check-sigma-ze-eta}, we compute 
\beaa
\nabb_3 \check{\sigma}+\frac 3 2 \kab \check{\sigma}&=& \nabb_3 \left(-\frac{12M}{r^4}+\frac{8Q^2}{r^5} \right) \mathfrak{a} \dot{Y}^{\ell=1}_m+ \frac 3 2 \kab \left(-\frac{12M}{r^4}+\frac{8Q^2}{r^5} \right) \mathfrak{a} \dot{Y}^{\ell=1}_m\\
&=& \left(\frac{6M}{r^4}-\frac{8Q^2}{r^5} \right)\kab \mathfrak{a} \dot{Y}^{\ell=1}_m\\
-\slashed{\curl}\bb - \rhoF \ \slashed{\curl}\bbF&=& \left(\frac{6M}{r^4}-\frac{6Q^2}{r^5} \right) \ka\mathfrak{a} \dot{Y}^{\ell=1}_m - \frac{2Q^2}{r^5}\kab\mathfrak{a} \dot{Y}^{\ell=1}_m
\eeaa
which proves the proposition.
 \end{proof}

 Observe that the above Proposition describes for each $\mathfrak{a}\in \mathbb{R}$ and for each $m=-1, 0, 1$ a solution to the linearized Einstein-Maxwell equations corresponding to a linearized Kerr-Newman solution. The reason why we have a $3$-dimensional family of solution varying the angular momentum is identical to the case of Schwarzschild. Indeed, linearizing the metric at a non-trivial ($a\neq 0$) member of the Kerr-Newman family creates non-trivial pure gauge solutions corresponding to a rotation of the axis. On the other hand, while linearizing the spherically symmetric Reissner-Nordstr\"om metric these rotations correspond to trivial pure gauge solutions. The $3$-dimensional family above corresponds then to the identification of the axis of symmetry (a unit vector in $\mathbb{R}^3$) and the angular momentum of the solution (the length of the vector).

We combine the 3-dimensional space of solutions of Proposition \ref{lem:linss} and the 3-dimensional space
of solutions of Proposition \ref{lem:explicitkerr} in the following definition: 
\begin{definition} \label{def:kerrl}
Let $\mathfrak{M}$, $\mathfrak{Q}$, $\mathfrak{b}$, $\mathfrak{a}_{-1}, \mathfrak{a}_0,\mathfrak{a}_1$ be six real parameters. We call the sum of the solution of Proposition~\ref{lem:linss} with parameters $\mathfrak{m}$, $\mathfrak{Q}$ and $\mathfrak{b}$ and the solution of 
Proposition~\ref{lem:explicitkerr} satisfying $\check{\sigmaF} = \sum \mathfrak{a}_{m} Y^{\ell=1}_m$ the
{\bf linearized Kerr-Newman solution} with parameters $\left(\mathfrak{M}, \mathfrak{Q}, \mathfrak{b},\mathfrak{a}_{-1}, \mathfrak{a}_0, \mathfrak{a}_1\right)$ and denote it by $\mathscr{K}_{(\mathfrak{M}, \mathfrak{Q}, \mathfrak{b},\mathfrak{a})}$ or simply $\mathscr{K}$. 
\end{definition}

\begin{remark}\label{tilde-b-0} Observe that the gauge-invariant quantities $\tilde{\b}$ and $\underline{\tilde{\b}}$ defined by \eqref{definition-tilde-b} have the additional remarkable property that they vanish for every linearised Kerr-Newman solution $\mathscr{K}_{(\mathfrak{m}, \mathfrak{Q}, \mathfrak{b}, \mathfrak{a})}$. 
Indeed, for every $\mathfrak{a}$,  
\beaa
\tilde{\b}&=& 2\rhoF \b-3\rho \bF\\
&=&2\frac{Q}{r^2} \left(\frac{-3M}{r^2}+\frac{3Q^2}{r^3} \right) \ka\mathfrak{a}\epsilon^{AB} \partial_B Y^{\ell=1}_m-3\left(-\frac{2M}{r^3}+\frac{2Q^2}{r^4} \right) \frac{Q}{r}\ka\mathfrak{a}\epsilon^{AB} \partial_B Y^{\ell=1}_m=0
\eeaa
\end{remark}

\section{Statement of the main theorem}\label{statement-section}

After the above preparation concerning a linear gravitational and electromagnetic perturbation around Reissner-Nordstr\"om spacetime and the special pure gauge and linearized Kerr-Newman solutions, we state here our main theorem.

\begin{theorem}\label{linear-stability}(Linear stability of Reissner-Nordstr\"om: case $|Q|\ll M$) Let $\mathscr{S}$ be a linear gravitational and electromagnetic perturbation around Reissner-Nordstr\"om spacetime $(\MM, g_{M, Q})$, with $|Q| \ll M$, arising from smooth asymptotically flat initial data as in Definition \ref{def-asympt-flat}. 

Then, on the exterior of $(\MM, g_{M, Q})$, $\mathscr{S}$ decays inverse polynomially to a linearized Kerr-Newman solution $\mathscr{K}^{id}$, which is explicitly determined from the initial data, after adding a pure gauge solution $\mathscr{G}$ which can itself be estimated by the size of the data. In particular, $$\mathscr{S}-\mathscr{G}-\mathscr{K}^{id}$$ verifies the following pointwise decay in $u$ and $r$ for $r \geq r_2$ for some $r_2> r_{\mathcal{H}}$:
\beaa
|\a|+|\b|&\leq& C \min\{r^{-3-\de} u^{-1/2+\de}, r^{-2-\de} u^{-1+\de}\} \\
|\check{\rho}|+| \check{\sigma}|+|\check{K}|&\leq& C\min\{ r^{-3} u^{-1/2+\de}, r^{-2} u^{-1+\de}\}\\
|\bF| &\leq& C\min\{ r^{-2-\de} u^{-1/2+\de}, r^{-1-\de} u^{-1+\de}\}\\
|\chih|+|\ze|+|\check{\rhoF}|+|\check{\sigmaF}|+|\check{\kab}|&\leq& C\min\{ r^{-2} u^{-1/2+\de}, r^{-1} u^{-1+\de}\}\\
|\hat{\slashed{g}}|+| \check{\tr_\gamma \slashed{g}}| &\leq& C \min\{ r^{-1} u^{-1/2+\de},  u^{-1+\de}\}\\
|\bb| &\leq &C  r^{-2} u^{-1+\de} \\
|\chibh|+|\bbF|+|\eta|+|\xib|+|\check{\omb}| &\leq& C  r^{-1} u^{-1+\de}\\
 |\check{\vsi}|+|\check{\Omegab}|+|\underline{b}|&\leq& C  u^{-1+\de} 
\eeaa
where the above components are defined in terms of the null frame $\mathscr{N}$, and $C$ depends on some norms of initial data. 
For $r \leq r_2$, the above norms for the components defined in terms of the regular null frame $\mathscr{N}_*$ on the horizon can all be bounded by $C v^{-1+\de}$. 

Moreover, the projection to the $\ell=0$ spherical harmonics of $\mathscr{S}-\mathscr{G}-\mathscr{K}^{id}$ vanishes (i.e. $\kalin=\kablin=\omblin=\rlin=\rhoFlin=\sigmaFlin=\Omegablin=0$). 
\end{theorem}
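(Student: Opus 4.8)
The plan is to deduce the decay of every gauge-dependent component from the already-established decay of the gauge-invariant quantities $\a$, $\ff$, $\tilde{\b}$ and their conjugates $\aa$, $\ffb$, $\tilde{\bb}$ (summarized in Section~\ref{Teukolsky-eqs}, following \cite{Giorgi4} and \cite{Giorgi5}), after adding a suitable residual pure gauge solution and subtracting a linearized Kerr-Newman solution. The argument splits according to spherical harmonic content: for the $\ell\geq 2$ part one propagates the decay of $\a$, $\aa$, $\ff$, $\ffb$ through the linearized null structure, Maxwell and Bianchi equations of Section~\ref{all-equations}; for the $\ell=0,1$ part one additionally feeds in the decay of $\tilde{\b}$, $\tilde{\bb}$, uses the vanishing of $\check{K}_{\ell=1}$ recorded in \eqref{check-K-l=1}, and absorbs the non-radiative degrees of freedom (change of mass, charge, magnetic charge, angular momentum) into the Kerr-Newman family of Section~\ref{6dimfam}.

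First I would exploit the residual gauge freedom through two normalizations. The \emph{initial data normalization} adds the explicit pure gauge solution $\mathscr{G}_I$ of Lemmas~\ref{pure-gauge-solution} and \ref{pure-gauge-2}, whose transport parameters are solved from the perturbation's data on the initial cones; integrating the transport equations \eqref{nabb-4-g}-\eqref{nabb-4-check-sigma-ze-eta} forward from data yields boundedness --- the \emph{weak decay} --- valid, because the Bondi frame degenerates at $\mathcal{H}^+$, only in a region $\{r\geq r_1\}\cap\{u_0\leq u\leq u_1\}$. Second, assuming the solution is bounded up to a far-away sphere $S_{U,R}$, I would impose the \emph{far-away normalization} on the ingoing cone $\{u=U\}$: the traces of the two null second fundamental forms are set to vanish there, and the charge aspect function $\check{\nu}$ and mass-charge aspect function $\check{\mu}$ (defined in \eqref{definition-of-nu} and \eqref{definition-of-mu}) are set to vanish by solving an elliptic system on that cone. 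One then integrates backward, using the integrable relations $\partial_r(\check{\nu}_{\ell=1})=0$, $\partial_r(\check{\mu})=O(r^{-1-\de}u^{-1+\de})$, $\partial_r(\Xi)=O(r^{-1-\de}u^{-1+\de})$ (with $\Xi$ as in \eqref{definition-Xi}) together with the remaining transport equations, to propagate the optimal $r$- and $u$-decay --- the \emph{strong decay} --- down to a timelike hypersurface $\{r=r_2\}$, $r_2>r_{\mathcal{H}}$.

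To conclude I would patch the regimes as sketched in Section~\ref{section-choice-gauge}: forward boundedness covers $S_{U,R}$, so the far-away normalization is well posed; backward integration from $\{u=U\}$ gives the stated $u$-decay in $\{r\geq r_2\}\cap I^-(S_{U,R})$; in the near-horizon region $\{r\leq r_2\}$ one converts the decay along $\{r=r_2\}$ into decay in the advanced time $v$ and integrates forward along $e_3$ to obtain $v^{-1+\de}$ decay for the components in the regular frame $\mathscr{N}_*$. Since the estimates are uniform in $(U,R)$, letting $U,R\to\infty$ exhausts the full exterior. Finally, the parameters $\mathfrak{M}$, $\mathfrak{Q}$, $\mathfrak{b}$, $\mathfrak{a}$ of $\mathscr{K}^{id}$ are read off from the $\ell=0$ and $\ell=1$ projections of the data, exactly as in \cite{DHR}; with that choice the $\ell=0$ projection of $\mathscr{S}-\mathscr{G}-\mathscr{K}^{id}$ vanishes identically and the displayed rates follow.

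The main obstacle is obtaining the strong decay with rates \emph{independent of the position} $(U,R)$ of the far-away cone. This forces one to isolate exactly the right auxiliary quantities --- $\check{\nu}$, $\check{\mu}$, $\Xi$ and their electromagnetic analogues --- whose transport equations have right-hand sides that are integrable in $r$ rather than merely bounded, and to impose the vanishing of the aspect functions on the last slice via carefully chosen elliptic systems. The coupling between gravitational and electromagnetic radiation, absent in \cite{stabilitySchwarzschild}, is what makes this delicate: one must track the charge aspect function alongside the mass-charge one, and the $\ell=1$ electromagnetic mode carried by $\tilde{\b}$ must be disentangled from the linearized angular momentum. The degeneracy of Morawetz estimates in the trapping region enters only indirectly, through the already-cited control of $\qf$, $\qf^\F$ and $\pf$.
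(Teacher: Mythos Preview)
Your proposal is correct and follows essentially the same approach as the paper: two gauge normalizations (initial-data then far-away $S_{U,R}$), boundedness from the first, optimal decay from the second via the integrable transport equations for $\check{\nu}$, $\check{\mu}$, $\Xi$, near-horizon $v$-decay, and uniformity in $(U,R)$ to cover the exterior. One small correction: the far-away normalization is imposed on the \emph{ingoing} null hypersurface $\mathscr{I}_{U,R}$ (the past incoming cone of $S_{U,R}$), not on the outgoing cone $\{u=U\}$; the backward integration is then along $e_4$ (in $r$) at fixed $u$ from $\mathscr{I}_{U,R}$.
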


The decay of the norms in Theorem \ref{linear-stability} are expressed in terms of two different sets of coordinates. The norms in the spacetime region with unbounded $r$ are expressed in terms of the outgoing Eddington-Finkelstein coordinates in Reissner-Nordstr\"om $(u, r)$ as defined in Section \ref{Outgoing-coords}, which are smooth towards null infinity. The norms in the spacetime region close to the horizon (for $r\leq r_2$) are expressed in terms of the incoming Eddington-Finkelstein coordinates $(v, r)$ as defined in Section \ref{ingoing-coords}, which are smooth towards the horizon. 
 
 \begin{remark}Observe that the most delicate part of the proof of decay for the above theorem is in the region $r \geq r_2$, where optimal decay for all components is obtained. The decay in the region close to the horizon can be obtained from the decay of the quantities up to $r = r_2$ by integration toward the horizon, and since $r$ is bounded, all the quantities verify the same decay in $v$, as explained in Section \ref{section-close-horizon}. 
 \end{remark}

In the following sections, we introduce the main necessary ingredients for the proof of Theorem \ref{linear-stability}, and we defer the proof to Section \ref{decay-chapter}.

\section{The Teukolsky equations and the decay for the gauge-invariant quantities}\label{Teukolsky-eqs}

In this section, we introduce the \textit{generalized Teukolsky equations} of spin $\pm2$ and the \textit{generalized Teukolsky equation} of spin $\pm1$ which govern the gravitational and electromagnetic perturbations of Reissner-Nordstr\"om spacetime. We also introduce the \textit{generalized Regge-Wheeler system} and the \textit{generalized Fackerell-Ipser equation} and the connection between the two sets of equations, through the Chandrasekhar transformation. 
These equations have a fundamental relation to the problem of linear stability for gravitational and electromagnetic perturbations of Reissner-Nordstr\"om spacetime. Indeed, the gauge invariant quantities identified in the previous chapter verify the generalized Teukolsky equations. The Chandrasekhar transformation shall allow us to derive estimates for these quantities.

In Section \ref{spin-2-chapter} we define the generalized spin $\pm2$ Teukolsky equations and the generalized Regge-Wheeler system considering them as a second order hyperbolic PDEs for independent unknowns $\a$, $\aa$, $\ff$, $\ffb$ and $\qf$, $\qf^\F$. 
In Section \ref{spin-1-chapter} we define the generalized spin $\pm1$ Teukolsky equations and the generalized Fackerell-Ipser equation considering them as a second order hyperbolic PDEs for independent unknowns $\tilde{\b}$, $\tilde{\bb}$ and $\pf$.

In Section \ref{Chandra-transform} we recall a fundamental transformation mapping solutions of the generalized Teukolsky equations to solutions to the Regge-Wheeler/Fackerell-Ipser equations. This transformation plays an important role in deriving estimates for this equation. The Chandrasekhar transformation here defined generalizes the physical space definition given in \cite{DHR} to the case of Reissner-Nordstr\"om, and identifies one operator which is applied to all quantities involved.

In Section \ref{relation-to-linear-stability}, we explain the relation of the above PDEs with the full system of linear gravitational and electromagnetic perturbations of Reissner-Nordstr\"om. As guessed from the notation, the curvature and electromagnetic components $\a$, $\aa$, $\ff$, $\ffb$, $\tilde{\b}$, $\tilde{\bb}$ verify the Teukolsky equations respectively. 
Finally in Section \ref{decay-for-gauge-invariant-quantities} we state the main theorems in \cite{Giorgi4} and \cite{Giorgi5} which give control and quantitative decay statements of the gauge invariant quantities $\a$, $\aa$, $\ff$, $\ffb$, $\tilde{\b}$, $\tilde{\bb}$. We state here the control in $L^2$ and $L^\infty$ norms which are needed in the following for the proof of linear stability in Section \ref{decay-chapter}.  
The main results on this chapter have appeared in \cite{Giorgi4} and \cite{Giorgi5}.

\subsection{The spin $\pm2$ Teukolsky equations and the Regge-Wheeler system}\label{spin2}\label{spin-2-chapter}

In this section, we recall a generalization of the celebrated spin $\pm 2$ Teukolsky equations and the Regge-Wheeler equation, as introduced in \cite{Giorgi4}, and explain the connection between them. 

\subsubsection{Generalized spin $\pm2$ Teukolsky system}

The generalized spin $\pm2$ Teukolsky system concern symmetric traceless $2$-tensors in Reissner-Nordstr{\"o}m spacetime, which we denote $(\a, \ff)$ and $(\aa, \underline{\ff})$ respectively.

\begin{definition} Let $\a$ and $\ff$ be two  symmetric traceless 2-tensors on $S$ defined
on a subset $\mathcal{D}\subset \mathcal{M}$. We say that $(\alpha, \ff)$ satisfy the {\bf generalized Teukolsky system of spin ${\bf +2}$} if they satisfy the following coupled system of PDEs:
\beaa
\Box_\g \a&=& -4\omb \nabb_4\a+2\left( \ka+2\om\right) \nabb_3\a+\left( \frac 1 2 \ka\kab-4\rho +4\rhoF^2+2\om\,\kab-10\omb\ka-8\om\omb -4\nabb_4\omb  \right)\,\a\\
&& +4\rhoF\left(\nabb_4\ff+ \left( \ka  +2\om \right) \ff\right), \\
\Box_\g (r\ff)&=& -2\omb \nabb_4(r\ff)+\left( \ka+2\om\right) \nabb_3(r\ff)+\left(-\frac 1 2 \ka\kab-3\rho+\om\kab-3\omb\ka-2\nabb_4\omb\right) r\ff \\
&&-r\rhoF \left(\nabb_3\a+\left(\kab-4\omb\right)\a\right)
\eeaa
where $\Box_\g=\g^{\mu\nu} \D_{\mu}\D_{\nu}$ denotes the wave operator in Reissner-Nordstr{\"o}m spacetime, and the quantities appearing on the right hand side as coefficients are the underlying Reissner-Nordstr\"om values.

Let $\aa$ and $\underline{\ff}$ be two  symmetric traceless 2-tensor on $S$ defined
on a subset $\mathcal{D}\subset \mathcal{M}$. We say that $(\aa, \underline{\ff})$ satisfy the {\bf generalized Teukolsky system of spin ${\bf -2}$} if they satisfy the following coupled system of PDEs:
\beaa
\Box_\g \aa&=& -4\om \nabb_3\aa+2\left( \kab+2\omb\right) \nabb_4\aa+\left( \frac 1 2 \ka\kab-4\rho +4\rhoF^2+2\omb\,\ka-10\om\kab-8\om\omb -4\nabb_3\om  \right)\,\aa\\
&& -4\rhoF\left(\nabb_3\underline{\ff}+ \left( \kab  +2\omb \right) \underline{\ff}\right), \\
\Box_\g (r\underline{\ff})&=& -2\om \nabb_3(r\underline{\ff})+\left( \kab+2\omb\right) \nabb_4(r\underline{\ff})+\left(-\frac 1 2 \ka\kab-3\rho+\omb\ka-3\om\kab-2\nabb_3\om\right) r\underline{\ff} \\
&&+r\rhoF \left(\nabb_4\aa+\left(\ka-4\om\right)\aa\right)
\eeaa
\end{definition}
We note that the generalized Teukolsky system of spin $-2$ is obtained from that of spin $+2$ by interchanging $\nabb_3$ with $\nabb_4$ and underline quantities with non-underlined ones.

\subsubsection{Generalized Regge-Wheeler system}

The other generalized system to be defined here is the generalized Regge-Wheeler system, to be satisfied again by symmetric traceless tensors $(\qf, \qf^\F)$.

\begin{definition} \label{def:rwe}
Let $\qf$ and $\qf^\F$ be two symmetric traceless $2$-tensors on $S$.
We say that $(\qf, \qf^\F)$ satisfies the {\bf generalized Regge--Wheeler system for spin $+2$} if they satisfy the following coupled system of PDEs:
\bea\label{finalsystem}
\begin{split}
\Box_\g\qf+\left( \ka\kab-10\rhoF^2\right)\ \qf&= \rhoF\Bigg(4r\lapp_2\qf^{\F}-4r\kab \nabb_4(\qf^{\F})-4r\ka\nabb_3( \qf^\F) + r\left(6\ka\kab+16\rho+8\rhoF^2\right)\qf^{\F}\Bigg)\\
&+\rhoF(l.o.t.)_1, \\
\Box_\g\qf^{\F}+\left( \ka\kab+3\rho\right)\ \qf^{\F}&=\rhoF \Bigg(-\frac {1}{ r} \qf\Bigg) +\rhoF^2 (l.o.t.)_2
\end{split}
\eea
where $(l.o.t.)_1$ and $(l.o.t.)_2$ are lower order terms with respect to $\qf$ and $\qf^\F$. More precisely, $\lot=c_1(r)\a+c_2(r) \nabb_3 \a+c_3(r) \ff$ for smooth functions $c_i$. 

Let $\underline{\qf}$ and $\underline{\qf}^\F$ be two symmetric traceless $2$-tensors on $S$.
We say that $(\underline{\qf}, \underline{\qf}^\F)$ satisfies the {\bf generalized Regge--Wheeler system for spin $-2$} if they satisfy the following coupled system of PDEs:
\bea\label{finalsystem-2}
\begin{split}
\Box_\g\underline{\qf}+\left( \ka\kab-10\rhoF^2\right)\ \underline{\qf}&= -\rhoF\Bigg(4r\lapp_2\underline{\qf}^{\F}-4r\kab \nabb_4(\underline{\qf}^{\F})-4r\ka\nabb_3( \underline{\qf}^\F) + r\left(6\ka\kab+16\rho+8\rhoF^2\right)\underline{\qf}^{\F}\Bigg)\\
&-\rhoF(l.o.t.)_1, \\
\Box_\g\underline{\qf}^{\F}+\left( \ka\kab+3\rho\right)\ \underline{\qf}^{\F}&=\rhoF \Bigg(\frac {1}{ r} \underline{\qf}\Bigg) +\rhoF^2 (l.o.t.)_2
\end{split}
\eea
where $(l.o.t.)_1$ and $(l.o.t.)_2$ are lower order terms with respect to $\underline{\qf}$ and $\underline{\qf}^\F$. More precisely, $\lot=c_1(r)\aa+c_2(r) \nabb_3 \aa+c_3(r) \ffb$ for smooth functions $c_i$. 
\end{definition}

In Section \ref{transformation-theory}, we will recall that given a solution $(\a, \ff)$ and $(\aa, \underline{\ff})$ of the spin $\pm 2$ Teukolsky equations, respectively, we can derive two solutions $(\qf,\qf^\F)$ and $(\underline{\qf},\underline{\qf}^\F)$, respectively, of the generalized Regge-Wheeler system. 
Standard well-posedness results hold for both the generalized Teukolsky system of spin $\pm2$ and the generalized Regge-Wheeler system.  

\subsection{The spin $\pm1$ Teukolsky equation and the Fackerell-Ipser equation}\label{spin-1-chapter}\label{spin1}

In this section, we recall a generalization of the celebrated spin $\pm 1$ Teukolsky equations and the Fackerell-Ipser equation, as introduced in \cite{Giorgi5}, and explain the connection between them.

\subsubsection{Generalized spin $\pm 1$ Teukolsky equation}

The generalized spin $\pm1$ Teukolsky equation concerns $1$-tensors which we denote $\tilde{\b}$ and $\tilde{\bb}$ respectively.

\begin{definition}\label{teukolsky-1} Let $\tilde{\b}$ be a 1-tensor defined
on a subset $\mathcal{D}\subset \mathcal{M}$. We say that $\tilde{\b}$ satisfy the {\bf generalized Teukolsky equation of spin ${\bf +1}$} if it satisfies the following PDE:
\beaa
\Box_\g (r^3 \tilde{\b})&=& -2\omb \nabb_4(r^3\tilde{\b})+\left(\ka+2\om\right) \nabb_3(r^3\tilde{\b})+\left(\frac {1}{4} \ka \kab-3\omb \ka+\om\kab-2\rho+3\rhoF^2-8\om\omb+2\nabb_3 \om\right) r^3 \tilde{\b}\\
&&-2r^3\kab\rhoF^2\left(\nabb_4 \bF+ (\frac 3 2 \ka+2\om)\bF-2\rhoF \xi \right)+\mathcal{I}
\eeaa
where $\Box_\g=\g^{\mu\nu} \D_{\mu}\D_{\nu}$ denotes the wave operator in Reissner-Nordstr{\"o}m spacetime, and $\mathcal{I}$ is a $1$-tensor with vanishing projection to the $\ell=1$ spherical harmonics, i.e. $\divv \ \mathcal{I}_{\ell=1}=\curll \ \mathcal{I}_{\ell=1}=0$.

Let $\tilde{\bb}$ be a 1-tensor defined
on a subset $\mathcal{D}\subset \mathcal{M}$. We say that $\tilde{\b}$ satisfy the {\bf generalized Teukolsky equation of spin ${\bf -1}$} if it satisfies the following PDE:
\beaa
\Box_\g (r^3 \tilde{\bb})&=& -2\om \nabb_3(r^3\tilde{\bb})+\left(\kab+2\omb\right) \nabb_4(r^3\tilde{\bb})+\left(\frac {1}{4} \ka \kab-3\om \kab+\omb\ka-2\rho+3\rhoF^2-8\om\omb+2\nabb_4 \omb\right) r^3 \tilde{\bb}\\
&&-2r^3\ka\rhoF^2\left(\nabb_3 \bbF+ (\frac 3 2 \kab+2\omb)\bbF+2\rhoF \xib \right)+\underline{\mathcal{I}}
\eeaa
where $\underline{\mathcal{I}}$ is a $1$-tensor with vanishing projection to the $\ell=1$ spherical harmonics.
\end{definition}

\subsubsection{Generalized Fackerell-Ipser equation in $\ell=1$ mode}
The other generalized equation in $l=1$ to be defined here is the generalized Fackerell-Ipser equation, to be satisfied by a one tensor $\pf$.

\begin{definition} \label{def:rwe-1}
Let $\pf$ be a $1$-tensor on $\mathcal{D}\subset \mathcal{M}$.
We say that $\pf$ satisfies the {\bf generalized Fackerell-Ipser equation in $\ell=1$} if it satisfies the following PDE:
\bea\label{finalequationl1}
\Box_\g\pf+\left(\frac 1 4 \ka\kab-5\rhoF^2\right)\pf &=&\mathcal{J}
\eea
where $\mathcal{J}$ is a $1$-tensor with vanishing projection to the $\ell=1$ spherical harmonics, i.e. $\divv \ \mathcal{J}_{\ell=1}= \curll \ \mathcal{J}_{\ell=1}=0$.
\end{definition}

In Section \ref{transformation-theory-1}, we recall that given a solution $\tilde{\b}$ and $\tilde{\bb}$ of the generalized spin $\pm 1$ Teukolsky equations in $\ell=1$, respectively, we can derive two solutions $\pf$ and $\underline{\pf}$, respectively, of the generalized Fackerell-Ipser equation in $\ell=1$. 
Standard well-posedness results hold for both the generalized Teukolsky equation of spin $\pm1$ and the generalized Fackerell-Ipser equation in $\ell=1$.

\subsection{The Chandrasekhar transformation}\label{Chandra-transform}\label{transformation-theory}\label{transformation-theory-1}

We now recall the transformation theory relating solutions of the generalized Teukolsky equations defined above to solutions of the generalized Regge-Wheeler system or the Fackerell-Ipser equation. We emphasize that a physical space version of the Chandrasekhar transformation was first introduced in \cite{DHR}, for the Schwarzschild spacetime.

We introduce the following operators for a $n$-rank $S$-tensor $\Psi$:
\bea \label{operators}
\underline{P}(\Psi)&=&\frac{1}{\kab} \nabb_3(r \Psi), \qquad  P(\Psi)=\frac{1}{\ka} \nabb_4(r \Psi) 
\eea 

Given a solution  $(\alpha, \ff)$ of the generalized Teukolsky system of spin $+2$ and a solution  $\tilde{\b}$ of the generalized Teukolsky equation of spin $+1$, we defined the following \emph{derived} quantities for $(\a, \ff)$ and $\tilde{\b}$ in \cite{Giorgi4} and \cite{Giorgi5}:
\bea\label{quantities}\label{quantities-1}
\begin{split}
\psi_0 &= r^2 \kab^2 \a, \\
\psi_1&=\underline{P}(\psi_0), \\
\psi_2&=\underline{P}(\psi_1)=\underline{P}(\underline{P}(\psi_0))=:\qf, \\
\psi_3&= r^2 \kab \ \ff, \\
\psi_4&= \underline{P}(\psi_3)=:\qf^\F \\
\psi_5 &= r^4 \kab \ \tilde{\b}, \\
\psi_6&=\underline{P}(\psi_5):=\pf
\end{split}
\eea
Similarly, given a solution  $(\aa, \underline{\ff})$ of the generalized Teukolsky system of spin $-2$ and given a solution  $\underline{\tilde{\b}}$ of the generalized Teukolsky equation of spin $-1$, we defined the following \emph{derived} quantities for $(\aa, \underline{\ff})$ and $\tilde{\bb}$ in \cite{Giorgi4} and \cite{Giorgi5}:
\bea\label{quantities-2}
\begin{split}
\underline{\psi}_0 &= r^2 \ka^2 \aa, \\
\underline{\psi}_1&=P(\underline{\psi}_0), \\
\underline{\psi}_2&=P(\underline{\psi}_1)=P(P(\underline{\psi}_0))=:\underline{\qf}, \\
\underline{\psi}_3&= r^2 \ka \ \underline{\ff}, \\
\underline{\psi}_4&= P(\underline{\psi}_3)=:\underline{\qf}^\F\\
\underline{\psi}_5 &= r^4 \ka \ \tilde{\bb}, \\
\underline{\psi}_6&=P(\underline{\psi}_5):=\underline{\pf}
\end{split}
\eea

The following  proposition is proved in \cite{Giorgi4} and \cite{Giorgi5}. 
\begin{proposition} \label{prop:rwt1}
Let $(\alpha, \ff)$ be a solution of the generalized Teukolsky system of spin $+2$. Then the symmetric traceless tensors $(\qf, \qf^\F)$ as defined through \eqref{quantities} satisfy the generalized Regge-Wheeler system of spin $+2$. 
Similarly, let $(\aa, \underline{\ff})$ be a solution of the generalized Teukolsky system of spin $-2$. Then the symmetric traceless tensors $(\underline{\qf}, \underline{\qf}^\F)$ as defined through \eqref{quantities-2} satisfy the generalized Regge-Wheeler system of spin $-2$. 

Let $\tilde{\b}$ be a solution of the generalized Teukolsky equation of spin $+1$. Then the $1$-tensor $\pf$ as defined through \eqref{quantities-1} satisfies the generalized Fackerell-Ipser equation in $\ell=1$. 
Similarly, let $\tilde{\bb}$ be a solution of the generalized Teukolsky equation of spin $-1$. Then the $1$-tensor $\underline{\pf}$ as defined through \eqref{quantities-2} satisfies the generalized Fackerell-Ipser equation in $\ell=1$. 
\end{proposition}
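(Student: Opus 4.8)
The plan is to verify directly that each derived quantity, built from the Chandrasekhar operator $\underline{P}$ (respectively $P$) in \eqref{operators}, satisfies the claimed wave/transport equation, by repeatedly applying the operator to the generalized Teukolsky system. First I would record the key commutation identities between $\underline{P}$ and the wave operator $\Box_\g$ on $S$-tensors in Reissner-Nordstr{\"o}m: using the commutation formulae \eqref{commutation-formulas}, \eqref{commutator-nabb-4-DDs} and the explicit values \eqref{ricci-coefficients-RN-0} for $\ka$, $\kab$, $\omb$, one computes $[\Box_\g, \underline{P}]$ acting on a rank-$n$ tensor; since $\underline{P}(\Psi) = \kab^{-1}\nabb_3(r\Psi)$ and $\kab$, $r$ are functions of $r$ alone, this commutator produces only lower-order terms in $r$-dependent coefficients times $\nabb_3$, $\nabb_4$ and angular derivatives of $\Psi$. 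This is the computational heart of the argument and has to be done once and for all.

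Next I would treat $\psi_0 = r^2\kab^2\a$: starting from the first equation of the generalized spin $+2$ Teukolsky system, I rewrite $\Box_\g\a$ as an equation for $\Box_\g\psi_0$ by conjugating with the $r$-dependent weight $r^2\kab^2$, absorbing the first-order terms $\nabb_3\a$, $\nabb_4\a$ into the weight (this is exactly the mechanism by which the Chandrasekhar transformation removes first-order terms). One then applies $\underline{P}$ twice to pass from $\psi_0$ to $\psi_1$ to $\psi_2 = \qf$, each time using the commutation identity above, and checks that the first-order terms cancel and that the potential collapses to $\ka\kab - 10\rhoF^2$ modulo the right-hand side. The coupling term $4\rhoF(\nabb_4\ff + (\ka + 2\om)\ff)$ in the Teukolsky equation for $\a$ is, up to $r$-weights, exactly $\nabb_4(r^{-1}\psi_3)$-type data, which after applying $\underline{P}$ twice becomes the $\rhoF(4r\lapp_2\qf^\F - \ldots)$ term on the right of \eqref{finalsystem}; I would verify the precise coefficients by matching $r$-powers. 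The analogous computation with $\psi_3 = r^2\kab\,\ff$, applying $\underline{P}$ once to get $\psi_4 = \qf^\F$, produces the second Regge-Wheeler equation, with the $-r\rhoF(\nabb_3\a + (\kab - 4\omb)\a)$ coupling in the Teukolsky equation for $r\ff$ turning into the $\rhoF(-r^{-1}\qf)$ term after the transformation. For the spin $-1$ case, the identical scheme applies to $\psi_5 = r^4\kab\,\tilde\b$ and $\psi_6 = \underline{P}(\psi_5) = \pf$, using the generalized spin $+1$ Teukolsky equation; here the error $\mathcal{I}$, which has vanishing $\ell=1$ projection, stays with vanishing $\ell=1$ projection under $\underline{P}$ (since $\underline{P}$ commutes with projection to $\ell=1$ by Lemma \ref{commutation-projection-l1} applied componentwise), and the $\bF$-coupling term contributes only to $\mathcal{J}$, again supported in $\ell\geq 2$. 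The spin $-2$ and spin $-1$ statements follow by the symmetry $\nabb_3 \leftrightarrow \nabb_4$, underline $\leftrightarrow$ non-underline, noted after the definitions, with the expected sign flips on the $\rhoF$-coupling terms.

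The main obstacle I anticipate is bookkeeping the exact coefficients: the Teukolsky potentials involve combinations like $\tfrac12\ka\kab - 4\rho + 4\rhoF^2 + \ldots$ with $\nabb_4\omb$ terms, and proving that after two applications of $\underline{P}$ these conspire — together with the contributions from commuting $\Box_\g$ past the $r$-weights $r^2\kab^2$, $r^2\kab$ — to produce exactly $\ka\kab - 10\rhoF^2$ and $\ka\kab + 3\rho$ requires careful use of the Reissner-Nordstr{\"o}m identities $\omb = \tfrac{M}{r^2} - \tfrac{Q^2}{r^3}$, $\rho = -\tfrac{2M}{r^3} + \tfrac{2Q^2}{r^4}$, $\rhoF = \tfrac{Q}{r^2}$, $\nabb_4\omb = \rho + \rhoF^2$, and relations such as $2\omb - r\rho = 4\omb$. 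Since this proposition is quoted from \cite{Giorgi4} and \cite{Giorgi5}, I would not reproduce the full calculation here but refer to those papers for the detailed verification, having indicated the structure of the argument above.
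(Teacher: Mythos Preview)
Your proposal is sound and matches the paper's treatment: the paper does not prove this proposition at all but simply states that it is established in \cite{Giorgi4} and \cite{Giorgi5}, and you correctly arrive at the same conclusion after sketching the underlying Chandrasekhar-transformation mechanism. Your outline of the computation (conjugating by the $r$-weights, applying $\underline{P}$ repeatedly, and tracking the cancellation of first-order terms via the Reissner--Nordstr\"om background identities) is indeed the structure of the argument carried out in those references.
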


\subsection{Relation to the gravitational and electromagnetic perturbations of Reissner-Nordstr\"om spacetime}\label{relation-to-linear-stability}

We finally relate the equations presented above to the full system of linearized
gravitational and electromagnetic perturbations of Reissner-Nordstr{\"o}m spacetime in the context of linear stability of Reissner-Nordstr{\"o}m.
The relation is summarized in the following Theorem, proved in \cite{Giorgi4} and \cite{Giorgi5}.

\begin{theorem} \label{prop:relfull}
Let $\a$, $\aa$, $\ff$, $\underline{\ff}$, $\tilde{\b}$, $\tilde{\bb}$ be the curvature components of a linear gravitational and electromagnetic perturbation around Reissner-Nordstr{\"o}m spacetime as in Section \ref{section-gauge-inv}. 

Then
\begin{itemize}
\item $(\a, \ff)$ satisfy the generalized Teukolsky system of spin $+2$, and $(\aa, \underline{\ff})$ satisfy the generalized Teukolsky system of spin $-2$. 
\item $\tilde{\b}$ satisfies the generalized Teukolsky equation of spin $+1$, and $\tilde{\bb}$ satisfies the generalized Teukolsky equation of spin $-1$. 
\end{itemize}
\end{theorem}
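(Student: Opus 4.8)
The plan is to derive, purely within the linearized system of Section~\ref{all-equations}, a second-order wave equation for each gauge-invariant quantity by differentiating an appropriate pair of the first-order (Bianchi, Maxwell, null-structure) equations and combining them. The only auxiliary inputs are the commutation formulae of Section~\ref{sec:commutation} --- in particular \eqref{commutation-formulas} and the $[\nabb,\DD]$ commutators \eqref{commutator-nabb-4-divv}--\eqref{commutator-nabb-4-DDs} --- together with the identities \eqref{angular-operators} relating $\DDs_k\DDd_k$ to the angular Laplacians $\lapp_k$ and the Gauss curvature $K=r^{-2}$; these let one recognise the combination $-\tfrac12(\nabb_3\nabb_4+\nabb_4\nabb_3)+\lapp$, plus explicit first-order and zeroth-order terms in $r$, as $\Box_\g$ acting on an $S$-tensor in the Reissner-Nordstr\"om frame \eqref{null-frame-RN}.

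For the spin $+2$ system I would start from the Bianchi identity \eqref{nabb-3-a-ff} for $\nabb_3\a$ and the Bianchi identity \eqref{nabb-4-b-ze-eta} for $\nabb_4\b$. Applying $\nabb_4$ to \eqref{nabb-3-a-ff}, commuting $\nabb_4$ past $\DDs_2$ by \eqref{commutator-nabb-4-DDs}, substituting $\nabb_4\b$ from \eqref{nabb-4-b-ze-eta} and $\nabb_4\chih$ from \eqref{nabb-4-chih-ze-eta}, and using $\DDs_2\DDd_2=-\tfrac12\lapp_2+K$ to produce the angular Laplacian, one collects the result into $\Box_\g\a=(\text{first order in }\a)+(\text{zeroth order in }\a)+4\rhoF\big(\nabb_4\ff+(\ka+2\om)\ff\big)$: the coupling term appears because the residual $\rhoF\nabb_4\bF$ from \eqref{nabb-4-b-ze-eta} and the residual $\rhoF\nabb_4\chih$ reassemble, modulo lower order, into $\rhoF\nabb_4(\DDs_2\bF+\rhoF\chih)=\rhoF\nabb_4\ff$. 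For the second equation I would differentiate the Maxwell equation \eqref{nabb-3-bF-ze-eta} for $\bF$ together with the null-structure equation \eqref{nabb-3-chih-ze-eta} for $\chih$; the structural point is that the gauge term $2\rhoF\eta$ in \eqref{nabb-3-bF-ze-eta} cancels against $-2\DDs_2\eta$ in \eqref{nabb-3-chih-ze-eta} once $\DDs_2$ is applied to the former and one forms $\DDs_2\bF+\rhoF\chih=\ff$. A second differentiation, using \eqref{nabb-4-chih-ze-eta} and \eqref{nabb-4-chibh-ze-eta} (and the identity $\rhoF\chibh=\DDs_2\bbF-\ffb$ to re-express the $\chibh$-residue), closes the system and yields $\Box_\g(r\ff)=\cdots-r\rhoF(\nabb_3\a+(\kab-4\omb)\a)$, the coupling back to $\a$ being forced by the Bianchi structure \eqref{nabb-3-a-ff}.

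For the spin $+1$ equation the difficulty is of a different nature: neither $\b$ nor $\bF$ obeys a closed equation, but $\tilde\b=2\rhoF\b-3\rho\bF$ is engineered so that the gauge content cancels in $\ell=1$. Computing $\nabb_3(r^4\kab\tilde\b)$ from \eqref{nabb-3-b-ze-eta} and \eqref{nabb-3-bF-ze-eta}, the terms $3\rho\eta$ and $2\rhoF\eta$ cancel exactly in $2\rhoF\b-3\rho\bF$, and what survives is proportional to $\tilde\b$ together with $\DDs_1(\check\rho,\check\sigma)$, $\DDs_1(\check\rhoF,\check\sigmaF)$ and $\bF,\bbF$ terms; reabsorbing $\bF,\bbF$ as in Definition~\ref{teukolsky-1}, applying $\nabb_4$, commuting $[\nabb_4,\nabb_3]=2\omb\nabb_4$ by \eqref{commutation-formulas}, and substituting the forward equations \eqref{nabb-4-check-rho-ze-eta}, \eqref{nabb-4-check-sigma-ze-eta}, \eqref{nabb-4-check-rhoF-ze-eta}, \eqref{nabb-4-check-sigmaF-ze-eta}, \eqref{nabb-4-bbF-ze-eta} for the right-hand side, one reconstructs $\Box_\g(r^3\tilde\b)$. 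The residual one-form $\mathcal I$ is whatever remains; it has vanishing $\ell=1$ projection because every scalar feeding into it --- $\check\rho,\check\sigma,\check\rhoF,\check\sigmaF$ and divergences and curls of $\bF,\bbF,\chih,\chibh,\hat{\slashed{g}}$ --- is supported in $\ell\ge2$, where one uses in particular $\check K_{\ell=1}=0$ from \eqref{check-K-l=1} and that $\DDs_2\DDs_1$ annihilates $\ell\le1$. The negative-spin equations follow by re-running the computation with $e_3\leftrightarrow e_4$, $\a\leftrightarrow\aa$, $\b\leftrightarrow-\bb$, $\bF\leftrightarrow\bbF$, $\sigma\leftrightarrow-\sigma$, $\rhoF\leftrightarrow-\rhoF$, $\omb\leftrightarrow\om$; since the Reissner-Nordstr\"om background is genuinely not symmetric under this exchange ($\ka\ne\kab$, $\om=0\ne\omb$), one repeats the computation rather than literally transposing.

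The main obstacle is entirely bookkeeping: after the two differentiations one must check that \emph{all} non-gauge-invariant Ricci, curvature and electromagnetic quantities cancel, and that the numerous $r$-dependent coefficients collapse to exactly the smooth functions appearing in the definitions of the generalized Teukolsky systems --- a long but mechanical verification, carried out in \cite{Giorgi4} and \cite{Giorgi5}. A secondary subtlety, specific to the spin $\pm1$ case, is confirming that the error one-forms $\mathcal I,\underline{\mathcal I}$ (equivalently $\mathcal J,\underline{\mathcal J}$ at the Fackerell-Ipser level) have vanishing $\ell=1$ projection, which depends on the special structure of the combination $2\rhoF\b-3\rho\bF$ together with the $\ell=1$ vanishing of $\check K$.
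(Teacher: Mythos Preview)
The paper does not actually prove Theorem~\ref{prop:relfull} here; it is stated as a summary of the computations carried out in \cite{Giorgi4} and \cite{Giorgi5}, and no proof appears in this text. Your overall strategy --- differentiate a suitable pair of first-order equations (Bianchi, Maxwell, null-structure), commute using \eqref{commutation-formulas}--\eqref{commutator-nabb-4-DDs}, and reassemble into $\Box_\g$ via \eqref{angular-operators} --- is correct and is exactly the method of those references; you have also correctly identified the key structural points (the $\rhoF\nabb_4\bF$ and $\rhoF\nabb_4\chih$ terms reorganising into $\rhoF\nabb_4\ff$, and the $\eta$-cancellation in the combination $2\rhoF\b-3\rho\bF$).

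There is, however, a genuine error in your justification of the $\ell=1$ vanishing of $\mathcal I$. You claim that the scalars $\check\rho,\check\sigma,\check\rhoF,\check\sigmaF$ and the divergences and curls of $\bF,\bbF$ are supported in $\ell\ge2$; this is false. All of these quantities have nontrivial $\ell=1$ content --- indeed, a large part of Section~\ref{projection-l1-boundedness} (see e.g.\ Proposition~\ref{expression-all-quantities-l1} and Proposition~\ref{expression-all-quantities-curl-l1}) is devoted precisely to estimating $\check\rho_{\ell=1}$, $\check\rhoF_{\ell=1}$, $(\divv\bF)_{\ell=1}$, $(\curll\bF)_{\ell=1}$, etc. The vanishing of the $\ell=1$ projection of the inhomogeneity is not a support statement about the individual ingredients but the outcome of a specific algebraic cancellation: after the differentiations, the surviving terms either cancel in $\ell=1$ or factor through derivatives of symmetric traceless $2$-tensors (for instance, at the Fackerell--Ipser level the right-hand side is $Q\cdot\divv\qf^\F$ as in \eqref{systeml1}, and $\qf^\F$ is a $2$-tensor, hence supported in $\ell\ge2$). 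Your argument would need to exhibit that cancellation explicitly rather than assert that the scalars are already supported in $\ell\ge2$.
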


Using Proposition \ref{prop:rwt1}, we can therefore associate to any solution to the  linearized Einstein-Maxwell equations around Reissner-Nordstr{\"o}m spacetime, two symmetric traceless $2$-tensors which verify the generalized Regge-Wheeler system of spin $\pm 2$ and a one form which verifies the generalized Fackerell-Ipser equation in $\ell=1$.

\subsubsection{Gravitational versus electromagnetic radiation}
The linear perturbations considered in Definition \ref{linear-perturbation-definition} allow the perturbation of the Weyl curvature of the spacetime, as well as the perturbation of the Ricci curvature, in the form of the electromagnetic tensor.  We will refer to the perturbation of the Weyl tensor $\W$ as \emph{gravitational radiation}, and to the perturbation of the electromagnetic tensor $\F$ as \emph{electromagnetic radiation}. 

The radiation is always transported by the gauge-invariant versions of the extreme component of the tensor, i.e. the tensor $\aa$ represents the gravitational radiation, and the tensor $\tilde{\bb}$ represents the electromagnetic radiation. 

The combined gravitational and electromagnetic perturbations of Reissner-Nordstr{\"o}m spacetime is not solved by simply considering the separated gravitational and electromagnetic perturbations. In fact, as already clear from the Bianchi identites and the Maxwell equations above, the interaction between these two perturbations is complex, and the two radiations are very much coupled together. 

The gravitational radiation is still transported by $\aa$, but this gauge-independent quantity verifies a generalized Teukolsky equation which is coupled to another gauge independent quantity, $\underline{\ff}$, which has contributions from the electromagnetic tensor. Being both symmetric traceless $2$-tensors, they are in general supported in $\ell\geq 2$ spherical harmonics. Therefore, these two tensors, coming from both the Weyl and the Ricci perturbations, are responsible for the gravitational radiation in $\ell\geq 2$. 

The quantity $\bbF$ is not gauge-invariant in the presence of the Weyl perturbation. It is $\tilde{\bb}$, defined using both the curvature and the electromagnetic tensor, to be gauge invariant and to transport the electromagnetic perturbation. Since it is a $1$-form, it is in general supported in $\ell\geq 1$ spherical harmonics, which is where the electromagnetic radiation is supported.

In summary, the gravitational and electromagnetic perturbations of Reissner-Nordstr\"om are, by effect of the above equations, totally coupled together.

\subsection{Boundedness and decay for the gauge-invariant quantities}\label{decay-for-gauge-invariant-quantities}
The main results in \cite{Giorgi4} and \cite{Giorgi5} are the boundedness and quantitative decay statements obtained for the gauge-invariant quantities $\a$, $\aa$, $\ff$, $\underline{\ff}$, $\tilde{\b}$, $\tilde{\bb}$ verifying the above Teukolsky equations. Using Theorem \ref{prop:relfull}, we can summarize the results in the following. 

We denote $A\les B$ if $A \leq C B$ where $C$ is an universal constant depending on appropriate Sobolev norms of initial data. We define the following norms:
\beaa
|| f||_{\infty}(u, r):= ||f||_{L^{\infty}(S_{u, r})} \qquad || f||_{\infty, k}(u, r):= \sum_{i=0}^k ||\dk^i f||_{L^{\infty}(S_{u, r})}
\eeaa
where $\dk=\{ \nabb_3, r\nabb_4, r\nabb\}$. 

\begin{theorem}\label{estimates-theorem}[Main Theorem in \cite{Giorgi4} and Main Theorem in \cite{Giorgi5}] Let $\a$, $\aa$, $\ff$, $\underline{\ff}$, $\tilde{\b}$, $\tilde{\bb}$ be the curvature components of a linear gravitational and electromagnetic perturbation around Reissner-Nordstr{\"o}m spacetime for $|Q| \ll M$. Then for every $k$ we have:
\begin{enumerate}
\item The following energy estimates hold true:
\bea\label{L^2-estimates-theorem}
\begin{split}
\int_{\Si_{r \geq r_2}(\tau)} r^{2+\de} |\a|^2+r^{4+\de}|\nabb_4\a|^2+r^{4+\de} |\nabb\a|^2+r^{2+\de} |\nabb_3\a|^2 &\leq \frac{(\text{initial data for $\a$, $\ff$, $\psi_1$, $\qf$, $\qf^\F$})}{u^{2-2\de}} \\
\int_{\Si_{r \geq r_2}(\tau)} r^{2+\de} |\ff|^2+r^{4+\de}|\nabb_4\ff|^2+r^{4+\de} |\nabb\ff|^2+r^{2+\de} |\nabb_3\ff|^2 &\leq \frac{(\text{initial data for $\a$, $\ff$, $\psi_1$, $\qf$, $\qf^\F$})}{u^{2-2\de}}\\
\int_{\Si_{r \geq r_2}(\tau)} r^{6+\de} |\tilde{\b}|^2+r^{8+\de}|\nabb_4\tilde{\b}|^2+r^{8+\de} |\nabb\tilde{\b}|^2+r^{6+\de} |\nabb_3\tilde{\b}|^2 &\leq \frac{(\text{initial data for $\a$, $\ff$, $\psi_1$, $\tilde{\b}$, $\qf$, $\qf^\F$, $\pf$})}{u^{2-2\de}}
\end{split}
\eea
and 
\bea\label{L^2-estimates-theorem-underline}
\begin{split}
\int_{\Si_{r \geq r_2}(\tau)}  |\aa|^2+r^{2}|\nabb_4\aa|^2+r^{2} |\nabb\aa|^2+ |\nabb_3\aa|^2 &\leq \frac{(\text{initial data for $\aa$, $\underline{\ff}$, $\underline{\psi}_1$, $\underline{\qf}$, $\underline{\qf}^\F$})}{u^{2-2\de}} \\
\int_{\Si_{r \geq r_2}(\tau)} r^{2} |\underline{\ff}|^2+r^{4}|\nabb_4\underline{\ff}|^2+r^{4} |\nabb\underline{\ff}|^2+r^{2} |\nabb_3\underline{\ff}|^2 &\leq  \frac{(\text{initial data for $\aa$, $\underline{\ff}$, $\underline{\psi}_1$, $\underline{\qf}$, $\underline{\qf}^\F$})}{u^{2-2\de}}\\
\int_{\Si_{r \geq r_2}(\tau)} r^{6} |\tilde{\bb}|^2+r^{8}|\nabb_4\tilde{\bb}|^2+r^{8} |\nabb\tilde{\bb}|^2+r^{6} |\nabb_3\tilde{\bb}|^2 &\leq  \frac{(\text{initial data for $\aa$, $\underline{\ff}$, $\underline{\psi}_1$, $\tilde{\bb}$,  $\underline{\qf}$, $\underline{\qf}^\F$, $\underline{\pf}$})}{u^{2-2\de}}
\end{split}
\eea
where $\Sigma_\tau$ are spacelike hypersurfaces entering the horizon and null infinity, and $\Si_{r \geq r_2}(\tau)$ denotes its intersection with $\{ r \geq r_2\}$ for some $r_2>r_{\mathcal{H}}$. 
\item The following pointwise estimates for $\qf$, $\qf^\F$ and $\pf$ hold true, for $r \geq r_2$:
\bea
|| \qf||_{\infty, k} &\les& \min\{ r^{-1} u^{-1/2+\de},  u^{-1+\de}\} \label{estimate-qf}\\
|| \qf^\F||_{\infty, k}&\les& \min\{ r^{-1} u^{-1/2+\de},  u^{-1+\de}\} \label{estimate-qf-F} \\
|| \pf||_{\infty, k}&\les& \min\{ r^{-1} u^{-1/2+\de},  u^{-1+\de}\} \label{estimate-pf}
\eea
\item The following pointwise estimates for $\a$, $\ff$ and $\tilde{\b}$ hold true, for $r \geq r_2$:
\bea
||\a||_{\infty, k}\les \min\{r^{-3-\de}u^{-1/2+\de}, r^{-2-\de} u^{-1+\de} \} \label{estimate-a} \label{estimate-derivatives-a}\\
||\ff||_{\infty, k}\les \min\{r^{-3-\de}u^{-1/2+\de}, r^{-2-\de} u^{-1+\de} \} \label{estimate-ff}\label{estimate-derivatives-ff}\\
||\tilde{\b}||_{\infty, k}\les \min\{r^{-5-\de}u^{-1/2+\de}, r^{-4-\de} u^{-1+\de} \} \label{estimate-tilde-b}\label{estimate-derivatives-tilde-b}
\eea
\item The following pointwise estimates for $\aa$, $\ffb$ and $\tilde{\bb}$ hold true, for $r \geq r_2$:
\bea
||\aa||_{\infty, k}\les  r^{-1} u^{-1+\de}  \label{estimate-aa}\\
||\ffb||_{\infty, k}\les  r^{-2} u^{-1+\de}  \label{estimate-underline-ff}\\
||\tilde{\bb}||_{\infty, k}\les  r^{-4} u^{-1+\de}  \label{estimate-tilde-bb}
\eea
\item The above quantities are all bounded by $v^{-1+\de}$ in the region $r_{\mathcal{H}} \leq r \leq r_2$. 
\end{enumerate}

\end{theorem}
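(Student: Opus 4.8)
The plan is to prove the theorem along the three-step strategy laid out in Sections \ref{spin-2-chapter}--\ref{transformation-theory}: transform the gauge-invariant quantities (which satisfy the generalized Teukolsky equations) into quantities satisfying the generalized Regge-Wheeler/Fackerell-Ipser systems via the Chandrasekhar transformation, establish boundedness and polynomial decay for the transformed systems by vector field methods, and then recover the estimates for the original Teukolsky quantities by inverting the transformation. By Theorem \ref{prop:relfull}, $(\a,\ff)$ and $(\aa,\ffb)$ solve the generalized spin $\pm2$ Teukolsky systems and $\tilde{\b}$, $\tilde{\bb}$ solve the spin $\pm1$ Teukolsky equations. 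Applying the operators $\underline{P}$ and $P$ of \eqref{operators} as in \eqref{quantities}--\eqref{quantities-2} and invoking Proposition \ref{prop:rwt1}, I obtain $(\qf,\qf^\F)$, $(\underline{\qf},\underline{\qf}^\F)$ solving \eqref{finalsystem}, \eqref{finalsystem-2} and $\pf$, $\underline{\pf}$ solving \eqref{finalequationl1}. The gain is that these systems carry real, essentially sign-definite potentials at leading order, with off-diagonal coupling proportional to the charge $\rhoF=Q/r^2$.

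The main analytic content is the proof of integrated local energy decay (Morawetz estimate) and boundedness for the Regge-Wheeler system \eqref{finalsystem}. The multipliers are the standard triple: the conserved energy associated to the Killing field $T$ of \eqref{definition-T}, the red-shift vector field near $\mathcal{H}^+$, and a radial Morawetz current whose bulk weight vanishes at the photon sphere. Since \eqref{finalsystem} is a \emph{coupled} system, the combined estimate is formed by summing the two individual estimates; the off-diagonal terms are $O(Q)$ and are absorbed using $|Q|\ll M$. The genuine obstacle is the \emph{trapping region}, where the Morawetz bulk degenerates on the photon sphere, so that the $O(Q)$ coupling cannot simply be absorbed by the degenerate bulk of the companion equation. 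I expect this to be resolved by exploiting the precise algebraic structure of the coupling terms in \eqref{finalsystem} to produce a cancellation of the top-order coupling at the trapped frequencies, after commuting with the radial multiplier and with the angular momentum operators $\Omega_i$ to capture trapping at the level of derivatives. The same scheme treats the spin $-2$ system \eqref{finalsystem-2}; for \eqref{finalequationl1} the situation simplifies, since the right-hand side is supported in $\ell\geq2$, so projecting to $\ell=1$ yields a \emph{decoupled} scalar wave equation with potential $\frac14\ka\kab-5\rhoF^2$, to which the standard estimates apply directly, while the $\ell\geq2$ part of $\tilde{\b}$, $\tilde{\bb}$ is controlled through the spin $\pm2$ system.

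Next I would upgrade boundedness plus integrated decay to the pointwise rates \eqref{estimate-qf}--\eqref{estimate-pf} by the $r^p$-weighted hierarchy of Dafermos-Rodnianski, commuting with $T$, $\Omega_i$ and the derivatives $\nabb_3,r\nabb_4,r\nabb$ to produce the higher-order norms $||\cdot||_{\infty,k}$ and then using Sobolev embedding on $S_{u,r}$. Finally, I recover the estimates \eqref{estimate-a}--\eqref{estimate-tilde-bb} for $\a$, $\ff$, $\tilde{\b}$ and their underlined analogues by inverting the Chandrasekhar transformation: writing $\psi_0=r^2\kab^2\a$ and integrating the transport relations $\underline{P}(\psi_i)=\psi_{i+1}$ that define \eqref{quantities} backward from the decay of $\qf=\psi_2$, with the Teukolsky equations themselves used to control the remaining first-order quantities. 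Each such integration along the null directions, together with the explicit $r$-weights in \eqref{quantities}, trades a power of $u$-decay for a power of $r$-decay and thereby produces the extra powers of $r$ that distinguish the rate for $\a$ (as $r^{-3-\de}$) from that for $\qf$ (as $r^{-1}$). The energy estimates \eqref{L^2-estimates-theorem}--\eqref{L^2-estimates-theorem-underline} follow by the same argument at the $L^2$ level.

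The near-horizon bound by $v^{-1+\de}$ in $r_{\mathcal{H}}\leq r\leq r_2$ is obtained last, by integrating the red-shift estimate forward along the ingoing null direction starting from the decay already established on $\{r=r_2\}$, exploiting that $r$ is bounded there so that all quantities inherit the same $v$-decay. Among all these steps, I expect the decisive difficulty to be precisely the absorption of the charge-proportional coupling within the trapping region of \eqref{finalsystem} and \eqref{finalsystem-2}, where the degeneracy of the Morawetz bulk rules out a naive smallness argument and forces reliance on the detailed structure of the system.
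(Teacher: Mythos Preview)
Your proposal is correct and matches the approach that the paper itself sketches: the paper does not give a self-contained proof here but refers to \cite{Giorgi4} and \cite{Giorgi5}, stating only that ``the proof of the theorem is based on the vectorfield method applied to the generalized Regge-Wheeler system and the generalized Fackerell-Ipser equation.'' Your outline reproduces precisely the strategy described in the introduction (Sections~\ref{summary-gauge-inv}--\ref{section-l1}): Chandrasekhar transformation to \eqref{finalsystem}--\eqref{finalequationl1}, combined Morawetz/energy/red-shift estimates with the $O(Q)$ coupling absorbed by smallness of the charge, the specific structural cancellation at trapping, the $r^p$-hierarchy for decay, inversion of the transport relations \eqref{quantities} to recover the Teukolsky quantities, and the projection to $\ell=1$ decoupling the Fackerell-Ipser equation; you also correctly single out the trapping-region absorption as the decisive difficulty, exactly as the paper emphasizes.
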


The proof of the theorem is based on the vectorfield method applied to the generalized Regge-Wheeler system and the generalized Fackerell-Ipser equation. See \cite{Giorgi4} and \cite{Giorgi5}.

\section{Initial data and well-posedness}\label{initial-data-well-posedness-chapter}
In this section, we consider the well-posedness of the system of linearized gravitational and electromagnetic perturbations. 
We first describe how to prescribe initial data in Section \ref{initial-data-seed} and we define what it means for data to be asymptotically flat in Section \ref{asymptotically-flat-initial-data}. Finally we formulate the well-posedness theorem in Section \ref{section-well-posedness}. 

\subsection{Seed data on an initial cone}\label{initial-data-seed}
We describe here how to prescribe initial data for the linearized Einstein-Maxwell equations of Section \ref{all-equations}. 

We present a characteristic initial value problem. We fix a sphere $S_0:=S_{u_0, r_0}$ in $\mathcal{M}$, obtained as intersection of two hypersurfaces for some values $\{r=r_0\}$ and $\{u=u_0\}$. Consider the outgoing Reissner-Nordstr{\"o}m light cone $C_0:=C_{u_0}$ originating from $S_0$, and the ingoing Reissner-Nordstr{\"o}m light cone $\underline{C}_0$ originating from $S_0$ on which the data are being prescribed. Initial data are prescribed by so-called \textit{seed data} that can be prescribed freely.

\begin{definition}\label{initial-data-set} Given a sphere $S_0$  with corresponding null cones $C_{0}$ and $\underline{C}_0$, a smooth seed initial data set consists of prescribing 
\begin{itemize}
\item along $C_0$: a smooth symmetric traceless $2$-tensor $\hat{\slashed{g}}_{0, out}$ and a smooth $1$-form $\bF_0$,
\item along $\underline{C}_0$: a smooth symmetric traceless 2-tensor $\hat{\slashed{g}}_{0, in}$, which coincides with $\hat{\slashed{g}}_{0, in}$ on $S_0$, 
\item along $\underline{C}_0$: smooth $1$-forms $\underline{b}_0$, $\xib_0$, and $\bbF_0$ 
\item along $\underline{C}_0$: smooth functions $\check{\Omegab}_0$, $\check{\omb}_0$, $\vsilin_0$, $\Omegablin_0$, $\omblin_0$
\item on the sphere $S_0$: a smooth 1-form $\ze_0$, 
\item on the sphere $S_0$: smooth functions $\check{\tr_{\gamma} \slashed{g}}_0$, $\check{\ka}_0$, $\check{\kab}_0$, $\check{\rhoF}_0$, $\check{\sigmaF}_0$, $\kalin_0$, $\rhoFlin_0$, $\sigmaFlin_0$.
\end{itemize}
\end{definition}

We will show in Theorem \ref{well-posedness-theorem} that the above freely prescribed tensors uniquely determine a solution to the linear gravitational and electromagnetic perturbation of Reissner-Nordstr\"om. 
\subsection{Asymptotic flatness of initial data}\label{asymptotically-flat-initial-data}

We first define the following derived quantities along $C_0$ from a smooth seed initial data as in Definition \ref{initial-data-set}:
\beaa
\chih_{0, out}&=& \frac 12 \nabb_4\hat{\slashed{g}}_{0, out}, \qquad \a_{0, out}= -\frac 12r^{-2}\nabb_4(r^2 \nabb_4\hat{\slashed{g}}_{0, out})
\eeaa
Note that these quantities are uniquely determined in terms of the seed data. 

For a tensor $\xi$ we define for $n_1\geq 0$, $n_2 \geq 0$:
\beaa
\mathcal{D}_{n_1, n_2} \xi=(r\nabb)^{n_1} (r\nabb_4)^{n_2} \xi 
\eeaa

We define the following notion of asymptotic flatness of initial data.

\begin{definition}\label{def-asympt-flat} We call a seed data set asymptotically flat with weight $s$ to order $n$ if the seed data satisfies the following estimates along $C_{0}$ for some $0<s\leq 1$ and any $n_1\geq 0$, $n_2 \geq 0$ with $n_1+n_2\leq n$:
\bea
|\mathcal{D}_{n_1, n_2}(r^2 \chih_{0, out})|+|\mathcal{D}_{n_1, n_2}(r^{3+s}\a_{0, out})|+|\mathcal{D}_{n_1, n_2}(r^{2+s}\bF_{0})| \leq C_{0, n_1, n_2}
\eea
for some constant $C_{0, n_1, n_2}$ depending on $n_1$ and $n_2$. 
\end{definition}

We will show in Theorem \ref{well-posedness-theorem} that asymptotically flat seed data lead in particular to a hierarchy of decay for all quantities on the initial data.

\subsection{The well-posedness theorem}\label{section-well-posedness}

We can now state the fundamental well-posedness theorem for linear gravitational and electromagnetic perturbations of Reissner-Nordstr{\"o}m.

\begin{theorem}\label{well-posedness-theorem} Fix a sphere $S_0$ and consider a smooth seed initial data set as in Definition \ref{initial-data-set}. Then there exists a unique smooth solution $\mathscr{S}$ of linear gravitational and electromagnetic perturbations around Reissner-Nordstr{\"o}m spacetime defined in $\mathcal{M} \cap I^{+}(S_0)$ which agrees with the seed data on $C_{0}$ and $\underline{C}_0$.

Moreover, suppose the smooth seed initial data set is asymptotically flat with weight $s>0$ to order $n$. Then on the initial cone $C_0$, the following estimates hold:
\beaa
|\mathcal{D}_{k}(r^{3+s}\a) |+|\mathcal{D}_{k}(r^{3+s}\b)|+|\mathcal{D}_{k}(r^{3}\check{\rho})|+|\mathcal{D}_{k}(r^{3}\check{\sigma})|+|\mathcal{D}_{k}(r^{2}\bb)| +|\mathcal{D}_{k}(r\aa) |&\leq& C\\
|\mathcal{D}_{k}(r^{2+s}\bF)|+|\mathcal{D}_{k}(r^{2}\check{\rhoF})|+|\mathcal{D}_{k}(r^{2}\check{\sigmaF})| +|\mathcal{D}_{k}(r\bbF)|&\leq& C\\
|\mathcal{D}_{k}(r^2 \chih)|+|\mathcal{D}_{k}(r \chibh)| +|\mathcal{D}_{k}(r^2 \ze)|+|\mathcal{D}_{k}(r \eta)|+|\mathcal{D}_{k}(r \xib)|+|\mathcal{D}_{k}(r^{2}\check{\ka})|+|\mathcal{D}_{k}(r\check{\kab})| &\leq& C
\eeaa
for any $k \leq n-3$ and a constant which can be computed explicitly from initial data. 
\end{theorem}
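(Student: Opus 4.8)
The plan is to prove Theorem \ref{well-posedness-theorem} in two parts: first the existence and uniqueness of a smooth solution, then the decay hierarchy on the initial cone $C_0$ which follows from asymptotic flatness of the seed data.

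\textbf{Step 1: Existence and uniqueness.} The system \eqref{nabb-4-g}--\eqref{nabb-4-check-sigma-ze-eta} is a coupled system of transport equations along the null directions $e_4 = \partial_r$ and $e_3 = 2\partial_u + \Omegab\partial_r$, together with elliptic (Hodge) constraints on spheres such as \eqref{nabb-check-vsi}, \eqref{xib-Omegab-ze-eta}, \eqref{Gauss-check}, \eqref{Codazzi-chib-ze-eta}. The strategy I would follow is the standard one for such first-order characteristic systems: identify a minimal set of ``evolutionary'' quantities that satisfy $\nabb_4$-transport equations with the remaining unknowns as sources, solve these ODEs in $r$ along $C_0$ starting from the seed data prescribed on $S_0$ and $C_0$, thereby recovering the full solution on $C_0$; then use the $\nabb_3$ (and $\nabb_4$) transport equations to propagate into $\mathcal{M} \cap I^+(S_0)$ by integration in $u$ along ingoing cones emanating from $C_0$, with the data on $\underline{C}_0$ providing the needed information on the initial ingoing cone. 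At each stage one checks that the quantities produced by integration verify the remaining equations of the system (the elliptic constraints and the ``redundant'' transport equations), using the Bianchi identities and the structure equations — this is a propagation-of-constraints argument: the constraints satisfy their own homogeneous transport equations, so if they hold on $S_0$ (which is arranged by the choice of which data are prescribed on $S_0$ versus freely on the cones) they hold everywhere. Uniqueness follows from linearity and the uniqueness of solutions to linear transport ODEs along the characteristics. I would present this as a proposition-by-proposition reduction, relegating the bookkeeping to the Appendix, as the paper does elsewhere.

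\textbf{Step 2: The decay hierarchy on $C_0$.} Along $C_0 = \{u = u_0\}$ we have $\nabb_4 = \pr_r$ modulo the lower-order terms in \eqref{Liederivative-cov}--\eqref{Lie-covariant-2}, and $\ka = 2/r$, $\kab = 2\Omegab/r$ with $\Omegab \to -1$ as $r \to \infty$. Starting from the asymptotic-flatness bounds on $\hat{\slashed{g}}_{0,out}$, $\bF_0$, and hence on $\chih_{0,out}$ and $\a_{0,out}$ (as defined just before Definition \ref{def-asympt-flat}), I would propagate the weighted bounds through the system by reading each transport equation as $\pr_r$ of an appropriately $r$-weighted quantity equal to a source that is integrable in $r$ with the claimed weight. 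For instance, $\nabb_4\b + 2\ka\b = \divv\a + \rhoF\nabb_4\bF$ from \eqref{nabb-4-b-ze-eta} becomes $\pr_r(r^4\b) = r^4(\divv\a + \rhoF\nabb_4\bF) + \text{l.o.t.}$, and since $\a$ decays like $r^{-3-s}$ the right side is integrable, giving $|\b| \lesssim r^{-3-s}$; then \eqref{nabb-4-check-rho-ze-eta} and \eqref{nabb-4-check-sigma-ze-eta} give $\check\rho, \check\sigma \lesssim r^{-3}$; \eqref{nabb-4-bbF-ze-eta} gives $\bbF \lesssim r^{-1}$; \eqref{nabb-4-bb-ze-eta} gives $\bb \lesssim r^{-2}$; \eqref{nabb-4-aa-ze-eta} gives $\aa \lesssim r^{-1}$; the Codazzi relation \eqref{Codazzi-chi-ze-eta} and \eqref{nabb-4-ze-ze-eta} give $\ze \lesssim r^{-2}$, and so on through $\eta$, $\xib$, $\check\ka$, $\check\kab$, $\chibh$. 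The commutation relations \eqref{commutator-rnabla}--\eqref{commutator-nabb-4-DDs} ensure that applying the operators $\mathcal{D}_k = (r\nabb)^{n_1}(r\nabb_4)^{n_2}$ commutes cleanly with the $\nabb_4$-transport structure up to the same weights, so the estimates for $k \leq n-3$ follow by induction on the number of derivatives, the loss of $3$ derivatives coming from the fact that $\a_{0,out}$ is two $\nabb_4$-derivatives of $\hat{\slashed{g}}_{0,out}$ and one extra is lost in the least favorable transport step.

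\textbf{Main obstacle.} The hard part will be Step 1: correctly identifying the hierarchy in which the unknowns are to be solved so that the elliptic constraints and the redundant equations are automatically propagated rather than over-determining the system. One must verify that the set of quantities prescribed on $S_0$ (rather than on the full cones) is exactly what is needed so that the Hodge-elliptic constraints \eqref{nabb-check-vsi}, \eqref{xib-Omegab-ze-eta}, \eqref{sigma-curl-ze-eta}, \eqref{curl-ze-eta}, \eqref{Gauss-check}, the Codazzi equations \eqref{Codazzi-chib-ze-eta}--\eqref{Codazzi-chi-ze-eta}, and the Gauss equations \eqref{Gauss-lin}, \eqref{check-K-tr} are consistent — i.e., that the constraint quantities satisfy homogeneous $\nabb_4$- and $\nabb_3$-transport equations, which is a consequence of the second Bianchi identities and the commuted structure equations. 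This commutation/propagation-of-constraints computation is delicate because of the coupling between gravitational and electromagnetic sectors (every $\b$-equation carries a $\rhoF\bF$ term, every $\rho$-equation a $\rhoF\check\rhoF$ term, etc.), so the usual Schwarzschild bookkeeping must be redone keeping the Maxwell unknowns. Once the consistency is established, the decay hierarchy of Step 2 is essentially a sequence of weighted Grönwall/integration estimates and presents no conceptual difficulty.
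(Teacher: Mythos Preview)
The paper does not actually provide a proof of Theorem~\ref{well-posedness-theorem}: it is stated without proof in Section~\ref{section-well-posedness}, treated as a standard well-posedness result for linear characteristic initial value problems (cf.\ the analogous treatment in \cite{DHR}). Your outlined approach---solving the transport hierarchy along $C_0$ from the seed data, propagating constraints via homogeneous transport equations derived from the Bianchi and structure identities, and then establishing the $r$-weighted bounds by sequential integration of the $\nabb_4$-equations---is precisely the standard argument one would give, and nothing in your sketch is incorrect. Since there is no proof in the paper to compare against, there is no meaningful divergence to discuss; your proposal simply fills in what the paper leaves implicit.
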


\section{Gauge-normalized solutions and identification of the Kerr-Newman parameters}\label{gauge-normalized-solutions-chapter}

In this section we define the gauge normalizations that will play a fundamental role in the proof of linear stability. We also identify the correct Kerr-Newman parameters of a solution to the linearized Einstein-Maxwell equations. 

We first define in Section \ref{definition-initial-normalization-section} what it means for a solution $\mathscr{S}$ to be initial data normalized. Such a solution will be used to prove that any solution is bounded, upon a choice of a gauge solution which can be expressed in terms of initial data. Moreover, this choice of gauge implies decay for most components of the solution, but it leads to an incomplete result in terms of decay for some of the components. To overcome this difficulty, in Section \ref{definition-normalization-section} we define what it means for a solution $\mathscr{S}$ to be $S_{U, R}$-normalized. Such normalization takes place at a sphere ``far-away'' in the spacetime, and is reminiscent of the choice of gauge at the last slice in \cite{stabilitySchwarzschild}. This new normalization will be used in the next chapter to obtain the complete optimal decay for all the components of a solution $\mathscr{S}$. 

We then show in Section \ref{achieving-initial-section} and Section \ref{achieving-normalization-section} that given a solution $\mathscr{S}$ of the linearized Einstein-Maxwell equations we can indeed associate to it an initial data normalized solution, and if $\mathscr{S}$ is bounded we can associate a $S_{U, R}$-normalized solution. They are respectively denoted $\mathscr{S}^{id}$ and $\mathscr{S}^{U, R}$, and are realized by adding to $\mathscr{S}$ a pure gauge solution $\mathscr{G}$ of the form described in Section \ref{sec:ssgauge}. 

The pure gauge solution used to obtain the initial data normalization is explicitly computable from initial data. On the other hand, the one used to obtain the $S_{U, R}$-normalization is not. Only in the proof of the decay in the next chapter, in Section \ref{decay-pure-gauge}, we will show that the pure gauge solution used to achieve the $S_{U, R}$-normalization is itself bounded by initial data. 

Finally, in Section \ref{Kerr-Newman-parameters-section} we identify the Kerr-Newman parameters out of the projection of the solution to the $\ell=0,1$ modes of the initial data normalized solution. Those parameters are explicitly computable from the initial data.

\subsection{The initial data normalization}\label{definition-initial-normalization-section}
In this section, we define the notion of initial data normalized solution. As we will show in Theorem \ref{achieving-initial-normalization-theorem}, given a seed initial data set and its associated solution $\mathscr{S}$, we can find a pure gauge solution $\mathscr{G}^{id}$ such that the initial data for $\mathscr{S}-\mathscr{G}^{id}$ satisfies all these conditions.

\begin{definition}\label{definition-initial-data-normalization} Consider a seed data set as in Definition \ref{initial-data-set} and let $\mathscr{S}$ be the resulting solution given by Theorem \ref{well-posedness-theorem}. We say that $\mathscr{S}$ is {\bf{initial data normalized}} if 
\begin{itemize}
\item the following conditions hold along the null hypersurface $\underline{C}_0$:
\begin{enumerate}
\item For the projection to the $\ell=0$ spherical harmonics:
\bea
\kalin&=&0 \label{kalin-initial-data} \\
\kablin&=& \ka \Omegablin \label{Omegablin-initial-data}\\
\vsilin&=&0 \label{vsilin-initial-data} 
\eea
\item For the projection to the $\ell=1$ spherical harmonics: 
\bea
\check{\ka}_{\ell=1}&=& 0 \label{condition-check-ka-l1=initial=data}\\
\divv \bF_{\ell=1}&=& 0 \label{condition-divv-bF-l1=initial=data}\\
\divv \bbF_{\ell=1}&=& 0 \label{condition-divv-bbF-l1=initial=data}
\eea
\item For the projections to the $\ell\geq1$ spherical harmonics:
\bea
\underline{b}^A&=& \frac 1 3 r^3 \ep^{AB} \nabb_B \left(2\check{\sigma}_{\ell=1}+\rhoF \check{\sigmaF}_{\ell=1} \right) \label{underline-b-initial-data}
\eea
\item For the projection to the $\ell\geq 2$ spherical harmonics:
\bea
\chih&=&0 \label{condition-chih-initial-data}\\
\chibh&=&0 \label{condition-chibh-initial-data}\\
\DDs_2\DDs_1(\check{\ka}, 0)&=& 0 \label{condition-check-ka-l2=initial=data}
\eea
\end{enumerate}
\item the following conditions hold on the sphere $S_0$:
\bea
\check{\tr_{\gamma} \slashed{g}}_{\ell=1}&=& 0 \label{tr-slashed-g-l1-initial-data}\\
\hat{\slashed{g}}&=& 0 \label{condition-hat-slashed-g}
\eea
\end{itemize}
We denote such solutions by $\mathscr{S}^{id}$. 
\end{definition}

We note that the above conditions can all be written explicitly in terms of the seed data. 
We also immediately note by straightforward computation:

\begin{proposition}\label{kerr-newman-are-initial-normalized} The linearized Kerr-Newman solutions $\mathscr{K}$ of Definition \ref{def:kerrl} are initial data normalized. 
\end{proposition}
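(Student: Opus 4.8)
The claim is that each linearized Kerr-Newman solution $\mathscr{K}_{(\mathfrak{M}, \mathfrak{Q}, \mathfrak{b}, \mathfrak{a})}$, as assembled in Definition~\ref{def:kerrl} from Proposition~\ref{lem:linss} and Proposition~\ref{lem:explicitkerr}, already satisfies all the gauge conditions \eqref{kalin-initial-data}--\eqref{condition-hat-slashed-g} of Definition~\ref{definition-initial-data-normalization} without the need to add any pure gauge solution. The plan is simply to read off the explicit components of $\mathscr{K}$ given in Propositions~\ref{lem:linss} and~\ref{lem:explicitkerr} and check each of the enumerated conditions by inspection. By linearity it suffices to check the $\ell=0$ part (the solution of Proposition~\ref{lem:linss}) and the $\ell=1$ part (the solution of Proposition~\ref{lem:explicitkerr}) separately, since the former has all components supported in $\ell=0$ and the latter supported in $\ell=1$.

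\textbf{Key steps, $\ell=0$ part.} For the solution of Proposition~\ref{lem:linss}, the only non-vanishing quantities are $\rlin$, $\rhoFlin$, $\sigmaFlin$, $\kablin$, $\omblin$, $\Omegablin$, and one reads off directly that $\kalin=0$ (so \eqref{kalin-initial-data} holds), that $\vsilin=0$ (so \eqref{vsilin-initial-data} holds), and that $\kablin = \frac{4\mathfrak{M}}{r^2} - \frac{4Q\mathfrak{Q}}{r^3}$ while $\ka\,\Omegablin = \frac 2 r\left(\frac{2\mathfrak{M}}{r} - \frac{2Q\mathfrak{Q}}{r^2}\right) = \frac{4\mathfrak{M}}{r^2} - \frac{4Q\mathfrak{Q}}{r^3}$, so \eqref{Omegablin-initial-data} holds. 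All the $\ell\geq1$ conditions \eqref{condition-check-ka-l1=initial=data}--\eqref{condition-hat-slashed-g} are then automatic because every relevant component ($\check\ka$, $\bF$, $\bbF$, $\underline b$, $\check\sigma$, $\check\sigmaF$, $\chih$, $\chibh$, $\check{\tr_\gamma\slashed g}$, $\hat{\slashed g}$) vanishes identically for this spherically symmetric solution.

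\textbf{Key steps, $\ell=1$ part.} For the solution of Proposition~\ref{lem:explicitkerr}, all $\ell=0$ quantities vanish, so \eqref{kalin-initial-data}--\eqref{vsilin-initial-data} hold trivially. For the $\ell=1$ conditions: $\check\ka=0$ for this solution, giving \eqref{condition-check-ka-l1=initial=data} and also \eqref{condition-check-ka-l2=initial=data}; the electromagnetic components $\bF$ and $\bbF$ are of the form $(\text{scalar})\cdot\epsilon^{AB}\partial_B\dot Y^{\ell=1}_m$, hence divergence-free, giving \eqref{condition-divv-bF-l1=initial=data}--\eqref{condition-divv-bbF-l1=initial=data}; $\chih=\chibh=0$ and $\hat{\slashed g}=0$, giving \eqref{condition-chih-initial-data}, \eqref{condition-chibh-initial-data}, \eqref{condition-hat-slashed-g}; and $\check{\tr_\gamma\slashed g}=0$, giving \eqref{tr-slashed-g-l1-initial-data}. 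The one condition requiring an actual computation is \eqref{underline-b-initial-data}: one must verify that the $\underline b$ of Proposition~\ref{lem:explicitkerr},
\[
\underline b = \left(-\frac{8M}{r}+\frac{4Q^2}{r^2}\right)\mathfrak{a}\,\epsilon^{AB}\partial_B \dot Y^{\ell=1}_m,
\]
equals $\frac 1 3 r^3 \epsilon^{AB}\nabb_B\big(2\check\sigma_{\ell=1}+\rhoF\check\sigmaF_{\ell=1}\big)$. Plugging in $\check\sigma = \left(-\frac{12M}{r^4}+\frac{8Q^2}{r^5}\right)\mathfrak{a}\,\dot Y^{\ell=1}_m$, $\rhoF = Q/r^2$, $\check\sigmaF = -\frac{4Q}{r^3}\mathfrak{a}\,\dot Y^{\ell=1}_m$, one gets $2\check\sigma+\rhoF\check\sigmaF = \left(-\frac{24M}{r^4}+\frac{16Q^2}{r^5}-\frac{4Q^2}{r^5}\right)\mathfrak{a}\,\dot Y^{\ell=1}_m = \left(-\frac{24M}{r^4}+\frac{12Q^2}{r^5}\right)\mathfrak{a}\,\dot Y^{\ell=1}_m$, and multiplying by $\frac 1 3 r^3$ yields $\left(-\frac{8M}{r}+\frac{4Q^2}{r^2}\right)\mathfrak{a}\,\dot Y^{\ell=1}_m$, matching the scalar coefficient of $\underline b$ (note $\nabb_B$ acting on these $r$-dependent scalars reduces to $\partial_B$ on the spherical part since the $r$-factor is a function of $r$ only). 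This is the only step with any content, and it is a short algebraic identity; the rest is pure inspection.
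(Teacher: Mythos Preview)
Your proof is correct and follows essentially the same approach as the paper: both verify the conditions of Definition~\ref{definition-initial-data-normalization} by direct inspection of the explicit components in Propositions~\ref{lem:linss} and~\ref{lem:explicitkerr}, with the only nontrivial checks being the identity $\kablin=\ka\,\Omegablin$ for the $\ell=0$ part and the algebraic verification of \eqref{underline-b-initial-data} for the $\ell=1$ part. Your write-up is slightly more detailed than the paper's, but the content is the same.
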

\begin{proof} Conditions \eqref{kalin-initial-data}, \eqref{Omegablin-initial-data}, \eqref{vsilin-initial-data} are verified for the linearized Kerr-Newman solutions supported in $\ell=0$ spherical harmonics. For example
\beaa
\kablin&=& \left(\frac{4\frak{M}}{r^2}-\frac{4Q\frak{Q}}{r^3}\right)=\frac 2 r \left(\frac{2\frak{M}}{r}-\frac{2Q\frak{Q}}{r^2}\right)=\ka\Omegablin
\eeaa
Condition \eqref{underline-b-initial-data} is verified for the linearized Kerr-Newman solutions supported in $\ell=1$ spherical harmonics. Indeed, according to Proposition \ref{lem:explicitkerr}
\beaa
\underline{b}^A&=&\left(-\frac{8M}{r}+\frac{4Q^2}{r^2} \right) \mathfrak{a}\slashed{\epsilon}^{AB} \partial_B Y^{\ell=1}_m= \frac 13 r^3 \slashed{\ep}^{AB} \pr_B \left(2\check{\sigma}_{\ell=1}+\rhoF \check{\sigmaF}_{\ell=1} \right)
\eeaa
The remaining conditions are trivially verified by any linearized Kerr-Newman solution $\mathscr{K}$. 
\end{proof}

\subsection{The $S_{U, R}$-normalization}\label{definition-normalization-section}

In this section we define another normalization for a linear perturbation of Reissner-Nordstr\"om $\mathscr{S}$. The need for a normalization different than the initial data one will become clear in Section \ref{decay-from-initial-data}, when the decay of the components of an initial data normalized solution will result incomplete (see Remark \ref{result-incomplete}). 

In order to obtain a complete and optimal decay for all the components of the solution, we will need to pick gauge conditions ``far away" in the spacetime. In particular, we construct a gauge solution starting from a sphere $S_{U,R}$ for big $U$ and big $R$. 

We describe here the $S_{U, R}$-normalization, and show in Theorem \ref{achieving-normalization-theorem} that for any given solution $\mathscr{S}$ which is bounded in the past of $S_{U, R}$ we can find a pure gauge solution $\mathscr{G}^{U, R}$ such that $\mathscr{S}-\mathscr{G}^{U, R}$ satisfies all these conditions.
\begin{figure}[h]
\centering
\includegraphics[width=7cm]{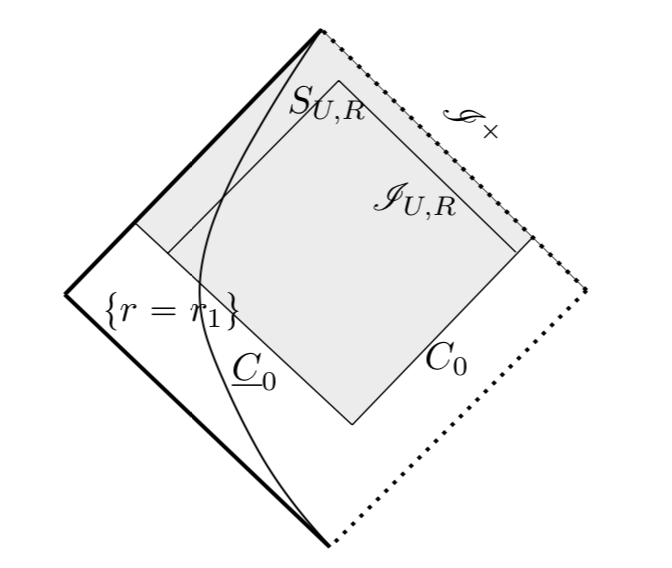}
\caption{Penrose diagram of Reissner-Nordstr{\"o}m spacetime with the sphere $S_{U, R}$ and the null hypersurface $\mathscr{I}_{U, R}$}
\end{figure}

Consider Reissner-Nordstr\"om spacetime with Bondi coordinates $(u, r, \th, \phi)$. Let $S_{U, R}$ be the sphere obtained as the intersection of the hypersurfaces $\{u=U\}$ and $\{r=R\}$ for some $U$ and $R$ such that $R \gg r_0$, where $r_0$ is the radius of $S_0$, and $U \gg u_0$. Denote $\mathscr{I}_{U, R}$ the null hypersurface obtained as the ingoing past of $S_{U, R}$.

The characterization of this gauge normalization is related to two new quantities that we define here.
We define the \textit{charge aspect function} of a solution $\mathscr{S}$ as the scalar function obtained in the following way:
\bea\label{definition-of-nu}
\check{\nu}=r^4 \left(\divv \ze+ 2\rhoF \check{\rhoF} \right)
\eea
The above definition has a similar structure than the mass aspect function in vacuum spacetimes, but only depends on the charge of the spacetime (encoded in $\check{\rhoF}$). This quantity plays a fundamental role in the derivation of the decay for the projection to the $\ell=1$ spherical harmonics, which the electromagnetic tensor is responsible for. 

We also define the \textit{mass-charge aspect function} of a solution $\mathscr{S}$ as the scalar function obtained in the following way:
\bea\label{definition-of-mu}
\check{\mu}=r^3 \left(\divv \ze+ \check{\rho}-4\rhoF \check{\rhoF} \right)-2r^4\rhoF \divv\bF
\eea
Notice that the above definition reduces to the mass aspect function in the absence of an electromagnetic tensor, i.e. $\rhoF=\bF=0$. In the case of an electrovacuum spacetime, the function $\check{\mu}$ depends on both the mass (encoded into $\check{\rho}$) and the charge (encoded into $\check{\rhoF}$).  This generalization plays a fundamental role in the derivation of decay in Section \ref{mega-section-optimal-decay-u}.

Both quantities, $\check{\nu}$ and $\check{\mu}$ verify well-behaved transport equations in the $e_4$ direction, which is the main reason why they are crucial in the derivation of decay. The derivation of the equations is obtained in Section \ref{section-transport-eqs-appendix} of the Appendix.

We can now define the notion of $S_{U, R}$-normalization. 

\begin{definition}\label{definition-SUR-normalized} Consider a seed data set as in Definition \ref{initial-data-set} and let $\mathscr{S}$ be the resulting solution given by Theorem \ref{well-posedness-theorem}. Suppose that $\mathscr{S}$ is bounded at $S_{U, R}$ for some $U$ and $R$.  We say that $\mathscr{S}$ is {\bf{$S_{U, R}$-normalized}} if
\begin{itemize} 
\item the following conditions hold along the null hypersurface $\mathscr{I}_{U, R}$:
\begin{enumerate}
\item For the projection to the $\ell=1$ spherical harmonics:
\bea
\check{\ka}_{\ell=1}&=& 0 \label{condition-check-ka-l1-last-slice}\\
\check{\kab}_{\ell=1}&=& 0 \label{condition-check-kab-l1-last-slice}\\
\check{\nu}_{\ell=1}&=& 0 \label{condition-check-nu-l1-last-slice}\\
\divv\underline{b}_{\ell=1}&=& 0 \label{condition-divv-underline-b-l1-last-slice}
\eea
where $\check{\nu}$ is the charge aspect function, as defined in \eqref{definition-of-nu}.
\item For the projection to the $\ell\geq 2$ spherical harmonics:
\bea
\DDs_2\DDs_1(\check{\ka}, 0)&=& 0 \label{condition-check-ka-definition}\label{condition-check-ka-l2}\\
\DDs_2\DDs_1(\check{\kab}, 0)&=& 0 \label{condition-check-kab-definition}\label{condition-check-kab-l2}\\
\DDs_2\DDs_1(\check{\mu}, 0)&=& 0 \label{condition-check-mu-definition}\label{condition-nu} \\
\DDs_2\underline{b}&=& 0 \label{condition-underline-b-last-slice}
\eea
where $\check{\mu}$ is the mass-charge aspect function, as defined in \eqref{definition-of-mu}.
\end{enumerate}
\item the following conditions hold on the sphere $S_{U, R}$:
\bea
\check{\tr_{\gamma} \slashed{g}}_{\ell=1}&=& 0 \label{tr-slashed-g-l1-last-slice}\\
\hat{\slashed{g}}&=& 0 \label{condition-hat-slashed-g-last-slice}
\eea
\end{itemize}
We denote such solutions by $\mathscr{S}^{U, R}$.
\end{definition}

We immediately note the following.

\begin{proposition}\label{kerr-newman-are-normalized} The linearized Kerr-Newman solutions $\mathscr{K}$ of Definition \ref{def:kerrl} are $S_{U, R}$-normalized for every $U$ and $R$. 
\end{proposition}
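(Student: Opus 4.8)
The plan is to verify the defining conditions of Definition \ref{definition-SUR-normalized} directly, exactly as in the proof of Proposition \ref{kerr-newman-are-initial-normalized}, which is why the text can note the statement ``immediately''. By linearity of the system it suffices to check the conditions separately on the two summands making up a linearized Kerr-Newman solution $\mathscr{K}_{(\mathfrak{M},\mathfrak{Q},\mathfrak{b},\mathfrak{a})}$: the spherically symmetric family of Proposition \ref{lem:linss} and the $\ell=1$ family of Proposition \ref{lem:explicitkerr}. Both are explicit and decay in $r$, so they are in particular bounded at $S_{U,R}$, and all the quantities that enter the verification vanish identically on $\mathcal{M}$ in the relevant cases, hence in particular along $\mathscr{I}_{U,R}$ and on $S_{U,R}$; so nothing depends on the choice of $U$ and $R$.

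First I would dispatch the $\ell=0$ family of Proposition \ref{lem:linss}. Its only non-vanishing components are $\rlin,\rhoFlin,\sigmaFlin,\kablin,\omblin,\Omegablin$, all supported in $\ell=0$. In particular $\check{\ka}=\check{\kab}=\check{\rho}=\check{\rhoF}=0$, $\hat{\slashed{g}}=\check{\tr_{\gamma}\slashed{g}}=0$, $\underline{b}=0$ and $\ze=0$; substituting these into \eqref{definition-of-nu} and \eqref{definition-of-mu} gives $\check{\nu}=\check{\mu}=0$ as well. Hence all of \eqref{condition-check-ka-l1-last-slice}--\eqref{condition-underline-b-last-slice} and \eqref{tr-slashed-g-l1-last-slice}--\eqref{condition-hat-slashed-g-last-slice} hold trivially.

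Next I would treat the $\ell=1$ family of Proposition \ref{lem:explicitkerr}. Again $\hat{\slashed{g}}=0$ and $\check{\tr_{\gamma}\slashed{g}}=0$, the only non-vanishing metric coefficient being $\underline{b}$, and $\check{\ka}=\check{\kab}=0$, so the conditions involving $\check{\ka}$, $\check{\kab}$ and $\slashed{g}$ are immediate. The only point requiring a line of computation is that the one-forms entering the gauge conditions --- namely $\underline{b}$, $\ze$ and $\bF$ --- are all of the form $w(r)\,\epsilon^{AB}\partial_B\dot{Y}^{\ell=1}_m$, i.e. pure ``curl'' one-forms supported in $\ell=1$. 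Such a one-form is divergence-free, since $\divv(\epsilon^{AB}\nabb_B f)=\epsilon^{AB}\nabb_A\nabb_B f=0$ by symmetry of $\nabb\nabb f$ for a scalar $f$ on the sphere; this gives \eqref{condition-divv-underline-b-l1-last-slice} directly and forces $\divv\ze=\divv\bF=0$. Moreover an $\ell=1$ one-form lies in the kernel of $\DDs_2$ by Lemma \ref{lemma-kernel-DDs2}, which gives \eqref{condition-underline-b-last-slice}. Since in addition $\check{\rho}=\check{\rhoF}=0$ for this solution, formulas \eqref{definition-of-nu} and \eqref{definition-of-mu} collapse to $\check{\nu}=r^4\divv\ze=0$ and $\check{\mu}=r^3\divv\ze-2r^4\rhoF\divv\bF=0$, yielding \eqref{condition-check-nu-l1-last-slice} and \eqref{condition-check-mu-definition}. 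Adding the $\ell=0$ and $\ell=1$ contributions back by linearity completes the verification.

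I do not expect a genuine obstacle: the entire content is that the linearized Kerr-Newman solutions are stationary and symmetric enough that every ``checked'' scalar entering the $S_{U,R}$-conditions vanishes and every one-form entering them is a pure curl. The only step deserving a moment's care is confirming that the genuinely non-trivial objects $\check{\nu}$ and $\check{\mu}$ --- which a priori involve $\divv\ze$, $\divv\bF$, $\check{\rho}$ and $\check{\rhoF}$ --- receive no contribution from the $\ell=1$ family, and this is precisely the curl-structure observation above combined with the explicit list of non-vanishing components in Proposition \ref{lem:explicitkerr}.
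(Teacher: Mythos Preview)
Your proposal is correct and is precisely the direct verification the paper has in mind: the paper gives no explicit proof beyond ``We immediately note the following,'' and your argument simply unwinds that remark by checking each condition of Definition \ref{definition-SUR-normalized} on the two explicit families, exactly parallel to the short proof of Proposition \ref{kerr-newman-are-initial-normalized}. The only non-tautological input, that the $\ell=1$ one-forms $\underline{b},\ze,\bF$ are divergence-free and annihilated by $\DDs_2$, is handled correctly via the curl structure and Lemma \ref{lemma-kernel-DDs2}.
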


\subsection{Achieving the initial data normalization for a general $\mathscr{S}$}\label{achieving-initial-section}
In this section, we prove the existence of a pure gauge solution $\mathscr{G}^{id}$ such that upon subtracting this to a given $\mathscr{S}$ arising from smooth seed data, the resulting solution is generated by data satisfying all conditions of Definition \ref{definition-initial-data-normalization}.

\begin{theorem}\label{achieving-initial-normalization-theorem} Consider a seed data set as in Definition \ref{initial-data-set} and let $\mathscr{S}$ be the resulting solution given by Theorem \ref{well-posedness-theorem}.
Then there exists a pure gauge solution $\mathscr{G}^{id}$, explicitly computable from the seed data of $\mathscr{S}$, such that 
$$\mathscr{S}^{id}:= \mathscr{S}-\mathscr{G}^{id}$$
is initial data normalized. The pure gauge solution $\mathscr{G}^{id}$ is unique and arises itself from the seed data.
\end{theorem}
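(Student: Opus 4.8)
The plan is to construct the pure gauge solution $\mathscr{G}^{id}$ explicitly by successively solving the transport equations along $\underline{C}_0$ and elliptic equations on $S_0$ that its generating functions $I=(h, \underline{h}, a, c, d, l, q_1, q_2)$ must satisfy in order to cancel each of the quantities appearing in Definition \ref{definition-initial-data-normalization}. Since by linearity $\mathscr{S}^{id}=\mathscr{S}-\mathscr{G}^{id}$ is again a linear gravitational and electromagnetic perturbation, and since by Lemma \ref{pure-gauge-solution} and Lemma \ref{pure-gauge-2} the components of $\mathscr{G}^{id}$ are expressed in terms of $(h,\underline{h},a,c,d,l,q_1,q_2)$ and their $\nabb_3$-derivatives (with $\nabb_4$-derivatives forced to vanish by the Bondi conditions \eqref{transport-a}--\eqref{transport-fb}, \eqref{nabb-4-c}--\eqref{nabb-4-l}, \eqref{nabb-4-q1}--\eqref{nabb-4-q2}), it suffices to prescribe the restrictions of these functions on $S_0$ and propagate them along $\underline{C}_0$ by the transport equations; they are then automatically propagated to all of $\mathcal{M}\cap I^+(S_0)$ by the $\nabb_4$-equations, so $\mathscr{G}^{id}$ is uniquely determined and explicitly computable from the seed data. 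I would handle the $\ell=0$, $\ell=1$, and $\ell\geq 2$ spherical harmonic sectors separately, since the pure gauge freedom is organized by these projections (cf. Lemma \ref{average-check-spherical}).

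I would proceed in the following order. First, the $\ell=0$ sector: the functions $c,d,l$ (supported in $\ell=0$) are used to arrange \eqref{kalin-initial-data}, \eqref{Omegablin-initial-data}, \eqref{vsilin-initial-data} along $\underline{C}_0$. From Lemma \ref{pure-gauge-solution}, $\mathscr{G}^{id}$ contributes $\kalin=\ka c$, $\vsilin=d$, $\Omegablin=l$, $\kablin=-\kab c$, $\omblin=\tfrac12\nabb_3 c-\omb c$; so choosing $c$ on $S_0$ and propagating by $\nabb_4 c=0$ (which on $\underline{C}_0$ becomes an ODE in the $\nabb_3$-direction through the required consistency with \eqref{nabb-3-kalin-ze-eta}) kills $\kalin$; then $d$ kills $\vsilin$; then solving the transport equation \eqref{nabb-4-l} for $l$ with appropriate data on $S_0$ (chosen so that $\kablin-\ka\Omegablin$ of $\mathscr{S}^{id}$ vanishes on $S_0$ and then is propagated to vanish on all of $\underline{C}_0$ using the linearized equations \eqref{nabb-3-kablin-ze-eta}, \eqref{nabb-4-Omegablin} modulo the choice of $c$) achieves \eqref{Omegablin-initial-data}. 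Second, the $\ell\geq 2$ sector: the functions $h,\underline{h},a$ supported in $\ell\geq 2$ are used to set $\chih=\chibh=0$ and $\DDs_2\DDs_1(\check\ka,0)=0$; since $\mathscr{G}^{id}$ contributes $\chih=-\DDs_2\DDs_1(h,0)$, $\chibh=-\DDs_2\DDs_1(\underline{h},0)$, one solves on $S_0$ the elliptic equations $\DDs_2\DDs_1(h,0)=\chih|_{S_0}$, $\DDs_2\DDs_1(\underline{h},0)=\chibh|_{S_0}$ (invertible modulo the finite-dimensional kernel of Lemma \ref{lemma-kernel-DDs2}, which is precisely $\ell\leq 1$, so there is no obstruction in $\ell\geq 2$), then propagate $h$ by $\nabb_4 h=0$, and similarly $\underline{h}$ by the transport equation \eqref{transport-hb} which involves $a$; the function $a$ (together with the residual freedom in $\underline{h}$) is then fixed by the requirement $\DDs_2\DDs_1(\check\ka,0)=0$ on $\underline{C}_0$ using the formula for $\check\ka$ of the pure gauge solution and the evolution equation \eqref{nabb-3-check-ka-ze-eta}. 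Third, the $\ell=1$ sector: here $h,\underline{h},a$ (projected to $\ell=1$) and $q_1$ are used; conditions \eqref{condition-check-ka-l1=initial=data}, \eqref{condition-divv-bF-l1=initial=data}, \eqref{condition-divv-bbF-l1=initial=data}, \eqref{underline-b-initial-data}, \eqref{tr-slashed-g-l1-initial-data}, \eqref{condition-hat-slashed-g} are arranged by solving the corresponding scalar transport/elliptic equations on $S_0$ and $\underline{C}_0$, exploiting that $\divv\bF_{\ell=1}$, $\divv\bbF_{\ell=1}$ are moved by the $\ell=1$ parts of $h,\underline{h}$ via $\bF=\rhoF\DDs_1(h,0)$, $\bbF=-\rhoF\DDs_1(\underline{h},0)$. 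Finally, the remaining conditions \eqref{condition-hat-slashed-g} and \eqref{tr-slashed-g-l1-initial-data} on the sphere $S_0$ are consumed by $q_2$ and by the residual constants in $h$, using Lemma \ref{pure-gauge-2} which gives $\hat{\slashed{g}}=2r^2\DDs_2\DDs_1(q_1,q_2)$ and $\check{\tr_\gamma\slashed{g}}=-2r^4\DDd_1\DDs_1(q_1,0)$.

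At each stage I must verify two things: (i) the equation being solved is actually solvable (this uses the elliptic estimates of Proposition \ref{L^2-estimates} and the kernel characterization of Lemma \ref{lemma-kernel-DDs2} — in the $\ell\geq 2$ sector the operators $\DDs_2\DDs_1$ are injective, and in $\ell=1$ one works with scalar functions on the round sphere), and (ii) once a normalization condition is achieved on $S_0$, it propagates along $\underline{C}_0$ — this requires checking that the linearized equations of Section \ref{all-equations} together with the already-imposed conditions and the transport equations for $h,\underline h,a,c,d,l,q_1,q_2$ form a closed consistent system, so that e.g. $\check\ka_{\ell=1}=0$ on $S_0$ forces $\check\ka_{\ell=1}=0$ on all of $\underline{C}_0$. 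Uniqueness of $\mathscr{G}^{id}$ follows because, once the spherical-harmonic sectors are treated, the generating functions are determined on $S_0$ up to the kernel of the relevant elliptic operators, and this residual kernel corresponds exactly to trivial pure gauge solutions (the $\ell\leq 1$ modes of $q_2$, and the constant/linear modes that generate zero solutions, as flagged in the footnotes to Lemma \ref{pure-gauge-solution} and Lemma \ref{pure-gauge-2}), hence $\mathscr{G}^{id}$ as a solution is unique. The main obstacle I anticipate is the bookkeeping of the $\ell=1$ sector: there the gauge freedom is genuinely tight because several quantities ($\check\ka_{\ell=1}$, $\divv\bF_{\ell=1}$, $\divv\bbF_{\ell=1}$, $\divv\underline{b}_{\ell=1}$, $\check{\tr_\gamma\slashed g}_{\ell=1}$) must be simultaneously normalized using a limited number of scalar functions ($h_{\ell=1}$, $\underline{h}_{\ell=1}$, $a_{\ell=1}$, $q_1{}_{\ell=1}$, $l_{\ell=1}$ is $\ell=0$ only), and one must check carefully that the system of conditions is not overdetermined — this is where the precise choice \eqref{underline-b-initial-data} coupling $\underline{b}$ to $\check\sigma_{\ell=1}$ and $\check\sigmaF_{\ell=1}$, and the vanishing of $\check{K}_{\ell=1}$ established in \eqref{check-K-l=1}, become essential for consistency. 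Everything else reduces to standard ODE existence along null cones plus standard elliptic theory on the sphere, so the calculations, while lengthy, are routine.
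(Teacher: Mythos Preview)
Your approach is essentially the one the paper takes: decompose into $\ell=0$, $\ell=1$, $\ell\geq2$, use $(c,d,l)$ for the first, the appropriate projections of $(h,\underline{h},a)$ for the next two, and $(q_1,q_2)$ for the residual metric conditions, determining each function along $\underline{C}_0$ and then extending globally via the $\nabb_4$-transport equations of Lemmas \ref{pure-gauge-solution} and \ref{pure-gauge-2}. The matching of conditions to free functions is exactly as you describe.

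One point of clarification that would simplify your write-up considerably: your step (ii), checking that a condition achieved on $S_0$ ``propagates'' along $\underline{C}_0$ via the linearized equations, is unnecessary. The Bondi constraints \eqref{transport-a}--\eqref{transport-fb}, \eqref{nabb-4-c}--\eqref{nabb-4-l}, \eqref{nabb-4-q1}--\eqref{nabb-4-q2} restrict only the $\nabb_4$-derivatives of the gauge functions; their $\nabb_3$-evolution is entirely free. So rather than imposing a condition on $S_0$ and hoping it persists, you simply \emph{define} the gauge functions (or equivalently their $\nabb_3$-derivatives) along all of $\underline{C}_0$ so that the conditions of Definition \ref{definition-initial-data-normalization} hold pointwise there. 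For instance, for $\ell=0$ the paper just sets $c:=\tfrac{r}{2}\kalin_0$, $d:=\vsilin_0$, $l:=\tfrac{r}{2}\bigl(-\kab c-(\kablin_0-\kab\Omegablin_0)\bigr)$ directly on $\underline{C}_0$; for the other modes it prescribes the $\nabb_3$-ODE along $\underline{C}_0$ that exactly matches the $\nabb_3$-derivative of the quantity being cancelled. There is no closed-system consistency to verify: three conditions on $\underline{C}_0$ in each of the $\ell=1$ and $\ell\geq2$ sectors are matched by the three free functions $(h,\underline{h},a)$ in that sector, and the metric conditions \eqref{underline-b-initial-data}, \eqref{tr-slashed-g-l1-initial-data}, \eqref{condition-hat-slashed-g} are absorbed by $(q_1,q_2)$ without touching anything already fixed. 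This removes the ``main obstacle'' you anticipate.
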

\begin{proof} To identify the pure gauge solution $\mathscr{G}^{id}$, it suffices to identify the functions $h, \underline{h}, a, c, d, l, q_1, q_2$ as in Lemmas \ref{pure-gauge-solution} and \ref{pure-gauge-2}. In particular, we will make use of the conditions in Definition \ref{definition-initial-data-normalization} to determine those functions and their derivative along the $e_3$ direction, and then we make use of the transport equations required in Lemmas \ref{pure-gauge-solution} and \ref{pure-gauge-2} to extend those functions in the whole spacetime along the $e_4$ direction. Using the orthogonal decomposition in spherical harmonics, we can treat the projection to the $\ell=0$, $\ell=1$ and $\ell\geq 2$ spherical harmonics separately. This procedure uniquely determines the pure gauge solution  $\mathscr{G}^{id}$ globally in the spacetime. 

{\bf{Projection to the $\ell=0$ spherical harmonics - achieving \eqref{kalin-initial-data}, \eqref{Omegablin-initial-data}, \eqref{vsilin-initial-data}.}} 
Recall that the functions in the definition of pure gauge solutions which are supported in $\ell=0$ spherical harmonics are $c$, $d$ and $l$. 
We denote $\kalin_0$, $\kablin_0$, $\vsilin_0$ and $\Omegablin_0$ the functions determined on $\underline{C}_0$ by the seed initial data set. 

We define along $\underline{C}_0$ the following function:
\beaa
c&:=&\frac{r}{2} \kalin_0, \qquad d:= \vsilin_0, \qquad l:=\frac r 2 \left(-\kab c -(\kablin_0-\kab \Omegablin_0)\right)
\eeaa
Using Lemma \ref{pure-gauge-solution}, we see that the above conditions imply that $\mathscr{S}_1:=\mathscr{S}-\mathscr{G}_{0, 0, c, d, l, 0, 0}$ verifies conditions \eqref{kalin-initial-data}, \eqref{Omegablin-initial-data}, \eqref{vsilin-initial-data} on $\underline{C}_0$. The transport equations \eqref{nabb-4-c} and \eqref{nabb-4-d} imply that $\pr_r c(u, r)=\pr_r d(u, r)=0$, therefore $c(u, r)=c(u)$ and $d(u, r)=d(r)$. This implies that the above definition of $c$ and $d$ along $\underline{C}_0$ determines $c$ and $d$ globally. Once $c$ is globally determined, the transport equation \eqref{nabb-4-l} globally determines $l$. Therefore $c, d, l$ are uniquely determined.

The following choice of pure gauge solutions will be supported in $\ell\geq1$, and therefore will not change conditions \eqref{kalin-initial-data}, \eqref{Omegablin-initial-data}, \eqref{vsilin-initial-data}.

{\bf{Projection to the $\ell=1$ spherical harmonics - achieving \eqref{condition-check-ka-l1=initial=data}, \eqref{condition-divv-bF-l1=initial=data} and \eqref{condition-divv-bbF-l1=initial=data}.}}  
We identify globally the projection to the $\ell=1$ spherical harmonics of $a$, $h$ and $\underline{h}$.
We denote $\check{\ka}_0$, $\bF_{0}$, $\bbF_{0}$ the functions on $S_0$ which are part of the seed initial data. 

We define on $S_0$ the following functions:
\beaa
h_{\ell=1}&:=&\frac{r_0^4}{2Q} {\divv\bF_0}_{\ell=1} \\
\underline{h}_{\ell=1}&:=&-\frac{r_0^4}{2Q} {\divv\bbF_0}_{\ell=1} \\
 a_{\ell=1} &=& \frac{r_0}{2} \left( {\check{\ka}}_{\ell=1}-\frac{2}{r_0^2} h_{\ell=1} -\left( \frac 1 4 \ka(r_0)\kab(r_0)\right) h_{\ell=1}-\frac 1 4 \ka(r_0)^2\underline{h}_{\ell=1} \right)
\eeaa
Using Lemma \ref{pure-gauge-solution}, we see that the above conditions imply that $\mathscr{S}_2:=\mathscr{S}_1-\mathscr{G}_{h_{\ell=1}, \underline{h}_{\ell=1}, a_{\ell=1}, 0, 0}$ verifies conditions $\check{\ka}_{\ell=1}=\divv\bF_{\ell=1}=\divv\bbF_{\ell=1}=0$ on $S_0$. 

In order to obtain the cancellation along $\underline{C}_0$, we impose transport equations for $h_{\ell=1}$, $\underline{h}_{\ell=1}$ and $a_{\ell=1}$. In particular, we have along $\underline{C}_0$:
\beaa
\nabb_3\left(\frac{2Q}{r^4} h_{\ell=1}\right)=\nabb_3\divv\bF_{\ell=1}[\mathscr{S}] \\
\nabb_3\left(\frac{2Q}{r^4} \underline{h}_{\ell=1}\right)=-\nabb_3\divv\bbF_{\ell=1}[\mathscr{S}]
\eeaa
The above transport equations together with the above initial conditions uniquely determine $h_{\ell=1}$ and $\underline{h}_{\ell=1}$ along $\underline{C}_0$. We similarly impose the $e_3$ derivative of $a_{\ell=1}$ to coincide with the derivative of the right hand side of the above definition.

Transport equation \eqref{transport-h} then uniquely determines $h_{\ell=1}$ globally.  The transport equation \eqref{transport-a} determines $a_{\ell=1}$ globally. Then transport equation \eqref{transport-hb} uniquely determines the value of $\underline{h}_{\ell=1}$ globally. This implies that $\mathscr{S}_2$ verifies conditions \eqref{condition-check-ka-l1=initial=data}, \eqref{condition-divv-bF-l1=initial=data} and \eqref{condition-divv-bbF-l1=initial=data}.

{\bf{Projection to the $\ell\geq 2$ spherical harmonics - achieving \eqref{condition-chih-initial-data}, \eqref{condition-chibh-initial-data} and \eqref{condition-check-ka-l2=initial=data}.}} 
We identify globally the projection to the $\ell\geq2$ spherical harmonics of $h$, $\underline{h}$ and $a$. 
We denote $\DDs_2\DDs_1(\check{\ka}_0, 0)$, $\chih_{0}$, $\chibh_{0}$ the symmetric traceless $2$-tensors on $S_0$ which are determined by the seed initial data. 

We define on $S_0$ the following symmetric traceless $2$-tensors:
\beaa
\DDs_2\DDs_1(h, 0)&:=& -\chih_0 \\
\DDs_2\DDs_1(\underline{h}, 0)&:=& -\chibh_0 \\
 \DDs_2\DDs_1(a, 0) &:=& \frac{r_0}{2} \big( \DDs_2\DDs_1(\check{\ka}_0, 0)-2\DDs_2\DDd_2(\DDs_2\DDs_1(h, 0))+\frac{2}{r_0^2}\DDs_2\DDs_1(h, 0) \\
 &&-\left( \frac 1 4 \ka(r_0)\kab(r_0)\right) \DDs_2\DDs_1(h, 0)-\frac 1 4 \ka(r_0)^2\DDs_2\DDs_1(\underline{h}, 0) \big)
\eeaa 
By Lemma \ref{lemma-kernel-DDs2}, the above conditions uniquely determine the projection to the $\ell\geq2$ spherical modes of $h$, $\underline{h}$ and $a$ on $S_0$. Using Lemma \ref{pure-gauge-solution}, we see that the above conditions imply that $\mathscr{S}_3:=\mathscr{S}_2-\mathscr{G}_{h_{\ell\geq2}, \underline{h}_{\ell\geq2}, a_{\ell\geq2}, 0, 0}$ verifies conditions $\DDs_2\DDs_1(\check{\ka}, 0)=\chi=\chib=0$ on $S_0$. 

Applying the $e_3$ derivative to the above definitions we derive transport equations for $\DDs_2\DDs_1(h, 0)$, $\DDs_2\DDs_1(\underline{h}, 0)$ and $\DDs_2\DDs_1(a, 0)$ which uniquely determine them on $\underline{C}_0$. Then transport equation \eqref{transport-h} then uniquely determines $\DDs_2\DDs_1(h, 0)$ globally and transport equation \eqref{transport-a} determines $\DDs_2\DDs_1(a, 0)$ globally. Then transport equation \eqref{transport-hb} uniquely determines the value of $\DDs_2\DDs_1(\underline{h}, 0)$ globally. Again by Lemma \ref{lemma-kernel-DDs2}, the above conditions uniquely determine the projection to the $\ell\geq2$ spherical modes of $h$, $\underline{h}$ and $a$ globally. This implies that $\mathscr{S}_3$ verifies in addition conditions \eqref{condition-chih-initial-data}, \eqref{condition-chibh-initial-data} and \eqref{condition-check-ka-l2=initial=data}.

{\bf{Conditions on the metric coefficients - achieving \eqref{underline-b-initial-data}, \eqref{tr-slashed-g-l1-initial-data} and \eqref{condition-hat-slashed-g}.}} 
With the above, we exhausted the freedom of using Lemma \ref{pure-gauge-solution}, since we globally determined the functions $h$, $\underline{h}$, $a$ and $\lambda$. In particular, the above choices also modified $\hat{\slashed{g}}$, $\underline{b}$ and $\check{\tr_{\gamma} \slashed{g}}$ on $\underline{C}_0$, which now differ from the one given by the seed initial data. In what follows we will achieve the remaining conditions of Definition \ref{definition-initial-data-normalization} using pure gauge solutions of the form $\mathscr{G}_{0, 0, 0, q_1, q_2}$ as in Lemma \ref{pure-gauge-2}. Observe that these solutions have all components different from $\hat{\slashed{g}}$, $\underline{b}$ and $\check{\tr_{\gamma} \slashed{g}}$ vanishing, and therefore do not modify the achieved conditions above.

 We identify $q_1$ and $q_2$ on $S_0$. 
We define on $S_0$ the following:
\beaa
(q_1)_{\ell=1}&=& -\frac{1}{4r_0^2} \check{\tr_{\gamma} \slashed{g}}_{l=1}[\mathscr{S}_3]|_{S_0}, \qquad \DDs_2\DDs_1(q_1, q_2)= \frac{1}{2r_0^2} \hat{\slashed{g}}[\mathscr{S}_3]|_{S_0}
\eeaa
where we denote $\check{\tr_{\gamma} \slashed{g}}_{\ell=1}[\mathscr{S}_3]|_{S_0}$ and $\hat{\slashed{g}}[\mathscr{S}_3]|_{S_0}$ the respective value of $\check{\tr_{\gamma} \slashed{g}}_{\ell=1}$ and $\hat{\slashed{g}}$ of the solution $\mathscr{S}_3$ defined above on the initial sphere $S_0$. By the above discussion, these values only depend on initial seed data. By Lemma \ref{lemma-kernel-DDs2}, the above conditions uniquely determine $q_1$ and $q_2$ on $S_0$.

Condition \eqref{underline-b-initial-data} on $\underline{C}_0$ determines a transport equation along $\underline{C}_0$ for $q_1$ and $q_2$:
\beaa
r^2 \DDs_1(\nabb_3 q_1, \nabb_3 q_2)^A&=& \underline{b}^A[\mathscr{S}_3]- \frac 1 3 r^3 \ep^{AB} \pr_B \left(2\check{\sigma}_{\ell=1}[\mathscr{S}_3]+\rhoF \check{\sigmaF}_{\ell=1}[\mathscr{S}_3] \right)
\eeaa
which together with \eqref{nabb-4-q1} and \eqref{nabb-4-q2} globally determine $q_1$ and $q_2$. The above conditions imply that $\mathscr{S}_4:=\mathscr{S}_3-\mathscr{G}_{0, 0, 0, q_1, q_2}$ verifies in addition conditions \eqref{underline-b-initial-data}, \eqref{tr-slashed-g-l1-initial-data} and \eqref{condition-hat-slashed-g}.

Define $\mathscr{G}^{id}:=\mathscr{G}_{h, \underline{h}, a, c, d, l, 0, 0}+\mathscr{G}_{0, 0, 0, q_1, q_2}$ with $h, \underline{h}, a, q_1, q_2$ determined as above. Then 
\beaa
\mathscr{S}^{id}:=\mathscr{S}-\mathscr{G}^{id}=\mathscr{S}_4
\eeaa
verifies all conditions of Definition \ref{definition-initial-data-normalization} and is therefore initial data normalized. By construction, $\mathscr{G}^{id}$ is also uniquely determined.
\end{proof}

\subsection{Achieving the $S_{U, R}$-normalization for a bounded $\mathscr{S}$}\label{achieving-normalization-section}
In this section, we prove the existence of a pure gauge solution $\mathscr{G}^{U, R}$ such that upon subtracting this to a given $\mathscr{S}$, which is assumed to be bounded at the sphere $S_{U, R}$ for some $U$ and $R$, the resulting solution satisfies all conditions of Definition \ref{definition-SUR-normalized}. Observe that we do not need to modify the projection to the $\ell=0$ spherical harmonics because such projection is proved to vanish in Section \ref{proj-l=0-vanishes}. 

\begin{theorem}\label{achieving-normalization-theorem} Consider a seed data set as in Definition \ref{initial-data-set} and let $\mathscr{S}$ be the resulting solution given by Theorem \ref{well-posedness-theorem}. Suppose that the solution $\mathscr{S}$ is bounded at the sphere $S_{U, R}$.

Then there exists a pure gauge solution $\mathscr{G}^{U, R}$ supported in $\ell\geq 1$ spherical harmonics such that 
$$\mathscr{S}^{U,R}:= \mathscr{S}-\mathscr{G}^{U,R}$$
is $S_{U, R}$-normalized. The pure gauge solution $\mathscr{G}^{U,R}$ is unique and is bounded by the initial data seed.
\end{theorem}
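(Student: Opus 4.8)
The plan is to follow the architecture of the proof of Theorem~\ref{achieving-initial-normalization-theorem}, with the roles of the initial ingoing cone $\underline{C}_0$, the sphere $S_0$, and forward $e_4$-integration played instead by the ``last slice'' $\mathscr{I}_{U,R}$, the sphere $S_{U,R}$, and $e_4$-integration into $I^-(S_{U,R})$. Since the projection of $\mathscr{S}$ to the $\ell=0$ spherical harmonics vanishes (Section~\ref{proj-l=0-vanishes}), we only seek a pure gauge solution supported in $\ell\geq1$: that is, $\mathscr{G}^{U,R}$ of the form given by Lemmas~\ref{pure-gauge-solution} and~\ref{pure-gauge-2} with vanishing $\ell=0$ data, so that it suffices to determine the five functions $h,\underline{h},a$ (supported in $\ell\geq1$), $q_1$ (in $\ell\geq1$) and $q_2$ (in $\ell\geq2$). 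As in those lemmas, $h,a,q_1,q_2$ satisfy $\nabb_4(\cdot)=0$ while $\underline{h}$ satisfies $\nabb_4\underline{h}=2\omb h-2a$; since $e_4$ is transversal to the ingoing null hypersurface $\mathscr{I}_{U,R}$, it therefore suffices to prescribe these functions along $\mathscr{I}_{U,R}$ (and, for $\underline{h}$, one value) and then to extend by the transport equations~\eqref{transport-a}--\eqref{transport-fb}, \eqref{nabb-4-q1}--\eqref{nabb-4-q2}, which determines $\mathscr{G}^{U,R}$ on $I^-(S_{U,R})$.

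For the $\ell=1$ sector I would use that, by the explicit formulas of Lemma~\ref{pure-gauge-solution}, the quantities $\check{\ka}$, $\check{\kab}$ and the charge aspect function $\check{\nu}=r^4(\divv\ze+2\rhoF\check{\rhoF})$ of a gauge solution $\mathscr{G}_{h,\underline{h},a,\dots}$ are, restricted to $\ell=1$, \emph{algebraic} (no $\nabb_3,\nabb_4$) linear expressions in $h_{\ell=1},\underline{h}_{\ell=1},a_{\ell=1}$; imposing $\check{\ka}_{\ell=1}[\mathscr{S}^{U,R}]=\check{\kab}_{\ell=1}[\mathscr{S}^{U,R}]=\check{\nu}_{\ell=1}[\mathscr{S}^{U,R}]=0$ along $\mathscr{I}_{U,R}$ is then a $3\times3$ linear system, which a direct computation shows to be non-degenerate (for $|Q|\ll M$), so it uniquely determines $h_{\ell=1},\underline{h}_{\ell=1},a_{\ell=1}$ — hence also $\nabb_3 h_{\ell=1}$ — along $\mathscr{I}_{U,R}$ in terms of $\mathscr{S}$ there. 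The condition $\check{\tr_{\gamma}\slashed{g}}_{\ell=1}[\mathscr{S}^{U,R}]=0$ on $S_{U,R}$ then fixes $(q_1)_{\ell=1}$ on $S_{U,R}$, and $\divv\underline{b}_{\ell=1}[\mathscr{S}^{U,R}]=0$ — using $\underline{b}[\mathscr{G}]=r\nabb_3\DDs_1(h,0)+\DDs_1(\underline{h},0)+r^2\DDs_1(\nabb_3 q_1,\nabb_3 q_2)$ — becomes an $e_3$-transport equation for $(q_1)_{\ell=1}$ propagating it along $\mathscr{I}_{U,R}$. The $\ell\geq2$ sector is entirely parallel: $\DDs_2\DDs_1(\check{\ka},0)$, $\DDs_2\DDs_1(\check{\kab},0)$ and $\DDs_2\DDs_1(\check{\mu},0)$ (with the mass--charge aspect function $\check{\mu}$ of~\eqref{definition-of-mu}) involve no null derivatives, and by injectivity of $\DDs_2\DDs_1$ on $\ell\geq2$ (Lemma~\ref{lemma-kernel-DDs2}) together with non-degeneracy of the corresponding system they determine $h_{\ell\geq2},\underline{h}_{\ell\geq2},a_{\ell\geq2}$ along $\mathscr{I}_{U,R}$, after which $\hat{\slashed{g}}[\mathscr{S}^{U,R}]=0$ on $S_{U,R}$ fixes $(q_1,q_2)_{\ell\geq2}$ there and $\DDs_2\underline{b}[\mathscr{S}^{U,R}]=0$ transports them along $\mathscr{I}_{U,R}$. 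Because $q$-type gauge solutions modify only $\hat{\slashed{g}},\check{\tr_{\gamma}\slashed{g}},\underline{b}$, and $\ell\geq1$ gauge solutions do not touch the $\ell=0$ sector, no step spoils the conditions achieved before it, so $\mathscr{S}^{U,R}:=\mathscr{S}-\mathscr{G}^{U,R}$ satisfies all conditions of Definition~\ref{definition-SUR-normalized}; uniqueness is built into the construction, each stage being a uniquely solvable linear/transport problem.

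To bound $\mathscr{G}^{U,R}$, note that the source terms in the relations above involve only $\check{\ka},\check{\kab},\divv\ze,\check{\rho},\check{\rhoF},\divv\bF,\divv\underline{b},\DDs_2\underline{b}$ of $\mathscr{S}$ along $\mathscr{I}_{U,R}$ and $\hat{\slashed{g}},\check{\tr_{\gamma}\slashed{g}}$ of $\mathscr{S}$ on $S_{U,R}$. The scalar and scalar-derived quantities satisfy the linearized $e_3$-transport equations~\eqref{nabb-3-check-ka-ze-eta}, \eqref{nabb-3-check-kab-ze-eta}, \eqref{nabb-3-ze-ze-eta}, \eqref{nabb-3-check-rho-ze-eta}, \eqref{nabb-3-check-rhoF-ze-eta}, \eqref{nabb-3-bF-ze-eta}, so they are controlled along $\mathscr{I}_{U,R}$ in terms of $\mathscr{S}|_{S_{U,R}}$ and the remaining components entering those right-hand sides; for $\underline{b},\hat{\slashed{g}},\check{\tr_{\gamma}\slashed{g}}$ one instead invokes the boundedness theorem of Section~\ref{chapter-linear-stability} (applied to $\mathscr{S}^{id}$, with $\mathscr{G}^{id}$ explicitly bounded by the seed), which controls all components of $\mathscr{S}$ throughout $\{r\geq r_1\}\cap\{u_0\leq u\leq u_1\}$; since $\mathscr{I}_{U,R}$ runs from $S_{U,R}$ back to $C_0$ with $r\geq R\geq r_1$ and $u_1>U$, this region contains $\mathscr{I}_{U,R}$ and $S_{U,R}$. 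Feeding these bounds into the algebraic, elliptic and $e_3$-transport relations defining the five functions along $\mathscr{I}_{U,R}$, and using that~\eqref{transport-a}--\eqref{transport-fb}, \eqref{nabb-4-q1}--\eqref{nabb-4-q2} preserve boundedness (these being gauge functions, no decay is required), gives that $\mathscr{G}^{U,R}$ is bounded by the initial data seed.

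The main obstacle I expect is establishing the non-degeneracy of the linear system that extracts $(h,\underline{h},a)$ from the conditions $\check{\ka}=\check{\kab}=0$ together with the vanishing of the (mass--)charge aspect function: this is exactly the point where the particular choices $\check{\nu}$ and $\check{\mu}$ in~\eqref{definition-of-nu}--\eqref{definition-of-mu} are essential and where the smallness $|Q|\ll M$ may be used (a more naive generalization of the mass aspect function would give a degenerate system). A secondary difficulty is organizing the propagation along $\mathscr{I}_{U,R}$ so that the combined elliptic/$e_3$-transport hierarchy closes with each normalization condition used exactly once, and — for the boundedness claim — checking that the region swept out by the last slice indeed lies in the domain where boundedness of $\mathscr{S}$ has already been established.
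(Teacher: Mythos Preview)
Your architecture matches the paper's proof exactly: determine $h,\underline{h},a$ along $\mathscr{I}_{U,R}$ from the three normalization conditions, then $q_1,q_2$ from the metric conditions, then transport by \eqref{transport-a}--\eqref{transport-fb}, \eqref{nabb-4-q1}--\eqref{nabb-4-q2}. However, your description of the $\ell\geq2$ step as being solvable ``by injectivity of $\DDs_2\DDs_1$ together with non-degeneracy of the corresponding system'' understates what is actually needed. From Lemma~\ref{pure-gauge-solution}, the pure-gauge expressions for $\check{\ka},\check{\kab},\check{\mu}$ contain terms like $\DDd_1\DDs_1(h,0)$; after applying $\DDs_2\DDs_1$ and using the identity $\DDs_2\DDs_1\DDd_1=(2\DDs_2\DDd_2+2K)\DDs_2$, the resulting system for the unknown $2$-tensors $\DDs_2\DDs_1(h,0),\DDs_2\DDs_1(\underline{h},0),\DDs_2\DDs_1(a,0)$ is \emph{elliptic}, not algebraic. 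The paper handles this by first triangularizing via the substitution $z=\kab h+\ka\underline{h}$, $\ov{z}=\kab h-\ka\underline{h}$, which isolates an equation $\mathcal{E}(\DDs_2\DDs_1(z,0))=\text{RHS}$ with $\mathcal{E}=2\DDs_2\DDd_2-3\rho+2\rhoF^2$, and then proving a Poincar\'e-type coercivity estimate $\int_S\th\cdot\mathcal{E}\th\geq \tfrac{4}{r^2}\int_S|\th|^2$ (Lemma~\ref{first-poincare-inequality-2-tensor}) for this operator and its variants. This coercivity holds for all $|Q|<M$, not just $|Q|\ll M$; your worry about smallness here is misplaced.

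Two further points you should be aware of. First, the step that combines the $\check{\kab}$ and $\check{\mu}$ (or $\check{\nu}$) conditions involves dividing by the factor $-\tfrac14\kab+\tfrac14 r\rho=\tfrac{1}{2r}(1-\tfrac{3M}{r}+\tfrac{2Q^2}{r^2})$, which vanishes near $r=3M$; this is why the hypothesis $R\gg 3M$ (implicit in Definition~\ref{definition-SUR-normalized}) is used, so that along $\mathscr{I}_{U,R}$ one has $r\gg 3M$ and the manipulation is legitimate. Second, the paper also checks that the solution of the elliptic system depends smoothly on the sphere along $\mathscr{I}_{U,R}$ by computing the commutator $[\mathcal{E},\nabb_3]$ and showing $\nabb_3\th$ satisfies an equation of the same type; you do not mention this regularity issue.
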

\begin{proof} We follow the same pattern as in the proof of Theorem \ref{achieving-initial-normalization-theorem}. Since $\mathscr{G}^{U, R}$ is taken to be supported in $\ell\geq 1$ spherical harmonics it suffices to identify globally the functions $h$, $\underline{h}$ and $a$ with $a$ supported in $\ell\geq1$, and $q_1$ and $q_2$. Again, we treat the projection to the $\ell=1$ and $\ell\geq2$ separately.

{\bf{Projection to the $\ell=1$ spherical harmonics - achieving \eqref{condition-check-ka-l1-last-slice}, \eqref{condition-check-kab-l1-last-slice},  \eqref{condition-check-nu-l1-last-slice}.}}  
We identify globally the projection to the $\ell=1$ spherical harmonics of $a$, $h$ and $\underline{h}$.
According to Lemma \ref{pure-gauge-solution}, for a pure gauge solution the components $\check{\ka}$ and $\check{\kab}$ verify
\bea
\check{\ka}&=& \ka a +\DDd_1\DDs_1 (h, 0) + \frac 1 4 \ka\kab h+\frac 1 4 \ka^2\underline{h}\label{imposing-check-ka} \\
\check{\kab}&=& -\kab a+\DDd_1\DDs_1(\underline{h}, 0)+\left( \frac 1 4 \kab^2+\omb\kab \right) h+\left( \frac 1 4 \ka \kab-\rho \right) \underline{h}\label{imposing-check-kab}
\eea
Multiplying \eqref{imposing-check-ka} by $\kab$ and \eqref{imposing-check-kab} by $\ka$ and summing them we obtain:
\beaa
\DDd_1\DDs_1 (\kab h+\ka\underline{h}, 0)+\left( \frac 1 2 \ka\kab+\omb\ka \right) \kab h +\left( \frac 1 2 \ka \kab-\rho \right) \ka\underline{h}&=& \kab\check{\ka}+\ka\check{\kab}
\eeaa
Setting $z=\kab h +\ka \underline{h}$ and observing that $\omb \ka=-\rho$ in the background we obtain for a pure gauge solution
\bea\label{equation-for-z}
\DDd_1\DDs_1 (z, 0) +\left( \frac 1 2 \ka \kab-\rho \right)z&=& \kab\check{\ka}+\ka\check{\kab}
\eea
Projecting the above equation to the $\ell=1$ spherical harmonics, using that $\DDd_1\DDs_1=-\lapp$ and using the Gauss equation \eqref{Gauss-general}, we obtain an equation for $z_{\ell=1}$:
\bea\label{determine-z-l1}
\left(-3\rho+2\rhoF^2 \right)z_{\ell=1}&=& \kab\check{\ka}[\mathscr{S}]_{\ell=1}+\ka\check{\kab}[\mathscr{S}]_{l=1}
\eea
The above determines the value of $z_{\ell=1}$ along the null hypersurface $\mathscr{I}_{U, R}$. 
Multiplying \eqref{imposing-check-ka} by $\kab$ and \eqref{imposing-check-kab} by $\ka$ and subtracting them we obtain:
\beaa
2\kab\ka a +\DDd_1\DDs_1 (\kab h-\ka \underline{h}, 0) -\omb\ka\kab h+\rho  \ka\underline{h}&=& \kab\check{\ka} -\ka\check{\kab}
\eeaa
Setting $\ov{z}=\kab h -\ka \underline{h}$ and observing that $\omb \ka=-\rho$ we obtain for a pure gauge solution
\bea\label{relation-ov-z-z-a}
\DDd_1\DDs_1 (\ov{z}, 0) +2\ka \kab  a  &=& \kab\check{\ka} -\ka\check{\kab}-\rho z
\eea
According to Lemma \ref{pure-gauge-solution}, for a pure gauge solution the component $\check{\nu}$ verifies
\beaa
r^{-4}\check{\nu}&=&\left(-\frac 1 4 \kab-\omb \right)\DDd_1\DDs_1(h, 0)+\frac 1 4 \ka \DDd_1\DDs_1(\underline{h}, 0)+\DDd_1\DDs_1(a, 0)+ \rhoF^2 (\kab h +\ka \underline{h})
\eeaa
which can be written in terms of $z$ and $\ov{z}$ as
\beaa
r^{-4}\check{\nu}&=&\left(-\frac 1 4 \kab+\frac 1 4 \rho r \right)\kab^{-1} \DDd_1\DDs_1(\ov{z}, 0)+\DDd_1\DDs_1(a, 0)+\frac 1 4 \rho r \kab^{-1} \DDd_1\DDs_1(z, 0)+ \rhoF^2 z 
\eeaa
Multiplying the above by $\kab$ and observing that $-\frac 1 4\kab +\frac 1 4 r \rho=\frac{1}{2r}\left(1-\frac{3M}{r} +\frac{2Q^2}{r^2}\right)$ we obtain
\bea\label{imposing-check-nu}
\begin{split}
\kab r^{-4}\check{\nu}&=\left(-\frac 1 4 \kab+\frac 1 4 \rho r \right) \DDd_1\DDs_1(\ov{z}, 0)+\kab \DDd_1\DDs_1(a, 0)+\frac 1 4 \rho r  \DDd_1\DDs_1(z, 0)+ \kab\rhoF^2 z 
\end{split}
\eea
If $R\gg 3M$, then, along $\mathscr{I}_{U, R}$, $r\gg 3M$ and therefore we can safely multiply \eqref{relation-ov-z-z-a} by $\frac{1}{2r}\left(1-\frac{3M}{r} +\frac{2Q^2}{r^2}\right)$ and subtract \eqref{imposing-check-nu} to obtain:
\bea\label{expression-a-in-terms-z-l1}
\begin{split}
& \DDd_1\DDs_1(a, 0) + \left(\frac 1 2 \ka\kab -\rho+4\rhoF^2\right)  a \\
&=  r^{-4}\check{\nu}-\frac{1}{2r}\left(1-\frac{3M}{r} +\frac{2Q^2}{r^2}\right)\check{\ka} +\frac{1}{2r}\left(1-\frac{3M}{r} +\frac{2Q^2}{r^2}\right)\kab^{-1}\ka\check{\kab}\\
&-\left(\frac 1 4 r \rho \right)\kab^{-1}\DDd_1\DDs_1(z, 0)-\left( \rhoF^2 -\frac{1}{2r}\left(1-\frac{3M}{r} +\frac{2Q^2}{r^2}\right)\rho\kab^{-1}\right) z 
\end{split}
\eea
Projecting to the $\ell=1$ spherical harmonics we see that the right hand side of \eqref{expression-a-in-terms-z-l1} is already determined by \eqref{determine-z-l1}. This gives in particular
\bea\label{determine-a-l1}
\begin{split}
 \left(-3\rho+6\rhoF^2\right)  a_{\ell=1} &=  r^{-4}\check{\nu}[\mathscr{S}]_{\ell=1}-\frac{1}{2r}\left(1-\frac{3M}{r} +\frac{2Q^2}{r^2}\right)\check{\ka}[\mathscr{S}]_{\ell=1} \\
 &+\frac{1}{2r}\left(1-\frac{3M}{r} +\frac{2Q^2}{r^2}\right)\kab^{-1}\ka\check{\kab}[\mathscr{S}]_{\ell=1}\\
&-\left(\frac 1 4 r \rho \right)\kab^{-1}2K z_{\ell=1}-\left( \rhoF^2 -\frac{1}{2r}\left(1-\frac{3M}{r} +\frac{2Q^2}{r^2}\right)\rho\kab^{-1}\right) z_{\ell=1} 
\end{split}
\eea
with known right hand side along $\mathscr{I}_{U, R}$. This determines $a_{\ell=1}$ along $\mathscr{I}_{U, R}$. 
Finally the projection of \eqref{relation-ov-z-z-a} to the $\ell=1$ spherical harmonics determines $\ov{z}_{\ell=1}$:
\beaa
\frac{2}{r^2}\ov{z}_{\ell=1}  &=&-2\ka \kab  a_{\ell=1}  \kab\check{\ka}[\mathscr{S}]_{\ell=1} -\ka\check{\kab}[\mathscr{S}]_{\ell=1}-\rho z_{\ell=1}
\eeaa
The value of $z_{\ell=1}$ and $\ov{z}_{\ell=1}$ uniquely determine the value of $h_{\ell=1}$ and $\underline{h}_{\ell=1}$ along $\mathscr{I}_{U, R}$. Transport equation \eqref{transport-h} uniquely determines $h_{\ell=1}$ globally, transport equation \eqref{transport-a} determines $a_{\ell=1}$ globally and finally transport equation \eqref{transport-hb} uniquely determines the value of $\underline{h}_{\ell=1}$ globally. This implies that $\mathscr{S}-\mathscr{G}_{h_{\ell=1}, \underline{h}_{\ell=1}, a_{\ell=1}, 0, 0}$ verifies conditions \eqref{condition-check-ka-l1-last-slice}, \eqref{condition-check-kab-l1-last-slice},  \eqref{condition-check-nu-l1-last-slice}.

{\bf{Projection to the $\ell\geq2$ spherical harmonics - achieving \eqref{condition-check-ka-definition}, \eqref{condition-check-kab-definition} and \eqref{condition-check-mu-definition}.}} 
We identify globally the projection to the $\ell\geq2$ spherical harmonics of $a$, $h$ and $\underline{h}$. We first derive some preliminary relations. 

According to Lemma \ref{pure-gauge-solution}, for a pure gauge solution the component $\check{\mu}$ verifies
\beaa
r^{-3}\check{\mu}&=&\left(-\frac 1 4 \kab-\omb-2r\rhoF^2 \right)\DDd_1\DDs_1(h, 0)+\frac 1 4 \ka \DDd_1\DDs_1(\underline{h}, 0)+\DDd_1\DDs_1(a, 0)\\
&&+\left(\frac 3 4 \rho-\frac 3 2 \rhoF^2 \right)(\kab h +\ka \underline{h})
\eeaa
which can be written in terms of $z$ and $\ov{z}$ as
\beaa
r^{-3}\check{\mu}&=&\left(-\frac 1 4\kab +\frac 1 4 r \rho-r\rhoF^2 \right)\kab^{-1}\DDd_1\DDs_1(\ov{z}, 0)+\left(\frac 1 4 r \rho-r\rhoF^2 \right)\kab^{-1}\DDd_1\DDs_1(z, 0)\\
&&+\left(\frac 3 4 \rho-\frac 3 2 \rhoF^2 \right)z+\DDd_1\DDs_1(a, 0)
\eeaa
Multiplying the above by $\kab$ and observing that $-\frac 1 4\kab +\frac 1 4 r \rho-r\rhoF^2=\frac{1}{2r}\left(1-\frac{3M}{r} \right)$ we obtain
\bea\label{imposing-check-mu}
\begin{split}
\kab r^{-3}\check{\mu}&=\frac{1}{2r}\left(1-\frac{3M}{r} \right)\DDd_1\DDs_1(\ov{z}, 0)+\left(\frac 1 4 r \rho-r\rhoF^2 \right)\DDd_1\DDs_1(z, 0)+\left(\frac 3 4 \rho-\frac 3 2 \rhoF^2 \right)\kab z\\
&+\kab \DDd_1\DDs_1(a, 0)
\end{split}
\eea
If $R\gg 3M$, then, along $\mathscr{I}_{U, R}$, $r\gg 3M$ and therefore we can safely multiply \eqref{relation-ov-z-z-a} by $\frac{1}{2r}\left(1-\frac{3M}{r} \right)$ and subtract \eqref{imposing-check-mu} to obtain:
\bea\label{expression-a-in-terms-z}
\begin{split}
 \DDd_1\DDs_1(a, 0) + \left(\frac 1 2 \ka\kab -\rho+4\rhoF^2\right)  a &=  r^{-3}\check{\mu}-\frac{1}{2r}\left(1-\frac{3M}{r} \right)\check{\ka} +\frac{1}{2r}\left(1-\frac{3M}{r} \right)\kab^{-1}\ka\check{\kab}\\
&-\left(\frac 1 4 r \rho-r\rhoF^2 \right)\kab^{-1}\DDd_1\DDs_1(z, 0)\\
&-\left(\frac 3 4 \rho-\frac 3 2 \rhoF^2 -\frac{1}{2r}\left(1-\frac{3M}{r} \right)\rho\kab^{-1}\right) z 
\end{split}
\eea

To verify conditions \eqref{condition-check-ka-definition}, \eqref{condition-check-kab-definition} and \eqref{condition-check-mu-definition} we are interested in determining $\DDs_2\DDs_1(z, 0)$, $\DDs_2\DDs_1(\ov{z}, 0)$ and $\DDs_2\DDs_1(a, 0)$ on $\mathscr{I}_{U, R}$. 

We apply the operator $\DDs_2\DDs_1$ to \eqref{equation-for-z}, which translates in the following relation along $\mathscr{I}_{U, R}$:
\beaa
\DDs_2\DDs_1\DDd_1\DDs_1 (z, 0) +\left( \frac 1 2 \ka \kab-\rho \right)\DDs_2\DDs_1(z, 0)&=& \kab\DDs_2\DDs_1(\check{\ka}[\mathscr{S}], 0)+\ka\DDs_2\DDs_1(\check{\kab}[\mathscr{S}], 0)
\eeaa
By \eqref{angular-operators} we have that 
\bea\label{relation-DDs-DDs-DDd}
\DDs_2\DDs_1\DDd_1&=& (2\DDs_2\DDd_2+2K) \DDs_2
\eea
which therefore implies the following relation for $\DDs_2\DDs_1(z, 0)$:
\beaa
\left(2\DDs_2\DDd_2+2K+ \frac 1 2 \ka \kab-\rho \right) \DDs_2\DDs_1 (z, 0) &=& \kab\DDs_2\DDs_1(\check{\ka}[\mathscr{S}], 0)+\ka\DDs_2\DDs_1(\check{\kab}[\mathscr{S}], 0)
\eeaa
By Gauss equation, we have
\beaa
2\DDs_2\DDd_2+2K+ \frac 1 2 \ka \kab-\rho&=& 2\DDs_2\DDd_2+2\left(-\frac 1 4 \ka\kab-\rho+\rhoF^2\right)+ \frac 1 2 \ka \kab-\rho=2\DDs_2\DDd_2-3\rho+2\rhoF^2
\eeaa
Define $\mathcal{E}$ to be the operator $\mathcal{E}:=2 \DDs_2\DDd_2- 3\rho+2\rhoF^2$ on symmetric traceless $2$-tensors. Then the above relation gives
\bea\label{operator-E-z}
\mathcal{E}( \DDs_2\DDs_1 (z, 0) )&=& \kab\DDs_2\DDs_1(\check{\ka}[\mathscr{S}], 0)+\ka\DDs_2\DDs_1(\check{\kab}[\mathscr{S}], 0)
\eea

We show that the operator $\mathcal{E}$ is coercive.

\begin{lemma}\label{first-poincare-inequality-2-tensor} Let $\mathcal{E}$ be the operator defined as $\mathcal{E}:=2 \DDs_2\DDd_2- 3\rho+2\rhoF^2$. For any symmetric traceless two tensor $\theta$ we have
\beaa
\int_{S} \theta \c \mathcal{E}\theta \geq \frac{4}{r^2}\int_{S}|\theta|^2
\eeaa
\end{lemma}
\begin{proof} We compute 
\beaa
\int_{S} \theta \c \mathcal{E}\theta = \int_{S} \theta \c (2 \DDs_2\DDd_2- 3\rho+2\rhoF^2)\theta=\int_{S}2 |\DDd_2\theta|^2+(- 3\rho+2\rhoF^2)|\theta|^2
\eeaa
Using the standard Poincar\'e inequality on spheres and $\int_S |\nabb \theta|^2+2K |\theta|^2=2\int_S |\DDd_2\theta|^2$, we have that $\int_S |\DDd_2\theta|^2\geq \int_S 2K |\theta|^2$, and therefore
\beaa
\int_{S} \theta \c \mathcal{E}\theta \geq \int_{S} \left(\frac{4}{r^2}+\frac{6M}{r^3}-\frac{4Q^2}{r^4}\right)|\theta|^2
\eeaa
Observe that $\frac{6M}{r^3}-\frac{4Q^2}{r^4}\geq \frac{2M^2}{r^3}$ for all $r>M$ and $|Q|<M$. We therefore obtain the inequality.
\end{proof}
The above Lemma shows that \eqref{operator-E-z} uniquely determines $\DDs_2\DDs_1 (z, 0)$ along $\mathscr{I}_{U, R}$. 

Applying the operator $\DDs_2\DDs_1$ to \eqref{expression-a-in-terms-z} we obtain on $\mathscr{I}_{U, R}$
\beaa
 \DDs_2\DDs_1\DDd_1\DDs_1(a, 0) + \left(\frac 1 2 \ka\kab -\rho+4\rhoF^2\right) \DDs_2\DDs_1( a, 0) &=& \text{RHS}(\check{\mu}[\mathscr{S}], \check{\ka}[\mathscr{S}], \check{\kab}[\mathscr{S}], \DDs_2\DDs_1(z, 0))  
\eeaa
where the right hand side depends on the argument, which are determined along $\mathscr{I}_{U, R}$. Using \eqref{relation-DDs-DDs-DDd}, we obtain
\bea\label{derive-a-from-last-slice}
\left(2\DDs_2\DDd_2-3\rho+6\rhoF^2\right)\DDs_2\DDs_1( a, 0) &=& \text{RHS}(\check{\mu}[\mathscr{S}], \check{\ka}[\mathscr{S}], \check{\kab}[\mathscr{S}], \DDs_2\DDs_1(z, 0))
\eea
The above operator is a slight modification of $\mathcal{E}$ (which is even more positive) and possesses an identical Poincar\'e inequality as in Lemma \ref{first-poincare-inequality-2-tensor}. The above relation therefore implies that $ \DDs_2\DDs_1( a, 0)$ is uniquely determined along $\mathscr{I}_{U, R}$ by the above imposition. 

Finally, applying the operator $\DDs_2\DDs_1$ to \eqref{relation-ov-z-z-a} and using \eqref{relation-DDs-DDs-DDd} we obtain on $\mathscr{I}_{U, R}$:
\bea
(2\DDs_2\DDd_2+2K) \DDs_2\DDs_1 (\ov{z}, 0)   &=& \kab \DDs_2\DDs_1(\check{\ka}[\mathscr{S}], 0) -\ka\DDs_2\DDs_1(\check{\kab}[\mathscr{S}], 0)-\rho z-2\ka \kab  \DDs_2\DDs_1(a, 0)
\eea\label{derive-ov-z-last-slice}
where the right hand side has already been determined above.
The above operator is clearly coercive. Indeed, using elliptic estimate \eqref{second-elliptic-estimate} we have
\beaa
\int_{S} \th \cdot (2\DDs_2\DDd_2+2K) \th&=& \int_{S} 2|\DDd_2\th|^2+2K |\th|^2 \geq \frac{4}{r^2} \int_{S} |\th|^2
\eeaa
The above relation therefore implies that $ \DDs_2\DDs_1( z, 0)$ is uniquely determined along $\mathscr{I}_{U, R}$. 

To check that these tensors are smooth along $\mathscr{I}_{U, R}$, we show that their $e_3$ derivative is smooth along the null hypersurface. For instance, suppose $\th$ is a symmetric traceless $2$-tensor which verifies 
\beaa
\mathcal{E}(\th)=\mathcal{F}
\eeaa
where $\mathcal{F}$ is a smooth known function on $\mathscr{I}_{U, R}$. We compute
\beaa
[\mathcal{E}, \nabb_3]&=& (2 \DDs_2\DDd_2- 3\rho+2\rhoF^2)(\nabb_3\th)-\nabb_3((2 \DDs_2\DDd_2- 3\rho+2\rhoF^2)\th)\\
&=& \kab (2 \DDs_2\DDd_2- \frac 9 2 \rho+\rhoF^2)\th= \kab \mathcal{F}-\kab \left(\frac 3 2 \rho +\rhoF^2\right) \th
\eeaa
Therefore we obtain 
\beaa
\mathcal{E}(\nabb_3\th)+\kab \left(\frac 3 2 \rho +\rhoF^2\right) \th&=& \nabb_3 \mathcal{F}+\kab \mathcal{F}
\eeaa
Applying the operator $\mathcal{E}$ to the above we obtain
\beaa
\mathcal{E}^2(\nabb_3\th)&=& \mathcal{E}(\nabb_3 \mathcal{F}+\kab \mathcal{F}) -\kab \left(\frac 3 2 \rho +\rhoF^2\right) \mathcal{F}
\eeaa
This shows that $\nabb_3\th$ is smooth along $\mathscr{I}_{U, R}$. A similar computation applied for the modified version of $\mathcal{E}$ applied to $\DDs_2\DDs_1(a, 0)$ and the operator applied to $\DDs_2\DDs_1(\ov{z}, 0)$.

The above choices uniquely determine the projection to the $\ell\geq2$ spherical harmonics of $h$, $\underline{h}$ and $a$ on $\mathscr{I}_{U, R}$ and as above, integrating in order the transport equation for $h$, $a$ and then for $\underline{h}$, we show that they are globally uniquely determined. 

 By construction,
 \beaa
 \mathscr{S}_{1}&=& \mathscr{S}-\mathscr{G}_{h, \underline{h}, a, 0, 0}
 \eeaa
 verifies conditions  \eqref{condition-check-ka-l1-last-slice}, \eqref{condition-check-kab-l1-last-slice},  \eqref{condition-check-nu-l1-last-slice},  \eqref{condition-check-ka-definition}, \eqref{condition-check-kab-definition} and \eqref{condition-check-mu-definition}.

{\bf{Conditions on the metric coefficients - achieving \eqref{condition-divv-underline-b-l1-last-slice}, \eqref{condition-underline-b-last-slice}, \eqref{tr-slashed-g-l1-last-slice} and \eqref{condition-hat-slashed-g-last-slice}.}} 
With the above, we exhausted the freedom of using Lemma \ref{pure-gauge-solution}, since we globally determined the functions $h$, $\underline{h}$, $a$. In particular, the above choices also modified $\hat{\slashed{g}}$, $\underline{b}$ and $\check{\tr_{\gamma} \slashed{g}}$ on $\mathscr{I}_{U, R}$. In what follows we will achieve the remaining conditions using pure gauge solutions of the form $\mathscr{G}_{0, 0, 0, q_1, q_2}$ as in Lemma \ref{pure-gauge-2}. Observe that these solutions have all components different from $\hat{\slashed{g}}$, $\underline{b}$ and $\check{\tr_{\gamma} \slashed{g}}$ vanishing, and therefore do not modify the achieved conditions above. 

The conditions here imposed are almost identical to the one imposed to the metric coefficient in the initial data normalization. The procedure to find $q_1$ and $q_2$ is identical. We sketch it here.

We define on $S_{U, R}$ the following:
\beaa
(q_1)_{\ell=1}&=& -\frac{1}{4r_0^2} \check{\tr_{\gamma} \slashed{g}}_{\ell=1}[\mathscr{S}_1]|_{S_{U, R}}, \qquad \DDs_2\DDs_1(q_1, q_2)= \frac{1}{2r_0^2} \hat{\slashed{g}}[\mathscr{S}_1]|_{S_{U, R}}
\eeaa
 By Lemma \ref{lemma-kernel-DDs2}, the above conditions uniquely determine $q_1$ and $q_2$ on $S_{U, R}$. 
Conditions \eqref{condition-divv-underline-b-l1-last-slice} and \eqref{condition-underline-b-last-slice} on $\mathscr{I}_{U, R}$ determine a transport equation along $\mathscr{I}_{U, R}$ for $q_1$ and $q_2$:
\beaa
r^2 \DDs_2\DDs_1(\nabb_3 q_1, \nabb_3 q_2)&=& \DDs_2\underline{b}[\mathscr{S}_1] \\
2(\nabb_3q_1)_{\ell=1}&=& \divv\underline{b}[\mathscr{S}_1]_{\ell=1}
\eeaa
which together with \eqref{nabb-4-q1} and \eqref{nabb-4-q2} globally determine $q_1$ and $q_2$. 

Define $\mathscr{G}^{U, R}:=\mathscr{G}_{h, \underline{h}, a, q_1, q_2}$ with $h, \underline{h}, a, q_1, q_2$ determined as above. Then 
\beaa
\mathscr{S}^{U, R}:=\mathscr{S}-\mathscr{G}^{U, R}
\eeaa
verifies all conditions of Definition \ref{definition-SUR-normalized} and is therefore $S_{U, R}$-normalized. By construction, $\mathscr{G}^{U, R}$ is supported in $\ell\geq1$ and is uniquely determined.
\end{proof}

\subsection{The Kerr-Newman parameters in $\ell=0,1$ modes}\label{Kerr-Newman-parameters-section}

The initial data normalization allows to identify the Kerr-Newman parameters from the initial data seed. Notice that, in contrast with the linear stability of Schwarzschild \cite{DHR}, the projection of the initial data normalized solution to the $\ell=0, 1$ modes is not exhausted by the linearized Kerr-Newman solution. Because of the presence of the electromagnetic radiation, there is decay of the components at the level of the $\ell=1$ mode (see Section \ref{decay-l-1mode}).

We define the Kerr-Newman parameters, which are read off at the initial sphere $S_0$ of radius $r_0$.

\begin{definition}\label{definition-kerr-newman-normalized} Let $\mathscr{S}$ and $\mathscr{S}^{id}$ as in Theorem \ref{achieving-initial-normalization-theorem}. 
 We denote $\mathscr{K}^{id}$ the linearized Kerr-Newman solution $\mathscr{K}_{(\mathfrak{M}, \mathfrak{Q}, \mathfrak{b}, \mathfrak{a})}$ where the parameters $\mathfrak{M}, \mathfrak{Q}, \mathfrak{b}, \mathfrak{a}_i$ are given by
\beaa
\mathfrak{Q}=r_0^2 \rhoFlin|_{S_0}, \qquad  \mathfrak{M}= -\frac {r_0^3}{ 2} \rlin|_{S_0}+2Q r_0 \rhoFlin|_{S_0},\qquad \mathfrak{b}=r_0^2 \sigmaFlin|_{S_0}, \\ 
 \mathfrak{a}_{-1}={\check{\sigmaF}_{\ell=1, m=-1}}|_{S_0}, \qquad  \mathfrak{a}_{0}={\check{\sigmaF}_{\ell=1, m=0}}|_{S_0}, \qquad  \mathfrak{a}_{1}={\check{\sigmaF}_{\ell=1, m=1}}|_{S_0}
\eeaa
where the above quantities refer to the initial data normalized solution $\mathscr{S}^{id}$. 
\end{definition}

Observe that the Kerr-Newman parameters in this definition are explicitably computable from the seed initial data.

\section{Proof of boundedness}\label{chapter-linear-stability}

In this section, we prove boundedness of a linear perturbation of Reissner-Nordstr\"om spacetime $\mathscr{S}$ which is initial data-normalized, upon subtracting a member of the linearized Kerr family.

In Section \ref{decay-from-initial-data}, we use the initial data normalization to prove boundedness. In the process of obtaining boundedness, we obtain decay for some components, and non optimal decay for other components. We make use of this boundedness statement to define an associated $S_{U, R}$-normalization, and in Section \ref{decay-pure-gauge} we prove that the gauge solution decays and is controlled by initial data. 

\subsection{Initial data normalization and boundedness}\label{decay-from-initial-data}

Here we state the result proved in this chapter. 

Let $\mathscr{S}$ be a linear gravitational and electromagnetic perturbation around Reissner-Nordstr\"om spacetime $(\MM, g_{M, Q})$, with $|Q| \ll M$, arising from regular asymptotically flat initial data. Let $\mathscr{S}^{id}$ be the initial data normalized solution associated to $\mathscr{S}$ by Theorem \ref{achieving-initial-normalization-theorem} and let $\mathscr{K}^{id}$ the linearized Kerr-Newman solution associated to $\mathscr{S}^{id}$ as in Definition \ref{definition-kerr-newman-normalized}. Define  
\bea\label{definition-S-id-K}
\mathscr{S}^{id, K}:= \mathscr{S}^{id}-\mathscr{K}^{id}=\mathscr{S}-\mathscr{G}^{id}-\mathscr{K}^{id}
\eea

\begin{proposition}\label{prop-boundedness}  
All the components of the solution $\mathscr{S}^{id, K}$ are (polynomially) bounded in the region $\{r \geq r_1\} \cap \{ u_0 \leq u \leq u_1\}$ where $u_1$ is the $u$-coordinate of the sphere of intersection between $\underline{C}_0$ and $\{ r=r_1\}$, for some $r_1>r_{\mathcal{H}}$. Moreover, its projection to the $\ell=0$ spherical harmonics vanishes. 
\end{proposition}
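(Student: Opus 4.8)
The plan is to propagate the decay/boundedness known for the gauge-invariant quantities (Theorem \ref{estimates-theorem}) through the full system of linearized equations of Section \ref{all-equations}, using the initial data normalization of Definition \ref{definition-initial-data-normalization} as the anchor for the transport equations. I would organize the argument by first treating the $\ell=0$ projection, then the $\ell=1$ projection, then the $\ell\geq 2$ part, since the structure of the equations and the special solutions behaves differently in each range of spherical harmonics.

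First I would dispose of the $\ell=0$ projection. The projection to $\ell=0$ of $\mathscr{S}^{id,K}$ involves only the scalars $\kalin,\kablin,\omblin,\rlin,\rhoFlin,\sigmaFlin,\Omegablin,\vsilin$. The equations \eqref{nabb-3-sigmaFlin}--\eqref{nabb-4-rhoFlin}, \eqref{nabb-3-rlin}--\eqref{nabb-4-rlin}, together with the $\ell=0$ normalization conditions \eqref{kalin-initial-data}--\eqref{vsilin-initial-data} and the definition of the Kerr-Newman parameters in Definition \ref{definition-kerr-newman-normalized} (which were precisely chosen so that $\rhoFlin,\sigmaFlin,\rlin$ are matched at $S_0$), force all these scalars to vanish on $\underline{C}_0$; then the remaining transport equations \eqref{nabb-4-kalin-ze-eta}, \eqref{nabb-4-kablin-ze-eta}, \eqref{nabb-4-omblin-ze-eta}, \eqref{nabb-4-Omegablin}, \eqref{nabb-4-vsilin} propagate the vanishing off $\underline{C}_0$. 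This gives the last sentence of the Proposition. (This is essentially the content that Section \ref{achieving-normalization-section} already assumes, referred to there as ``Section \ref{proj-l=0-vanishes}''.)

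Next, the $\ell\geq 1$ part. Here the strategy is a hierarchy of transport estimates integrated \emph{forward} from the initial cones $C_0$ and $\underline{C}_0$, in the region $\{r\geq r_1\}\cap\{u_0\le u\le u_1\}$. The inputs are: (i) the pointwise/energy control of $\a,\aa,\ff,\ffb,\tilde\b,\tilde\bb$ from Theorem \ref{estimates-theorem}; (ii) the asymptotically-flat initial data bounds on $C_0$ from Theorem \ref{well-posedness-theorem}; (iii) the normalization conditions along $\underline{C}_0$ and on $S_0$. I would proceed roughly in the order: $\chih$ from \eqref{nabb-4-chih-ze-eta} (driven by $\a$), $\hat{\slashed g}$ from \eqref{nabb-4-g}; then $\b$ from \eqref{nabb-4-b-ze-eta}, $\check\rho,\check\sigma$ from \eqref{nabb-4-check-rho-ze-eta},\eqref{nabb-4-check-sigma-ze-eta}, $\check\rhoF,\check\sigmaF$ from \eqref{nabb-4-check-rhoF-ze-eta},\eqref{nabb-4-check-sigmaF-ze-eta}, $\bF$ from the decomposition $\ff = \DDs_2\bF + \rhoF\chih$ together with an elliptic estimate (Lemma \ref{main-elliptic-estimate}) or directly from a Maxwell transport equation; $\ze,\eta$ from \eqref{nabb-4-ze-ze-eta},\eqref{nabb-4-eta-ze-eta} (using \eqref{nabb-check-vsi} and \eqref{curl-ze-eta} to recover the full one-forms from their divergence/curl); $\check\ka$ from \eqref{nabb-4-check-ka-ze-eta}, $\check\kab$ from \eqref{nabb-4-check-kab-ze-eta}, $\check K$ from \eqref{Gauss-check}; $\underline b$ from \eqref{derivatives-b-ze-eta}; $\check\Omegab$ from \eqref{nabb-4-check-Omegab}, $\check\vsi$ from \eqref{nabb-4-check-vsi}; and $\check{\tr_\gamma\slashed g}$ from \eqref{nabb-4-check-tr}. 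For the remaining ``underlined'' quantities $\chibh,\bbF,\bb,\xib,\check\omb$, which do not satisfy a $\nabb_4$ equation with integrable right-hand side, I would integrate the corresponding $\nabb_3$ equations (\eqref{nabb-3-chibh-ze-eta}, \eqref{nabb-4-bbF-ze-eta}, \eqref{nabb-3-bb-ze-eta}, \eqref{nabb-4-xib-ze-eta}, \eqref{nabb-4-check-omb-ze-eta}) from $\underline C_0$, where the normalization and the well-posedness bounds give their initial values, and the source terms are controlled by quantities already estimated. Throughout, one also needs the $\ell=1$ parts of $\ze,\eta,\bF,\bbF,\check\ka$ etc., for which the normalization conditions \eqref{condition-check-ka-l1=initial=data}--\eqref{underline-b-initial-data} and the $\ell=1$-specific equations (e.g. the Fackerell-Ipser equation \eqref{finalequationl1} for $\pf$, equivalently $\tilde\b$) are used; $\check K_{\ell=1}=0$ from \eqref{check-K-l=1} is important to close the elliptic estimates for one-forms at $\ell=1$.

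The main obstacle — and the reason the result is ``boundedness'' rather than ``optimal decay'' — is precisely that when integrating forward from a \emph{bounded} region of initial data along the $\nabb_3$ (ingoing) direction, several equations do not improve in powers of $r$: the transport equation \eqref{nabb-4-check-omb-ze-eta} for $\check\omb$ and \eqref{nabb-4-xib-ze-eta} for $\xib$ have right-hand sides that are only bounded, not decaying, in $r$, so $\check\omb,\xib$ are merely bounded; then $\check\Omegab,\underline b$ pick up a factor of $r$ through \eqref{nabb-4-check-Omegab},\eqref{derivatives-b-ze-eta}, and $\check{\tr_\gamma\slashed g}$ a factor $r^2$ through \eqref{nabb-4-check-tr} and \eqref{nabb-3-check-tr}. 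Tracking these weights carefully, and verifying that all source terms really are controlled by previously-estimated quantities with the claimed (possibly $r$-growing) weights so that the Grönwall/transport arguments close, is the bulk of the work. A secondary subtlety is the restriction to $r\geq r_1 > r_{\mathcal H}$: since the Bondi gauge does not cover the horizon and the estimates are phrased in inverse powers of $u$, one cannot let $r_1\to r_{\mathcal H}$ in this forward-integration argument; the region $\{r\geq r_1\}\cap\{u_0\le u\le u_1\}$ with $u_1$ the $u$-coordinate of $\underline C_0\cap\{r=r_1\}$ is exactly the maximal domain reachable by integrating forward from the two initial cones, and one checks it contains the far-away sphere $S_{U,R}$ used subsequently in Section \ref{achieving-normalization-section}.
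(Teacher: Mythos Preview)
Your overall strategy --- treat $\ell=0$, $\ell=1$, $\ell\geq 2$ separately, use Theorem \ref{estimates-theorem} as input, and integrate transport equations forward from $\underline{C}_0$ --- matches the paper, and your discussion of the $\ell=0$ part and of the unavoidable loss of $r$-decay for $\check\omb,\xib,\check\Omegab,\underline b,\check{\tr_\gamma\slashed g}$ is essentially what the paper says. There are, however, two points where your route diverges from the paper's and one of them borders on a gap.

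First, your transport hierarchy has ordering problems. You propose to get $\b$ from \eqref{nabb-4-b-ze-eta} \emph{before} estimating $\bF$, but the right-hand side of \eqref{nabb-4-b-ze-eta} contains $\rhoF\nabb_4\bF$; similarly \eqref{nabb-4-check-rho-ze-eta} needs $\check\rhoF$ and $\bF$, which you place afterwards. You also call the equations for $\chibh,\bbF,\bb,\xib,\check\omb$ ``$\nabb_3$ equations'' and say you will integrate them ``from $\underline C_0$'', but $\nabb_3$ is \emph{tangent} to $\underline C_0$; the equations you actually list (\eqref{nabb-4-bbF-ze-eta}, \eqref{nabb-4-xib-ze-eta}, \eqref{nabb-4-check-omb-ze-eta}) are $\nabb_4$ equations, and for $\chibh$ the paper integrates the $\nabb_4$ equation \eqref{nabb-4-chibh-ze-eta}, not \eqref{nabb-3-chibh-ze-eta}. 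These are fixable by reordering, but as written the chain does not close.

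Second, and more substantively, the paper avoids most of this long transport chain altogether. The key device is a set of \emph{algebraic/elliptic} reductions (Propositions \ref{expression-all-quantities-l1}, \ref{expression-all-quantities-curl-l1}, \ref{expression-all-quantities-l2}) that use the gauge-invariant identities $\ff=\DDs_2\bF+\rhoF\chih$, $\tilde\b=2\rhoF\b-3\rho\bF$, the alternative expressions for $\qf^\F,\pf$ in \eqref{relation-qf-rhoF-chih-chibh}--\eqref{relation-pf-rho-rhoF-bF-bbF}, Codazzi \eqref{Codazzi-chib-ze-eta}--\eqref{Codazzi-chi-ze-eta}, and Gauss \eqref{Gauss-check} to express $\DDs_2\b,\DDs_2\bb,\DDs_2\bF,\DDs_2\bbF,\DDs_2\ze,\DDs_2\DDs_1(\check\rhoF,\check\sigmaF),\DDs_2\DDs_1(\check\rho,\check\sigma),\DDs_2\DDs_1(\check\kab,0)$ \emph{pointwise} in terms of $\chih,\chibh,\DDs_2\DDs_1(\check\ka,0)$ plus gauge-invariant error terms (and analogously the $\ell=1$ divergences/curls in terms of $(\divv\bF)_{\ell=1},(\divv\bbF)_{\ell=1},\check\ka_{\ell=1}$, etc.). After that reduction only three transport equations per block need to be integrated (e.g.\ \eqref{nabb-4-check-ka-ze-eta}, \eqref{nabb-4-chih-ze-eta}, \eqref{nabb-4-chibh-ze-eta} for $\ell\geq 2$), with the normalization conditions \eqref{condition-check-ka-l1=initial=data}--\eqref{condition-check-ka-l2=initial=data} providing the initial values on $\underline C_0$. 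This is not only cleaner but is reused verbatim in the decay proof in Section \ref{decay-chapter}, so it is worth organizing the argument this way from the start.
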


\begin{remark}\label{remark-not-uniform-r} It follows from the proof of Proposition \ref{prop-boundedness} that the estimates hereby obtained (for example \eqref{decay-curl-b-l1}-\eqref{decay-check-sigmaF-l1-optimal-r}) are not uniform in the parameter $r_1$. More precisely, the constants in the bounds tend to infinity as $r_1$ approaches $r_{\mathcal{H}}$ (as one can see from the fact that otherwise those estimates would imply the vanishing of the bounded quantities along the horizon).  Note however that the implicit constants in those estimates can be made uniform in the parameter $r_1$ if one does not require $u$ decay, but only $r$ decay (see also Remark \ref{replace-uniform}).

The boundedness proved in Proposition  \ref{prop-boundedness} is used to apply Theorem \ref{achieving-normalization-theorem} and initialize the $S_{U,R}$-normalization. If one were to use the estimates which depend on $r_1$, which itself depends on the choice of $U$ and $R$, then the decay estimates obtained in Theorem \ref{linear-stability} would eventually depend on $U$ and $R$. 
One can use instead the bounds which are uniform in $r_1$ to obtain the initialization of $S_{U, R}$-normalization, for which the decay in $u$ is not required. 
This is sufficient to close the arguments and obtain the final estimates which do not depend on $U$ and $R$.

\end{remark}

The proof of the Proposition makes use of the gauge conditions imposed in the initial data normalization to then integrate forward the transport equations. The process of integrating forward from a bounded $r$ necessarily implies that some components (namely $\xib$ and $\check{\omb}$) would not decay in $r$ (some would polynomially grow in $r$). For this reason, with this procedure we fail to obtain decay in $r$ for $\xib$ and $\check{\omb}$. In the next chapter we introduce the $S_{U, R}$-normalization, through which we can integrate backward from an unbounded $r$. This allows to obtain the optimal decay in $r$ and $u$ (i.e. as given in Theorem \ref{linear-stability}).

 In Section \ref{proj-l=0-vanishes} we prove that the projection to the $\ell=0$ spherical mode of such a solution vanishes.
  In Section \ref{projection-l1-boundedness} and in Section \ref{proj-l2=boundedness} we prove boundedness and some decay for the projection to the $\ell=1$ mode and $\ell\geq2$ modes respectively.  
  Finally in Section \ref{terms-e3-boundedness} we derive boundedness statements for the quantities involved in the $e_3$ direction, i.e. $\xib$, $\eta$ and $\check{\omb}$ for which the decay is not optimal. This lack of optimality is the reason to use the $S_{U, R}$-normalization later.

We first summarize the main properties of the solution $\mathscr{S}^{id, K}$. 
\begin{enumerate}
\item Since by Proposition \ref{kerr-newman-are-initial-normalized}, the linearized Kerr-Newman solution is initial data normalized, then $\mathscr{S}^{id, K}$ is initial data normalized, i.e. conditions \eqref{kalin-initial-data}-\eqref{condition-hat-slashed-g} are verified. 
\item By Definition \ref{definition-kerr-newman-normalized} of the linearized Kerr-Newman solution $\mathscr{K}^{id}$, the solution $\mathscr{S}^{id, K}$ verifies in addition (recall condition \eqref{underline-b-initial-data}):
\bea
\rhoFlin=\rlin=\sigmaFlin=\check{\sigmaF}_{\ell=1}&=&0 \qquad  \text{on $S_0$} \label{vanishing-l0-after-kerr}\\
(\divv\underline{b})_{\ell=1}=\DDs_2\underline{b}&=&0, \qquad \text{on $\underline{C}_0$} \label{divv-b-0-initial}\\
(\curll \underline{b})_{\ell=1}=-\frac 2 3r^3 \lapp \check{\sigma}_{\ell=1}&=&\frac 4 3r  \check{\sigma}_{\ell=1}  \qquad \text{on $\underline{C}_0$} \label{curll-b-initial}
\eea
\end{enumerate}

We show that the above conditions imply the vanishing of the projection to the $\ell=0$ spherical mode, and the boundedness of the projection to the $\ell\geq 1$ spherical mode. 

\subsubsection{The projection to the $\ell=0$ mode}\label{proj-l=0-vanishes}
We prove here that the projection to the $\ell=0$ spherical mode of the solution $\mathscr{S}^{id, K}$ defined in \eqref{definition-S-id-K} vanishes.

We first prove the vanishing of the $\ell=0$ mode on $S_0$:
\begin{enumerate}
\item From \eqref{kalin-initial-data}, \eqref{Omegablin-initial-data}, \eqref{vsilin-initial-data} and \eqref{vanishing-l0-after-kerr}, we have on $S_0$:
\beaa
\kalin=\kablin-\ka\Omegablin=\vsilin=\rhoFlin=\rlin=\sigmaFlin=0
\eeaa
\item Applying Gauss equation \eqref{Gauss-lin} to the sphere $S_0$ we obtain $\kablin=0$, and therefore from  \eqref{Omegablin-initial-data} we obtain $\Omegablin=0$. 
\item Condition \eqref{kalin-initial-data} holds on $\underline{C}_0$, therefore it implies $\nabb_3\kalin=0$ on $S_0$. Restricting \eqref{nabb-3-kalin-ze-eta} to $S_0$ we obtain that $\omblin=0$ on $S_0$. 
\end{enumerate}

We prove vanishing of the $\ell=0$ mode on $\underline{C}_0$:
\begin{enumerate}
\item From \eqref{kalin-initial-data}, \eqref{Omegablin-initial-data}, \eqref{vsilin-initial-data} we have on $\underline{C}_0$:
\beaa
\kalin=\kablin-\ka\Omegablin=\vsilin=0
\eeaa
\item From the transport equations \eqref{nabb-3-sigmaFlin} and \eqref{nabb-3-rhoFlin} and the vanishing initial data on $S_0$ for $\sigmaFlin$ and $\rhoFlin$, we obtain $\sigmaFlin=\rhoFlin=0$ on $\underline{C}_0$.
\item From the transport equation \eqref{nabb-3-rlin} and the vanishing initial data on $S_0$ for $\rlin$, we obtain $\rlin=0$ on $\underline{C}_0$.
\item From Gauss equation \eqref{Gauss-lin} and \eqref{kalin-initial-data} we obtain $\kablin=0$ on $\underline{C}_0$, and therefore from \eqref{Omegablin-initial-data} we obtain $\Omegablin=0$ on $\underline{C}_0$. 
\item  Restricting \eqref{nabb-3-kalin-ze-eta} to $\underline{C}_0$ we obtain that $\omblin=0$ on $\underline{C}_0$. 
\end{enumerate}

We prove vanishing of the $\ell=0$ mode everywhere:
\begin{enumerate}
\item From \eqref{kalin-initial-data} and \eqref{nabb-4-kalin-ze-eta} we have globally:
\beaa
\kalin=0
\eeaa
\item From the transport equations \eqref{nabb-4-sigmaFlin} and \eqref{nabb-4-rhoFlin} and the vanishing initial data on $\underline{C}_0$ for $\sigmaFlin$ and $\rhoFlin$, we obtain $\sigmaFlin=\rhoFlin=0$ globally.
\item From the transport equation \eqref{nabb-4-rlin} and the vanishing initial data on $\underline{C}_0$ for $\rlin$, we obtain $\rlin=0$ globally.
\item From Gauss equation \eqref{Gauss-lin} we obtain $\kablin=0$ globally. 
\item From \eqref{nabb-4-omblin-ze-eta} and the vanishing initial data on $\underline{C}_0$ we obtain that $\omblin=0$ globally. 
\item From \eqref{nabb-4-vsilin} and the vanishing initial data on $\underline{C}_0$ we obtain that $\vsilin=0$ globally. 
\item From \eqref{nabb-4-Omegablin} and the vanishing initial data on $\underline{C}_0$ we obtain that $\Omegablin=0$ globally. 
\end{enumerate}

The projection to the $\ell=0$ spherical mode of an initial data normalized solution is therefore exhausted by a linearized Reissner-Nordstr\"om solution, with no non-trivial decay supported in this spherical mode.

\subsubsection{The projection to the $\ell=1$ mode}\label{projection-l1-boundedness}
In contrast with the case of linear stability of Schwarzschild \cite{DHR}, in the linear stability of Reissner-Nordstr\"om, because of the presence of the electromagnetic radiation, we expect the projection to the $\ell=1$ mode of the solution not to be exhausted by a pure gauge and a linearized Kerr-Newman solution. We indeed show that there is also decay at the level of the projection to the $\ell=1$ mode.

To obtain decay for the projection to the $\ell=1$ spherical mode, we make use of Theorem \ref{estimates-theorem} stating the decay for the gauge-invariant quantities $\tilde{\b}$, $\tilde{\bb}$, $\frak{p}$. In particular, we show that we can express all the remaining quantities in terms of only $\bF$, $\bbF$ and $\check{\ka}$ and the gauge-invariant quantities already estimated. This will simplify the computations and the derivation of the estimates for all quantities. 

\textbf{Notation} We denote that a quantity $\xi$ is $O(r^{-p-1} u^{-1/2+\de}, r^{-p} u^{-1+\de})$ if 
\beaa
|\xi|\les \min\{r^{-p-1} u^{-1/2+\de}, r^{-p}u^{-1+\de}\} \qquad \text{for all $u \geq u_0$ and $r\geq r_1$}
\eeaa
We write $\xi_1=\xi_2+O(r^{-p-1} u^{-1/2+\de}, r^{-p} u^{-1+\de})$ if $\xi_1=\xi_2+\xi_3$ with $\xi_3=O(r^{-p-1} u^{-1/2+\de}, r^{-p} u^{-1+\de})$.

Since the following relations will be used later in the proof of the optimal decay, we summarize them in the following Proposition. To derive those, we only use elliptic relations, and not transport equations which will be exploited later.

\begin{proposition}\label{expression-all-quantities-l1} The following relations hold true for all $u \geq u_0$ and $r\geq r_1$:
\bea
(\divv \b)_{\ell=1}&=&\frac{3\rho}{2\rhoF} (\divv \bF)_{\ell=1}+O(r^{-4-\de}u^{-1/2+\de}, r^{-3-\de} u^{-1+\de} )  \label{divv-b-in-terms-of-bF-l1}\\
(\divv \bb)_{\ell=1}&=& \frac{3\rho}{2\rhoF} (\divv \bbF)_{\ell=1} +O(r^{-3} u^{-1+\de}) \label{divv-bb-in-terms-of-bbF-l1} \\
(\divv\ze)_{\ell=1} &=&  \frac 1 r \check{\ka}_{\ell=1} +r\left(\frac{3\rho}{2\rhoF} -\rhoF\right)(\divv\bF)_{\ell=1}+O(r^{-3-\de} u^{-1/2+\de}, r^{-2-\de} u^{-1+\de}) \label{divv-ze-in-terms-of-bF-ka-l1} 
\eea
\bea
\begin{split}
 \check{\kab}_{\ell=1}&=-\frac 1 2 r\kab \check{\ka}_{\ell=1} -\frac 1 2 r^3\kab\left(\frac{3\rho}{2\rhoF} -\rhoF\right)(\divv\bF)_{\ell=1}
 +r^2\left(\frac{3\rho}{2\rhoF}-\rhoF\right)(\divv\bbF)_{\ell=1}\\
 &+O( r^{-1} u^{-1+\de}) \label{divv-check-kab-in-terms-of-bF-bbF-ka-l1} 
 \end{split}
 \eea
 \bea
 \begin{split}
 \check{\rho}_{\ell=1}&=\frac 1 4 r^2\kab\left(\frac{3\rho}{2\rhoF} +\rhoF\right)(\divv\bF)_{\ell=1}
 -\frac 1 2 r\left(\frac{3\rho}{2\rhoF}+\rhoF\right)(\divv\bbF)_{\ell=1}\\
 &+O( r^{-2} u^{-1+\de})\label{check-rho-in-terms-of-bF-bbF-l1}
 \end{split}
 \eea
 \bea
  \check{\rhoF}_{\ell=1}  &=& \frac 1 4  r^2\kab(\divv\bF)_{\ell=1}
 - \frac 1 2 r(\divv\bbF)_{\ell=1}+ O( r^{-1} u^{-1+\de}) \label{check-rhoF-in-terms-of-bF-bbF-l1}
\eea
\end{proposition}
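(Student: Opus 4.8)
The strategy is to work entirely at the level of the $\ell=1$ spherical harmonics, exploiting the fact that on this mode all symmetric traceless $2$-tensors and all scalars coming from $\DDd_2$ (like $\DDs_2\DDs_1$) vanish by Lemma \ref{lemma-kernel-DDs2}. Hence the relevant quantities reduce to the scalars obtained by applying $\divv$ (and $\curll$, but these enter in parallel) to the one-forms, and to $\check{\ka}_{\ell=1}$, $\check{\kab}_{\ell=1}$, $\check{\rho}_{\ell=1}$, $\check{\rhoF}_{\ell=1}$. On $\ell=1$ we have the crucial relation $\DDd_1\DDs_1 = -\lapp = \frac{2}{r^2}$ and $K = \frac{1}{r^2}$, $\check{K}_{\ell=1}=0$ by \eqref{check-K-l=1}. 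The plan is to combine the gauge-invariant quantities $\tilde{\b} = 2\rhoF\b - 3\rho\bF$, $\tilde{\bb} = 2\rhoF\bb - 3\rho\bbF$, whose $\ell=1$ projections decay by Theorem \ref{estimates-theorem} (estimates \eqref{estimate-tilde-b}, \eqref{estimate-tilde-bb}, and via $\pf$ through \eqref{estimate-pf}), with the purely elliptic (non-transport) relations among the linearized quantities on the sphere.

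\textbf{Key steps in order.} First I would derive \eqref{divv-b-in-terms-of-bF-l1}: applying $\divv$ to the definition $\tilde{\b} = 2\rhoF\b - 3\rho\bF$ gives $(\divv\tilde\b)_{\ell=1} = 2\rhoF(\divv\b)_{\ell=1} - 3\rho(\divv\bF)_{\ell=1}$, hence $(\divv\b)_{\ell=1} = \frac{3\rho}{2\rhoF}(\divv\bF)_{\ell=1} + \frac{1}{2\rhoF}(\divv\tilde\b)_{\ell=1}$; the error term is controlled using the decay of $\tilde\b$ (note $\rhoF = Q/r^2$ so $\frac{1}{2\rhoF}(\divv\tilde\b)_{\ell=1}$ picks up $r^2/Q$ times the $r^{-5-\de}$ decay of $\tilde\b$, giving $O(r^{-4-\de}u^{-1/2+\de}, r^{-3-\de}u^{-1+\de})$ after accounting for the extra angular derivative, which costs an $r^{-1}$; here one uses $\|\tilde\b\|_{\infty,k}$ for $k\geq 1$). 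The same computation with underlines gives \eqref{divv-bb-in-terms-of-bbF-l1}, now using the (worse, purely $u^{-1}$) decay \eqref{estimate-tilde-bb} of $\tilde\bb$. Next, for \eqref{divv-ze-in-terms-of-bF-ka-l1}, I would take $\divv$ of the linearized Codazzi equation for $\chih$, namely \eqref{Codazzi-chi-ze-eta}: $\divv\divv\chih = \frac12\ka\divv\ze - \frac12\divv\DDs_1(\check\ka,0) - \divv\b + \rhoF\divv\bF$; on $\ell=1$ the left-hand side vanishes ($\divv\divv\chih$ is supported in $\ell\geq2$) and $\divv\DDs_1(\check\ka,0)_{\ell=1} = -\lapp\check\ka_{\ell=1} = \frac{2}{r^2}\check\ka_{\ell=1}$, so solving for $(\divv\ze)_{\ell=1}$ and substituting \eqref{divv-b-in-terms-of-bF-l1} yields the stated identity (using $\ka = 2/r$). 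For \eqref{divv-check-kab-in-terms-of-bF-bbF-ka-l1} I would use the Gauss equation \eqref{Gauss-check} on $\ell=1$, $\check K_{\ell=1}=0$, to express $\check\kab_{\ell=1}$ in terms of $\check\ka_{\ell=1}$, $\check\rho_{\ell=1}$, $\check\rhoF_{\ell=1}$, then feed in the relations for $\check\rho_{\ell=1}$ and $\check\rhoF_{\ell=1}$; so really \eqref{check-rhoF-in-terms-of-bF-bbF-l1} and \eqref{check-rho-in-terms-of-bF-bbF-l1} should be established first. For \eqref{check-rhoF-in-terms-of-bF-bbF-l1}, I would combine the two Maxwell equations — here the cleanest route is probably the elliptic relation coming from $\divv$ of \eqref{nabb-3-bF-ze-eta} or \eqref{nabb-4-bbF-ze-eta} projected to $\ell=1$: $\divv\DDs_1(\check\rhoF,\check\sigmaF)_{\ell=1} = -\lapp\check\rhoF_{\ell=1} = \frac{2}{r^2}\check\rhoF_{\ell=1}$, so taking $\divv$ of \eqref{nabb-4-bbF-ze-eta} and of \eqref{nabb-3-bF-ze-eta} and combining to eliminate the $\nabb_3,\nabb_4$-derivative terms up to controlled error (using the decay of $\ze$ and the already-established relations) gives $\check\rhoF_{\ell=1}$ as a combination of $(\divv\bF)_{\ell=1}$ and $(\divv\bbF)_{\ell=1}$. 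Then \eqref{check-rho-in-terms-of-bF-bbF-l1} follows similarly from the $\ell=1$ projections of the Bianchi identities \eqref{nabb-4-bb-ze-eta}/\eqref{nabb-3-b-ze-eta} (or more directly from the definition of $\check\mu$ and $\check\nu$ combined with the decay of $\frak p$), again using $\divv\DDs_1(\check\rho,\check\sigma)_{\ell=1} = \frac{2}{r^2}\check\rho_{\ell=1}$ and eliminating derivative terms.

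\textbf{Main obstacle.} The delicate point is bookkeeping the $r$- and $u$-weights in the error terms: each step introduces one extra angular derivative (costing a factor $r^{-1}$), and one must check that after dividing by $\rhoF = Q/r^2$ (which multiplies errors by $r^2$) the net weight still matches the claimed $O(r^{-p-1}u^{-1/2+\de}, r^{-p}u^{-1+\de})$. This is why the statement uses $\|\cdot\|_{\infty,k}$ norms from Theorem \ref{estimates-theorem} rather than just $\|\cdot\|_\infty$. The other subtle issue is that for the underlined/ingoing quantities ($\tilde\bb$, hence $\divv\bb$, $\check\kab$, etc.) only the $u^{-1+\de}$ decay is available (no $u^{-1/2+\de}$ improvement with better $r$-weight), so the corresponding error terms in \eqref{divv-bb-in-terms-of-bbF-l1}, \eqref{divv-check-kab-in-terms-of-bF-bbF-ka-l1}, \eqref{check-rho-in-terms-of-bF-bbF-l1}, \eqref{check-rhoF-in-terms-of-bF-bbF-l1} are genuinely of the single form $O(r^{-p}u^{-1+\de})$, and one must be careful not to claim a spurious $u^{-1/2+\de}$ gain there. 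Beyond this weight-tracking, every step is a routine elliptic manipulation on the sphere combined with the known decay of the gauge-invariant quantities, so no transport equations (which would be needed for the optimal decay later) are invoked here.
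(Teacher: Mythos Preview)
Your derivations of \eqref{divv-b-in-terms-of-bF-l1}, \eqref{divv-bb-in-terms-of-bbF-l1}, and \eqref{divv-ze-in-terms-of-bF-ka-l1} are correct and exactly match the paper. The gap is in your treatment of $\check{\kab}_{\ell=1}$, $\check{\rho}_{\ell=1}$, and $\check{\rhoF}_{\ell=1}$.

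For $\check{\kab}_{\ell=1}$, the paper does \emph{not} use Gauss: it takes $\divv$ of the \emph{other} Codazzi equation \eqref{Codazzi-chib-ze-eta} and projects to $\ell=1$. Since $(\divv\divv\chibh)_{\ell=1}=0$, this gives $\check{\kab}_{\ell=1}$ directly in terms of $(\divv\ze)_{\ell=1}$, $(\divv\bb)_{\ell=1}$, $(\divv\bbF)_{\ell=1}$, all of which are already expressed via steps 1--2. This avoids the circularity in your plan (where Gauss would require $\check{\rho}_{\ell=1}$ and $\check{\rhoF}_{\ell=1}$ first).

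Your proposed route to $\check{\rhoF}_{\ell=1}$ via \eqref{nabb-3-bF-ze-eta} and \eqref{nabb-4-bbF-ze-eta} does not work: these are genuine transport equations, and taking $\divv$ leaves you with $\nabb_3(\divv\bF)_{\ell=1}$ and $\nabb_4(\divv\bbF)_{\ell=1}$, which are independent and cannot be simultaneously eliminated; neither is controlled at this stage of the argument. This contradicts your own stated principle that no transport equations are invoked. The paper instead uses the purely \emph{elliptic} identity \eqref{relation-pf-rho-rhoF-bF-bbF} for $\pf$,
\[
\frac{\pf}{r^5} = 2\rhoF\,\DDs_1(-\check{\rho},\check{\sigma}) + (3\rho - 2\rhoF^2)\,\DDs_1(\check{\rhoF},\check{\sigmaF}) + 2\rhoF^2(\kab\bF - \ka\bbF),
\]
which after $\divv$ and projection to $\ell=1$ relates $\check{\rho}_{\ell=1}$, $\check{\rhoF}_{\ell=1}$, $(\divv\bF)_{\ell=1}$, $(\divv\bbF)_{\ell=1}$ with error $O(r^{-4}u^{-1+\de})$ from the decay of $\pf$. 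Combining this with the Gauss equation \eqref{Gauss-check} (projected to $\ell=1$ using $\check{K}_{\ell=1}=0$), which expresses $\check{\rho}_{\ell=1}$ as $-\frac{1}{4}\kab\check{\ka}_{\ell=1} - \frac{1}{4}\ka\check{\kab}_{\ell=1} + 2\rhoF\check{\rhoF}_{\ell=1}$, and feeding in the already-derived $\check{\kab}_{\ell=1}$, yields a single linear equation for $\check{\rhoF}_{\ell=1}$; the coefficient $(3\rho - 6\rhoF^2)$ is nondegenerate and one obtains \eqref{check-rhoF-in-terms-of-bF-bbF-l1}. Then \eqref{check-rho-in-terms-of-bF-bbF-l1} follows by back-substitution into Gauss. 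You mention $\pf$ only parenthetically and for $\check{\rho}$; in fact it is the essential missing ingredient for $\check{\rhoF}$, and Gauss (not Bianchi) closes the loop for $\check{\rho}$.
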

\begin{proof} Recall the definition of the gauge-invariant quantities $\tilde{\b}$ and $\tilde{\bb}$ as defined in \eqref{definition-tilde-b}. By taking the divergence and projecting to the $\ell=1$ spherical mode, we have
\beaa
(\divv\tilde{\b})_{\ell=1}&=& 2\rhoF (\divv \b)_{\ell=1} -3\rho (\divv \bF)_{\ell=1} \\
(\divv\tilde{\bb})_{\ell=1}&=& 2\rhoF (\divv \bb)_{\ell=1} -3\rho (\divv \bbF)_{\ell=1}
\eeaa
We can in particular isolate $\b$ and $\bb$ and obtain:
\beaa
(\divv \b)_{\ell=1}&=&\frac{3\rho}{2\rhoF} (\divv \bF)_{\ell=1}+\frac{1}{2\rhoF}(\divv\tilde{\b})_{\ell=1}  \\
(\divv \bb)_{\ell=1}&=& \frac{3\rho}{2\rhoF} (\divv \bbF)_{\ell=1} +\frac{1}{2\rhoF}(\divv\tilde{\bb})_{\ell=1}
\eeaa
Using the estimates for $\tilde{\b}$ and $\tilde{\bb}$ as in \eqref{estimate-tilde-b} and \eqref{estimate-tilde-bb}, we obtain \eqref{divv-b-in-terms-of-bF-l1} and \eqref{divv-bb-in-terms-of-bbF-l1}.

Applying $\divv$ to Codazzi equation \eqref{Codazzi-chi-ze-eta} and projecting to the $\ell=1$ spherical harmonics, since $\divv\divv\chih_{\ell=1}=0$ and $\ka=\frac{2}{r}$ we obtain
\beaa
  (\divv\ze)_{\ell=1}&=& \frac 1 2 r\DDd_1\DDs_1(\check{\ka}, 0)_{\ell=1} +r(\divv\b)_{\ell=1}-r\rhoF(\divv\bF)_{\ell=1}
  \eeaa
  Using \eqref{angular-operators} to project the laplacian to the $\ell=1$, and using \eqref{divv-b-in-terms-of-bF-l1}, we obtain \eqref{divv-ze-in-terms-of-bF-ka-l1}.

Applying $\divv$ to Codazzi equation \eqref{Codazzi-chib-ze-eta} and projecting to the $\ell=1$ spherical harmonics, since $\divv\divv\chibh_{\ell=1}=0$  we obtain
\beaa
\frac 1 2 \DDd_1\DDs_1( \check{\kab}, 0)_{\ell=1}&=&-\frac 1 2 \kab (\divv\ze)_{\ell=1} +(\divv\bb)_{\ell=1}-\rhoF(\divv\bbF)_{\ell=1}
\eeaa
  Using \eqref{angular-operators} to project the laplacian to the $\ell=1$, and using \eqref{divv-bb-in-terms-of-bbF-l1}, we obtain \eqref{divv-check-kab-in-terms-of-bF-bbF-ka-l1}.

  Projecting Gauss equation \eqref{Gauss-check} to the $\ell=1$ spherical harmonics and using \eqref{check-K-l=1}, we have 
\beaa
\check{\rho}_{\ell=1}&=&- \frac 1 4 \kab \check{\ka}_{\ell=1}- \frac 1 4 \ka \check{\kab}_{\ell=1} +2\rhoF\check{\rhoF}_{\ell=1}\\
&=&\frac 1 4 r^2\kab\left(\frac{3\rho}{2\rhoF} -\rhoF\right)(\divv\bF)_{\ell=1}
 -\frac 1 2 r\left(\frac{3\rho}{2\rhoF}-\rhoF\right)(\divv\bbF)_{\ell=1}\\
 &&+2\rhoF\check{\rhoF}_{\ell=1}+O( r^{-2} u^{-1+\de})
\eeaa

Commuting the expression for $\pf$ given by \eqref{relation-pf-rho-rhoF-bF-bbF}, with $\divv$ and projecting to the $\ell=1$ spherical harmonics we obtain
\beaa
\frac{(\divv\pf)_{\ell=1}}{r^5} &=& 2\rhoF \DDd_1\DDs_1(-\check{\rho}, \check{\sigma})_{\ell=1} +(3\rho-2\rhoF^2) \DDd_1\DDs_1(\check{\rhoF}, \check{\sigmaF})_{\ell=1}\\
&&+2 \rhoF^2 (\kab (\divv\bF)_{\ell=1}-\ka(\divv\bbF)_{\ell=1}) \\
&=& 2\rhoF \left(-\frac{2}{r^2} \check{\rho}_{\ell=1}\right) +(3\rho-2\rhoF^2) \left(\frac{2}{r^2} \check{\rhoF}_{\ell=1}\right)+2 \rhoF^2 (\kab (\divv\bF)_{\ell=1}-\ka(\divv\bbF)_{\ell=1})
\eeaa 
Making use of the estimate \eqref{estimate-pf} for $\frak{p}$ and of the above relation for $\check{\rho}_{\ell=1}$ we have
\beaa
  (3\rho-2\rhoF^2) (2 \check{\rhoF}_{\ell=1})&=&4\rhoF (\check{\rho}_{\ell=1})-2 r^2\rhoF^2 (\kab (\divv\bF)_{\ell=1}-\ka(\divv\bbF)_{\ell=1})+ O( r^{-4} u^{-1+\de}) \\
  &=& \frac 1 2  r^2\kab\left(3\rho -2\rhoF^2\right)(\divv\bF)_{\ell=1}
 - r\left(3\rho-2\rhoF^2\right)(\divv\bbF)_{\ell=1}\\
 &&+8\rhoF^2\check{\rhoF}_{\ell=1}-2 r^2\rhoF^2 (\kab (\divv\bF)_{\ell=1}-\ka(\divv\bbF)_{\ell=1})+ O( r^{-4} u^{-1+\de})
\eeaa
which gives
\beaa
 (3\rho-6\rhoF^2) (2 \check{\rhoF}_{\ell=1})  &=& \frac 1 2  r^2\kab\left(3\rho -6\rhoF^2\right)(\divv\bF)_{\ell=1}
 - r\left(3\rho-6\rhoF^2\right)(\divv\bbF)_{\ell=1}+ O(r^{-4} u^{-1+\de})
\eeaa
Putting together the above, we finally obtain \eqref{check-rho-in-terms-of-bF-bbF-l1} and \eqref{check-rhoF-in-terms-of-bF-bbF-l1}.
\end{proof}

We now obtain control over $\check{\ka}_{\ell=1}$, $(\divv\bF)_{\ell=1}$, $(\divv\bbF)_{\ell=1}$ by using transport equations. In integrating transport equation along the $e_4$ direction from the initial data $\underline{C}_0 \cap \{r \geq r_1\}$ we obtain control in the region $\{ r\geq r_1\} \cap \{ u_0 \leq u \leq u_1 \}$, where $u_1$ is the $u$-coordinate of the sphere of intersection between $\underline{C}_0$ and $\{ r=r_1\}$. 

\begin{enumerate}
\item Recall Lemma \ref{commutation-projection-l1} for the commutation of the $\nabb_4$ derivative with the projection to the $\ell=1$ spherical harmonics. From \eqref{nabb-4-check-ka-ze-eta} we obtain
\beaa
\nabb_4(\check{\ka}_{\ell=1})&=& (\nabb_4\check{\ka})_{\ell=1}=-\ka \check{\ka}_{\ell=1}
\eeaa 
Together with condition \eqref{condition-check-ka-l1=initial=data}, this implies 
\bea\label{check-ka-l1=0-initial-data-everywhere}
\check{\ka}_{\ell=1}=0 \qquad \text{in $\{ r\geq r_1\} \cap \{ u_0 \leq u \leq u_1 \}$}
\eea
\item From equation \eqref{nabb-4-bF-tilde-b} commuted with $\divv$ and the estimates \eqref{estimate-tilde-b} for $\tilde{\b}$ we have for $r \geq r_1$:
\beaa
\nabb_4 \divv\bF+ 2 \ka\divv\bF&=&O(r^{-4-\de}u^{-1/2+\de}, r^{-3-\de}u^{-1+\de})
\eeaa
We now integrate the above equation from $\underline{C}_0\cap \{ r\geq r_1\} $ using the condition on the initial data \eqref{condition-divv-bF-l1=initial=data}. We obtain
\beaa
r^4  (\divv\bF)_{\ell=1} =  \int_{r_0(u)}^r  O(r^{-\de}u^{-1/2+\de}, r^{1-\de} u^{-1+\de}) d \tilde{r}
\eeaa
where $r_0(u)$ is the $r$-coordinate of the intersection of $\underline{C}_0$ and the cone of constant $u$ coordinate. Recall that the pointwise estimate on the right hand side of the above is only valid in the region $r \geq r_1$. In particular, we make use of the above integration only in the region\footnote{Indeed the right hand side of the equation cannot be estimated in terms of an inverse power of $u$ in the region $r \leq r_1$. On other hand the estimates in this region near the horizon are obtained in terms of the coordinate $v$ as in \cite{stabilitySchwarzschild}, while the $r$ decay is irrelevant (since $r$ is bounded). See also Section \ref{section-close-horizon}.} given by the intersection of $u \geq u_0$ and $r \geq r_1$ (or also, only for $r_0(u) \geq r_1$). 
This implies
\bea\label{decay-bF-from-initial-data}
|(\divv\bF)_{\ell=1} |\les \min\{r^{-3-\de} u^{-1/2+\de}, r^{-2-\de} u^{-1+\de} \}\qquad  \text{in $\{ r\geq r_1\} \cap \{ u_0 \leq u \leq u_1 \}$}
\eea
\item Using \eqref{divv-ze-in-terms-of-bF-ka-l1} we deduce
\bea\label{decay-ze-from-initial-data}
|(\divv\ze)_{\ell=1}| &\les &  \min\{r^{-3-\de} u^{-1/2+\de}, r^{-2-\de} u^{-1+\de} \}\qquad  \text{in $\{ r\geq r_1\} \cap \{ u_0 \leq u \leq u_1 \}$}
\eea
\item From commuting \eqref{nabb-4-bbF-ze-eta} with $\divv$ we obtain
\beaa
\nabb_4 \divv\bbF+ \ka \divv\bbF&=&\DDd_1\DDs_1(\check{\rhoF}, -\check{\sigmaF})+2\rhoF \divv\ze 
\eeaa
Projecting the above to the $\ell=1$ spherical harmonics we have 
\beaa
\nabb_4( (\divv\bbF)_{\ell=1})+ \ka (\divv\bbF)_{\ell=1}&=&\frac{2}{r^2} \check{\rhoF}_{\ell=1}+2\rhoF (\divv\ze)_{\ell=1}
\eeaa
Using \eqref{decay-ze-from-initial-data} and \eqref{check-rhoF-in-terms-of-bF-bbF-l1} we simplify it to
\beaa
\nabb_4( (\divv\bbF)_{\ell=1})+ \ka (\divv\bbF)_{\ell=1}&=&\frac{2}{r^2} ( - \frac 1 2 r(\divv\bbF)_{\ell=1})+ O( r^{-3} u^{-1+\de}) 
\eeaa
which gives
\beaa
\nabb_4( (\divv\bbF)_{\ell=1})+\frac 3 2\ka (\divv\bbF)_{\ell=1}&=& O( r^{-3} u^{-1+\de})
\eeaa
Integrating the above equation as before from $\underline{C}_0\cap \{ r\geq r_1\} $ in the region where $r_0(u) \geq r_1$, using the initial condition \eqref{condition-divv-bbF-l1=initial=data} we obtain
\bea\label{decay-bbF-from-initial-data}
|(\divv\bbF)_{\ell=1} |\les  r^{-2} u^{-1+\de} \qquad  \text{in $\{ r\geq r_1\} \cap \{ u_0 \leq u \leq u_1 \}$}
\eea
\item The decay obtained for $\check{\ka}_{\ell=1}$, $(\divv\bF)_{\ell=1}$, $(\divv\bbF)_{\ell=1}$  allows to deduce the following decays in $\{ r\geq r_1\} \cap \{ u_0 \leq u \leq u_1 \}$ using Proposition \ref{expression-all-quantities-l1}:
\beaa
|(\divv \b)_{\ell=1}|&\les &\min\{r^{-4-\de}u^{-1/2+\de}, r^{-3-\de} u^{-1+\de} \}\\
|(\divv \bb)_{\ell=1}|&\les& r^{-3} u^{-1+\de}\\
 |\check{\kab}_{\ell=1}|&\les & r^{-1}u^{-1+\de}  \\
 |\check{\rho}_{\ell=1}|&\les& r^{-2} u^{-1+\de} \\
 |\check{\rhoF}_{\ell=1}|&\les& r^{-1} u^{-1+\de} 
\eeaa
\end{enumerate}

We now obtain decay for the curl part of the projection to the $\ell=1$ spherical harmonics. Observe that is in this part that the linearized Kerr-Newman solutions are supported. We derive general elliptic relations for the curl part in the following proposition, where we express all the relevant quantities in terms of the $(\curll \bF)_{\ell=1}$.

\begin{proposition}\label{expression-all-quantities-curl-l1} The following relations hold true for all $u \geq u_0$ and $r \geq r_1$:
\bea
(\curll \b)_{\ell=1}&=&\frac{3\rho}{2\rhoF} (\curll \bF)_{\ell=1}+O(r^{-4-\de}u^{-1/2+\de}, r^{-3-\de} u^{-1+\de} )  \label{curll-b-in-terms-of-bF-l1}\\
(\curll\ze)_{\ell=1} &=& r\left(\frac{3\rho}{2\rhoF} -\rhoF\right)(\curll\bF)_{\ell=1}+O(r^{-3-\de} u^{-1/2+\de}, r^{-2-\de} u^{-1+\de}) \label{curll-ze-in-terms-of-bF-ka-l1} \\
(\curll \bbF)_{\ell=1}&=&\frac 1 2 \kab r(\curll\bF)_{\ell=1}   +O(r^{-2} u^{-1+\de}) \label{curll-bbF-in-terms-of-bF-l1} \\
(\curll \bb)_{\ell=1}&=& \frac{3\rho}{4\rhoF} \kab r(\curll\bF)_{\ell=1}   +O(r^{-3} u^{-1+\de}) \label{curll-bb-in-terms-of-bbF-l1} \\
(\curl\eta)_{\ell=1}&=& r\left(\frac{3\rho}{2\rhoF} -\rhoF\right)(\curll\bF)_{\ell=1}+O(r^{-3-\de} u^{-1/2+\de}, r^{-2-\de} u^{-1+\de}) \label{curll-eta-in-terms-of-bF-ka-l1} \\
 \check{\sigma}_{\ell=1}&=&r\left(\frac{3\rho}{2\rhoF} -\rhoF\right)(\curll\bF)_{\ell=1}+O(r^{-3-\de} u^{-1/2+\de}, r^{-2-\de} u^{-1+\de}) \label{check-sigma-in-terms-of-bF-bbF-l1}\\
  \check{\sigmaF}_{\ell=1}&=&-r(\curll\bF)_{\ell=1}+O(r^{-1} u^{-1+\de})  \label{check-sigmaF-in-terms-of-bF-bbF-l1}
\eea
\end{proposition}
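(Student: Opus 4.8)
The plan is to repeat, essentially verbatim, the argument used to prove Proposition \ref{expression-all-quantities-l1}, replacing the divergence operator $\divv$ by the curl operator $\curll$ throughout. The structural point which makes this work is that $\curll$ annihilates every ``electric'' term of the form $\DDs_1(\cdot,0)$ appearing in the Codazzi equations \eqref{Codazzi-chi-ze-eta}, \eqref{Codazzi-chib-ze-eta} and in the expression for $\pf$ (since $\curll\DDs_1(f,0)=0$), and that the $\ell=1$ projections of $\curll\divv\chih$ and $\curll\divv\chibh$ vanish (both are supported in $\ell\ge 2$, being built out of $\DDs_2\DDs_1$ of functions supported in $\ell\ge 2$). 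Consequently, taking $\curll$ isolates the ``magnetic'' $\ell=1$ content of the solution, and --- in contrast with the divergence case, where $\check\ka_{\ell=1}$, $(\divv\bF)_{\ell=1}$ and $(\divv\bbF)_{\ell=1}$ all remained free --- the whole chain of elliptic relations closes in terms of the single free quantity $(\curll\bF)_{\ell=1}$. Throughout, the $\ell=1$ projection commutes with $\curll$, and one uses \eqref{angular-operators} to replace the Laplacian by its eigenvalue $-2/r^2$ on $\ell=1$.

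First I would take $\curll$ of the definition \eqref{definition-tilde-b} of $\tilde\b$ and project to $\ell=1$ to get $(\curll\tilde\b)_{\ell=1}=2\rhoF(\curll\b)_{\ell=1}-3\rho(\curll\bF)_{\ell=1}$; solving for $(\curll\b)_{\ell=1}$ and inserting the decay \eqref{estimate-tilde-b} of $\tilde\b$ yields \eqref{curll-b-in-terms-of-bF-l1}. Next, applying $\curll$ to Codazzi \eqref{Codazzi-chi-ze-eta}, projecting to $\ell=1$, and using $\curll\divv\chih_{\ell=1}=0$ and $\curll\DDs_1(\check\ka,0)=0$, one obtains $(\curll\ze)_{\ell=1}=r(\curll\b)_{\ell=1}-r\rhoF(\curll\bF)_{\ell=1}$; substituting \eqref{curll-b-in-terms-of-bF-l1} gives \eqref{curll-ze-in-terms-of-bF-ka-l1}. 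Equations \eqref{sigma-curl-ze-eta} and \eqref{curl-ze-eta} then give, at $\ell=1$, $\check\sigma_{\ell=1}=(\curll\ze)_{\ell=1}$ and $(\curll\eta)_{\ell=1}=(\curll\ze)_{\ell=1}$, which are \eqref{check-sigma-in-terms-of-bF-bbF-l1} and \eqref{curll-eta-in-terms-of-bF-ka-l1}.

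To pin down $(\curll\bbF)_{\ell=1}$, I would apply $\curll$ to Codazzi \eqref{Codazzi-chib-ze-eta} and project to $\ell=1$: using $\curll\divv\chibh_{\ell=1}=0$ and $\curll\DDs_1(\check\kab,0)=0$ this reads $0=-\frac12\kab(\curll\ze)_{\ell=1}+(\curll\bb)_{\ell=1}-\rhoF(\curll\bbF)_{\ell=1}$. Taking $\curll$ of the definition of $\tilde\bb$ and using \eqref{estimate-tilde-bb} gives $(\curll\bb)_{\ell=1}=\frac{3\rho}{2\rhoF}(\curll\bbF)_{\ell=1}+O(r^{-3}u^{-1+\de})$; inserting this together with \eqref{curll-ze-in-terms-of-bF-ka-l1} leaves $\big(\frac{3\rho}{2\rhoF}-\rhoF\big)(\curll\bbF)_{\ell=1}=\frac12\kab r\big(\frac{3\rho}{2\rhoF}-\rhoF\big)(\curll\bF)_{\ell=1}+O(r^{-3}u^{-1+\de})$. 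Since $3\rho-2\rhoF^2=-\frac{6M}{r^3}+\frac{4Q^2}{r^4}$ never vanishes for $|Q|<M$, dividing by the nonzero factor $\frac{3\rho}{2\rhoF}-\rhoF$ yields \eqref{curll-bbF-in-terms-of-bF-l1}, and then \eqref{curll-bb-in-terms-of-bbF-l1} follows. Finally, commuting the expression for $\pf$ given by \eqref{relation-pf-rho-rhoF-bF-bbF} with $\curll$ and projecting to $\ell=1$ produces a relation of the form $r^{-5}(\curll\pf)_{\ell=1}=-\frac{4\rhoF}{r^2}\check\sigma_{\ell=1}-\frac{2}{r^2}(3\rho-2\rhoF^2)\check\sigmaF_{\ell=1}+2\rhoF^2\big(\kab(\curll\bF)_{\ell=1}-\ka(\curll\bbF)_{\ell=1}\big)$; substituting \eqref{check-sigma-in-terms-of-bF-bbF-l1} and \eqref{curll-bbF-in-terms-of-bF-l1} one checks that the two $\rhoF^2(\curll\bF)_{\ell=1}$ contributions cancel, and then \eqref{estimate-pf} together with the invertibility of $3\rho-2\rhoF^2$ lets one solve for $\check\sigmaF_{\ell=1}$, giving \eqref{check-sigmaF-in-terms-of-bF-bbF-l1}.

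The main obstacle will be the bookkeeping of the error terms rather than the algebra: at each of the last two steps one divides by a coefficient of order $r^{-3}$ (namely $3\rho-2\rhoF^2$, possibly times a further $r^{-2}$), so one must verify that after the multiplications by $\kab$, $\ka$ and the $\ell=1$ Laplacian eigenvalue $-2/r^2$ the accumulated remainders still satisfy the $O(r^{-p-1}u^{-1/2+\de},r^{-p}u^{-1+\de})$ bounds with the exponents claimed in the statement. One must also check the cancellation of the leading $\rhoF^2(\curll\bF)_{\ell=1}$ terms in the $\pf$ identity carefully --- this is the precise mechanism by which the magnetic $\ell=1$ charge $\check\sigmaF_{\ell=1}$ (the datum parametrising the linearized Kerr-Newman angular momentum) becomes expressible purely through $(\curll\bF)_{\ell=1}$ --- and confirm that the resulting algebraic system is consistent and not overdetermined, exactly as the analogous system was in Proposition \ref{expression-all-quantities-l1}.
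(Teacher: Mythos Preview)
Your proposal is correct and follows essentially the same route as the paper's own proof: take $\curll$ of the definitions of $\tilde\b$, $\tilde\bb$ and of the two Codazzi equations, use that $\curll\DDs_1(f,0)=0$ and that $\curll\divv\chih$, $\curll\divv\chibh$ have no $\ell=1$ part, then read off $\check\sigma_{\ell=1}$ and $(\curll\eta)_{\ell=1}$ from \eqref{sigma-curl-ze-eta}--\eqref{curl-ze-eta}, and finally extract $\check\sigmaF_{\ell=1}$ from the $\curll$ of the $\pf$ identity \eqref{relation-pf-rho-rhoF-bF-bbF}. The only discrepancy is a sign slip in your displayed $\pf$ relation (the $\check\sigma_{\ell=1}$ and $\check\sigmaF_{\ell=1}$ terms should carry $+$ signs, since $\curll\DDs_1(f,g)=-\lapp g=+\tfrac{2}{r^2}g$ at $\ell=1$), but this does not affect the argument or the conclusion.
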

\begin{proof} By taking the $\curll$ of the definition of the gauge-invariant quantity $\tilde{\b}$ and using their estimates we obtain \eqref{curll-b-in-terms-of-bF-l1}.
Applying $\curll$ to Codazzi equation \eqref{Codazzi-chi-ze-eta} and projecting to the $\ell=1$ spherical harmonics, since $\curll\divv\chih_{\ell=1}=0$ and $\curll\DDs_1(\check{\ka}, 0)=0$ we obtain
\beaa
  (\curll\ze)_{\ell=1}&=&  r(\curll\b)_{\ell=1}-r\rhoF(\curll\bF)_{\ell=1}
  \eeaa
 Using \eqref{curll-b-in-terms-of-bF-l1}, we obtain \eqref{curll-ze-in-terms-of-bF-ka-l1}.
Applying $\curll$ to Codazzi equation \eqref{Codazzi-chib-ze-eta} and projecting to the $\ell=1$ spherical harmonics, we obtain
\beaa
0&=&-\frac 1 2 \kab (\curll\ze)_{\ell=1} +(\curll\bb)_{\ell=1}-\rhoF(\curll\bbF)_{\ell=1}
\eeaa
 Using the definition of the gauge-invariant quantity $\tilde{\bb}$ and its estimates and \eqref{curll-ze-in-terms-of-bF-ka-l1}, we obtain 
 \beaa
 0&=&-\frac 1 2 \kab r\left(\frac{3\rho}{2\rhoF} -\rhoF\right)(\curll\bF)_{\ell=1}  +\frac{3\rho}{2\rhoF} (\curll \bbF)_{\ell=1} -\rhoF(\curll\bbF)_{\ell=1}+O(r^{-3} u^{-1+\de})
 \eeaa
 which gives \eqref{curll-bbF-in-terms-of-bF-l1}.
  By taking the $\curll$ of the definition of the gauge-invariant quantity $\tilde{\bb}$ and using \eqref{curll-bbF-in-terms-of-bF-l1} we obtain \eqref{curll-bb-in-terms-of-bbF-l1}.
  
  Projecting \eqref{sigma-curl-ze-eta} and \eqref{curl-ze-eta} to the $\ell=1$ spherical harmonics, we obtain \eqref{curll-eta-in-terms-of-bF-ka-l1} and \eqref{check-sigma-in-terms-of-bF-bbF-l1}.

Commuting the expression for $\pf$ given by \eqref{relation-pf-rho-rhoF-bF-bbF}, with $\curll$ and projecting to the $\ell=1$ spherical harmonics we obtain
\beaa
\frac{(\curll\pf)_{\ell=1}}{r^5} &=& 2\rhoF \curll\DDs_1(-\check{\rho}, \check{\sigma})_{\ell=1} +(3\rho-2\rhoF^2) \curll\DDs_1(\check{\rhoF}, \check{\sigmaF})_{\ell=1}\\
&&+2 \rhoF^2 (\kab (\curll\bF)_{\ell=1}-\ka(\curll\bbF)_{\ell=1}) 
\eeaa
Using that $\curll\DDs_1(-\check{\rho}, \check{\sigma})_{\ell=1}=-\lapp \check{\sigma}_{\ell=1}=\frac{2}{r^2} \check{\sigma}_{\ell=1}$, and making use of the estimate \eqref{estimate-pf} for $\frak{p}$ and of the above relation for $\check{\sigma}_{\ell=1}$ and $\curll \bbF_{\ell=1}$ we have
\beaa
 2\rhoF(r\left(\frac{3\rho}{2\rhoF} -\rhoF\right)(\curll\bF)_{\ell=1}) +(3\rho-2\rhoF^2) \check{\sigmaF}_{\ell=1}&=&O(r^{-4} u^{-1+\de}) 
\eeaa
which gives \eqref{check-sigmaF-in-terms-of-bF-bbF-l1}.
\end{proof}

We now obtain control over $(\curll\bF)_{\ell=1}$ by using transport equations.
\begin{enumerate}
\item Recall that we have by \eqref{vanishing-l0-after-kerr} that $\check{\sigmaF}_{\ell=1}=0$ on $S_0$. We now consider the projection to the $\ell=1$ spherical harmonics of the equation \eqref{nabb-3-check-sigmaF-ze-eta}. According to Lemma \ref{commutation-projection-l1}, we obtain
\beaa
\nabb_3(\check{\sigmaF}_{\ell=1})+\kab\check{\sigmaF}_{\ell=1} &=& (\curll \bbF)_{\ell=1}
\eeaa
Using \eqref{curll-bbF-in-terms-of-bF-l1} and \eqref{check-sigmaF-in-terms-of-bF-bbF-l1} we obtain 
\beaa
\nabb_3(\check{\sigmaF}_{\ell=1})+\kab\check{\sigmaF}_{\ell=1} &=& \frac 1 2 \kab r\curll\bF_{\ell=1} +\mathscr{A}=-\frac 1 2 \kab \check{\sigmaF}_{\ell=1} +\mathscr{A}
\eeaa
where $\mathscr{A}$ is a gauge-invariant quantity\footnote{It is in fact an expression in terms of $\tilde{\b}$, see Proposition \ref{expression-all-quantities-curl-l1}. } that has the pointwise estimate $O(r^{-2} u^{-1+\de}) $ for $r \geq r_1$ and is bounded close to the horizon, as indicated in Proposition \ref{expression-all-quantities-curl-l1}.  The above equation reduces then to
 \beaa
 \nabb_3(r^3\check{\sigmaF}_{\ell=1})&=& r^3\mathscr{A}
 \eeaa
 According to the estimates obtained in Theorem \ref{estimates-theorem}, the gauge invariant set of quantities $\mathscr{A}$ also have a consistent $L^2$ estimates on spacelike hypersurfaces and along null hypersurfaces as in \eqref{L^2-estimates-theorem}. In particular, we have $\int_{\underline{C}_0 \cap \{ r\geq r_1\}} |\mathscr{A}|^2 \leq u^{-2+2\de}$.
 Integrating then the above equation over $\underline{C}_0$ from $S_0$ and using the vanishing of $\check{\sigmaF}$ on $S_0$ till $\{ r=r_1\}$, we can bound the right hand side by its $L^2$ norm and obtain decay in $u$ along $\underline{C}_0  \cap \{ r\geq r_1\} $\footnote{In order to have decay in $u$ the integration should be stopped away from the horizon, where $u$ degenerates. When integrating transport equations in the $e_4$ direction on an outgoing null line of very large $u$-coordinate, one picks up a boundary term close to the horizon which is not decaying as a function of $u$.} for $\check{\sigmaF}_{\ell=1}$, i.e.
 \beaa
 |\check{\sigmaF}_{\ell=1}|\les u^{-1+\de} \text{along $\underline{C}_0  \cap \{ r\geq r_1\}$}
 \eeaa
and boundedness of $\check{\sigmaF}_{\ell=1}$ along $\underline{C}_0$. 
 \item By \eqref{check-sigmaF-in-terms-of-bF-bbF-l1} restricted to $\underline{C}_0$ we obtain $|\curll\bF_{l=1}|\les u^{-1+\de}$ along $\underline{C}_0\cap \{ r\geq r_1\}$ and bounded along $\underline{C}_0$.  From equation \eqref{nabb-4-bF-tilde-b} commuted with $\curll$ and the estimates \eqref{estimate-tilde-b} for $\tilde{\b}$ we have
\beaa
\nabb_4 \curll\bF+ 2 \ka\curll\bF&=&O(r^{-4-\de}u^{-1/2+\de}, r^{-3-\de}u^{-1+\de}) \qquad \text{for $r \geq r_1$}
\eeaa
Projecting the equation to the $\ell=1$ spherical harmonics we obtain
\beaa
\nabb_4 ((\curll\bF)_{\ell=1})+ 2\ka(\curll\bF)_{\ell=1}&=&O(r^{-4-\de}u^{-1/2+\de}, r^{-3-\de}u^{-1+\de})
\eeaa
Integrating the above equation from $\underline{C}_0\cap \{ r\geq r_1\} $, we obtain
\bea\label{decay-curll-bF-from-initial-data}
|(\curll\bF)_{\ell=1} |\les  \min\{r^{-3-\de} u^{-1/2+\de}, r^{-2-\de} u^{-1+\de}\}  \qquad  \text{in $\{ r\geq r_1\} \cap \{ u_0 \leq u \leq u_1 \}$}
\eea
\item Using \eqref{decay-curll-bF-from-initial-data} and Proposition \ref{expression-all-quantities-curl-l1} we obtain in $\{ r\geq r_1\} \cap \{ u_0 \leq u \leq u_1 \}$:
\bea
|(\curll \b)_{\ell=1}|&\les& \min\{r^{-4-\de} u^{-1/2+\de}, r^{-3-\de} u^{-1+\de}\}  \label{decay-curl-b-l1}\\
|(\curll\ze)_{\ell=1}| &\les&   \min\{r^{-3-\de} u^{-1/2+\de}, r^{-2-\de} u^{-1+\de}\}  \label{decay-curl-ze-l1} \\
|(\curll \bbF)_{\ell=1}|&\les& r^{-2} u^{-1+\de} \\
|(\curll \bb)_{\ell=1}|&\les& r^{-3} u^{-1+\de} \\
|(\curl\eta)_{\ell=1}|&\les&  r^{-2} u^{-1+\de} \\
 |\check{\sigma}_{\ell=1}|&\les& r^{-2} u^{-1+\de}\label{decay-check-sigma-l1-optimal-u}\\
  |\check{\sigmaF}_{\ell=1}|&\les&r^{-1} u^{-1+\de}\label{decay-check-sigmaF-l1-optimal-u}
\eea
\item Using \eqref{decay-curll-bF-from-initial-data} and \eqref{decay-curl-b-l1} and integrating \eqref{nabb-4-check-sigmaF-ze-eta} and \eqref{nabb-4-check-sigma-ze-eta}, we have in $\{ r\geq r_1\} \cap \{ u_0 \leq u \leq u_1 \}$:
\bea
|\check{\sigma}_{\ell=1}|&\les& r^{-3} u^{-1/2+\de}\label{decay-check-sigma-l1-optimal-r}\\
  |\check{\sigmaF}_{\ell=1}|&\les&r^{-2} u^{-1/2+\de}\label{decay-check-sigmaF-l1-optimal-r}
\eea

\end{enumerate}

\subsubsection{The projection to the $\ell\geq 2$ modes}\label{proj-l2=boundedness}
To obtain decay for the projection to the $\ell\geq 2$ spherical modes, we make use of the decay obtained in Theorem \ref{estimates-theorem} for the gauge-invariant quantities $\ff$, $\underline{\ff}$, $\tilde{\b}$, $\tilde{\bb}$, $\qf^\F$, $\frak{p}$. In particular, we can express all the remaining quantities in terms of only $\chih$, $\chibh$ and $\DDs_2\DDs_1(\check{\ka}, 0)$ and the gauge-invariant quantities already estimated.

As for the $\ell=1$ projection, we summarize those relations in the following Proposition. We only use elliptic relations, and not transport equations which will be exploited later.

\begin{proposition}\label{expression-all-quantities-l2} The following relations hold true for all $u \geq u_0$ and $r\geq r_1$:
\bea
\DDs_2\bF&=& -\rhoF \chih+O( r^{-3-\de} u^{-1/2+\de}, r^{-2-\de} u^{-1+\de})\label{write-bF-in-terms-of-chih-general}\\
\DDs_2\bbF&=& \rhoF \chibh +O( r^{-2} u^{-1+\de}) \label{write-bbF-in-terms-of-chibh-general} \\
\DDs_2\b&=& -\frac 3 2\rho \chih+ O(r^{-4-\de} u^{-1/2+\de}, r^{-3-\de} u^{-1+\de} )\label{write-b-in-terms-of-chih-general}\\
\DDs_2\bb&=& \frac 3 2\rho \chibh+O( r^{-3} u^{-1+\de} ) \label{write-bb-in-terms-of-chibh-general}
\eea
\bea
\begin{split}
\DDs_2\ze&= r\left(\DDs_2\DDd_2 \chih+\left(-\frac 3 2\rho+\rhoF^2\right) \chih\right)+\frac 1 2 r \DDs_2\DDs_1(\check{\ka}, 0)+  O(r^{-3-\de} u^{-1/2+\de}, r^{-2-\de} u^{-1+\de} ) \label{write-ze-in-terms-of-chih-general} 
\end{split}
\eea
\bea
\DDs_2\DDs_1(\check{\rhoF}, \check{\sigmaF})&=&  -\frac 12 \rhoF \left(  \kab \chih + \ka \chibh \right)+O(r^{-4} u^{-1/2+\de}, r^{-3} u^{-1+\de}) \label{write-rhoF-in-terms-of-chih-chibh-general} \\
 \DDs_2\DDs_1(-\check{\rho}, \check{\sigma}) &=&\left(\frac 3 4 \rho+\frac 1 2 \rhoF^2\right)\left(  \kab \chih + \ka \chibh \right) +O( r^{-4} u^{-1+\de}) \label{check-rho-in-terms-of=chih-chib-general}
 \eea
\bea\label{check-kab-in-terms-of=chih-chib-general}
\begin{split}
 \DDs_2\DDs_1( \check{\kab}, 0) &=-2\left(\DDs_2\DDd_2 \chibh+ \left(-\frac 3 2 \rho +\rhoF^2\right) \chibh\right)- r\kab \left(\DDs_2\DDd_2 \chih+\left(-\frac 3 2\rho+\rhoF^2\right) \chih\right)\\
 & -\frac 1 2 r \kab \DDs_2\DDs_1(\check{\ka}, 0)+O( r^{-3} u^{-1+\de})
 \end{split}
\eea
Observe that the terms in $O$ are given by gauge-invariant quantities, and are therefore independent of the gauge. 
\end{proposition}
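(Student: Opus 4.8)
The plan is to derive each of the six relations in Proposition~\ref{expression-all-quantities-l2} exactly as was done for the $\ell=1$ analogues in Proposition~\ref{expression-all-quantities-l1} and Proposition~\ref{expression-all-quantities-curl-l1}: namely, by combining the \emph{definitions} of the gauge-invariant quantities $\ff$, $\ffb$, $\tilde{\b}$, $\tilde{\bb}$, $\qf^\F$, $\pf$ with a handful of \emph{elliptic} (non-transport) linearized equations from Section~\ref{all-equations}, and then inserting the pointwise decay rates for the gauge-invariant quantities supplied by Theorem~\ref{estimates-theorem}. Since all relations here are stated only for $r\ge r_1$, $u\ge u_0$, the ``$O(\cdots)$'' terms are by construction built out of $\ff,\ffb,\tilde\b,\tilde\bb,\qf^\F,\pf$, hence gauge-invariant, which is the last sentence of the statement.

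First I would establish \eqref{write-bF-in-terms-of-chih-general}--\eqref{write-bb-in-terms-of-chibh-general}. For \eqref{write-bF-in-terms-of-chih-general}: by definition $\ff=\DDs_2\bF+\rhoF\chih$, so $\DDs_2\bF=-\rhoF\chih+\ff$; applying the estimate \eqref{estimate-ff} for $\ff$ (which decays as $\min\{r^{-3-\de}u^{-1/2+\de},r^{-2-\de}u^{-1+\de}\}$) gives the claim, after absorbing one factor of $r$ from $\rhoF=Q/r^2$ if one prefers to track the $\DDs_2$ normalization, but in fact $\ff$ itself is already of the right size. Equation \eqref{write-bbF-in-terms-of-chibh-general} is identical using $\ffb=\DDs_2\bbF-\rhoF\chibh$ and \eqref{estimate-underline-ff}. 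For \eqref{write-b-in-terms-of-chih-general}: start from the definition $\tilde\b=2\rhoF\b-3\rho\bF$, apply $\DDs_2$, solve for $\DDs_2\b=\tfrac{3\rho}{2\rhoF}\DDs_2\bF+\tfrac{1}{2\rhoF}\DDs_2\tilde\b$, then substitute \eqref{write-bF-in-terms-of-chih-general} and use $\tfrac{3\rho}{2\rhoF}\rhoF=\tfrac32\rho$ and the estimate \eqref{estimate-tilde-b} for $\tilde\b$; \eqref{write-bb-in-terms-of-chibh-general} is the barred analogue via $\tilde\bb$ and \eqref{estimate-tilde-bb}.

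Next, for \eqref{write-ze-in-terms-of-chih-general} I would apply the operator $\DDs_2$ to the Codazzi equation \eqref{Codazzi-chi-ze-eta}, using the commutation formula \eqref{commutator-nabb-4-DDs} and the angular-operator identity $\DDs_2\divv\chih$, written via the relation $\divv\chih=\DDd_2\chih$ and $\DDs_2\DDd_2$ as in \eqref{angular-operators}; this produces $\DDs_2\ze$ in terms of $r\,\DDs_2\DDd_2\chih$, $r\,\DDs_2\DDs_1(\check\ka,0)$, $r\,\rho\chih$ (from the $\DDs_2\b$ term via \eqref{write-b-in-terms-of-chih-general}), and $r\,\rhoF\,\DDs_2\bF$ (via \eqref{write-bF-in-terms-of-chih-general}); collecting the $\chih$ coefficients gives $-\tfrac32\rho+\rhoF^2$, and the error is $O(r^{-3-\de}u^{-1/2+\de},r^{-2-\de}u^{-1+\de})$. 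For \eqref{write-rhoF-in-terms-of-chih-chibh-general} I would combine the linearized Maxwell equations \eqref{nabb-3-bF-ze-eta}--\eqref{nabb-4-bbF-ze-eta} applied with $\DDs_1$ suitably, or more directly use the definition of $\qf^\F=\underline P(r^2\kab\ff)$ together with the expression of $\ff$ and the relation between $\DDs_1(\check\rhoF,\check\sigmaF)$ and $\nabb_3\bF,\nabb_4\bbF$; inserting $\bF=\DDs_2^{-1}$-type relations is not needed — instead one reads off $\DDs_2\DDs_1(\check\rhoF,\check\sigmaF)$ from $\DDs_2$ applied to the Maxwell pair and uses \eqref{write-bF-in-terms-of-chih-general}, \eqref{write-bbF-in-terms-of-chibh-general}, obtaining the coefficient $-\tfrac12\rhoF$ in front of $\kab\chih+\ka\chibh$ with error controlled by $\qf^\F$ via \eqref{estimate-qf-F}. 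Then \eqref{check-rho-in-terms-of=chih-chib-general} follows by projecting the $\DDs_2$-commuted Gauss equation \eqref{Gauss-check} and substituting \eqref{write-rhoF-in-terms-of-chih-chibh-general} together with the barred Codazzi relation and \eqref{check-kab-in-terms-of=chih-chib-general}; finally \eqref{check-kab-in-terms-of=chih-chib-general} comes from applying $\DDs_2$ to the barred Codazzi equation \eqref{Codazzi-chib-ze-eta}, substituting $\DDs_2\ze$ from \eqref{write-ze-in-terms-of-chih-general} and $\DDs_2\bb$, $\DDs_2\bbF$ from above, and collecting.

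The main obstacle I expect is bookkeeping of the $\DDs_2$-commutations and of the coercivity/invertibility issues: passing from $\divv\chih$ to $\DDs_2\DDd_2\chih$ cleanly requires the identities in \eqref{angular-operators} and the fact that for $\ell\ge2$ tensors the kernels of $\DDd_2$, $\DDs_2$ are trivial (Lemma~\ref{lemma-kernel-DDs2}), so that the elliptic relations are genuinely equivalences rather than merely implications. One must also check carefully that every right-hand side term that is \emph{not} a polynomial-in-$r$ multiple of $\chih$, $\chibh$, $\DDs_2\DDs_1(\check\ka,0)$ is in fact expressible through the already-estimated gauge-invariant quantities (so the ``$O$'' claim about gauge-invariance holds), and that the $r$-weights match the stated $\min\{\cdot,\cdot\}$ rates — in particular tracking that the $\tfrac{3\rho}{2\rhoF}$-type coefficients carry exactly one extra power of $r$ relative to $\rho$, which is what makes the $\tilde\b$ error term ($\sim r^{-5-\de}$, $r^{-4-\de}$) land at $r^{-4-\de}$, $r^{-3-\de}$ after division by $\rhoF$ and multiplication by the requisite $r$-weight. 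Once these weight accounts are done, each relation is a one-line substitution.
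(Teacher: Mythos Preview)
Your approach to \eqref{write-bF-in-terms-of-chih-general}--\eqref{write-ze-in-terms-of-chih-general} and to \eqref{check-kab-in-terms-of=chih-chib-general} is exactly the paper's: definitions of $\ff,\ffb,\tilde\b,\tilde\bb$ plus the two Codazzi equations, with the stated decay of the gauge-invariant quantities supplying the $O(\cdot)$ terms. (One small remark: Codazzi is an elliptic relation, so the $\nabb_4$--$\DDs_2$ commutator \eqref{commutator-nabb-4-DDs} is not needed there.) For \eqref{write-rhoF-in-terms-of-chih-chibh-general} you are also on the right track: computing $\qf^\F=\underline P(r^2\kab\ff)$ explicitly yields the exact identity
\[
\frac{\qf^\F}{r^3}=-\DDs_2\DDs_1(\check\rhoF,\check\sigmaF)-\tfrac12\rhoF(\kab\chih+\ka\chibh),
\]
which is precisely \eqref{relation-qf-rhoF-chih-chibh} in the Appendix and gives \eqref{write-rhoF-in-terms-of-chih-chibh-general} immediately via \eqref{estimate-qf-F}.

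The gap is in your derivation of \eqref{check-rho-in-terms-of=chih-chib-general}. Using the Gauss equation \eqref{Gauss-check} does not work here for two reasons. First, the target is the symmetric traceless tensor $\DDs_2\DDs_1(-\check\rho,\check\sigma)$, which encodes \emph{both} $\check\rho$ and $\check\sigma$; Gauss relates only $\check\rho$ and says nothing about $\check\sigma$. Second, Gauss introduces $\check K$, which is not among the gauge-invariant quantities of Theorem~\ref{estimates-theorem} (it depends on the metric coefficients via \eqref{check-K-tr}), so even for the $\check\rho$ part the resulting $O(\cdot)$ term would fail the ``gauge-invariant error'' claim in the statement. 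The paper instead uses the second Appendix identity \eqref{relation-pf-rho-rhoF-bF-bbF},
\[
\frac{\pf}{r^5}=2\rhoF\,\DDs_1(-\check\rho,\check\sigma)+(3\rho-2\rhoF^2)\,\DDs_1(\check\rhoF,\check\sigmaF)+2\rhoF^2(\kab\bF-\ka\bbF),
\]
applies $\DDs_2$, substitutes \eqref{write-rhoF-in-terms-of-chih-chibh-general}, \eqref{write-bF-in-terms-of-chih-general}, \eqref{write-bbF-in-terms-of-chibh-general}, and divides by $2\rhoF$. This packages $(-\check\rho,\check\sigma)$ together and keeps the error gauge-invariant (controlled by $\pf,\qf^\F,\ff,\ffb$). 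Replacing your Gauss step by this use of $\pf$ closes the argument.
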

\begin{proof}  Recall the definition of $\ff$ and $\underline{\ff}$ as defined in \eqref{definition-ff}. We can then express $\bF$ and $\bbF$ in terms of $\chih$ and $\chibh$ respectively:
\beaa
\DDs_2\bF&=& -\rhoF \chih+\ff, \qquad \DDs_2\bbF= \rhoF \chibh +\underline{\ff} 
\eeaa
Using the estimates for $\ff$ and $\underline{\ff}$ as in \eqref{estimate-ff} and \eqref{estimate-underline-ff}, we obtain \eqref{write-bF-in-terms-of-chih-general} and \eqref{write-bbF-in-terms-of-chibh-general}.

Using the definition of $\tilde{\b}$ and $\tilde{\bb}$ and the above relations to express $\b$ and $\bb$ in terms of $\chih$ and $\chibh$, we obtain
\beaa
2\rhoF \DDs_2\b&=& \DDs_2\tilde{\b}+3\rho \DDs_2\bF= \DDs_2\tilde{\b}+3\rho (-\rhoF \chih+\ff)\\
2\rhoF \DDs_2\bb&=& \DDs_2\tilde{\bb}+3\rho \DDs_2\bbF=\DDs_2\tilde{\bb}+3\rho (\rhoF \chibh +\underline{\ff})
\eeaa
which gives
\beaa
\DDs_2\b&=& -\frac 3 2\rho \chih+ \rhoF^{-1}(\frac 1 2 \DDs_2\tilde{\b}+\frac 3 2\rho\ff)\\
\DDs_2\bb&=& \frac 3 2\rho \chibh+\rhoF^{-1}(\frac 1 2 \DDs_2\tilde{\bb} +\frac 3 2\rho\underline{\ff})
\eeaa
Using the estimates for $\tilde{\b}$, $\tilde{\bb}$ as in \eqref{estimate-underline-ff}, \eqref{estimate-tilde-b}, \eqref{estimate-tilde-bb}, we obtain \eqref{write-b-in-terms-of-chih-general} and \eqref{write-bb-in-terms-of-chibh-general}.

Using the Codazzi equation \eqref{Codazzi-chi-ze-eta}, we express $\ze$ in terms of $\check{\ka}$ and $\chih$. Indeed we obtain
\beaa
 \ze&=& r\DDd_2 \chih+\frac 1 2 r \DDs_1(\check{\ka}, 0)+ r\b -r\rhoF\bF 
\eeaa
Applying the operator $\DDs_2$ to the above and using \eqref{write-bF-in-terms-of-chih-general} and \eqref{write-b-in-terms-of-chih-general} we obtain
\beaa
\DDs_2\ze&=& r\DDs_2\DDd_2 \chih+\frac 1 2 r \DDs_2\DDs_1(\check{\ka}, 0)+ r\DDs_2\b -r\rhoF\DDs_2\bF \\
&=& r\DDs_2\DDd_2 \chih+\frac 1 2 r \DDs_2\DDs_1(\check{\ka}, 0)+ r(-\frac 3 2\rho \chih+ \min\{r^{-4-\de} u^{-1/2+\de}, r^{-3-\de} u^{-1+\de} \})\\
&& -r\rhoF(-\rhoF \chih+\min\{ r^{-3-\de} u^{-1/2+\de}, r^{-2-\de} u^{-1+\de}\})
\eeaa
which gives \eqref{write-ze-in-terms-of-chih-general}.

We use the alternative expression for $\qf^\F$ given by \eqref{relation-qf-rhoF-chih-chibh} to express $\check{\rhoF}$ and $\check{\sigmaF}$ in terms of $\chih$ and $\chibh$:
\beaa
\DDs_2\DDs_1(\check{\rhoF}, \check{\sigmaF})&=&  -\frac 12 \rhoF \left(  \kab \chih + \ka \chibh \right)-\frac{\qf^\F}{r^3}
\eeaa
Using the estimate for $\qf^\F$ given by \eqref{estimate-qf-F}, we obtain \eqref{write-rhoF-in-terms-of-chih-chibh-general}.

We use the alternative expression for $\pf$ given by \eqref{relation-pf-rho-rhoF-bF-bbF}, and the above relations, to express $\check{\rho}$ and $\check{\sigma}$ in terms of $\chih$ and $\chibh$. Using the estimate for $\pf$ given by \eqref{estimate-pf} we have
\beaa
 2\rhoF \DDs_1(-\check{\rho}, \check{\sigma}) &=&(-3\rho+2\rhoF^2) \DDs_1(\check{\rhoF}, \check{\sigmaF})+2 \rhoF^2 (\ka\bbF-\kab\bF) +\min\{r^{-6} u^{-1/2+\de}, r^{-5} u^{-1+\de}\}
\eeaa
Applying the operator $\DDs_2$ to the above and using \eqref{write-rhoF-in-terms-of-chih-chibh-general}, \eqref{write-bF-in-terms-of-chih-general} and \eqref{write-bbF-in-terms-of-chibh-general} we obtain
\beaa
 &&2\rhoF \DDs_2\DDs_1(-\check{\rho}, \check{\sigma}) =(-3\rho+2\rhoF^2) \DDs_2\DDs_1(\check{\rhoF}, \check{\sigmaF})+2 \rhoF^2 (\ka \DDs_2\bbF-\kab \DDs_2\bF) \\
 && +\min\{r^{-7} u^{-1/2+\de}, r^{-6} u^{-1+\de}\}\\
 &=&(-3\rho+2\rhoF^2)\left( -\frac 12 \rhoF \left(  \kab \chih + \ka \chibh \right)+\min\{r^{-4} u^{-1/2+\de}, r^{-3} u^{-1+\de}\}\right)\\
 &&+2 \rhoF^2 (\ka (\rhoF \chibh +O( r^{-2} u^{-1+\de}))-\kab (-\rhoF \chih+\min\{r^{-3-\de} u^{-1/2+\de}, r^{-2-\de} u^{-1+\de}\})) \\
&&+\min\{r^{-7} u^{-1/2+\de}, r^{-6} u^{-1+\de}\}
\eeaa
which gives \eqref{check-rho-in-terms-of=chih-chib-general}.

Using \eqref{Codazzi-chib-ze-eta}, we express $\check{\kab}$ in terms of $\chih$ and $\chibh$. Applying the operator $\DDs_2$ to \eqref{Codazzi-chib-ze-eta} and using \eqref{write-ze-in-terms-of-chih-general}, \eqref{write-bb-in-terms-of-chibh-general} and \eqref{write-bbF-in-terms-of-chibh-general}, we have
\beaa
&& \DDs_2\DDs_1( \check{\kab}, 0)=-2\DDs_2\DDd_2 \chibh- \kab \DDs_2\ze +2\DDs_2\bb-2\rhoF\DDs_2\bbF\\
 &=&-2\DDs_2\DDd_2 \chibh\\
 &&- \kab \left(r\left(\DDs_2\DDd_2 \chih+\left(-\frac 3 2\rho+\rhoF^2\right) \chih\right)+\frac 1 2 r \DDs_2\DDs_1(\check{\ka}, 0)+  \min\{r^{-3-\de} u^{-1/2+\de}, r^{-2-\de} u^{-1+\de} \}\right)\\
 &&+2\left( \frac 3 2\rho \chibh+O( r^{-3} u^{-1+\de} )\right)-2\rhoF\left(\rhoF \chibh +O( r^{-2} u^{-1+\de}) \right)
\eeaa
which gives \eqref{check-kab-in-terms-of=chih-chib-general}.
\end{proof}

We now obtain control over $\DDs_2\DDs_1(\check{\ka}, 0)$, $\chih$ and $\chibh$ by using transport equations.

\begin{enumerate}
\item Commuting \eqref{nabb-4-check-ka-ze-eta} with $\DDs_2\DDs_1$ we obtain 
\beaa
\nabb_4(\DDs_2\DDs_1(\check{\ka}, 0))+2\ka \DDs_2\DDs_1(\check{\ka}, 0)&=& 0
\eeaa
Together with condition \eqref{condition-check-ka-l2=initial=data}, this implies 
\bea\label{check-ka-l2=0-initial-data-everywhere}
\DDs_2\DDs_1(\check{\ka}, 0)=0 \qquad \text{in $\{ r\geq r_1\} \cap \{ u_0 \leq u \leq u_1 \}$}
\eea
\item From equation \eqref{nabb-4-chih-ze-eta} and the estimate \eqref{estimate-a} for $\a$ we have
\beaa
\nabb_4\chih+\ka\chih&=& O(r^{-3-\de}u^{-1/2+\de}, r^{-2-\de} u^{-1+\de})
\eeaa
We now integrate the above equation from $\underline{C}_0\cap \{ r\geq r_1\} $ using the condition on the initial data \eqref{condition-chih-initial-data}. We obtain
\beaa
r^2 \chih =  \int_{r_0(u)}^r  O(r^{-1-\de}u^{-1/2+\de}, r^{-\de} u^{-1+\de}) d \tilde{r}
\eeaa
where $r_0(u)$ is the $r$-coordinate of the intersection of $\underline{C}_0$ and the cone of constant $u$ coordinate. Recall that the pointwise estimate on $\a$, which constitutes the right hand side of the above equation, is only valid in the region $r \geq r_1$. In particular, we make use of the above integration only in the region\footnote{Recall from Theorem \ref{estimates-theorem}, point 5, that $\alpha$ cannot be estimated in terms of an inverse power of $u$ in the region $r \leq r_1$. On other hand the estimates in this region near the horizon are obtained in terms of the coordinate $v$ as in \cite{stabilitySchwarzschild}, while the $r$ decay is irrelevant (since $r$ is bounded). See also Section \ref{section-close-horizon}.} given by the intersection of $u \geq u_0$ and $r \geq r_1$ (or also, only for $r_0(u) \geq r_1$). 
This implies
\bea
|\chih|\les \min\{r^{-2-\de}u^{-1/2+\de}, r^{-1-\de} u^{-1+\de} \}\qquad \text{in $\{ r\geq r_1\} \cap \{ u_0 \leq u \leq u_1 \}$}
\eea
\item Using \eqref{write-ze-in-terms-of-chih-general} we deduce
\beaa
\DDs_2\ze&=&\min\{r^{-3-\de} u^{-1/2+\de}, r^{-2-\de} u^{-1+\de} \} \qquad \text{in $\{ r\geq r_1\} \cap \{ u_0 \leq u \leq u_1 \}$}
\eeaa
\item From equation \eqref{nabb-4-chibh-ze-eta} and the previous estimates we obtain
\beaa
\nabb_4\chibh+\frac 1 2 \ka \chibh&=& O(r^{-3-\de} u^{-1/2+\de}, r^{-2-\de} u^{-1+\de} )
\eeaa
Integrating the above equation from $\underline{C}_0 \cap \{r \geq r_1\}$ and using condition \eqref{condition-chibh-initial-data} on the initial data as above, we obtain
\bea
|\chibh|\les  r^{-1} u^{-1+\de} \qquad \text{in $\{ r\geq r_1\} \cap \{ u_0 \leq u \leq u_1 \}$}
\eea
\item The decay obtained for $\DDs_2\DDs_1(\check{\ka}, 0)$, $\chih$ and $\chibh$ allows to deduce the following decays in $\{ r\geq r_1\} \cap \{ u_0 \leq u \leq u_1 \}$ using Proposition \ref{expression-all-quantities-l2}:
\beaa
|\DDs_2\bF|&\les& \min\{ r^{-3-\de} u^{-1/2+\de}, r^{-2-\de} u^{-1+\de}\}\\
|\DDs_2\bbF|&\les&  r^{-2} u^{-1+\de}  \\
|\DDs_2\b|&\les&  \min\{r^{-4-\de} u^{-1/2+\de}, r^{-3-\de} u^{-1+\de} \}\\
|\DDs_2\bb|&\les&  r^{-3} u^{-1+\de}  \\
|\DDs_2\DDs_1(\check{\rhoF}, \check{\sigmaF})|&\les&  r^{-3} u^{-1+\de}\\
 |\DDs_2\DDs_1(-\check{\rho}, \check{\sigma}) |&\les& r^{-4} u^{-1+\de}\\
 |\DDs_2\DDs_1( \check{\kab}, 0)| &\les&  r^{-3} u^{-1+\de}
\eeaa
\end{enumerate}
Combining the estimates for the projection to the $\ell=1$ spherical harmonics and the estimates for the above using elliptic estimates as in Lemma \ref{main-elliptic-estimate}, we in particular obtain
\bea
 |\check{\kab}| &\les&  r^{-1} u^{-1+\de} \label{estimate-check-kab-initia}\\
 |\check{K}|&\les& r^{-2} u^{-1+\de} \label{estimate-check-K-initial}
\eea 
where we obtain decay for $\check{K}$ using Gauss equation \eqref{Gauss-check}.

\begin{remark}\label{replace-uniform} As pointed out in Remark \ref{remark-not-uniform-r}, the bounds here obtained are not uniform in $r_1$. Nevertheless, one can perform the same above procedure, integrating forward in the $e_4=\partial_r$ direction, and replace all the above estimates by the ones with only $r$ decay. For example, the bound \eqref{estimate-check-kab-initia} can be replaced by 
\beaa
 |\check{\kab}| &\les&  r^{-1}
\eeaa
where the constant implicit in $\les$ is now uniform in $r_1$. This uniform decay is used to deduce boundedness and apply Theorem  \ref{achieving-normalization-theorem} to initialize the $S_{U,R}$-normalization.
\end{remark}

\subsubsection{The terms involved in the $e_3$ direction}\label{terms-e3-boundedness}

We derive here boundedness for the terms involved in the $e_3$ directions, i.e. for $\eta$, $\xib$ and  $\check{\omb}$.

\begin{enumerate}
\item  Restricting \eqref{nabb-3-bF-ze-eta} to $\underline{C}_0$ implies that $\eta$ is bounded along $\underline{C}_0$. Integrating \eqref{nabb-4-eta-ze-eta} along $e_4$ we obtain 
\bea
| \eta|\les r^{-1}  \qquad \text{in $\{ r\geq r_1\} \cap \{ u_0 \leq u \leq u_1 \}$}
\eea
\item Restricting \eqref{nabb-3-check-ka-ze-eta} and \eqref{nabb-3-check-kab-ze-eta} to $\underline{C}_0$ we obtain that $\check{\omb}$, $\check{\Omegab}$ and $ \xib$ are bounded. Integrating \eqref{nabb-4-check-omb-ze-eta} along $e_4$ we obtain 
\bea
| \check{\omb}|\les  C \qquad \text{in $\{ r\geq r_1\} \cap \{ u_0 \leq u \leq u_1 \}$}
\eea
\item Integrating \eqref{nabb-4-xib-ze-eta} along $e_4$ we obtain
\bea
| \xib|\les  C \qquad \text{in $\{ r\geq r_1\} \cap \{ u_0 \leq u \leq u_1 \}$}
\eea
\end{enumerate}

\begin{remark}\label{result-incomplete} Observe that the decays hereby obtained are significantly worse than the optimal decay we are aiming to prove, as stated in Theorem \ref{linear-stability}. In particular, $\xib$ and $\check{\omb}$ do not present decay in $r$. Because of this loss of decay in integrating the equations from the initial data for the quantities $\check{\omb}$, $\xib$ and $\check{\Omegab}$, we will use a different approach through the $S_{U, R}$-normalization to prove the optimal decay. From the initial data normalization we will make use of the boundedness of the solution to apply Theorem \ref{achieving-normalization-theorem} later. 
\end{remark}

\subsubsection{The metric coefficients}
We derive here boundedness for the metric coefficients $\hat{\slashed{g}}$, $\underline{b}$, $\check{\Omegab}$ and $\check{\vsi}$.

\begin{enumerate}
\item By condition \eqref{condition-chibh-initial-data} and \eqref{divv-b-0-initial}, equation \eqref{nabb-3-g} restricted to $\underline{C}_0$ reads
\beaa
\nabb_3 \hat{\slashed{g}}&=& 0
\eeaa
Using condition \eqref{condition-hat-slashed-g}, and integrating the above along $\underline{C}_0$, we obtain that $\hat{\slashed{g}}=0$ on $\underline{C}_0$. By integrating \eqref{nabb-4-g} from $\underline{C}_0\cap \{ r\geq r_1\} $ we obtain
\beaa
\hat{\slashed{g}}(u, r) = 2 \int_{r_0(u)}^r \chih(u, \tilde{r}) d \tilde{r}
\eeaa
Here, $r_0(u)$ is the $r$-coordinate of the intersection of $\underline{C}_0$ and the cone of constant $u$ coordinate. Recall that the pointwise estimates on $\chih$ in terms of inverse powers of $u$ are only valid in the region $r \geq r_1$. In particular, we make use of the above integration only in the region\footnote{This is because the term $2 \int_{r_0(u)}^{r_1} \chih(u, \tilde{r}) d \tilde{r}$ cannot be estimates in terms of an inverse power of $u$, since $\chih$ does not decay as an inverse power or $u$ in this region. The estimates on this region are obtained in terms of the coordinate $v$ as in \cite{stabilitySchwarzschild}. See also Section \ref{section-close-horizon}.} given by the intersection of $u \geq u_0$ and $r \geq r_1$ (or also, only for $r_0(u) \geq r_1$). In this region, the pointwise estimate for $\chih$ can be used, and it implies
\beaa
|\hat{\slashed{g}}|\les  u^{-1+\de} \qquad \text{in $\{ r\geq r_1\} \cap \{ u_0 \leq u \leq u_1 \}$}
\eeaa
\item Using \eqref{check-K-tr} and the estimate \eqref{estimate-check-K-initial} we can estimate the projection to the $\ell\geq2$ spherical harmonics of $\check{\tr_\gamma \slashed{g}}$:
\bea\label{estimate-tr-l2-initial}
| \DDs_2\DDs_1(\check{\tr_\gamma \slashed{g}}, 0)| &\les&  r^{-2}u^{-1+\de}\qquad \text{in $\{ r\geq r_1\} \cap \{ u_0 \leq u \leq u_1 \}$}
\eea
On the other hand, integrating \eqref{nabb-3-check-tr} along $\underline{C}_0$ and using condition \eqref{tr-slashed-g-l1-initial-data} we obtain
\beaa
|\check{\tr_\gamma \slashed{g}}_{\ell=1}|\les  C \qquad \text{along $\underline{C}_0$}
\eeaa
Consequently we have that $|\check{\tr_\gamma \slashed{g}}|\les  C $ along $\underline{C}_0$. Integrating \eqref{nabb-4-check-tr} we then obtain
\bea\label{estimate-tr-l1-initial}
|\check{\tr_\gamma \slashed{g}}|\les  r^2  \qquad \text{in $\{ r\geq r_1\} \cap \{ u_0 \leq u \leq u_1 \}$}
\eea

\item Condition \eqref{curll-b-initial} and the boundedness for $\check{\sigma}_{\ell=1}$ implies that $\curll\underline{b}_{\ell=1}$ is bounded along $\underline{C}_0$. Combining this with conditions \eqref{divv-b-0-initial} and standard elliptic estimates we obtain that $\underline{b}$ is bounded  along $\underline{C}_0$. Integrating \eqref{derivatives-b-ze-eta}, we obtain
\bea\label{estimate-underline-b-initial-data}
|\underline{b}|\les r  \qquad \text{in $\{ r\geq r_1\} \cap \{ u_0 \leq u \leq u_1 \}$}
\eea
\item  Equation \eqref{nabb-check-vsi} implies 
\beaa
|\check{\vsi}|\les C \qquad \text{in $\{ r\geq r_1\} \cap \{ u_0 \leq u \leq u_1 \}$}
\eeaa
\item Equation \eqref{xib-Omegab-ze-eta} implies
\bea\label{estimate-check-Omegab-initial-data}
| \check{\Omegab}|\les r  \qquad \text{in $\{ r\geq r_1\} \cap \{ u_0 \leq u \leq u_1 \}$}
\eea
\end{enumerate}

\begin{remark}\label{result-incomplete-2} Observe that the decays obtained for the metric coefficients are also significantly worse than the optimal decay we are aiming to prove. 
\end{remark}
The growth in $r$ and the non optimal decay for many components forces us to consider a different normalization in order to obtain the optimal decay stated in Theorem \ref{linear-stability}.

\subsection{Decay of the pure gauge solution $\mathscr{G}^{U, R}$}\label{decay-pure-gauge}
In the previous section, we provide bounds for the solution $\mathscr{S}^{id, K}$ in the entire exterior region. Fix $U$, $R$ with $R \gg \max\{r_0, 3M\}$ and $U \gg u_0$. 
Consider the sphere of intersection of $\{ u=U\} $ and the ingoing initial data $\underline{C}_0$ and let $\tilde{r}$ be the $r$-coordinate of such sphere. By construction $r_{\mathcal{H}} < \tilde{r} < r_0$. Let $r_1$ be such that $r_{\mathcal{H}} < r_1 < \tilde{r}$. Observe that by construction, $r_1$ depends on $U$ and $R$. 

By Proposition \ref{prop-boundedness}, the solution $\mathscr{S}^{id, K}$ is bounded in the region $\{ r\geq r_1\} \cap \{ u_0 \leq u \leq u_1 \}$ and by construction the sphere $S_{U, R}$ belongs to this region. 
Therefore, by Theorem \ref{achieving-normalization-theorem}, we can associate to $\mathscr{S}^{id, K}$ a $S_{U, R}$-normalized solution 
\beaa
\mathscr{S}^{U, R}:=\mathscr{S}^{id, K}-\mathscr{G}^{U, R}
\eeaa

Observe that according to Theorem \ref{achieving-normalization-theorem}, the pure gauge solution $\mathscr{G}^{U, R}$ is supported in $\ell\geq 1$ spherical harmonics, therefore the projection to the $\ell=0$ spherical harmonics of $\mathscr{S}^{U, R}$ still vanishes. 

We derive here decay statements for the functions $h, \underline{h}, a, z, \ov{z}$ constructed in the proof of Theorem \ref{achieving-normalization-theorem}. 
From \eqref{operator-E-z}, we have on $\mathscr{I}_{U, R}$, using \eqref{estimate-check-kab-initia}
\beaa
\mathcal{E} (\DDs_2\DDs_1(z, 0))&=& \kab \DDs_2\DDs_1(\check{\ka}[\mathscr{S}^{id}], 0)+\ka \DDs_2\DDs_1(\check{\kab}[\mathscr{S}^{id}], 0)\les r^{-4} u^{-1+\de}
\eeaa
By Lemma \ref{first-poincare-inequality-2-tensor}, we have on $\mathscr{I}_{U, R}$
\beaa
|\DDs_2\DDs_1(z, 0)|&\les& r^{-2} u^{-1+\de}
\eeaa
Similarly from \eqref{derive-a-from-last-slice} and \eqref{derive-ov-z-last-slice}, we obtain
\beaa
|\DDs_2\DDs_1(a, 0)|&\les& r^{-2} u^{-1+\de}, \qquad |\DDs_2\DDs_1(\ov{z}, 0)|\les r^{-2} u^{-1+\de}
\eeaa
which implies on $\mathscr{I}_{U, R}$:
\bea
|\DDs_2\DDs_1(h, 0)|\les r^{-1} u^{-1+\de} \qquad |\DDs_2\DDs_1(\underline{h}, 0)|\les r^{-1} u^{-1+\de} 
\eea
Integrating \eqref{transport-h}, \eqref{transport-a} and \eqref{transport-fb} we see that those bounds hold in the whole spacetime. Similarly for the projection to the $\ell=1$ spherical harmonics.  This implies decay for all the components of the gauge solution, at a rate which is consistent with the decay for $\mathscr{S}^{id, K}$. 

\section{Proof of linear stability: decay}\label{decay-chapter}

In this section, we prove decay in $r$ and $u$ of a linear perturbation of Reissner-Nordstr\"om spacetime $\mathscr{S}$ to the sum of a pure gauge solution and a linearized Kerr-Newman solution.

In Section \ref{statement-theorem} we give an outline of the proof. 
 In Section \ref{normalization-far-away-decay} we prove decay of the solution along the null hypersurface $\mathscr{I}_{U, R}$ making use of the $S_{U, R}$-normalization, and finally in Section \ref{final-decay} we prove the optimal decay as stated in the Theorem.

\subsection{Outline of the proof of Theorem \ref{linear-stability}}\label{statement-theorem}

Recall the statement of Theorem  \ref{linear-stability} in Section \ref{statement-section}. We outline here the proof.
\begin{enumerate}
\item {\bf{Achieving $S_{U, R}$-normalization:}} By Proposition \ref{prop-boundedness}, we have that the solution $\mathscr{S}^{id, K}$ as defined in \eqref{definition-S-id-K} verifies explicit bounds in $r$ and $u$ in the entire exterior region and by Theorem \ref{achieving-normalization-theorem}, we can associate to $\mathscr{S}^{id, K}$ a $S_{U, R}$-normalized solution $$\mathscr{S}^{U, R}:=\mathscr{S}^{id, K}-\mathscr{G}^{U, R}=\mathscr{S}-(\mathscr{G}^{id}+\mathscr{G}^{U, R})-\mathscr{K}^{id}$$ Observe that according to Theorem \ref{achieving-normalization-theorem}, the pure gauge solution $\mathscr{G}^{U, R}$ is supported in $\ell\geq 1$ spherical harmonics, therefore the projection to the $\ell=0$ spherical harmonics of $\mathscr{S}^{U, R}$ still vanishes. 
\item {\bf{Decay along the null hypersurface $\mathscr{I}_{U, R}$:}} By definition, the solution $\mathscr{S}^{U, R}$ verifies the conditions of $S_{U, R}$-normalized solution, which hold along the null hypersurface $\mathscr{I}_{U, R}$. These conditions imply elliptic relations along $\mathscr{I}_{U, R}$ which, together with the elliptic relations implied by the decay of the gauge-invariant quantities as summarized in Proposition \ref{expression-all-quantities-l1} and \ref{expression-all-quantities-l2}, imply decay for all quantities along $\mathscr{I}_{U, R}$.  

Recall the charge aspect function $\check{\nu}$ (defined in \eqref{definition-of-nu}) and the mass-charge aspect function $\check{\mu}$ (defined in \eqref{definition-of-mu}).
The gauge condition for the $\ell=1$ spherical harmonics along the null hypersurface $\mathscr{I}_{U, R}$ imposes the vanishing of $\check{\nu}_{\ell=1}$. Such condition implies a better decay for $\check{\mu}_{\ell=1}$ along the null hypersurface $\mathscr{I}_{U, R}$. In particular, the quantity  $\check{\mu}_{\ell=1}$, which in principle would not be bounded in $r$ along  $\mathscr{I}_{U, R}$, is implied to be bounded by the condition for $\nu$, giving 
\bea\label{improved-mu}
\check{\mu}_{\ell=1} \les u^{-1+\de} \qquad \text{along the null hypersurface $\mathscr{I}_{U, R}$}
\eea
On the other hand, for the projection to the $\ell\geq 2$ spherical harmonics, the gauge conditions impose the vanishing of $\check{\mu}_{\ell\geq 2}$ along the null hypersurface $\mathscr{I}_{U, R}$.

 Finally, the condition $\check{\kab}=0$ is necessary to obtain decay in $r$ for the quantities involved in the $e_3$ direction, namely $\check{\omb}$ and $\xib$. Observe that also the improved decay of $\check{\mu}_{\ell=1} $ \eqref{improved-mu} is necessary to obtain the decay in $r$ for $\check{\omb}_{\ell=1}$ and $\xib_{\ell=1}$.
 This decay is crucial to improve the result obtained in the previous section about boundedness.

We show decay in $u$ and $r$ for all components of $\mathscr{S}^{U, R}$ along the null hypersurface $\mathscr{I}_{U, R}$. The case $\ell=1$ and $\ell\geq 2$ spherical harmonics will be treated separately. This is done in Section \ref{normalization-far-away-decay}.

\item {\bf{Optimal decay in $r$ and $u$:}} Once obtained the decay for all the components along $\mathscr{I}_{U, R}$, we integrate transport equations from $\mathscr{I}_{U, R}$ towards the past to obtain decay in the past of $S_{U, R}$. We derive the optimal decay in $r$ as given by the Theorem, with $u^{-1/2+\de}$ decay. The case $\ell=1$ and $\ell\geq 2$ spherical harmonics will be treated separately. This is done in Section \ref{mega-section-optimal-decay-r}. Obtaining the optimal decay in $u$ is more delicate, because only one transport equation as given in the Einstein-Maxwell equations is integrable from $\mathscr{I}_{U, R}$ to obtain decay in $u^{-1+\de}$, i.e. the equation for $\check{\ka}$. Imposing the vanishing of $\check{\ka}$ at the hypersurface $\mathscr{I}_{U, R}$ then implies the vanishing of it everywhere. This condition by itself is not enough to obtain optimal decay for all the remaining quantities.  It is therefore crucial to make use of quantities which verify integrable transport equations along the $e_4$ direction. 

The charge aspect function $\check{\nu}$, the mass-charge aspect function $\check{\mu}$ and a new quantity $\Xi$ which is a $2$-tensor (defined in Lemma \ref{estimate-for-Gamma}) satisfy the following:
\bea
\nabb_4(\check{\nu}_{\ell=1})&=& 0 \label{eq=int-1}\\
\nabb_4(\check{\mu})&=& O(r^{-1-\de} u^{-1+\de}) \label{eq=int-2}\\
\nabb_4(\Xi)&=& O(r^{-1-\de} u^{-1+\de}) \label{eq=int-3}
\eea
which all have integrable (in $r$) right hand side.

The first two equations will be used in the projection to the $\ell=1$ spherical harmonics and the last two equations in the projection to the $\ell\geq 2$ spherical harmonics. Observe that the equation for $\check{\mu}$ will be used differently in the two cases: indeed in the projection to the $\ell=1$ spherical harmonics we do not need to impose the vanishing of $\check{\mu}$ along $\mathscr{I}_{U, R}$, while for the projection to the $\ell\geq 2$ spherical harmonics we do. We outline the procedure in each case here. 

\begin{itemize}

\item {\bf{The projection to the $\ell=1$ spherical harmonics:}} Equation \eqref{eq=int-1} together with the gauge condition $\check{\nu}_{\ell=1}=0$ implies the vanishing of $\check{\nu}_{\ell=1}$ everywhere in the spacetime. The improved decay for $\check{\mu}_{\ell=1}$ \eqref{improved-mu} allows to make use of equation \eqref{eq=int-2} to transport the decay for $\check{\mu}_{\ell=1}$ in the whole spacetime exterior. The above implies optimal decay for all the quantities in the exterior.  

\item {\bf{The projection to the $\ell\geq 2$ spherical harmonics:}} Equation \eqref{eq=int-2} together with the gauge condition $\check{\mu}_{\ell\geq 2}=0$ implies the decay of $\check{\mu}_{\ell\geq2}$ everywhere in the spacetime. In particular, in this part of the solution the quantity $\check{\mu}$ plays the same role as $\check{\nu}$ in the projection to the $\ell=1$ spherical harmonics. The new quantity $\Xi$, only supported in $\ell\geq2$ modes, has the property that it is bounded along the null hypersurface $\mathscr{I}_{U, R}$. Therefore equation \eqref{eq=int-3} implies decay for $\Xi$ everywhere, and this is enough to obtain optimal decay for all the components. In this part of the solution, $\Xi$ plays the same role as $\check{\mu}$ in the projection to the $\ell=1$ spherical harmonics.  Finally, the quantities involved in the $e_3$ direction can be directly integrated from the null hypersurface $\mathscr{I}_{U, R}$, obtaining optimal decay for them.
\end{itemize}
Observe that the estimates here derived do not depend on $U$ or $R$, i.e. they do not depend on the initial far-away sphere $S_{U, R}$. Therefore the sphere can be taken to be arbitrarily far away. Also observe that the above decay in inverse powers of $u$ cannot hold in the region close to the horizon. The integration backward from $\mathscr{I}_{U, R}$ is performed up to a timelike hypersurface $\{ r=r_2\}$ for $r_2 < R$.

\item  {\bf{Decay in $v$ close to the horizon:}} To obtain decay as inverse power of $v$ in the region close to the horizon, we translate the decay in $\{ r\geq r_2\}$ as a decay in inverse powers of $v$ on $\{ r=r_2\}$ and then integrate forward the horizon. We therefore obtain estimates in the gray region of Figure 1. This is done in Section \ref{section-close-horizon}.

\item  {\bf{Conclusion of the proof:}} Observe that the above estimates are all independent of $U$ and $R$, and as $U$ and $R$ vary for arbitrarily large values, the region where the estimates hold cover the whole exterior region. 
By defining $\tilde{\mathscr{G}}$ as the limit as $U$ and $R$ go to infinity, we obtain that the pure gauge solution $\mathscr{G}=\mathscr{G}^{id}+\tilde{\mathscr{G}}$ and the linearized Kerr-Newman solution $\mathscr{K}=\mathscr{K}^{id}$ as defined from the initial data verify the estimates for $\mathscr{S}-\mathscr{G}-\mathscr{K}$ given by Theorem \ref{linear-stability} in the entire exterior region of Reissner-Nordstr\"om spacetime. This is done in Section \ref{conclusions}. 
\end{enumerate}

\subsection{Decay of the solution along the null hypersurface $\mathscr{I}_{U, R}$}\label{normalization-far-away-decay}

The aim of this subsection is to obtain decay in $u$ and $r$ for all components of $\mathscr{S}^{U, R}$ on $\mathscr{I}_{U, R}$ for all the curvature components. The case $\ell=1$ and $\ell\geq 2$ spherical harmonics will be treated separately. 

\subsubsection{The projection to the $\ell=1$ mode}
Recall conditions \eqref{condition-check-ka-l1-last-slice}, \eqref{condition-check-kab-l1-last-slice}, \eqref{condition-check-nu-l1-last-slice} verified by the solution $\mathscr{S}^{U, R}$. 
In this subsection we show how the above conditions imply the decay for every component in the projection to the $\ell=1$ mode along $\mathscr{I}_{U, R}$. We combine the relations summarized in Proposition \ref{expression-all-quantities-l1} to the above gauge conditions. We first compute $\check{\nu}_{\ell=1}$. We have, using \eqref{divv-ze-in-terms-of-bF-ka-l1} and \eqref{check-rhoF-in-terms-of-bF-bbF-l1}, 
\bea\label{condition-nu-last-slice-implies}
\begin{split}
\check{\nu}_{\ell=1}&= r^4 \left((\divv\ze)_{\ell=1}+2\rhoF \check{\rhoF}_{\ell=1}\right)\\
&= r^4 \Big(\frac 1 r \check{\ka}_{\ell=1} +r\left(\frac{3\rho}{2\rhoF} -\rhoF\right)(\divv\bF)_{\ell=1}+O(r^{-3-\de} u^{-1/2+\de}, r^{-2-\de} u^{-1+\de})\\
&+2\rhoF \left(\frac 1 4  r^2\kab(\divv\bF)_{\ell=1}
 - \frac 1 2 r(\divv\bbF)_{\ell=1}+ O( r^{-1} u^{-1+\de}) \right)\Big)\\
 &= r^3 \check{\ka}_{\ell=1} +r^5\left(\frac{3\rho}{2\rhoF} -\rhoF+\frac 1 2 \rhoF r \kab \right)(\divv\bF)_{\ell=1}-r^5\rhoF  (\divv\bbF)_{\ell=1} \\
 &+O(r^{1-\de} u^{-1/2+\de}, r^{2-\de} u^{-1+\de})
 \end{split}
\eea
Conditions \eqref{condition-check-ka-l1-last-slice}, \eqref{condition-check-kab-l1-last-slice} imply, using \eqref{divv-check-kab-in-terms-of-bF-bbF-ka-l1}
\bea\label{blablabla}
(\divv\bbF)_{\ell=1}&=& \frac 1 2 r\kab(\divv\bF)_{\ell=1}+O( r^{-2} u^{-1+\de})
 \eea
On the other hand, the relation \eqref{condition-nu-last-slice-implies} restricted to $\mathscr{I}_{U, R}$ implies 
\bea\label{check-nu-translated}
\begin{split}
 &\left(\frac{3\rho}{2\rhoF} -\rhoF+\frac 1 2 \rhoF r \kab \right)(\divv\bF)_{\ell=1}-\rhoF  (\divv\bbF)_{\ell=1} \\
 &=O(r^{-4-\de} u^{-1/2+\de}, r^{-3-\de} u^{-1+\de})
 \end{split}
\eea
Combining \eqref{blablabla} and \eqref{check-nu-translated}, we obtain
\beaa
 \left(\frac{3\rho}{2\rhoF} -\rhoF \right)(\divv\bF)_{\ell=1} &=&O(r^{-4-\de} u^{-1/2+\de}, r^{-3-\de} u^{-1+\de})
\eeaa
which therefore implies 
\bea
|(\divv\bF)_{\ell=1}|&\les& \min\{r^{-3-\de} u^{-1/2+\de}, r^{-2-\de} u^{-1+\de}\}\label{estimate-bF-last-slice-l1}
\eea
Proposition \ref{expression-all-quantities-l1} then implies on $\mathscr{I}_{U, R}$:
\bea
|(\divv\bbF)_{\ell=1}|&\les& r^{-2} u^{-1+\de} \\
|(\divv\b)_{\ell=1}|&\les&  \min\{r^{-4-\de} u^{-1/2+\de}, r^{-3-\de} u^{-1+\de}\}\label{estimate-b-last-slice-l1}\\
|(\divv\bb)_{\ell=1}|&\les&  r^{-3} u^{-1+\de}  \\
|\check{\rhoF}_{\ell=1}|&\les&   r^{-1} u^{-1+\de} \label{estimate-check-rhoF-last-slice-l1}
  \eea
  Projecting \eqref{relation-pf-rho-rhoF-bF-bbF}  to the $\ell=1$ spherical harmonics and making use of the fact that along $\mathscr{I}_{U, R}$ $r \gg u$, we obtain
\bea\label{estimate-check-rhoF-last-slice-l1-optimal-r}
|\check{\rhoF}_{\ell=1}|\les r^{-2} u^{-1/2+\de}
\eea
Using the definition of $\check{\nu}$ and condition \eqref{condition-check-nu-l1-last-slice} we observe that $\divv\ze_{\ell=1}=-2\rhoF \check{\rhoF}_{\ell=1}$ along $\mathscr{I}^{U, R}$, and therefore \eqref{estimate-check-rhoF-last-slice-l1} implies the enhanced\footnote{By enhanced estimate we mean an estimate that is better than the final one valid everywhere on the spacetime.} estimate for $\divv\ze$ on $\mathscr{I}_{U, R}$:
\bea
|(\divv\ze)_{\ell=1}|&\les&  r^{-3} u^{-1+\de}  \label{estimate-ze-last-slice-l1}
\eea
Using Gauss equation \eqref{Gauss-check} and conditions \eqref{condition-check-ka-l1-last-slice} and \eqref{condition-check-kab-l1-last-slice} we observe that $\check{\rho}_{\ell=1}=2\rhoF \check{\rhoF}_{\ell=1}$ and therefore \eqref{estimate-check-rhoF-last-slice-l1} implies the enhanced estimate for $\check{\rho}$:
\bea
 |\check{\rho}_{\ell=1}| &\les&  r^{-3} u^{-1+\de} \label{estimate-check-rho-last-slice-l1}
\eea
The above then implies a better decay than expected for $\check{\mu}$ at $\mathscr{I}^{U, R}$. Indeed, 
\bea\label{estimate-check-mu-last-slice-l1}
\begin{split}
|\check{\mu}_{\ell=1}|&=|r^3 \left((\divv \ze)_{\ell=1}+ \check{\rho}_{\ell=1}-4\rhoF \check{\rhoF}_{\ell=1} \right)-2r^4\rhoF (\divv\bF)_{\ell=1} |\\
&\les r^3 \left(r^{-3} u^{-1+\de} \right)+r^4r^{-2}  r^{-2-\de} u^{-1+\de}\les u^{-1+\de}
\end{split}
\eea
This improved decay is necessary to use the transport equation for $\check{\mu}_{\ell=1}$ later.

\subsection{The projection to the $\ell\geq 2$ modes}

Recall conditions \eqref{condition-check-ka-definition}, \eqref{condition-check-kab-definition} and \eqref{condition-check-mu-definition} verified by $\mathscr{S}^{U, R}$ on $\mathscr{I}_{U, R}$.

We combine the relations summarized in Proposition \ref{expression-all-quantities-l2} to the above gauge conditions. In order to do so, we first compute $\DDs_2\DDs_1(\check{\mu}, 0)$. We have 
 \beaa
 \DDs_2\DDs_1(\check{\mu}, 0)&=& r^3 \DDs_2\DDs_1\DDd_1 \ze-2r^4\rhoF \DDs_2\DDs_1\DDd_1\bF+ r^3\DDs_2\DDs_1(\check{\rho}, \check{\sigma})-4r^3\rhoF \DDs_2\DDs_1(\check{\rhoF}, \check{\sigmaF}) \\
 &=& r^3 \left(2\DDs_2\DDd_2+2K\right)\DDs_2 \ze-2r^4\rhoF \left(2\DDs_2\DDd_2+2K\right)\DDs_2\bF\\
 &&+ r^3\DDs_2\DDs_1(\check{\rho}, \check{\sigma})-4r^3\rhoF \DDs_2\DDs_1(\check{\rhoF}, \check{\sigmaF}) 
 \eeaa
 where we used \eqref{relation-DDs-DDs-DDd}. 
 Using relations \eqref{write-rhoF-in-terms-of-chih-chibh-general} and \eqref {check-rho-in-terms-of=chih-chib-general}, we obtain
 \beaa
\DDs_2\DDs_1(\check{\mu}, 0)&=& r^3 \left(2\DDs_2\DDd_2+2K\right)\DDs_2 \ze-2r^4\rhoF \left(2\DDs_2\DDd_2+2K\right)\DDs_2\bF\\
 &&+ r^3\left(\left(-\frac 3 4 \rho-\frac 1 2 \rhoF^2\right)\left(  \kab \chih + \ka \chibh \right) +O( r^{-4} u^{-1+\de})\right)\\
 &&-4r^3\rhoF \left(-\frac 12 \rhoF \left(  \kab \chih + \ka \chibh \right)+\min\{r^{-4} u^{-1/2+\de}, r^{-3} u^{-1+\de}\}\right)\\
 &=& r^3 \left(2\DDs_2\DDd_2+2K\right)\DDs_2 \ze-2r^4\rhoF \left(2\DDs_2\DDd_2+2K\right)\DDs_2\bF\\
 &&+ r^3\left(-\frac 3 4 \rho+\frac 3 2 \rhoF^2\right)\left(  \kab \chih + \ka \chibh \right) +O( r^{-1} u^{-1+\de})
 \eeaa
 Using relation \eqref{write-bF-in-terms-of-chih-general}, we obtain
 \beaa
 \DDs_2\DDs_1(\check{\mu}, 0) &=& r^3 \left(2\DDs_2\DDd_2+2K\right)\DDs_2 \ze+ \left(2\DDs_2\DDd_2+2K\right)(2r^4\rhoF^2 \chih)+ r^3\left(-\frac 3 4 \rho+\frac 3 2 \rhoF^2\right)\left(  \kab \chih + \ka \chibh \right) \\
 &&+O( r^{-1} u^{-1+\de})
 \eeaa
Using relation \eqref{write-ze-in-terms-of-chih-general}, we obtain
 \beaa
\DDs_2\DDs_1(\check{\mu}, 0) &=& r^4 \left(2\DDs_2\DDd_2+2K\right)\left(\DDs_2\DDd_2 \chih+\left(-\frac 3 2\rho+\rhoF^2\right) \chih+\frac 1 2  \DDs_2\DDs_1(\check{\ka}, 0)\right) \\
&&+ \left(2\DDs_2\DDd_2+2K\right)(2r^4\rhoF^2 \chih)+ r^3\left(-\frac 3 4 \rho+\frac 3 2 \rhoF^2\right)\left(  \kab \chih + \ka \chibh \right) +O( r^{-1} u^{-1+\de})
 \eeaa
 which finally gives
 \bea\label{nu-in-terms-of-chi-chib-general}
 \begin{split}
\DDs_2\DDs_1(\check{\mu}, 0) &= r^4 \left(2\DDs_2\DDd_2+2K\right)\left(\DDs_2\DDd_2 \chih\right)+r^4 \left(2\DDs_2\DDd_2+2K\right)\left(-\frac 3 2\rho+3\rhoF^2\right) \chih\\
& +\frac 1 2 r^4 \left(2\DDs_2\DDd_2+2K\right)\DDs_2\DDs_1(\check{\ka}, 0)+ r^3\left(-\frac 3 4 \rho+\frac 3 2 \rhoF^2\right) \kab \chih + r^3\left(-\frac 3 4 \rho+\frac 3 2 \rhoF^2\right) \ka \chibh \\
& +O( r^{-1} u^{-1+\de})
 \end{split}
 \eea
which will be used later.

Condition \eqref{condition-check-kab-l2}, \eqref{condition-check-ka-l2} and relation \eqref{check-kab-in-terms-of=chih-chib-general} restricted to $\mathscr{I}_{U, R}$ imply
\beaa
2\DDs_2\DDd_2 \chibh+ \left(- 3  \rho +2\rhoF^2\right) \chibh &=&-\frac r 2 \kab \left(2\DDs_2\DDd_2 \chih+\left(- 3 \rho+2\rhoF^2\right) \chih\right) +O( r^{-3} u^{-1+\de})
\eeaa
Recall the operator  $\mathcal{E}=2 \DDs_2\DDd_2- 3\rho+2\rhoF^2$, then the above relation becomes 
\bea\label{E-relation-chi-chib}
\mathcal{E}\left(\chibh+\frac r 2 \kab \chih\right)&=& O( r^{-3} u^{-1+\de})
\eea
Applying Lemma \ref{first-poincare-inequality-2-tensor}  to \eqref{E-relation-chi-chib} we obtain
\bea\label{chibh-in-terms-of-chi-last-slice-E}
\chibh&=&-\frac r 2 \kab \chih +O( r^{-1} u^{-1+\de})
\eea
On the other hand, relation \eqref{nu-in-terms-of-chi-chib-general} restricted to $\mathscr{I}_{U, R}$ becomes:
\beaa
&& r^2 \left(2\DDs_2\DDd_2+2K\right)\left(\DDs_2\DDd_2 \chih\right)+r^2 \left(2\DDs_2\DDd_2+2K\right)\left(-\frac 3 2\rho+3\rhoF^2\right) \chih \\
 &&+ r\left(-\frac 3 4 \rho+\frac 3 2 \rhoF^2\right) \kab \chih +\left(-\frac 3 2 \rho+3  \rhoF^2\right)  \chibh  =O( r^{-3} u^{-1+\de})
\eeaa
Applying \eqref{chibh-in-terms-of-chi-last-slice-E}, we finally obtain on $\mathscr{I}_{U, R}$:
\beaa
&&  \left(2\DDs_2\DDd_2+2K\right)\left(2\DDs_2\DDd_2 \chih+\left(- 3 \rho+6\rhoF^2\right) \chih\right) =O( r^{-5} u^{-1+\de})
\eeaa
Recall that $2\DDs_2\DDd_2+2K=\DDs_1\DDd_1$, therefore the standard Poincar\'e inequality implies
\bea
2\DDs_2\DDd_2 \chih+\left(- 3 \rho+6\rhoF^2\right) \chih \les O( r^{-3} u^{-1+\de})
\eea
A slight modification of Lemma \ref{first-poincare-inequality-2-tensor} gives an identical Poincar\'e inequality applied to the operator $2\DDs_2\DDd_2 +\left(- 3 \rho+6\rhoF^2\right) $ which finally implies
\bea\label{estimate-chih-last-slice}
|\chih| \les O( r^{-1} u^{-1+\de})
\eea
Relation \eqref{chibh-in-terms-of-chi-last-slice-E} implies
\bea\label{estimate-chibh-last-slice}
|\chibh| \les  O( r^{-1} u^{-1+\de})
\eea
Proposition \ref{expression-all-quantities-l2} implies on $\mathscr{I}_{U, R}$, using that $r \gg u$:
\bea
|\DDs_2\bF|&\les &\min\{ r^{-3-\de} u^{-1/2+\de}, r^{-2-\de} u^{-1+\de}\}\label{estimate-bf-last-slice}\\
|\DDs_2\bbF|&\les&O( r^{-2} u^{-1+\de})\\
|\DDs_2\b|&\les& \min\{r^{-4-\de} u^{-1/2+\de}, r^{-3-\de} u^{-1+\de} \}\label{estimate-b-last-slice}\\
|\DDs_2\bb|&\les& O( r^{-3} u^{-1+\de} ) \\
|\DDs_2\ze|&\les& \min\{r^{-3-\de} u^{-1/2+\de}, r^{-2-\de} u^{-1+\de} \} \label{estimate-ze-last-slice}\\
|\DDs_2\DDs_1(\check{\rhoF}, \check{\sigmaF})|&\les&  \min\{r^{-4} u^{-1/2+\de}, r^{-3} u^{-1+\de}\} \label{estimate-check-rhoF-last-slice}\\
 |\DDs_2\DDs_1(-\check{\rho}, \check{\sigma})| &\les&\min\{r^{-5} u^{-1/2+\de}, r^{-4} u^{-1+\de}\} \label{estimate-check-rho-last-slice} \\
 |\DDs_2\DDs_1(\check{K}, 0)|&\les& \min\{r^{-5-\de} u^{-1/2+\de}, r^{-4} u^{-1+\de} \}\label{estimate-check-K-last-slice}
  \eea
  Using the above in Codazzi equation \eqref{Codazzi-chi-ze-eta} we finally have on $\mathscr{I}_{U, R}$
  \bea
  |\chih| \les O(r^{-2} u^{-1/2+\de}) \label{estimate-chih-last-slice-optimal-r}
  \eea
\subsubsection{The terms involved in the $e_3$ direction}
 We obtain here decay along $\mathscr{I}_{U, R}$ for the quantities $\eta$,  $\xi$, $\check{\omb}$.

 Conditions \eqref{condition-check-ka-l1-last-slice},  \eqref{condition-check-kab-l1-last-slice} and \eqref{condition-check-nu-l1-last-slice} imply $\nabb_3\check{\ka}_{\ell=1}=\check{\ka}_{\ell=1}=0$, $\nabb_3\check{\kab}_{\ell=1}=\check{\kab}_{\ell=1}=0$ and $\nabb_3\check{\nu}_{\ell=1}=\check{\nu}_{\ell=1}=0$  on $\mathscr{I}_{U, R}$. Therefore, using \eqref{nabb-3-check-ka-ze-eta}, \eqref{nabb-3-check-kab-ze-eta} and \eqref{nabb-3-check-nu} projected to the $\ell=1$ spherical harmonics we have
 \bea
0&=&\frac 1 2\ka^2\check{\underline{\Omega}}_{\ell=1}+ 2\ka\check{\omb}_{\ell=1} +2 \divv \eta_{\ell=1} +2\check{\rho}_{\ell=1} ,\label{relation-coming-fromka}\\
0&=&-2\kab\check{\omb}_{\ell=1}+2\divv\xib_{\ell=1}+\left(\frac 1 2 \ka\kab-2\rho\right)\check{\underline{\Omega}}_{\ell=1}\label{relation-coming-fromkab}\\
 0&=&\frac{4}{r^2} \check{\omb}_{\ell=1}+\left(\frac 1 2 \kab+2\omb \right)\divv \left(\ze_{\ell=1}-\eta_{\ell=1}\right) +\frac 1 2 \ka \divv\xib_{\ell=1} +2\rhoF^2 \ka\check{\Omegab}_{\ell=1}\label{relation-coming-fromnu}
 \eea
 The projection of \eqref{xib-Omegab-ze-eta} to the $\ell=1$ spherical harmonics implies $\frac{2}{r^2} \check{\Omegab}_{\ell=1}=\divv\xib_{\ell=1}+\Omegab (\divv\eta_{\ell=1}-\divv\ze_{\ell=1})$, therefore relations \eqref{relation-coming-fromka} and \eqref{relation-coming-fromkab} simplify to
 \bea
 -2\ka\check{\omb}_{\ell=1}&=&(\divv\xib_{\ell=1}+\Omegab\divv\eta_{\ell=1})+2 \divv\eta_{\ell=1}  +O(r^{-3} u^{-1+\de})\label{relation-1-in-between}\\
 2\kab\check{\omb}_{\ell=1}&=&2\divv\xib_{\ell=1}+r\left(\frac 1 2\kab-\rho\right)(\divv\xib_{\ell=1}+\Omegab\divv\eta_{\ell=1})+O(r^{-3} u^{-1+\de})\label{relation=in-between}
 \eea
 because of the enhanced estimate for $\divv\ze_{\ell=1}$ \eqref{estimate-ze-last-slice-l1} and $\check{\rho}_{\ell=1}$ \eqref{estimate-check-rho-last-slice-l1}.
 Multiplying \eqref{relation-1-in-between} by $\kab$ and \eqref{relation=in-between} by $\ka$ and summing the two we obtain
 \beaa
0&=&\left(\frac{4}{r}+2\kab-2r\rho\right)\divv\xib_{\ell=1}+\left(\frac{4}{r}+2\kab-2r\rho\right)\Omegab\divv\eta_{\ell=1}+O(r^{-4} u^{-1+\de})
 \eeaa
 Observe that $\frac{4}{r}+2\kab-2r\rho=\frac{12M}{r^2}-\frac{8Q^2}{r^2}$, therefore the above gives
 \bea\label{eq100}
 |\divv\xib_{\ell=1}+\Omegab\divv\eta_{\ell=1}|&\les&O(r^{-2} u^{-1+\de})
 \eea
 Using \eqref{xib-Omegab-ze-eta} we obtain
 \bea\label{estmate-Omegab-l1}
 |\check{\Omegab}|\les O( u^{-1+\de})
 \eea
  Multiplying \eqref{relation-1-in-between} by $r\left(\frac 1 2\kab-\rho\right)$ and subtracting \eqref{relation=in-between} we obtain
 \bea\label{relation-save-ass}
 \left(-4\kab+4r\rho\right)\check{\omb}_{\ell=1}&=&2 r\left(\frac 1 2\kab-r\rho\right)\divv\eta_{\ell=1} -2\divv\xib_{\ell=1} +O(r^{-3} u^{-1+\de})
 \eea
Relation \eqref{relation-coming-fromnu} simplifies, using \eqref{estmate-Omegab-l1} and enhanced estimate for $\divv\ze_{\ell=1}$ \eqref{estimate-ze-last-slice-l1} to
 \beaa
 \frac{4}{r^2} \check{\omb}_{\ell=1}&=&\left(\frac 1 2 \kab+2\omb \right)\divv\eta_{\ell=1} -\frac 1 2 \ka \divv\xib_{\ell=1}+O(r^{-4} u^{-1+\de})
 \eeaa
 Recall that $2\omb=-r\rho$, so it can be written as
 \bea\label{relation-save-ass-2}
 \frac{8}{r} \check{\omb}_{\ell=1}&=&2r\left(\frac 1 2 \kab-r\rho \right)\divv\eta_{\ell=1} - 2\divv\xib_{\ell=1}+O(r^{-3} u^{-1+\de})
 \eea
 Subtracting \eqref{relation-save-ass-2} from \eqref{relation-save-ass} we then obtain
 \beaa
 \left(-4\kab+4r\rho-\frac{8}{r}\right)\check{\omb}_{\ell=1}&=&O(r^{-3} u^{-1+\de})
 \eeaa
 and since $-4\kab+4r\rho-\frac{8}{r}=-\frac{24M}{r^2}+\frac{16Q^2}{r^3}$ we obtain
 \beaa
 |\check{\omb}_{\ell=1}|\les O(r^{-1} u^{-1+\de})
 \eeaa
 Observe that the enhanced estimates obtained for $\divv\ze_{\ell=1}$ and $\check{\rho}_{\ell=1}$ along $\mathscr{I}_{U, R}$ allowed us to obtain the optimal decay for $\check{\omb}$, $\xib$ and $\divv\eta_{\ell=1}$ and compensate for the loss of a power of $r$ in the degenerate Poincar\'e inequality above, where the leading term in $\frac 1 r$ in the expression $-4\kab+4r\rho-\frac{8}{r}$ cancels out, and the final leading term is in $\frac{1}{r^2}$. 
 
 Using \eqref{eq100}, relation \eqref{relation=in-between} simplifies to
 \beaa
  2\kab\check{\omb}_{\ell=1}&=&2\divv\xib_{\ell=1}+O(r^{-2} u^{-1+\de})=-2\Omegab\divv\eta_{\ell=1}+O(r^{-2} u^{-1+\de})
  \eeaa
  which gives
  \beaa
   \check{\omb}_{\ell=1}  &=&-\frac{r}{2}\divv\eta_{\ell=1}+O(r^{-1} u^{-1+\de})
  \eeaa
 Using the above we obtain
 \beaa
 |\divv\eta_{\ell=1}, \divv\xib_{\ell=1}|\les O(r^{-1} u^{-1+\de})
 \eeaa
 This gives all the desired decay for the projection to the $\ell=1$ spherical harmonics of $\xib$, $\eta$, $\check{\omb}$ and $\check{\Omegab}$.

Conditions \eqref{condition-check-ka-l2}, \eqref{condition-check-kab-l2} and \eqref{condition-nu} imply $\nabb_3\DDs_2\DDs_1(\check{\ka}, 0)=\DDs_2\DDs_1(\check{\ka}, 0)=0$, $\nabb_3\DDs_2\DDs_1(\check{\kab}, 0)=\DDs_2\DDs_1(\check{\kab}, 0)=0$ and $\nabb_3\DDs_2\DDs_1(\check{\mu}, 0)=\DDs_2\DDs_1(\check{\mu}, 0)=0$ on $\mathscr{I}_{U, R}$. Therefore, using \eqref{nabb-3-check-ka-ze-eta}, \eqref{nabb-3-check-kab-ze-eta} and Lemma \ref{nabb-3-mu} commuted with $\DDs_2\DDs_1$ along $\mathscr{I}_{U, R}$ we obtain the following relations:
\bea
0&=&\frac 1 2 \ka^2\DDs_2\DDs_1(\check{\underline{\Omega}}, 0)+ 2\ka \DDs_2\DDs_1(\check{\omb}, 0) +2 \DDs_2\DDs_1\DDd_1 \eta +2\DDs_2\DDs_1(\check{\rho}, 0), \label{first-relation-last-slice-}\\
0&=&-2\kab\DDs_2\DDs_1(\check{\omb}, 0)+2\DDs_2\DDs_1\DDd_1\xib+\left(\frac 1 2 \ka\kab-2\rho\right)\DDs_2\DDs_1(\check{\underline{\Omega}}, 0)\label{second-relation-last-slice-}
\eea
and
\bea\label{relation-nabb-3-nu-last-slice}
\begin{split}
0&=2 \DDs_2\DDs_1\DDd_1\DDs_1( \check{\omb}, 0)+\left(\frac 1 2 \kab+2\omb \right)\DDs_2\DDs_1\DDd_1(\ze-\eta) +\frac 1 2 \ka \DDs_2\DDs_1\DDd_1\xib - 2\DDs_2\DDs_1\DDd_1\bb \\
& +2\rhoF\DDs_2\DDs_1\DDd_1\bbF -\left(\frac 3 2  \rho-3\rhoF^2\right)(-\ka \DDs_2\DDs_1(\check{\Omegab}, 0)) -4\omb r\rhoF \DDs_2\DDs_1\DDd_1\bF \\
&+2r\rhoF \DDs_2\DDs_1\DDd_1\DDs_1(\check{\rhoF}, \check{\sigmaF}) -4r\rhoF^2 \DDs_2\DDs_1\DDd_1\eta
\end{split}
\eea
Taking into account the estimates already obtained above we can simplify \eqref{first-relation-last-slice-} and \eqref{relation-nabb-3-nu-last-slice} as 
\bea\label{relation-last-slice-whatever}
\frac 1 2 \ka^2\DDs_2\DDs_1(\check{\underline{\Omega}}, 0)+ 2\ka\DDs_2\DDs_1(\check{\omb}, 0) +2 \DDs_2\DDs_1\DDd_1 \eta &\les& O( r^{-4} u^{-1+\de})
\eea
and
\bea\label{relation-last-slice-from-nu}
\begin{split}
&2 \DDs_2\DDs_1\DDd_1\DDs_1( \check{\omb}, 0)-\left(\frac 1 2 \kab+2\omb + 4r\rhoF^2\right)\DDs_2\DDs_1\DDd_1\eta +\frac 1 2 \ka \DDs_2\DDs_1\DDd_1\xib \\
& +\left(\frac 3 2  \rho-3\rhoF^2\right)\ka \DDs_2\DDs_1(\check{\Omegab}, 0) \les O(r^{-5} u^{-1+\de} )
\end{split}
\eea
Using \eqref{xib-Omegab-ze-eta}, relation \eqref{second-relation-last-slice-} simplifies to
\bea\label{check-omb-in-terms-of-xib-eta}
2\kab\DDs_2\DDs_1(\check{\omb}, 0)&=&2\DDs_2\DDs_1\DDd_1\xib+\left(\frac 1 2 \ka\kab-2\rho\right)\DDs_2(\xib+\Omegab\eta)+O(r^{-4} u^{-1+\de})   
\eea
Again using \eqref{xib-Omegab-ze-eta}, relation \eqref{relation-last-slice-whatever} simplifies to
\beaa
\frac 1 2 \ka^2\DDs_2(\xib+\Omegab\eta)+ 2\ka\DDs_2\DDs_1(\check{\omb}, 0) +2 \DDs_2\DDs_1\DDd_1 \eta &\les& O( r^{-4} u^{-1+\de})
\eeaa
Using \eqref{check-omb-in-terms-of-xib-eta} to substitute in the above relation upon multiplying by $\kab$, we obtain
\beaa
 &&\ka\left(2\left(2\DDs_2\DDd_2+2K\right)\DDs_2\xib+\left(\ka\kab-2\rho\right)\DDs_2\xib\right)  +2 \kab\left(2\DDs_2\DDd_2+2K\right)\DDs_2 \eta+\left(\ka^2\kab-2\ka\rho\right)\DDs_2(\Omegab\eta)\\
  &\les& O( r^{-5} u^{-1+\de})
\eeaa
where we used \eqref{relation-DDs-DDs-DDd}.
Recalling that $\kab=\ka\Omegab$, we obtain
\beaa
 \left(4\DDs_2\DDd_2+4K+\ka\kab-2\rho\right) \DDs_2(\xib+\Omegab\eta)  &\les& O( r^{-4} u^{-1+\de})
\eeaa
Using Gauss equation the above relation gives
\beaa
\mathcal{E} (\DDs_2\xib+\Omegab\DDs_2\eta)  &\les& O( r^{-4} u^{-1+\de})
\eeaa
where $\mathcal{E}$ is the operator defined in Lemma \ref{first-poincare-inequality-2-tensor}. By Lemma \ref{first-poincare-inequality-2-tensor}, we obtain 
\beaa
\DDs_2\xib+\Omegab\DDs_2\eta\les  O( r^{-2} u^{-1+\de})
\eeaa
Using \eqref{xib-Omegab-ze-eta}, we obtain
\bea
|\DDs_2\DDs_1(\check{\underline{\Omega}}, 0)|\les  O( r^{-2} u^{-1+\de})
\eea
Relation \eqref{check-omb-in-terms-of-xib-eta} simplifies to
\beaa
2\kab\DDs_2\DDs_1(\check{\omb}, 0)&=&2\DDs_2\DDs_1\DDd_1\xib+O( r^{-4} u^{-1+\de} ) =2\left(2\DDs_2\DDd_2+2K\right)\DDs_2\xib+O( r^{-4} u^{-1+\de} ) \\
&=&-2\Omegab\left(2\DDs_2\DDd_2+2K\right)\DDs_2\eta+O( r^{-4} u^{-1+\de} ) 
\eeaa
which gives
\beaa
\ka\DDs_2\DDs_1(\check{\omb}, 0)&=&-\left(2\DDs_2\DDd_2+2K\right)\DDs_2\eta+O( r^{-4} u^{-1+\de} )
\eeaa
Using the above to simplify \eqref{relation-last-slice-from-nu}, we finally obtain
\beaa
&&2 \left(2\DDs_2\DDd_2+2K\right)\ka\DDs_2\DDs_1( \check{\omb}, 0)-\left( \ka\kab+2\omb\ka + 4r\ka\rhoF^2\right)\left(2\DDs_2\DDd_2+2K\right)\DDs_2\eta   \\
&=&- \left(4\DDs_2\DDd_2+4K+\ka\kab+2\omb\ka + 8\rhoF^2\right)\left(2\DDs_2\DDd_2+2K\right)\DDs_2\eta+O( r^{-6} u^{-1+\de} )\\
&=&- 2\left(2\DDs_2\DDd_2-3\rho + 6\rhoF^2\right)\left(2\DDs_2\DDd_2+2K\right)\DDs_2\eta+O( r^{-6} u^{-1+\de} )\les O( r^{-6} u^{-1+\de} )
\eeaa
where we recall that $2\omb=-r\rho$. The above operator is then a slight modification of the operator $\mathcal{E}$, for which a Poincar\'e inequality holds. We therefore have
\beaa
\left(2\DDs_2\DDd_2+2K\right)\DDs_2\eta &\les& O( r^{-4} u^{-1+\de} )
\eeaa
The standard Poincar\'e inequality applied to the above then gives
\beaa
|\DDs_2\eta| &\les& O( r^{-2} u^{-1+\de} )
\eeaa
The above relations imply
\beaa
|\DDs_2\xib|&\les&  O( r^{-2} u^{-1+\de}) \\
|\DDs_2\DDs_1( \check{\omb}, 0)|&\les&  O( r^{-3} u^{-1+\de})
\eeaa

Combining the above with the estimates obtained for the projection to the $\ell=1$ mode we obtain
\bea
| \xib|, |\eta|, |\check{\omb}|&\les& r^{-1} u^{-1+\de} \qquad \text{along $\mathscr{I}_{U, R}$}\label{estimate-xib-last-slice-1}\label{estimate-eta-last-slice-1}\label{estimate-omb-last-slice-1}
\eea

\subsubsection{The metric coefficients}

We derive here decay along $\mathscr{I}_{U, R}$ for the metric coefficients $\hat{\slashed{g}}$, $\underline{b}$ and $\check{\vsi}$.

\begin{enumerate}
\item By condition \eqref{condition-hat-slashed-g-last-slice} and \eqref{condition-underline-b-last-slice}, integrating equation \eqref{nabb-3-g} along $\mathscr{I}_{U, R}$ gives
\bea\label{estimate-slashed-g-last-slice}
|\hat{\slashed{g}}|&\les& \min\{r^{-1} u^{-1/2+\de},  u^{-1+\de} \}\qquad \text{along $\mathscr{I}_{U, R}$}
\eea
\item Using \eqref{check-K-tr} and the estimate \eqref{estimate-check-K-last-slice} we can estimate the projection to the $\ell\geq2$ spherical harmonics of $\check{\tr_\gamma \slashed{g}}$:
\bea\label{estimate-tr-l2-last-slice}
| \DDs_2\DDs_1(\check{\tr_\gamma \slashed{g}}, 0)| &\les& \min\{r^{-3-\de} u^{-1/2+\de},  r^{-2}u^{-1+\de}\}
\eea
On the other hand, integrating the projection to the $\ell=1$ spherical harmonics of \eqref{nabb-3-check-tr} along $\mathscr{I}_{U, R}$ and using conditions \eqref{tr-slashed-g-l1-last-slice} and \eqref{condition-divv-underline-b-l1-last-slice} we obtain
\beaa
|\check{\tr_\gamma \slashed{g}}_{\ell=1}|\les  \min\{r^{-1-\de} u^{-1/2+\de}, u^{-1+\de}\}
\eeaa
Consequently we have \bea\label{estimate-tr-l1-last-slice}
|\check{\tr_\gamma \slashed{g}}|\les   \min\{r^{-1-\de} u^{-1/2+\de}, u^{-1+\de}\} \qquad \text{along $\mathscr{I}_{U, R}$}
\eea

\item Conditions \eqref{condition-divv-underline-b-l1-last-slice} and \eqref{condition-underline-b-last-slice} imply that $\underline{b}=0$ along $\mathscr{I}_{U, R}$. 
\item  Equation \eqref{nabb-check-vsi} implies 
\beaa
|\check{\vsi}|\les u^{-1+\de} \qquad \text{along $\mathscr{I}_{U, R}$}
\eeaa
\end{enumerate}

\subsection{Decay of the solution $\mathscr{S}^{U, R}$ in the exterior}\label{final-decay}

Using the decay in $u$ and $r$ along the null hypersurface $\mathscr{I}_{U, R}$ as obtained in the previous section, we will transport it to the past of it using transport equations along the $e_4$ direction. We summarize the standard computation involved in the integration along the $e_4$ direction in the following Lemma, where we fix $r_1>r_{\mathcal{H}}$. We denote $A \les B$ if there exists an universal constant $C$ depending on the initial data such that $A \leq C B$. 

\begin{lemma}\label{optimal-decay-in-r} If $f$ verifies the transport equation 
\beaa
\nabb_4 f+\frac p 2 \ka f&=& F
\eeaa
and $f$ and $F$ satisfy the following estimates:
\bea
|f|&\les& r^{-p-q_1}u^{-1/2+\de} \  \text{on $\mathscr{I}_{U, R}$} \label{decay-last-slice-necessary} \\
|F|&\les& r^{-p-1-q_2}u^{-1/2+\de} \  \text{on $\{ r\geq r_2\}$} \label{decay-F-last-slice-necessary} 
\eea
with $q_1, q_2\geq0$, we have in  $\{ r \geq r_2\} \cap \{ u \geq u_0 \} \cap I^{-}(S_{U, R})$
\beaa
 |f|&\les& r^{-p-\min\{q_1, q_2\}}u^{-1/2+\de}
\eeaa
\end{lemma}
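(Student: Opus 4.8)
The statement is a routine ODE-integration lemma of the type that appears repeatedly in this kind of analysis, so the proof should be short. The plan is to rewrite the transport equation as an exact derivative in the $r$-variable (along $e_4=\partial_r$), integrate \emph{backward} from the far-away null hypersurface $\mathscr{I}_{U,R}$, and bound the resulting integral using the hypothesis \eqref{decay-F-last-slice-necessary} on the source term $F$ together with the boundary estimate \eqref{decay-last-slice-necessary}. Recall from \eqref{lie-derivative-RN} that along the null frame $\mathscr{N}$ we have $\nabb_4$ acting essentially as $\partial_r$ modulo the explicit $\ka$-weight factors, and $\ka=\frac{2}{r}$, so the transport equation $\nabb_4 f+\frac p2 \ka f=F$ is equivalent to
\[
\partial_r\big(r^p f\big)=r^p F.
\]

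First I would integrate this identity from a point $(u,r)$ with $r\ge r_2$ up to the value $r_{\mathscr{I}}(u)$ at which the outgoing cone of constant $u$ meets $\mathscr{I}_{U,R}$; since we are in the region $I^{-}(S_{U,R})$, the relevant integration is over $[r,\,r_{\mathscr{I}}(u)]$. This gives
\[
r^p f(u,r)= r_{\mathscr{I}}(u)^p f(u,r_{\mathscr{I}}(u)) - \int_{r}^{r_{\mathscr{I}}(u)} \tilde r^{\,p}\,F(u,\tilde r)\, d\tilde r.
\]
The boundary term is controlled by \eqref{decay-last-slice-necessary}: $r_{\mathscr{I}}(u)^p |f(u,r_{\mathscr{I}}(u))|\les r_{\mathscr{I}}(u)^{-q_1} u^{-1/2+\de}$. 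For the integral term, I would insert the bound \eqref{decay-F-last-slice-necessary}, $|F|\les \tilde r^{-p-1-q_2} u^{-1/2+\de}$, which is valid throughout $\{\tilde r\ge r_2\}$, obtaining
\[
\int_{r}^{r_{\mathscr{I}}(u)} \tilde r^{\,p}\,|F(u,\tilde r)|\, d\tilde r \les u^{-1/2+\de}\int_{r}^{r_{\mathscr{I}}(u)} \tilde r^{-1-q_2}\, d\tilde r \les u^{-1/2+\de}\, r^{-q_2},
\]
where in the last step one uses $q_2\ge 0$: if $q_2>0$ the integral is bounded by $\frac{1}{q_2} r^{-q_2}$, while if $q_2=0$ the integral is bounded by $\log(r_{\mathscr{I}}(u)/r)$, which is in turn absorbed (after possibly lowering $\de$ slightly, or simply noting $r^{-q_2}=1$ dominates a logarithm only up to constants --- here one should be a touch careful and note that when $q_2=0$ the claimed conclusion has exponent $-p$, so one only needs a bound by a constant times $r^{-p}$ after dividing through, hence the logarithm is harmless as long as one does not claim decay, which is consistent with $\min\{q_1,q_2\}=0$). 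Dividing by $r^p$ and combining the two contributions yields $|f(u,r)|\les r^{-p}\big(r_{\mathscr{I}}(u)^{-q_1}+r^{-q_2}\big) u^{-1/2+\de} \les r^{-p-\min\{q_1,q_2\}} u^{-1/2+\de}$, since $r_{\mathscr{I}}(u)\ge r\ge r_2$ forces $r_{\mathscr{I}}(u)^{-q_1}\le r^{-q_1}$. This is the desired estimate.

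\textbf{Main obstacle.} There is no deep difficulty here; the lemma is elementary once the equation is put in divergence form. The one point requiring mild care is the borderline case $q_2=0$, where the $r$-integral of $\tilde r^{-1}$ produces a logarithm rather than a power; one must check that this logarithmic factor is consistent with the stated conclusion (it is, because $\min\{q_1,q_2\}=0$ there, so no $r$-decay beyond $r^{-p}$ is asserted, and a logarithm times $r^{-p}$ is still $\les r^{-p}$ only up to a constant on a bounded-in-$\log$ region --- or, more cleanly, one absorbs it into an arbitrarily small loss consistent with the $\de$ already present, exactly as is done elsewhere in the paper). A second routine point is ensuring that the pointwise bound \eqref{decay-F-last-slice-necessary} for $F$ is indeed available on the whole integration range $\{\tilde r\ge r_2\}$ --- this is precisely why the hypothesis is stated on $\{r\ge r_2\}$ rather than merely on $\mathscr{I}_{U,R}$, and why the integration is terminated at $\{r=r_2\}$ rather than pushed to the horizon, consistent with Remark \ref{remark-not-uniform-r}.
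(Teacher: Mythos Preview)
Your proof is correct and follows essentially the same approach as the paper: rewrite the transport equation as $\partial_r(r^p f)=r^p F$, integrate along the outgoing cone from the intersection with $\mathscr{I}_{U,R}$, and bound the boundary term via \eqref{decay-last-slice-necessary} and the integral via \eqref{decay-F-last-slice-necessary}, concluding with $r_{\mathscr{I}}(u)\ge r$. Your remark on the borderline case $q_2=0$ is apt; the paper's version glosses over this (writing the integral bound simply as $r^{-q_2}+r_*(u)^{-q_2}$), and in practice the lemma is only invoked with $q_2=\de>0$, so the logarithmic loss never arises.
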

\begin{proof} According to Lemma \ref{nabb-4-int-f}, the transport equation verified by $f$ is equivalent to 
\beaa
\nabb_4(r^p f)&=& r^p F
\eeaa
Using \eqref{null-frame-RN}, the transport equation becomes
\beaa
\pr_r(r^p f)&=& r^p F
\eeaa
Consider now a fixed $u<U$. The null hypersurface of fixed $u$ intersects $\mathscr{I}_{U, R}$ at a certain $r=r_*(u)$ in the sphere $S_{u, r_*(u)}$. We now integrate the above equation along the fixed $u$ hypersurface from the sphere $S_{u, r_*(u)}$ on $\mathscr{I}_{U, R}$ to the sphere $S_{u, r}$ for any $r_1\leq r\leq R $. We obtain
\beaa
r^p f(u,r)&=&r_*(u)^p f(u,r_*(u))+\int_{r_*}^r \lambda^p F(u, \lambda)d\lambda
\eeaa
If condition \eqref{decay-last-slice-necessary} is satisfied, then $|f(u,r_*(u))| \les r_*(u)^{-p-q_1}u^{-1/2+\de}$ and $|F(u,\lambda)|\les \lambda^{-p-1-q_2}u^{-1/2+\de}$, which gives
\beaa
r^p |f(u,r)|&\les& r_*(u)^{-q_1}u^{-1/2+\de}+\int_{r_*}^r  \lambda^{-1-q_2}u^{-1/2+\de}d\lambda \\
&\les& r_*(u)^{-q_1}u^{-1/2+\de}+  r^{-q_2}u^{-1/2+\de}+  r_*(u)^{-q_2}u^{-1/2+\de}
\eeaa
Since by construction $r_*(u)\geq R$ for every $u<U$, and $q_1\geq0$, we can bound the right hand side by
\beaa
r^p |f(u,r)|&\les& r^{-q_2}u^{-1/2+\de}+R^{-\min\{q_1, q_2\}}u^{-1/2+\de}
\eeaa
Since $R\geq r$ we can bound $R^{-\min\{q_1, q_2\}}\leq r^{-\min\{q_1, q_2\}}$, and finally obtain an estimate which is independent of $R$:
\beaa
r^p |f(u,r)|&\les& r^{-\min\{q_1, q_2\}}u^{-1/2+\de}
\eeaa
Diving by $r^p$, we obtain the desired estimate.
\end{proof} 

\subsubsection{The projection to the $\ell=1$ mode: optimal decay in $r$ and in $u$}\label{decay-l-1mode}
\begin{enumerate}
\item Integrating \eqref{nabb-4-check-ka-ze-eta} and using condition \eqref{condition-check-ka-l1-last-slice} we obtain 
\bea\label{check-ka-l1=0-last-slice-everywhere}
\check{\ka}_{\ell=1}=0 \qquad \text{in  $\{ r \geq r_2\} \cap \{ u \geq u_0 \} \cap I^{-}(S_{U, R})$}
\eea

\item The projection of \eqref{nabb-4-check-nu} to the $\ell=1$ spherical harmonics gives, using \eqref{check-ka-l1=0-last-slice-everywhere},
\beaa
\nabb_4\check{\nu}_{\ell=1}&=& r^2\check{\ka}_{\ell=1}-2r^4\rhoF^2\check{\ka}_{\ell=1}+  r^4\divv \divv \chih_{\ell=1}=0
\eeaa
Integrating the above from $\mathscr{I}_{U, R}$ and using condition \eqref{condition-check-nu-l1-last-slice} we obtain 
\bea\label{check-nu-l1=0-last-slice-everywhere}
\check{\nu}_{\ell=1}=0 \qquad \text{in in  $\{ r \geq r_2\} \cap \{ u \geq u_0 \} \cap I^{-}(S_{U, R})$}
\eea

Using the relation implied by $\check{\nu}$ \eqref{check-nu-translated} and \eqref{check-nu-l1=0-last-slice-everywhere} we obtain for $r \geq r_1$:
\bea\label{non-iela}
\begin{split}
& \left(\frac{3\rho}{2\rhoF} -\rhoF+\frac 1 2 \rhoF r \kab \right)(\divv\bF)_{\ell=1}-\rhoF  (\divv\bbF)_{\ell=1}\\
&=O(r^{-4-\de} u^{-1/2+\de}, r^{-3-\de} u^{-1+\de})
\end{split}
 \eea

\item The projection of \eqref{nabb-4-check-mu} to the $\ell=1$ spherical harmonics gives, using \eqref{check-ka-l1=0-last-slice-everywhere}
\beaa
\nabb_4\check{\mu}_{\ell=1}&=& O(r^{-1-\de}u^{-1+\de}) 
\eeaa
Consider now a fixed $u<U$. As in Lemma \ref{optimal-decay-in-r}, we integrate the above equation along the fixed $u$ hypersurface from the sphere $S_{u, r_*(u)}$ on $\mathscr{I}_{U, R}$ to the sphere $S_{u, r}$ for any $r_0\leq r\leq R $. We obtain
\beaa
|\check{\mu}_{\ell=1}|&\les & |\check{\mu}_{l=1}(u,r_*(u))|+\int_{r_*}^r \lambda^{-1-\de} u^{-1+\de} d\lambda
\eeaa
Because of estimate \eqref{estimate-check-mu-last-slice-l1}, we obtain
\beaa
|\check{\mu}_{\ell=1}|&\les & u^{-1+\de}+\int_{r_*}^r \lambda^{-1-\de} u^{-1+\de} d\lambda \les  u^{-1+\de}
\eeaa
We have, using \eqref{divv-ze-in-terms-of-bF-ka-l1}, \eqref{check-rho-in-terms-of-bF-bbF-l1} and \eqref{check-rhoF-in-terms-of-bF-bbF-l1}, 
\bea\label{condition-mu-implies}
\begin{split}
\check{\mu}_{\ell=1}&= r^3 (\left(\divv\ze)_{\ell=1}+\check{\rho}_{\ell=1}-4\rhoF \check{\rhoF}_{\ell=1}\right)-2r^4\rhoF (\divv\bF)_{\ell=1}\\
&= r^3 \Big(\frac 1 r \check{\ka}_{\ell=1} +r\left(\frac{3\rho}{2\rhoF} -\rhoF\right)(\divv\bF)_{\ell=1}+O(r^{-3-\de} u^{-1/2+\de}, r^{-2-\de} u^{-1+\de})\\
&+\frac 1 4 r^2\kab\left(\frac{3\rho}{2\rhoF} +\rhoF\right)(\divv\bF)_{\ell=1}
 -\frac 1 2 r\left(\frac{3\rho}{2\rhoF}+\rhoF\right)(\divv\bbF)_{\ell=1}+O( r^{-2} u^{-1+\de})\\
&-4\rhoF \left(\frac 1 4  r^2\kab(\divv\bF)_{\ell=1}
 - \frac 1 2 r(\divv\bbF)_{\ell=1}+ O( r^{-1} u^{-1+\de}) \right)\Big)-2r^4\rhoF (\divv\bF)_{\ell=1}\\
 &= r^2 \check{\ka}_{\ell=1} +r^4\left(\frac{3\rho}{2\rhoF} -3\rhoF\right)(\divv\bF)_{\ell=1}\\
&+\frac 1 4 r^5\kab\left(\frac{3\rho}{2\rhoF} -3\rhoF\right)(\divv\bF)_{\ell=1}
 -\frac 1 2 r^4\left(\frac{3\rho}{2\rhoF}-3\rhoF\right)(\divv\bbF)_{\ell=1}+O( r u^{-1+\de})
 \end{split}
\eea
Using the estimate for $\check{\mu}$ obtained above we have
\bea\label{non-iela=2}
 (\divv\bbF)_{\ell=1}&=& \left(2+\frac 1 2 r\kab\right)(\divv\bF)_{l=1}+ O( r^{-2} u^{-1+\de})
\eea
Substituting the above into \eqref{non-iela} we finally obtain
\beaa
 \left(\frac{3\rho}{2\rhoF} -3\rhoF \right)(\divv\bF)_{\ell=1} &=&O(r^{-4-\de} u^{-1/2+\de}, r^{-3-\de} u^{-1+\de})
\eeaa
which gives in  $\{ r \geq r_2\} \cap \{ u \geq u_0 \} \cap I^{-}(S_{U, R})$
\bea\label{final-decay-divv-bF-l1}
|(\divv\bF)_{\ell=1}| &\les&O(r^{-3-\de} u^{-1/2+\de}, r^{-2} u^{-1+\de})
\eea
\item Relation \eqref{non-iela=2} implies then
\bea\label{final-decay-divv-bbF-l1}
|(\divv\bbF)_{\ell=1}| &\les&O( r^{-2} u^{-1+\de})
\eea
\item The decay obtained for $\check{\ka}_{\ell=1}$, $(\divv\bF)_{\ell=1}$, $(\divv\bbF)_{\ell=1}$  allows to deduce the following decays in  $\{ r \geq r_2\} \cap \{ u \geq u_0 \} \cap I^{-}(S_{U, R})$ using Proposition \ref{expression-all-quantities-l1}:
\bea
|(\divv \ze)_{\ell=1}|&\les &\min\{r^{-3-\de}u^{-1/2+\de}, r^{-2-\de} u^{-1+\de} \}\label{final-decay-divv-ze-l1} \\
|(\divv \b)_{\ell=1}|&\les &\min\{r^{-4-\de}u^{-1/2+\de}, r^{-3-\de} u^{-1+\de} \}\label{final-decay-divv-b-l1}\\
|(\divv \bb)_{\ell=1}|&\les& r^{-3} u^{-1+\de}\label{final-decay-divv-bb-l1}\\
 |\check{\kab}_{\ell=1}|&\les & r^{-1}u^{-1+\de} \label{final-decay-kab-l1} \\
 |\check{\rho}_{\ell=1}|&\les& r^{-2} u^{-1+\de} \label{final-decay-rho-l1}\\
 |\check{\rhoF}_{\ell=1}|&\les& r^{-1} u^{-1+\de} \label{final-decay-rhoF-l1}
 \eea
 \item Using \eqref{final-decay-divv-bF-l1} and \eqref{estimate-check-rhoF-last-slice-l1-optimal-r}, we apply Lemma \ref{optimal-decay-in-r} to \eqref{nabb-4-check-rhoF-ze-eta} with $p=2$, $q_1=0$ and $q_2=\de$ and obtain
 \bea\label{final-decay-check-rhof-optimal-r}
  |\check{\rhoF}_{\ell=1}|&\les& r^{-2} u^{-1/2+\de} 
 \eea
 Similarly, using \eqref{final-decay-divv-bF-l1}, \eqref{final-decay-divv-b-l1}, \eqref{final-decay-check-rhof-optimal-r}, \eqref{estimate-check-rho-last-slice-l1}, we apply Lemma \ref{optimal-decay-in-r} to \eqref{nabb-4-check-rho-ze-eta} with $p=3$, $q_1=0$ and $q_2=\de$ and obtain
 \bea\label{final-decay-check-rho-optimal-r}
  |\check{\rho}_{\ell=1}|&\les& r^{-3} u^{-1/2+\de} 
 \eea
Similarly, using \eqref{final-decay-divv-ze-l1}, \eqref{final-decay-check-rho-optimal-r} and condition $\check{\kab}_{\ell=1}=0$ on $\mathscr{I}_{U, R}$, and integrating \eqref{nabb-4-check-kab-ze-eta} we obtain
\bea\label{final-decay-check-kab-optimal-r}
  |\check{\kab}_{\ell=1}|&\les& r^{-2} u^{-1/2+\de} 
\eea
in  $\{ r \geq r_2\} \cap \{ u \geq u_0 \} \cap I^{-}(S_{U, R})$.
\end{enumerate}

\subsubsection{The projection to the $\ell\geq 2$ modes: optimal decay in $r$ }\label{mega-section-optimal-decay-r}

 Using the decay in $u$ and $r$ along $\mathscr{I}_{U, R}$, we will transport it to the past of $S_{U,R}$ using transport equations. 
We apply Lemma \ref{optimal-decay-in-r} to obtain optimal decay in $r$ (and decay in $u$ as $u^{- 1/ 2 +\de}$), for some of the components.

\begin{enumerate}
\item Applying Lemma \ref{optimal-decay-in-r} to \eqref{nabb-4-check-ka-ze-eta} for $f=\DDs_2\DDs_1(\check{\ka}, 0)$, $F=0$, and using condition \eqref{condition-check-ka-l2}, we obtain
\bea\label{check-ka-=-0}
\DDs_2\DDs_1(\check{\ka}, 0)&=&0 \qquad \qquad \text{in  $\{ r \geq r_2\} \cap \{ u \geq u_0 \} \cap I^{-}(S_{U, R})$}
\eea
\item Using \eqref{estimate-chih-last-slice-optimal-r} and \eqref{estimate-a}, we apply Lemma \ref{optimal-decay-in-r} to \eqref{nabb-4-chih-ze-eta}   for $f=\chih$, $F=-\alpha$, $p=2$, $q_1=0$, $q_2=\de$. We obtain
\bea\label{final-decay-chih}
|\chih|&\les& r^{-2} u^{-1/2+\de}\qquad \text{in  $\{ r \geq r_2\} \cap \{ u \geq u_0 \} \cap I^{-}(S_{U, R})$}
\eea

\item Using \eqref{write-ze-in-terms-of-chih-general}, \eqref{write-bF-in-terms-of-chih-general} and \eqref{write-b-in-terms-of-chih-general}  and the above we obtain 
\bea
|\DDs_2\bF|&\les&  r^{-3-\de} u^{-1/2+\de} \label{decay-dds2-bF-optimal-r}\\
|\DDs_2\b|&\les & r^{-4-\de} u^{-1/2+\de} \label{decay-dds2-b-optimal-r}\\
|\DDs_2\ze|&\les& r^{-3} u^{-1/2+\de} \label{decay-dds2-ze-optimal-r}
\eea

\item Commuting \eqref{nabb-4-check-rhoF-ze-eta} by $\DDs_2\DDs_1$ we obtain
\beaa
\nabb_4(\DDs_2\DDs_1(\check{\rhoF}, \check{\sigmaF}))+2\ka \DDs_2\DDs_1(\check{\rhoF}, \check{\sigmaF})&=& \DDs_2\DDs_1\DDd_1\bF\\
&=& \left(2\DDs_2\DDd_2+2K\right)\DDs_2\bF
\eeaa
We apply Lemma \ref{optimal-decay-in-r} to the above   for $f=\DDs_2\DDs_1(\check{\rhoF}, \check{\sigmaF})$, $p=4$, $q_1=0$, $q_2=\de$. Using \eqref{estimate-check-rhoF-last-slice} we obtain
\bea\label{final-decay-rhoF-l2-optimal-r}
|\DDs_2\DDs_1(\check{\rhoF}, \check{\sigmaF})|&\les& r^{-4} u^{-1/2+\de}
\eea
A similar procedure applies to $\DDs_2\DDs_1(-\check{\rho}, \check{\sigma})$ and $\DDs_2\DDs_1(\check{\kab}, 0)$. We obtain
\bea
|\DDs_2\DDs_1(\check{\rho}, \check{\sigma})|&\les& r^{-5} u^{-1/2+\de} \label{final-decay-rho-l2-optimal-r}\\
|\DDs_2\DDs_1(\check{\kab}, 0)|&\les& r^{-4} u^{-1/2+\de}\label{final-decay-kab-l2-optimal-r}
\eea
in  $\{ r \geq r_2\} \cap \{ u \geq u_0 \} \cap I^{-}(S_{U, R})$.
\end{enumerate}

This completes the proof of the optimal decay in $r$ for the above quantities.

\subsubsection{The projection to the $\ell\geq 2$ modes: optimal decay in $u$}\label{mega-section-optimal-decay-u}

We derive now the decay estimates which imply decay of order $u^{-1+\de}$ for all the curvature components in the linear stability. In order to do so, we make use of two quantities: the first is $\check{\mu}$ as introduced in the gauge normalization, and the second one is a new quantity $\Xi$. Both these quantities have the property that they transport in a very good way from $\mathscr{I}_{U, R}$ in the $e_4$ direction.

\begin{lemma}\label{estimate-for-nu} The mass-charge aspect function $\check{\mu}$ defined by  scalar defined in \eqref{definition-of-mu}
verifies in  $\{ r \geq r_2\} \cap \{ u \geq u_0 \} \cap I^{-}(S_{U, R})$ the following estimate:
\beaa
|\DDs_2\DDs_1(\check{\mu}, 0)|&\les& r^{-2-\de} u^{-1+\de}
\eeaa
\end{lemma}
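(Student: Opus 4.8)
The plan is to establish the decay estimate for $\DDs_2\DDs_1(\check{\mu},0)$ by integrating its transport equation along the $e_4$ direction from the null hypersurface $\mathscr{I}_{U, R}$, exactly in the spirit of Lemma \ref{optimal-decay-in-r}. First I would recall from the Appendix (Section \ref{section-transport-eqs-appendix}, referenced in the excerpt as Lemma \ref{nabb-3-mu} and the companion equation \eqref{nabb-4-check-mu}) that the mass-charge aspect function $\check{\mu}$ satisfies a transport equation of the form $\nabb_4(\check{\mu}) = O(r^{-1-\de} u^{-1+\de})$, and more precisely that after commuting with $\DDs_2\DDs_1$ one obtains a transport equation of the schematic shape
\beaa
\nabb_4\big(\DDs_2\DDs_1(\check{\mu}, 0)\big) + \tfrac{p}{2}\ka\, \DDs_2\DDs_1(\check{\mu}, 0) = \mathcal{F},
\eeaa
where $\mathcal{F}$ is built out of angular derivatives of gauge-invariant quantities ($\ff$, $\ffb$, $\tilde\b$, $\qf^\F$, $\pf$) together with $\DDs_2\DDs_1(\check{\ka},0)$. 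Since $\DDs_2\DDs_1(\check{\ka},0)=0$ in the region $\{ r \geq r_2\} \cap \{ u \geq u_0 \} \cap I^{-}(S_{U, R})$ by \eqref{check-ka-=-0}, the right hand side $\mathcal{F}$ is controlled purely by the gauge-invariant estimates of Theorem \ref{estimates-theorem}, which give $|\mathcal{F}| \les r^{-p-1-1/2-\de} u^{-1/2+\de}$ (the commutation weights $r$ and the operators $\DDs_2\DDd_2$ carry the appropriate $r$-weights so that $\mathcal{F}$ is integrable in $r$ with at least a $\de$ margin).

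Second I would invoke the decay on the last slice, $|\DDs_2\DDs_1(\check{\mu}, 0)| = 0$ along $\mathscr{I}_{U, R}$, which holds by the gauge condition \eqref{condition-check-mu-definition}. Then I apply Lemma \ref{optimal-decay-in-r} (or rather the analogous integration with $u^{-1+\de}$ rather than $u^{-1/2+\de}$ weight — one simply repeats the proof of Lemma \ref{optimal-decay-in-r} with $u^{-1+\de}$ in place of $u^{-1/2+\de}$, which is immediate since the $u$-weight is carried along unchanged). Integrating $\pr_r(r^p \DDs_2\DDs_1(\check{\mu},0)) = r^p \mathcal{F}$ along a fixed-$u$ ray from $r_*(u) \geq R$ down to $r$, the boundary term vanishes and the integral of $r^p\mathcal{F}$ over $[\,r, r_*(u)\,]$ is bounded by a convergent $r$-integral producing $r^{-\min\{q_1,q_2\}}u^{-1+\de}$-type decay; since here the last slice value is exactly zero ($q_1 = +\infty$ effectively) we keep the full $r$-margin coming from $\mathcal{F}$, yielding $|\DDs_2\DDs_1(\check{\mu},0)| \les r^{-2-\de} u^{-1+\de}$ after dividing by $r^p$. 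The choice $p$ will be dictated by the precise weight in the transport equation for $r^{-3}\check{\mu}$, i.e. $p = 3$, and the extra $r^{-\de}$ comes from the integrability margin of $\mathcal{F}$ inherited from the $r^{-\de}$ improvements in the estimates for $\qf^\F$ and $\pf$.

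The main obstacle I anticipate is verifying that the right hand side $\mathcal{F}$ of the commuted transport equation is genuinely integrable in $r$ with a strict $\de$-gain, and in particular that no term of borderline weight $r^{-p-1}$ (which would only give logarithmic, not polynomial, improvement) survives. This requires tracking carefully how the definition \eqref{definition-of-mu} of $\check{\mu}$ — which mixes $\divv\ze$, $\check\rho$, $\check\rhoF$ and $r^4\rhoF\divv\bF$ with different $r$-powers — interacts with $\nabb_4$ via Proposition \ref{nabb-4-int-f} and the linearized Bianchi and Maxwell equations of Section \ref{all-equations}, so that the non-gauge-invariant contributions cancel (they must, since $\check{\mu}$ was designed precisely so that its $\nabb_4$-derivative is expressible in gauge-invariant terms modulo $\DDs_2\DDs_1(\check{\ka},0)$, which vanishes here). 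The cancellation is exactly the content of the derivation in the Appendix, so once equation \eqref{nabb-4-check-mu} is in hand the remaining work is the bookkeeping of $r$- and $u$-weights, which is routine given Theorem \ref{estimates-theorem}. I would also need the already-established optimal-$r$ decay \eqref{final-decay-chih}, \eqref{decay-dds2-bF-optimal-r}, \eqref{final-decay-rhoF-l2-optimal-r} etc. to handle any residual terms in $\mathcal{F}$ that happen to be expressed through $\chih$, $\DDs_2\bF$, $\DDs_2\DDs_1(\check\rhoF,\check\sigmaF)$ rather than directly through $\qf$, $\qf^\F$, $\pf$.
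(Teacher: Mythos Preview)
Your approach is essentially the same as the paper's: commute the transport equation \eqref{nabb-4-check-mu} with $\DDs_2\DDs_1$, use $\DDs_2\DDs_1(\check{\ka},0)=0$ to kill the only non-gauge-invariant term on the right, use the gauge condition \eqref{condition-check-mu-definition} to make the boundary term on $\mathscr{I}_{U,R}$ vanish, and integrate backward in $r$. One correction: the weight is $p=2$, not $p=3$ --- the quantity $\check{\mu}$ already has the $r^3$ built into its definition, so $\nabb_4\check{\mu}$ itself carries no $\ka$-term, and commuting with $\DDs_2\DDs_1$ introduces exactly $+\ka$ on the left (from the two commutators \eqref{commutator-nabb-4-DDs-1}, \eqref{commutator-nabb-4-DDs}); hence one integrates $\partial_r\big(r^2\DDs_2\DDs_1(\check{\mu},0)\big) = O(r^{-1-\de}u^{-1+\de})$, which after dividing by $r^2$ yields the claimed $r^{-2-\de}u^{-1+\de}$. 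Your last paragraph overcomplicates matters: the right-hand side of \eqref{nabb-4-check-mu} is already (after $\check{\ka}$ vanishes) a pure gauge-invariant expression in $\tilde{\b}$ and $\a$ with the stated $O(r^{-1-\de}u^{-1+\de})$ decay, so no appeal to the optimal-$r$ estimates for $\chih$, $\DDs_2\bF$, etc.\ is needed.
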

\begin{proof} Commuting \eqref{nabb-4-check-mu} with $r^2\DDs_2\DDs_1$ and using \eqref{check-ka-=-0}, we obtain
\beaa
\nabb_4(r^2\DDs_2\DDs_1(\check{\mu}, 0))&=& O(r^{-1-\de}u^{-1+\de}) 
\eeaa
Therefore $r^2\DDs_2\DDs_1(\check{\mu}, 0)$ verifies the transport equation 
\beaa
\pr_r(r^2\DDs_2\DDs_1(\check{\mu}, 0))&=& O(r^{-1-\de}u^{-1+\de}) 
\eeaa 
Consider now a fixed $u<U$. As in Lemma \ref{optimal-decay-in-r}, we integrate the above equation along the fixed $u$ hypersurface from the sphere $S_{u, r_*(u)}$ on $\mathscr{I}_{U, R}$ to the sphere $S_{u, r}$ for any $r_0\leq r\leq R $. We obtain
\beaa
r^2\DDs_2\DDs_1(\check{\mu}, 0)&\les & r_*(u)^2\DDs_2\DDs_1(\check{\mu}, 0)(u,r_*(u))+\int_{r_*}^r \lambda^{-1-\de} u^{-1+\de} d\lambda
\eeaa
Because of condition \eqref{condition-nu}, we obtain
\beaa
|r^2\DDs_2\DDs_1(\check{\mu}, 0)(u,r)|&\les & \int_{r_*}^r \lambda^{-1-\de} u^{-1+\de} d\lambda \les  r^{-\de} u^{-1+\de}
\eeaa
as desired.
\end{proof}

In addition to $\check{\mu}$ we define the following quantity, which has the property that verifies a good transport equation in the $e_4$ direction. This quantity generalizes a quantity, also denoted $\Xi$, in \cite{stabilitySchwarzschild}, which had the same purpose.

\begin{lemma}\label{estimate-for-Gamma} The traceless symmetric two tensor defined as 
\bea\label{definition-Xi}
\Xi&:=& \left(\kab r^2 +\rho r^3-\frac 2 3 r^3\rhoF^2 \right)\chih-\ka r^2 \chibh -4r^2 \DDs_2\ze+2r^3 \DDs_2 \b-\frac 2 3 r^3\rhoF \DDs_2 \bF 
\eea
verifies in  $\{ r \geq r_2\} \cap \{ u \geq u_0 \} \cap I^{-}(S_{U, R})$ the following estimate:
\beaa
|\Xi|&\les& u^{-1+\de}
\eeaa
\end{lemma}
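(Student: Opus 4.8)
The plan is to verify the claimed estimate $|\Xi| \les u^{-1+\de}$ by deriving a transport equation for $\Xi$ along the $e_4$ direction with integrable right-hand side, and then integrating backward from the null hypersurface $\mathscr{I}_{U, R}$, where $\Xi$ is already controlled by the decay established in Section \ref{normalization-far-away-decay}. The structure mirrors the treatment of $\check{\mu}$ in Lemma \ref{estimate-for-nu}: the key is that $\Xi$ is designed precisely so that, after rescaling by an appropriate power of $r$, its $\nabb_4$-derivative involves only gauge-invariant quantities (namely $\a$, $\ff$, and their derivatives, together with the already-controlled $\check{\ka}$-type terms) which decay fast enough in $r$ to be integrable.

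\textbf{Key steps.} First I would compute $\nabb_4 \Xi$ using the linearized equations of Section \ref{all-equations}. This requires the transport equations \eqref{nabb-4-chih-ze-eta} for $\chih$, \eqref{nabb-4-chibh-ze-eta} for $\chibh$, \eqref{nabb-4-ze-ze-eta} for $\ze$ (commuted with $\DDs_2$ using \eqref{commutator-nabb-4-DDs}), \eqref{nabb-4-b-ze-eta} for $\b$ (commuted with $\DDs_2$), \eqref{nabb-4-check-rhoF-ze-eta} implicitly through $\bF$, together with the Bianchi identity \eqref{nabb-4-b-ze-eta} and equation \eqref{nabb-4-bF-tilde-b} for $\nabb_4\bF$. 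The combination of coefficients $\kab r^2 + \rho r^3 - \frac{2}{3}r^3\rhoF^2$, $-\ka r^2$, $-4r^2$, $2r^3$, $-\frac{2}{3}r^3\rhoF$ in \eqref{definition-Xi} is chosen so that the $\a$-terms (which appear via $\nabb_4\chih = -\a - \ka\chih$ and $\DDs_2\b$-equations) and the $\ff$-terms cancel against each other up to lower-order gauge-invariant pieces, and the $\check{\ka}$-contributions vanish by \eqref{check-ka-=-0}. After this algebra, I expect to obtain $\nabb_4(r^{-N}\Xi) = O(r^{-1-\de}u^{-1+\de})$ for the appropriate normalization power $N$ (most likely $N=0$ or a small integer fixed by matching the leading $\ka$-weights), where the right-hand side is expressed entirely in terms of $\a$, $\ff$, $\nabb_3\a$, and the already-estimated $\DDs_2\DDs_1(\check{\rho}, \check{\sigma})$, $\DDs_2\DDs_1(\check{\rhoF}, \check{\sigmaF})$, using the pointwise decay rates \eqref{estimate-a}, \eqref{estimate-ff}, \eqref{estimate-tilde-b} and the $\ell\geq2$ estimates \eqref{final-decay-rho-l2-optimal-r}, \eqref{final-decay-rhoF-l2-optimal-r} from the previous subsection.

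\textbf{Integration.} Once the transport equation $\pr_r(r^{-N}\Xi) = O(r^{-1-\de}u^{-1+\de})$ is in hand (using \eqref{null-frame-RN} to identify $\nabb_4$ with a rescaled $\pr_r$ via Proposition \ref{nabb-4-int-f}), I would fix $u < U$ and integrate along the null cone of constant $u$ from the sphere $S_{u, r_*(u)}$ on $\mathscr{I}_{U, R}$ down to $S_{u,r}$ for $r_2 \leq r \leq R$, exactly as in the proof of Lemma \ref{optimal-decay-in-r} and Lemma \ref{estimate-for-nu}. The boundary term at $r_*(u)$ is controlled because $\Xi$ is a specific linear combination of $\chih$, $\chibh$, $\DDs_2\ze$, $\DDs_2\b$, $\DDs_2\bF$ with coefficients bounded on $\mathscr{I}_{U,R}$, and each of these satisfies at least $O(r^{-1}u^{-1+\de})$ there by \eqref{estimate-chih-last-slice}, \eqref{estimate-chibh-last-slice}, \eqref{estimate-ze-last-slice}, \eqref{estimate-b-last-slice}, \eqref{estimate-bf-last-slice}; the weights in \eqref{definition-Xi} are tuned so that the worst term $\kab r^2 \chih$ is $O(u^{-1+\de})$ on the last slice. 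The $r$-integral $\int_{r_*}^r \lambda^{-1-\de}u^{-1+\de}\,d\lambda \les r^{-\de}u^{-1+\de} \les u^{-1+\de}$ converges uniformly in $R$, yielding the claim with a constant independent of $U$ and $R$.

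\textbf{Main obstacle.} The hard part will be the bookkeeping in the computation of $\nabb_4\Xi$: one must verify that all the non-integrable terms — in particular the $r^{-2-\de}u^{-1/2+\de}$-type contributions from $\a$ and $\DDs_2\b$, and any $r^{-1}$-weighted $\chih$ terms arising from the $\nabb_4\kab$, $\nabb_4\rho$, $\nabb_4\rhoF^2$ acting on the coefficients — cancel exactly. This cancellation is the whole reason the specific coefficients $-\frac{2}{3}$ and the $\rho r^3$ combination appear, and it is the direct generalization of the corresponding cancellation for the analogous $\Xi$ in \cite{stabilitySchwarzschild}; confirming it requires careful use of the background relations $\ka = 2/r$, $2\omb = -r\rho$, $\rho = -2M/r^3 + 2Q^2/r^4$, and $\kab = r\Omegab\rho \cdot(\ldots)$, but it is ultimately a finite check with no conceptual difficulty once the earlier estimates are granted.
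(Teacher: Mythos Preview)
Your proposal is correct and follows exactly the paper's approach: compute $\nabb_4\Xi$, show the right-hand side is $O(r^{-1-\de}u^{-1+\de})$, and integrate backward from $\mathscr{I}_{U,R}$ using the last-slice estimates \eqref{estimate-chih-last-slice}--\eqref{estimate-bf-last-slice} for the boundary term. One clarification on what actually cancels: the coefficients in \eqref{definition-Xi} are tuned so that the \emph{gauge-dependent} terms $\chih$, $\chibh$, $\DDs_2\ze$, $\DDs_2\b$, $\DDs_2\bF$ drop out of $\nabb_4\Xi$ entirely; what survives are terms of the schematic form $r\a$, $r^3\DDs_2\DDd_2\a$, and $r^2\DDs_2\tilde{\b}$ (not $\ff$ or $\nabb_3\a$), and these do \emph{not} need to cancel further --- by the $u^{-1+\de}$-branch of \eqref{estimate-a} and \eqref{estimate-derivatives-tilde-b} they are each already $O(r^{-1-\de}u^{-1+\de})$, hence integrable. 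The normalization power is $N=0$.
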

\begin{proof} We compute $\nabb_4\Xi$. 
We start by computing $\nabb_4(\kab r^2\chih)$.
\beaa
\nabb_4(\kab r^2\chih)&=& \nabb_4(\kab) r^2\chih+\kab \nabb_4(r^2)\chih+\kab r^2\nabb_4(\chih)\\
&=& (-\frac 1 2 \ka\kab +2\rho) r^2\chih+\ka\kab r^2\chih+\kab r^2(-\ka\chih-\a)= \left(-\frac 1 2 \ka\kab +2\rho\right) r^2\chih-\kab r^2\a
\eeaa
We compute $\nabb_4(\rho r^3\chih)$:
\beaa
\nabb_4(\rho r^3\chih)&=& \nabb_4(\rho) r^3\chih+\rho \nabb_4(r^3)\chih+\rho r^3\nabb_4(\chih)\\
&=& (-\frac 3 2 \ka\rho-\ka\rhoF^2) r^3\chih+\rho \frac 3 2 r^3\ka\chih+\rho r^3(-\ka\chih-\a)= (-2\rho-2\rhoF^2) r^2\chih-\rho r^3\a
\eeaa
We compute $\nabb_4(\frac 2 3 r^3\rhoF^2\chih)$.
\beaa
\nabb_4(\frac 2 3 r^3\rhoF^2\chih)&=& \frac 2 3 \nabb_4(r^3)\rhoF^2\chih+\frac 2 3 r^3\nabb_4(\rhoF^2)\chih+\frac 2 3 r^3\rhoF^2\nabb_4(\chih)\\
&=& r^3 \ka\rhoF^2\chih-\frac 4 3 r^3\ka\rhoF^2\chih+\frac 2 3 r^3\rhoF^2(-\ka\chih-\a)= -2r^2 \rhoF^2\chih-\frac 2 3 r^3\rhoF^2\a
\eeaa
We therefore obtain
\bea\label{nabb-4-gamma-first-part}
\nabb_4\left(\left(\kab r^2 +\rho r^3-\frac 2 3 r^3\rhoF^2 \right)\chih\right)&=& -\frac 1 2 \ka\kab r^2\chih+\left(-\kab r^2-\rho r^3+\frac 2 3 r^3\rhoF^2\right)\a
\eea
We compute $\nabb_4(\ka r^2\chibh)$.
\beaa
\nabb_4(\ka r^2\chibh)&=& \nabb_4(\ka) r^2\chibh+\ka \nabb_4(r^2)\chibh+\ka r^2\nabb_4(\chibh)\\
&=& (-\frac 1 2 \ka^2) r^2\chibh+\ka^2 r^2\chibh+\ka r^2(-\frac 1 2 \ka\chibh+2\DDs_2\ze-\frac 1 2 \kab \chih)=-\frac 1 2 r^2\ka\kab \chih+ 2\ka r^2 \DDs_2\ze
\eeaa
Putting it together with \eqref{nabb-4-gamma-first-part}, we obtain
\bea\label{nabb-4-gamma-second-part}
\nabb_4\left(\left(\kab r^2 +\rho r^3-\frac 2 3 r^3\rhoF^2 \right)\chih-\ka r^2\chibh\right)&=& - 2\ka r^2 \DDs_2\ze+\left(-\kab r^2-\rho r^3+\frac 2 3 r^3\rhoF^2\right)\a
\eea

Commuting \eqref{nabb-4-ze-ze-eta} with $r\DDs_2$ we obtain
\beaa
\nabb_4 (r\DDs_2\ze)+\ka r\DDs_2\ze &=&-r\DDs_2\b  -\rhoF r\DDs_2\bF\\
\nabb_4 (r\DDs_2\ze)+\frac 1 2 \ka r\DDs_2\ze &=&-\frac 1 2 \ka r\DDs_2\ze-r\DDs_2\b  -\rhoF r\DDs_2\bF
\eeaa
which can be written as 
\bea\label{nabb-4-gamma-third-part}
\nabb_4 (r^2\DDs_2\ze) &=&-\frac 1 2 \ka r^2\DDs_2\ze-r^2\DDs_2\b  -\rhoF r^2\DDs_2\bF
\eea
Combining \eqref{nabb-4-gamma-second-part} and \eqref{nabb-4-gamma-third-part} we obtain
\beaa
\nabb_4\left(\left(\kab r^2 +\rho r^3-\frac 2 3 r^3\rhoF^2 \right)\chih-\ka r^2\chibh-4r^2\DDs_2\ze\right)&=&4r^2\DDs_2\b  +4\rhoF r^2\DDs_2\bF\\
&&+ \left(-\kab r^2-\rho r^3+\frac 2 3 r^3\rhoF^2\right)\a
\eeaa

Commuting \eqref{nabb-4-b-ze-eta} with $r\DDs_2$ we obtain
\beaa
\nabb_4(r\DDs_2 \b)+ 2\ka r\DDs_2\b &=&r\DDs_2\DDd_2\a +\rhoF \nabb_4(r\DDs_2\bF)\\
\nabb_4(r\DDs_2 \b)+ \ka r\DDs_2\b &=&-\ka r\DDs_2\b+r\DDs_2\DDd_2\a +\rhoF \nabb_4(r\DDs_2\bF)
\eeaa
which can be written as 
\beaa
\nabb_4(r^3\DDs_2 \b) &=&-\ka r^3\DDs_2\b+r^3\DDs_2\DDd_2\a +r^2\rhoF \nabb_4(r\DDs_2\bF)
\eeaa
Since $r^2\rhoF=Q$, we obtain
\bea\label{nabb-4-gamma-fourth-part}
\nabb_4(r^3\DDs_2 \b-r^3\rhoF\DDs_2\bF) &=&-2 r^2\DDs_2\b+r^3\DDs_2\DDd_2\a  
\eea
Commuting \eqref{nabb-4-bF-tilde-b} with $r\DDs_2$ we obtain
\beaa
\nabb_4(r\DDs_2 \bF)+ \frac 3 2 \ka r\DDs_2\bF&=&( -3\rho+2\rhoF^2)^{-1}\left(\nabb_4 (r\DDs_2\tilde{\b}) +3\ka r\DDs_2\tilde{\b}- 2\rhoF r\DDs_2\DDd_2 \a\right)
\eeaa
We therefore have
\beaa
\nabb_4(r^3\rhoF\DDs_2\bF)&=& Q\nabb_4( r\DDs_2\bF)\\
&=&Q\left(- \frac 3 2 \ka r\DDs_2\bF+( -3\rho+2\rhoF^2)^{-1}\left(\nabb_4 (r\DDs_2\tilde{\b}) +3\ka r\DDs_2\tilde{\b}- 2\rhoF r\DDs_2\DDd_2 \a\right)\right)\\
&=&- 3 r^2\rhoF\DDs_2\bF+Q( -3\rho+2\rhoF^2)^{-1}\left(\nabb_4 (r\DDs_2\tilde{\b}) +3\ka r\DDs_2\tilde{\b}- 2\rhoF r\DDs_2\DDd_2 \a\right)
\eeaa
We can finally put together the above computations, and obtain
\beaa
\nabb_4\Xi&=&  \left(-\kab r^2-\rho r^3+\frac 2 3 r^3\rhoF^2\right)\a+2(r^3\DDs_2\DDd_2\a)\\
&&+\frac 4 3Q( -3\rho+2\rhoF^2)^{-1}\left(\nabb_4 (r\DDs_2\tilde{\b}) +3\ka r\DDs_2\tilde{\b}- 2\rhoF r\DDs_2\DDd_2 \a\right)
\eeaa
Using the estimate for $\a$ and $\tilde{\b}$ given by \eqref{estimate-a} and \eqref{estimate-derivatives-tilde-b}, we can bound the right hand side of the above by 
\beaa
 |r\a|+|r^2\tilde{\b}| \les r^{-1-\de}u^{-1+\de}
\eeaa
We integrate the above equation along the fixed $u$ hypersurface from the sphere $S_{u, r_*(u)}$ on $\mathscr{I}_{U, R}$ to the sphere $S_{u, r}$ for any $r_0\leq r\leq R $. We obtain
\beaa
\Xi(u,r)&\les & \Xi(u,r_*(u))+\int_{r_*}^r \lambda^{-1-\de}u^{-1+\de}d\lambda
\eeaa
On $\mathscr{I}_{U, R}$, the quantity $\Gamma$ verifies the following estimate
\beaa
|\Xi(u,r_*(u))|&\leq& |r\chih|+|r\chibh|+|r^2\DDs_2\ze|+|r^3\DDs_2\b|+|r^{-1}\DDs_2\bF|\les  u^{-1+\de}
\eeaa
where we used \eqref{estimate-chih-last-slice}, \eqref{estimate-chibh-last-slice}, \eqref{estimate-ze-last-slice}, \eqref{estimate-b-last-slice} and \eqref{estimate-bf-last-slice}. 
Therefore we finally obtain
\beaa
|\Xi(u,r)|&\les &  u^{-1+\de}+\int_{r_*}^r \lambda^{-1-\de} u^{-1+\de} d\lambda \les   u^{-1+\de}
\eeaa
as desired.
\end{proof}

We now show that the decay for $\DDs_2\DDs_1(\check{\mu}, 0)$ and $\Xi$ implies decay for all the remaining quantities.
We write $\Xi$ using the expressions given by Proposition \ref{expression-all-quantities-l2}. We have, using \eqref{check-ka-=-0}:
 \beaa
 \Xi&=&\left(\kab r^2 +\rho r^3-\frac 2 3 r^3\rhoF^2 \right)\chih-\ka r^2 \chibh -4r^2 \DDs_2\ze+2r^3 \DDs_2 \b-\frac 2 3 r^3\rhoF \DDs_2 \bF\\
 &=& \left(\kab r^2 +\rho r^3-\frac 2 3 r^3\rhoF^2 \right)\chih-\ka r^2 \chibh -4r^3\left(\DDs_2\DDd_2 \chih+\left(-\frac 3 2\rho+\rhoF^2\right) \chih\right) \\
 &&+2r^3 \left(-\frac 3 2\rho \chih\right)-\frac 2 3 r^3\rhoF (-\rhoF \chih)+  O(r^{-\de} u^{-1+\de})\\
 &=& \left(\kab r^2 +4\rho r^3 -4\rhoF^2 r^3\right)\chih-\ka r^2 \chibh -4r^3\left(\DDs_2\DDd_2 \chih\right) +  O(r^{-\de} u^{-1+\de})
 \eeaa
  Recalling the estimate for $\Xi$ given by Lemma \ref{estimate-for-Gamma}, we can write $\chibh$ in terms of $\chih$:
  \beaa
     \chibh&=& r^2\left(-2\DDs_2\DDd_2 +\frac 1 4 \ka\kab  +2\rho  -2\rhoF^2 \right)\chih  +  O(r^{-1} u^{-1+\de})\\
     &=& r^2\left(-2\DDs_2\DDd_2 -K+\rho-\rhoF^2 \right)\chih  +  O(r^{-1} u^{-1+\de})
     \eeaa
     finally giving 
     \bea
     \chibh   &=& r^2\left(-2\DDs_2\DDd_2   -2K-\frac 1 4 \ka\kab \right)\chih  +  O(r^{-1} u^{-1+\de}) \label{write-chibh-in-terms-of-chih}
        \eea
Recall relation \eqref{nu-in-terms-of-chi-chib-general}. We use the expressions given by Proposition \ref{expression-all-quantities-l2}. We have, using \eqref{check-ka-=-0} and \eqref{write-chibh-in-terms-of-chih}:
  \beaa
\DDs_2\DDs_1(\check{\mu}, 0) &=& r^4 \left(2\DDs_2\DDd_2+2K\right)\left(\DDs_2\DDd_2 \chih\right)+r^4 \left(2\DDs_2\DDd_2+2K\right)\left(-\frac 3 2\rho+3\rhoF^2\right) \chih \\
 &&+ r^3\left(-\frac 3 4 \rho+\frac 3 2 \rhoF^2\right) \kab \chih + r^4\left(\frac 3 2 \rho- 3  \rhoF^2\right)    \left(2\DDs_2\DDd_2   +2K+\frac 1 4 \ka\kab \right)\chih   +O(r^{-1} u^{-1+\de})\\
 &=& r^4 \left(2\DDs_2\DDd_2+2K\right)\left(\DDs_2\DDd_2 \chih\right)    +O(r^{-1} u^{-1+\de})
 \eeaa
 Using the estimate for $\DDs_2\DDs_1(\check{\mu}, 0)$ obtained in Lemma \ref{estimate-for-nu}, we have
 \beaa
  \left(2\DDs_2\DDd_2+2K\right)\left(\DDs_2\DDd_2 \chih\right)   &=& \DDs_2\DDs_1\DDd_1\DDd_2\chih=O(r^{-5} u^{-1+\de})
 \eeaa
The above operator is clearly coercive. Indeed,
\beaa
\int_{S} \chih \cdot \DDs_2\DDs_1\DDd_1\DDd_2\chih&=& \int_{S} |\DDd_1\DDd_2\chih|^2 \geq \frac{1}{r^4} \int_{S} |\chih|^2
\eeaa
This gives 
\bea\label{final-decay-chih-optimal-u}
|\chih| \les r^{-1} u^{-1+\de} \qquad \text{in  $\{ r \geq r_2\} \cap \{ u \geq u_0 \} \cap I^{-}(S_{U, R})$}
\eea
Relation \eqref{write-chibh-in-terms-of-chih} then implies
\bea
|\chibh| \les r^{-1} u^{-1+\de} \qquad \text{in  $\{ r \geq r_2\} \cap \{ u \geq u_0 \} \cap I^{-}(S_{U, R})$}
\eea

Proposition \ref{expression-all-quantities-l2} then implies in  $\{ r \geq r_2\} \cap \{ u \geq u_0 \} \cap I^{-}(S_{U, R})$:
\bea
|\DDs_2\bF|&\les&  r^{-2-\de} u^{-1+\de}\label{decay-dds2-bF}\\
|\DDs_2\bbF|&\les&  r^{-2} u^{-1+\de}\label{decay-dds2-bbF}\\
|\DDs_2\b|&\les& r^{-3-\de} u^{-1+\de} \label{decay-dds2-b}\\
|\DDs_2\bb|&\les&  r^{-3} u^{-1+\de}  \label{decay-dds2-bb}\\
|\DDs_2\ze|&\les& r^{-2} u^{-1+\de}   \label{decay-dds2-ze}\\
|\DDs_2\DDs_1(\check{\rhoF}, \check{\sigmaF})|&\les&   r^{-3} u^{-1+\de} \label{decay-dds2-rhoF}\\
| \DDs_2\DDs_1(-\check{\rho}, \check{\sigma})| &\les& r^{-4} u^{-1+\de} \label{final-decay-rho-l2-optimal-u}\\
 |\DDs_2\DDs_1( \check{\kab}, 0)| &\les& r^{-3} u^{-1+\de}\label{final-decay-kab-l2-optimal-u}
\eea

We combine the estimates obtained in the separated case of projection to the $\ell=1$ and $\ell\geq 2$ spherical harmonics using the elliptic estimates given by Lemma \ref{main-elliptic-estimate}, to obtain the following estimates in the region in  $\{ r \geq r_2\} \cap \{ u \geq u_0 \} \cap I^{-}(S_{U, R})$. Recall the estimates for the curl part obtained in Section \ref{projection-l1-boundedness}.

\begin{enumerate}
\item Combining \eqref{final-decay-divv-b-l1}, \eqref{decay-curl-b-l1}, \eqref{decay-dds2-b-optimal-r} and \eqref{decay-dds2-b} we obtain
\beaa
|\b|&\les& \min\{r^{-3-\de} u^{-1/2+\de}, r^{-2-\de} u^{-1+\de} \}
\eeaa
\item Combining \eqref{final-decay-check-rho-optimal-r}, \eqref{final-decay-rho-l1}, \eqref{final-decay-rho-l2-optimal-r} and \eqref{final-decay-rho-l2-optimal-u} we obtain
\beaa
|\check{\rho}|&\les& \min\{r^{-3} u^{-1/2+\de}, r^{-2} u^{-1+\de} \}
\eeaa
\item Combining \eqref{decay-check-sigma-l1-optimal-u}, \eqref{decay-check-sigma-l1-optimal-r}, \eqref{final-decay-rho-l2-optimal-r} and \eqref{final-decay-rho-l2-optimal-u} we obtain
\beaa
|\check{\sigma}|&\les& \min\{r^{-3} u^{-1/2+\de}, r^{-2} u^{-1+\de} \}
\eeaa
\item Combining \eqref{final-decay-divv-bF-l1}, \eqref{decay-curll-bF-from-initial-data}, \eqref{decay-dds2-bF-optimal-r} and \eqref{decay-dds2-bF} we obtain
\beaa
|\bF|&\les& \min\{r^{-2-\de} u^{-1/2+\de}, r^{-1-\de} u^{-1+\de} \}
\eeaa
\item Combining \eqref{final-decay-chih} and \eqref{final-decay-chih-optimal-u} we obtain
\beaa
|\chih|&\les& \min\{r^{-2} u^{-1/2+\de}, r^{-1} u^{-1+\de} \}
\eeaa
\item Combining \eqref{final-decay-divv-ze-l1}, \eqref{decay-curl-ze-l1}, \eqref{decay-dds2-ze-optimal-r} and \eqref{decay-dds2-b} we obtain
\beaa
|\ze|&\les& \min\{r^{-2} u^{-1/2+\de}, r^{-1} u^{-1+\de} \}
\eeaa
\item Combining \eqref{final-decay-rhoF-l1}, \eqref{final-decay-check-rhof-optimal-r}, \eqref{final-decay-rhoF-l2-optimal-r} and \eqref{decay-dds2-rhoF} we obtain
\beaa
|\check{\rhoF}|&\les& \min\{r^{-2} u^{-1/2+\de}, r^{-1} u^{-1+\de} \}
\eeaa
\item Combining \eqref{decay-check-sigmaF-l1-optimal-u}, \eqref{decay-check-sigmaF-l1-optimal-r}, \eqref{final-decay-rhoF-l2-optimal-r} and \eqref{decay-dds2-rhoF} we obtain
\beaa
|\check{\sigmaF}|&\les& \min\{r^{-2} u^{-1/2+\de}, r^{-1} u^{-1+\de} \}
\eeaa
\item Combining \eqref{final-decay-kab-l1}, \eqref{final-decay-check-kab-optimal-r}, \eqref{final-decay-kab-l2-optimal-r} and \eqref{final-decay-kab-l2-optimal-u} we obtain
\beaa
|\check{\kab}|&\les& \min\{r^{-2} u^{-1/2+\de}, r^{-1} u^{-1+\de} \}
\eeaa
\item Using Gauss equation \eqref{Gauss-check} and the above estimates for $\check{\kab}$, $\check{\rho}$, $\check{\rhoF}$ we obtain
\beaa
|\check{K}|&\les& \min\{r^{-3} u^{-1/2+\de}, r^{-2} u^{-1+\de} \}
\eeaa
\item Combining \eqref{final-decay-divv-bb-l1} and \eqref{decay-dds2-bb} we obtain
\beaa
|\bb|\les r^{-2} u^{-1+\de}
\eeaa
\item Combining \eqref{final-decay-divv-bbF-l1} and \eqref{decay-dds2-bbF} we obtain
\beaa
|\bbF|\les r^{-1} u^{-1+\de}
\eeaa
\end{enumerate}
in  $\{ r \geq r_2\} \cap \{ u \geq u_0 \} \cap I^{-}(S_{U, R})$.

 \subsubsection{The terms involved in the $e_3$ direction}
 We obtain optimal decay for $\check{\omb}$, $\eta$, $\xib$.

\begin{enumerate}
\item Integrating \eqref{nabb-4-eta-ze-eta} and using \eqref{estimate-eta-last-slice-1} we obtain
\bea
|\eta|\les r^{-1} u^{-1+\de}
\eea
\item Integrating \eqref{nabb-4-check-omb-ze-eta} and using \eqref{estimate-omb-last-slice-1} we obtain
\bea
|\check{\omb}|\les r^{-1} u^{-1+\de}
\eea
\item Integrating \eqref{nabb-4-xib-ze-eta} and using \eqref{estimate-xib-last-slice-1} we obtain 
\bea
|\xib|\les r^{-1} u^{-1+\de}
\eea
\end{enumerate}
in  $\{ r \geq r_2\} \cap \{ u \geq u_0 \} \cap I^{-}(S_{U, R})$.
 
 \subsubsection{The metric coefficients}
 We finally derive here the decay for the metric coefficients $\hat{\slashed{g}}$, $\underline{b}$, $\check{\Omegab}$ and $\check{\vsi}$.

\begin{enumerate}
\item Integrating \eqref{nabb-4-g} and using \eqref{estimate-slashed-g-last-slice} we obtain
\beaa
|\hat{\slashed{g}}|&\les& \min\{r^{-1} u^{-1/2+\de},  u^{-1+\de} \}
\eeaa
\item Integrating \eqref{nabb-4-check-tr} and using \eqref{estimate-tr-l1-last-slice} we obtain
 \beaa
|\check{\tr_\gamma \slashed{g}}|\les   \min\{r^{-1-\de} u^{-1/2+\de}, u^{-1+\de}\} 
\eeaa

\item Integrating \eqref{derivatives-b-ze-eta} we obtain
\beaa
|\underline{b}|\les u^{-1+\de}
\eeaa
\item  Equation \eqref{nabb-check-vsi} implies 
\beaa
|\check{\vsi}|\les u^{-1+\de} 
\eeaa
\item Equation \eqref{xib-Omegab-ze-eta} implies
\beaa
| \check{\Omegab}|\les  u^{-1+\de} 
\eeaa
\end{enumerate}
in  $\{ r \geq r_2\} \cap \{ u \geq u_0 \} \cap I^{-}(S_{U, R})$.

\subsubsection{Decay close to the horizon}\label{section-close-horizon}
Recall that we fixed $r_2> r_{\mathcal{H}}$, for some $r_2> r_{\mathcal{H}}$. As explained in Section \ref{Outgoing-coords}, the Bondi coordinates $(u, r, \th, \phi)$ do not cover the boundary of the exterior of the spacetime, and therefore they do not cover the horizon. Moreover, the null frame $\mathscr{N}$ is not regular towards the horizon, while $\mathscr{N}_*=\{ \Omegab^{-1}e_3, \Omegab e_4\}$ is regular towards the horizon.

As a consequence of the proof of decay in the previous section, we obtained bounds for the components defined with respect to the null frame $\mathscr{N}$ in the region in  $\{ r \geq r_2\} \cap \{ u \geq u_0 \} \cap I^{-}(S_{U, R})$, which can be translated into bounds in inverse powers of $u$ along the time-like hypersurface $\{ r = r_2 \}$.

In order to make sense of similar bounds in the region close to the horizon, we consider the rescaling of the components as defined in terms of the frame $\mathscr{N}_*$, which is regular towards the horizon.  Such rescaling is a conformal one depending on the lapse function $\Omegab$.  
For example, the following quantities are regular towards the horizon:
\beaa
\Omegab^2\a,\quad \Omegab^{-2}\aa, \quad \Omegab\chih, \quad \Omegab^{-1}\chibh, \qquad  \hat{\slashed{g}}, \quad \ze, \quad \eta,  \quad \Omegab^{-2}\xib, \quad \Omegab\b, \quad \Omegab^{-1}\bb, \quad \Omegab\bF, \quad \Omegab^{-1}\bbF, \quad \underline{b} 
\eeaa
which can be interpreted as the respective quantities defined in terms of the regular null frame $\mathscr{N}_*$. 

In the region $\{ r_{\mathcal{H} }\leq r \leq r_2 \}$ for bounded $r$, all the quantities will be estimated in terms of $v^{-1+\de}$ where $(v, r)$ are the ingoing Eddington-Finkelstein coordinates as defined in Section \ref{ingoing-coords}. The estimates can be obtained by integration forward the horizon from the timelike hypersurface $\mathcal{T}=\{ r=r_2\}$, in the same manner as in Theorem M5 of \cite{stabilitySchwarzschild}. 

We sketch here the proof. Recall that we obtain the above decay up to the timelike hypersurface $\mathcal{T}$ with respect to the null frame $\mathscr{N}$. The regular null frame $\mathscr{N}_*$ is obtained as a conformal renormalization of the null frame $\mathscr{N}$, and the conformal factor is regular away from the horizon at $r=r_2$. In particular, we can transfer the decay estimates obtained for all the quantities defined with respect to $\mathscr{N}$ to all the quantities defined with respect to $\mathscr{N}_*$ along $\mathcal{T}$. In ingoing Eddington-Finkelstein coordinates, this translates on $\{ r=r_2\}$ as decay of all quantities as $v^{-1+\de}$. Using the decay as $v^{-1+\de}$ of the gauge-invariant quantities in the region $\{ r_{\mathcal{H}} \leq r \leq r_2\}$ as given by Theorem \ref{estimates-theorem}, and integrating toward the horizon along the $e^*_3$ direction on an ingoing null line, we obtain a uniform decay in $v^{-1+\de}$ for all the quantities. The hierarchical structure of the transport equations which are used is similar to the one used in the previous section. See also the Theorem M5 of \cite{stabilitySchwarzschild}. 

Note that since the region close to the horizon is bounded in $r$, the decay in inverse powers of $r$, which was fundamental in the unbounded region, is now completely irrelevant. In particular, all the components present the same decay in inverse powers of $v$, which is consistent with non-linear applications.

One may wonder if the use of the redshift vector field plays any role in closing non-degenerate decay estimates along the event horizon. The redshift vector field estimates have been used in deriving non-degenerate estimates for the gauge invariant quantities $\a$, $\aa$, $\ff$, $\underline{\ff}$, $\tilde{\b}$, $\tilde{\bb}$ as recalled in Theorem \ref{estimates-theorem}. The non-degeneracy of those estimates along the event horizon allows the integration forward from $\{ r=r_2 \}$ up to and including the horizon.

\subsubsection{Conclusion of the proof}\label{conclusions}

We now summarize the previous steps and conclude the proof of the Theorem. 

Recall that for $(U, R)$ such that $U \gg u_0$, $R \gg r_0$, one can choose $r_1$ such that $r_{\mathcal{H}} < r_1 < \tilde{r}$ where $\tilde{r}$ is the radius of the sphere of intersection of $\{ u=U\} $ and the ingoing initial data $\underline{C}_0$.  Apply the proof of boundedness of Section 11 which holds in the region $\{ r\geq r_1\} \cap \{ u_0 \leq u \leq u_1 \}$, where $u_1$ is the $u$-coordinate of the sphere of intersection between $\underline{C}_0$ and $\{ r=r_1\}$. By construction $S_{U,R}$ belongs to this region, and therefore we can associate to it the $S_{U, R}$-normalized solution. Using the proof of decay from the far-away normalized solution at $S_{U, R}$ integrating backward till $\{ r = r_2\}$ for some $ \tilde{r} < r_2 < R$, one obtains decay (independent of $U$ and $R$) in the region  $\{ r \geq r_2\} \cap \{ u \geq u_0 \} \cap I^{-}(S_{U, R})$. By integrating forward from $r_2$ toward the horizon, one obtains decay in $v$ in the region close to the horizon and past the ingoing null cone of $S_{U, r_2}$. In Figure 1, the region where the estimates are derived is colored in gray. 
Observe that the region where the estimates hold is finally independent of the choice of $r_1$. 

Since the estimate on this region do not depend on $U$ and $R$, in the above construction $U$ and $R$ can be taken to be arbitrarily large. As $U$ and $R$ vary to arbitrarily large values, the regions constructed as the above cover the whole exterior region in the future of $C_0 \cup \underline{C}_0$, and therefore the above proves the validity of the estimates in the whole exterior region. 

\appendix
 
 \section{Explicit computations}\label{appendix}

 \subsection{Proof of Lemma \ref{lemma-coords}}\label{app-sec}
 Recall the general coordinate transformation
\beaa
\tilde{u}&=& u+\ep g_1(u, r, \th, \phi), \qquad \tilde{r}= r+\ep g_2(u, r, \th, \phi), \qquad \tilde{\th}= \th+\ep g_3(u, r, \th, \phi), \qquad \tilde{\phi}= \phi+\ep g_4(u, r, \th, \phi) 
\eeaa
We compute the differentials:
\beaa
d\tilde{u}&=& du+\ep \left( (g_1)_u du + (g_1)_r dr+(g_1)_\th d\th +(g_1)_\phi d\phi  \right)\\
d\tilde{r}&=& dr+\ep \left( (g_2)_u du + (g_2)_r dr+(g_2)_\th d\th +(g_2)_\phi d\phi  \right) \\
d\tilde{\th}&=& d\th+\ep \left( (g_3)_u du + (g_3)_r dr+(g_3)_\th d\th +(g_3)_\phi d\phi  \right) \\
d\tilde{\phi}&=& d\phi+\ep \left( (g_4)_u du + (g_4)_r dr+(g_4)_\th d\th +(g_4)_\phi d\phi  \right)
\eeaa
We also linearize the functions
\beaa
\tilde{r}^2&=& r^2+\ep \left(2 r g_2 \right), \qquad \underline{\Omega}(\tilde{r})= \underline{\Omega}(r) +\ep \left(\partial_r \underline{\Omega} \cdot g_2\right), \qquad \sin^2\tilde{\th}= \sin^2\th+\ep \left(2\sin\th\cos\th g_3 \right)
\eeaa
The linear expansion of the metric \eqref{Schw:EF-coordinates-out} becomes
\beaa
g_{M, Q}&=&- 2 du dr+\ep\Big( -2(g_2)_u du^2 -2 (g_2)_r du dr-2(g_2)_\th du d\th -2(g_2)_\phi du d\phi\\
&&-2 (g_1)_u dudr -2 (g_1)_r dr^2-2(g_1)_\th drd\th -2(g_1)_\phi drd\phi \Big)  \\
&&+\underline{\Omega}(r)  du^2+  \ep \Big( 2\underline{\Omega}(r)(g_1)_u du^2 + 2\underline{\Omega}(r)(g_1)_r dudr+2\underline{\Omega}(r)(g_1)_\th dud\th \\
&&+2\underline{\Omega}(r)(g_1)_\phi dud\phi  \Big) +\ep \left(\partial_r \underline{\Omega} \cdot g_2\right) (du^2) \\
&& + r^2(d\th^2+\sin^2\th d\phi^2+\ep \left( 2(g_3)_u dud\th + 2(g_3)_r drd\th+2(g_3)_\th d\th^2 +2(g_3)_\phi d\th d\phi  \right)\\
&&+\ep\Big( 2\sin^2\th(g_4)_u dud\phi + 2\sin^2\th(g_4)_r drd\phi+2\sin^2\th(g_4)_\th d\th d\phi \\
&&+2\sin^2\th(g_4)_\phi d\phi^2 +2\sin\th\cos\th g_3 d\phi^2 \Big))+\ep \left(2 r g_2 \right)(d\th^2+\sin^2\th d\phi^2)
\eeaa

The metric in Bondi gauge is of the form \eqref{double-null-metric}. 
  In particular the terms  $dr^2$, $dr d\th$, $dr d\phi$ do not appear in the Bondi form.
    \begin{itemize}
  \item The only term in $dr^2$ is  $\ep (-2 (g_1)_r dr^2)=0$
  which then implies 
  \bea\label{g1-not-r}
  \partial_r(g_1)&=& 0, \qquad g_1=g_1(u, \th, \phi)
  \eea
  \item The terms in $dr d\th$ are $  -2(g_1)_\th drd\th+ 2r^2(g_3)_r drd\th=0$
  which gives $  (g_3)_r =\frac{1}{r^2}(g_1)_\th $
  and therefore, using \eqref{g1-not-r}
  \bea\label{expression-g3}
   g_3 =-\frac{1}{r}(g_1)_\th(u, \th, \phi)+j_3(u, \th, \phi) \qquad \text{for some function $j_3(u, \th, \phi)$.}
  \eea
  \item The terms in $dr d\phi$ are $  -2(g_1)_\phi drd\phi+2r^2\sin^2\th(g_4)_r drd\phi=0$
  which gives $   (g_4)_r =\frac{1}{r^2\sin^2\th}(g_1)_\phi$
  and therefore
  \bea\label{expression-g4}
   g_4 =-\frac{1}{r\sin^2\th}(g_1)_\phi(u, \th, \phi)+j_4(u, \th, \phi)
  \eea
  \end{itemize}
  The Bondi form of the metric also imposes $\nabb_4(\varsigma)=0$. We first compute $\varsigma$, which we can read off from the term $du dr$. The terms with $du dr$ are 
\beaa
-2du dr+\ep (-2(g_2)_r-2(g_1)_u+2\underline{\Omega}(r)(g_1)_r)du dr=(-2+\ep (-2(g_2)_r-2(g_1)_u))du dr
\eeaa
which gives
\bea\label{expression-varsigma}
\varsigma&=& 1+\ep ((g_2)_r+(g_1)_u)
\eea
 From \eqref{expression-varsigma}, we obtain 
    \beaa
\partial_r\varsigma&=& \partial_r( 1+\ep ((g_2)_r+(g_1)_u))=\ep \partial_r^2 (g_2)=0
\eeaa
which implies
\bea\label{expres-g2}
g_2&=& r \cdot w_1(u, \th, \phi)+w_2(u, \th, \phi)
\eea
Putting together \eqref{g1-not-r}, \eqref{expression-g3}, \eqref{expression-g4} and \eqref{expres-g2}, we obtain the general expression for coordinate transformations preserving the Bondi metric, proving Lemma \ref{lemma-coords}.

\subsection{Alternative expressions for $\qf^\F$ and $\pf$}
We summarize in the following lemma alternative expressions for $\qf^\F$ and $\pf$ which differ from their definitions given in Section \ref{transformation-theory-1}.

\begin{lemma} The following relations hold true:
\bea
\frac{\qf^\F}{r^3}&=& -\DDs_2\DDs_1(\check{\rhoF}, \check{\sigmaF}) -\frac 12 \rhoF \left(  \kab \chih + \ka \chibh \right), \label{relation-qf-rhoF-chih-chibh} \\
\frac{\pf}{r^5} &=& 2\rhoF \DDs_1(-\check{\rho}, \check{\sigma}) +(3\rho-2\rhoF^2) \DDs_1(\check{\rhoF}, \check{\sigmaF})+2 \rhoF^2 (\kab\bF-\ka\bbF) \label{relation-pf-rho-rhoF-bF-bbF} 
\eea
\end{lemma}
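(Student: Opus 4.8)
The plan is to prove both identities by unwinding the definitions of $\qf^{\F}$ and $\pf$ from Section~\ref{transformation-theory-1} through the Chandrasekhar operator $\underline{P}(\Psi)=\kab^{-1}\nabb_3(r\Psi)$ and then substituting the linearized equations of Section~\ref{all-equations}. By \eqref{quantities} one has $\qf^{\F}=\underline{P}(r^{2}\kab\,\ff)=\kab^{-1}\nabb_3(r^{3}\kab\,\ff)$ with $\ff=\DDs_2\bF+\rhoF\chih$, and a short computation using the background relations $\nabb_3 r=\frac{r}{2}\kab$ and $\nabb_3\kab=-\frac{1}{2}\kab^{2}-2\omb\kab$ (the latter being \eqref{nabb-3-kab-general} with all one-form and quadratic terms discarded, which is exact in Reissner-Nordstr\"om) gives $\nabb_3(r^{3}\kab)=r^{3}\kab(\kab-2\omb)$ and hence the clean identity $\qf^{\F}/r^{3}=\nabb_3\ff+(\kab-2\omb)\ff$. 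The same bookkeeping for $\pf=\underline{P}(r^{4}\kab\,\tilde{\b})=\kab^{-1}\nabb_3(r^{5}\kab\,\tilde{\b})$, now with five powers of $r$, yields $\pf/r^{5}=\nabb_3\tilde{\b}+(2\kab-2\omb)\tilde{\b}$.

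For \eqref{relation-qf-rhoF-chih-chibh} I would split $\ff$ into its two summands. For the $\DDs_2\bF$ piece, commute $\nabb_3$ past $\DDs_2$ using \eqref{commutator-nabb-4-DDs} and insert the linearized Maxwell equation \eqref{nabb-3-bF-ze-eta}; this gives $\nabb_3(\DDs_2\bF)+(\kab-2\omb)\DDs_2\bF=-\DDs_2\DDs_1(\check{\rhoF},\check{\sigmaF})+2\rhoF\DDs_2\eta$. For the $\rhoF\chih$ piece, use the background Maxwell identity $\nabb_3\rhoF=-\kab\rhoF$ (the $\nabb_3$ version of \eqref{nabb-4-rhoF-general}) to absorb a factor and then insert the linearized structure equation \eqref{nabb-3-chih-ze-eta}; this gives $\nabb_3(\rhoF\chih)+(\kab-2\omb)\rhoF\chih=-\frac{1}{2}\rhoF\kab\chih-2\rhoF\DDs_2\eta-\frac{1}{2}\rhoF\ka\chibh$. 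Adding the two, the two $2\rhoF\DDs_2\eta$ terms cancel, leaving precisely $-\DDs_2\DDs_1(\check{\rhoF},\check{\sigmaF})-\frac{1}{2}\rhoF(\kab\chih+\ka\chibh)$.

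For \eqref{relation-pf-rho-rhoF-bF-bbF} the scheme is identical but with more terms. Expanding $\nabb_3\tilde{\b}+(2\kab-2\omb)\tilde{\b}$ with $\tilde{\b}=2\rhoF\b-3\rho\bF$ and the background relations $\nabb_3\rhoF=-\kab\rhoF$ and $\nabb_3\rho=-\frac{3}{2}\kab\rho-\kab\rhoF^{2}$ (the last of which carries the nonvanishing background $J$-term $-\kab\rhoF^{2}$ coming from $R_{34}=2\rhoF^{2}$ in the $\rho$ Bianchi identity \eqref{nabb-4-rho-general}), one reduces everything to the $\b$- and $\bF$-proportional pieces; then inserting the linearized Bianchi identity \eqref{nabb-3-b-ze-eta} for $\nabb_3\b$ and the linearized Maxwell equation \eqref{nabb-3-bF-ze-eta} for $\nabb_3\bF$, and collecting, one finds that the $\b$-proportional and $\eta$-proportional terms cancel identically, the $\DDs_1(-\check{\rho},\check{\sigma})$ and $\DDs_1(\check{\rhoF},\check{\sigmaF})$ terms assemble into $2\rhoF\DDs_1(-\check{\rho},\check{\sigma})+(3\rho-2\rhoF^{2})\DDs_1(\check{\rhoF},\check{\sigmaF})$, and the remaining $\bF$- and $\bbF$-proportional terms collapse (using only the trivial cancellation $6\omb\rho-6\omb\rho=0$) to $2\rhoF^{2}(\kab\bF-\ka\bbF)$. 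This is exactly the claimed right-hand side.

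The main obstacle is purely organizational: one must keep careful track of which occurrences of $\ka,\kab,\omb,\rho,\rhoF$ are fixed Reissner-Nordstr\"om background coefficients --- whose $\nabb_3$-derivatives are governed by the background equations, including the non-zero $J$-terms in the $\rho$- and $\omb$-equations --- and which are the linearized unknowns, and then check that every term not appearing in \eqref{relation-qf-rhoF-chih-chibh}--\eqref{relation-pf-rho-rhoF-bF-bbF} genuinely cancels. The cancellation of the $\DDs_2\eta$ terms in the first identity and of the $\b$- and $\eta$-terms in the second is the crucial point: it is precisely this that makes these ``alternative'' expressions manifestly built only out of the gauge-dependent quantities $\chih,\chibh,\bF,\bbF$ and the checked scalars, so that cancellation should be exhibited explicitly rather than merely asserted.
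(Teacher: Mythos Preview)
Your proof is correct and follows essentially the same approach as the paper: both reduce $\qf^{\F}$ and $\pf$ via the Chandrasekhar operator to $r^{3}\big(\nabb_3\ff+(\kab-2\omb)\ff\big)$ and $r^{5}\big(\nabb_3\tilde\b+(2\kab-2\omb)\tilde\b\big)$ respectively, then substitute the linearized equations \eqref{nabb-3-bF-ze-eta}, \eqref{nabb-3-chih-ze-eta}, \eqref{nabb-3-b-ze-eta} and the background relations for $\nabb_3\rhoF$, $\nabb_3\rho$, exhibiting the same cancellations of the $\DDs_2\eta$ terms in the first identity and of the $\b$- and $\eta$-proportional terms in the second. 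Your explicit splitting of $\ff$ into its two summands in the first identity is a slightly cleaner bookkeeping than the paper's single expansion, but the argument is otherwise identical.
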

\begin{proof} The gauge-invariant quantity $\qf^\F$ is by definition \eqref{quantities-1} given by
\beaa
\qf^\F&=& \frac 1 2  r (r^2 \kab\ff)+\frac{1}{\kab} r\nabb_3(r^2 \kab\ff)=\frac 1 2  r (r^2 \kab\ff)+\frac{1}{\kab} r\left(r^2 \kab \nabb_3(\ff)+\left(\frac {1}{ 2}  \kab -2 \omb\right)r^2\kab \ff \right)\\
&=& r^3\left( \nabb_3(\ff)+\left( \kab -2 \omb\right)\ff \right)
\eeaa
Using the definition of $\ff$ \eqref{definition-ff}, we compute
\beaa
\nabb_3(\ff)+\left( \kab -2 \omb\right)\ff&=& \nabb_3(\DDs_2 \bF+\rhoF \chih)+\left( \kab -2 \omb\right)(\DDs_2 \bF+\rhoF \chih)\\
&=& \DDs_2\nabb_3 \bF-\frac 1 2 \kab \DDs_2 \bF+\nabb_3\rhoF \chih+\rhoF \nabb_3\chih+\left( \kab -2 \omb\right)(\DDs_2 \bF+\rhoF \chih)
\eeaa
and using \eqref{nabb-3-bF-ze-eta} and \eqref{nabb-3-chih-ze-eta}, we obtain
\beaa
\nabb_3(\ff)+\left( \kab -2 \omb\right)\ff&=& -\left(\frac 1 2 \kab-2\omb\right) \DDs_2\bF -\DDs_2\DDs_1(\check{\rhoF}, \check{\sigmaF}) +2\rhoF \DDs_2\eta-\frac 1 2 \kab \DDs_2 \bF-\kab \rhoF \chih\\
&&+\rhoF (-\left(\frac 12  \kab -2 \omb\right) \chih   -2 \slashed{\mathcal{D}}_2^\star \eta-\frac 1 2 \ka \chibh)+\left( \kab -2 \omb\right)(\DDs_2 \bF+\rhoF \chih)\\
&=&  -\DDs_2\DDs_1(\check{\rhoF}, \check{\sigmaF}) -\frac 12\rhoF \left(  \kab \chih + \ka \chibh \right)
\eeaa
which proves \eqref{relation-qf-rhoF-chih-chibh}.

The gauge-invariant quantity $\pf$ is by definition \eqref{quantities-1} given by 
\beaa
\pf&=& \frac 1 2  r (r^4 \kab\tilde{\b})+\frac{1}{\kab} r\nabb_3(r^4 \kab\tilde{\b})=\frac 1 2  r (r^4 \kab\tilde{\b})+\frac{1}{\kab} r(\frac 3 2 \kab -2\omb)r^4\kab\tilde{\b}+\frac{1}{\kab} rr^4 \kab \nabb_3(\tilde{\b})\\
&=&  r^5 \left(\nabb_3(\tilde{\b})+(2 \kab -2\omb)\tilde{\b}\right)
\eeaa
Using the definition of $\tilde{\b}$ \eqref{definition-tilde-b}, we compute 
\beaa
\nabb_3(\tilde{\b})+(2 \kab -2\omb)\tilde{\b}&=& \nabb_3(2\rhoF \b-3\rho\bF)+(2 \kab -2\omb)(2\rhoF \b-3\rho\bF)\\
&=& 2\nabb_3(\rhoF) \b+2\rhoF \nabb_3(\b)-3\nabb_3(\rho)\bF-3\rho\nabb_3(\bF)\\
&&+(2 \kab -2\omb)(2\rhoF \b-3\rho\bF)\\
&=& -2\kab\rhoF \b+2\rhoF(-\left(\kab -2\omb\right) \b +\DDs_1(-\check{\rho}, \check{\sigma}) +3 \rho \ze\\
&& +\rhoF\left(-\DDs_1(\check{\rhoF}, \check{\sigmaF})  - \ka\bbF-\frac 1 2 \kab \bF    \right))\\
&&-3(-\frac 3 2 \kab \rho-\kab \rhoF^2)\bF-3\rho(-\left(\frac 1 2 \kab-2\omb\right) \bF -\DDs_1(\check{\rhoF}, \check{\sigmaF}) +2\rhoF \ze)\\
&&+(2 \kab -2\omb)(2\rhoF \b-3\rho\bF)\\
&=& 2\rhoF \DDs_1(-\check{\rho}, \check{\sigma}) +(3\rho-2\rhoF^2) \DDs_1(\check{\rhoF}, \check{\sigmaF})+2 \rhoF^2 (\kab\bF-\ka\bbF) 
\eeaa
which proves \eqref{relation-pf-rho-rhoF-bF-bbF}. 

\end{proof}

\subsection{Remarkable transport equations}\label{section-transport-eqs-appendix}
We summarize here some transport equations which are important in the proof of linear stability.  

\begin{lemma} The following transport equations hold:
\bea
\nabb_4 \bF+ \frac 3 2 \ka\bF&=&( -3\rho+2\rhoF^2)^{-1}\left(\nabb_4 \tilde{\b} +3\ka \tilde{\b}- 2\rhoF \divv \a\right) \label{nabb-4-bF-tilde-b}
\eea
\beaa
\nabb_3 \bbF+ \left(\frac 3 2 \kab+2\omb\right)\bbF +2\rhoF \xib&=& ( -3\rho+2\rhoF^2)^{-1}\left(\nabb_3 \tilde{\bb} +(3\kab+2\omb) \tilde{\bb}+2\rhoF \divv \aa \right) 
\eeaa
\end{lemma}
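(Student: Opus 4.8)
The plan is to derive each of these two transport equations directly from the definitions of the gauge-invariant quantities $\tilde{\b}$ and $\tilde{\bb}$ in \eqref{definition-tilde-b} together with the linearized Maxwell and Bianchi equations of Section \ref{all-equations}. The key algebraic observation is that $\tilde{\b} = 2\rhoF\b - 3\rho\bF$ is a linear combination of $\b$ and $\bF$ with $r$-dependent (Reissner-Nordstr\"om-background) coefficients, and the coefficient of $\bF$ in the $\nabb_4$-transport equation for $\tilde\b$ will turn out to be proportional to $(-3\rho + 2\rhoF^2)$, which is nonvanishing and hence invertible. Solving for $\nabb_4\bF$ in terms of $\nabb_4\tilde\b$, $\tilde\b$ and the leftover curvature term produces \eqref{nabb-4-bF-tilde-b}.

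\textbf{Key steps for the first equation.} First I would compute $\nabb_4\tilde{\b}$ by the Leibniz rule: $\nabb_4(2\rhoF\b - 3\rho\bF) = 2(\nabb_4\rhoF)\b + 2\rhoF\nabb_4\b - 3(\nabb_4\rho)\bF - 3\rho\nabb_4\bF$. For the derivatives $\nabb_4\rhoF$ and $\nabb_4\rho$ I would use the background transport relations $\nabb_4\rhoF = -\ka\rhoF$ (from \eqref{nabb-4-rhoFlin}, i.e. $\rhoF = Q/r^2$) and $\nabb_4\rho = -\tfrac32\ka\rho - \ka\rhoF^2$ (from \eqref{nabb-4-rlin}, i.e. the background value $\rho = -2M/r^3 + 2Q^2/r^4$ together with $\rhoF = Q/r^2$). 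For $\nabb_4\b$ I would substitute the linearized Bianchi identity \eqref{nabb-4-b-ze-eta}, namely $\nabb_4\b + 2\ka\b = \divv\a + \rhoF\nabb_4\bF$. Collecting terms, the $\nabb_4\bF$ pieces combine into $(-3\rho + 2\rhoF^2)\nabb_4\bF$ up to the explicit lower-order coefficients; the $\b$ terms should assemble into a multiple of $\tilde\b$ (using $\b = (2\rhoF)^{-1}(\tilde\b + 3\rho\bF)$ to re-express residual $\b$ and $\bF$ combinations), and the $\divv\a$ term carries through. Dividing by $(-3\rho + 2\rhoF^2)$ and matching the coefficient $3\ka$ of $\tilde\b$ and the coefficient $\tfrac32\ka$ of $\bF$ gives \eqref{nabb-4-bF-tilde-b}. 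I would double-check the bookkeeping using the identity $2\omb + r\rho = 0$ (equivalently $\omb\ka = -\rho$, which holds in the background) to simplify.

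\textbf{Key steps for the second equation.} The $-1$ version is obtained by the symmetry $e_3 \leftrightarrow e_4$, $\b \to -\bb$, $\bF \to \bbF$, but with the added subtlety that the Bondi gauge is not symmetric: $\xi = 0$ but $\xib \neq 0$, $\om = 0$ but $\omb \neq 0$. So rather than invoking the symmetry blindly, I would repeat the computation with $\tilde{\bb} = 2\rhoF\bb - 3\rho\bbF$, differentiating along $\nabb_3$, using $\nabb_3\rhoF = -\kab\rhoF$ (from \eqref{nabb-3-rhoFlin}) and $\nabb_3\rho = -\tfrac32\kab\rho - \kab\rhoF^2$ (from \eqref{nabb-3-rlin}), and substituting the linearized Bianchi identity \eqref{nabb-3-bb-ze-eta}: $\nabb_3\bb + (2\kab + 2\omb)\bb = -\divv\aa - 3\rho\xib + \rhoF(\nabb_3\bbF + 2\omb\bbF + 2\rhoF\xib)$. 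The extra $\xib$ terms are exactly what produces the $+2\rhoF\xib$ on the left-hand side of the stated equation, and the $\omb$ terms must be tracked carefully through the non-symmetric frame coefficients. The main obstacle is precisely this bookkeeping: ensuring every $\omb$, $\xib$ and background-coefficient term lands in the right place so that the left-hand side organizes into $\nabb_3\bbF + (\tfrac32\kab + 2\omb)\bbF + 2\rhoF\xib$ and the right-hand side into the stated combination of $\nabb_3\tilde{\bb}$, $(3\kab + 2\omb)\tilde{\bb}$, and $\divv\aa$. As for the first equation, using $\omb\ka = -\rho$ (and its analogue) to reduce the coefficients is what makes the identity close cleanly.
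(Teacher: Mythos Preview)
Your proposal is correct and follows essentially the same route as the paper: compute $\nabb_4\tilde\b + 3\ka\tilde\b$ (resp.\ $\nabb_3\tilde\bb + (3\kab+2\omb)\tilde\bb$) using the Leibniz rule, the background transport relations for $\rhoF$ and $\rho$, and the Bianchi identities \eqref{nabb-4-b-ze-eta} (resp.\ \eqref{nabb-3-bb-ze-eta}), then observe that the result factors as $(2\rhoF^2-3\rho)$ times the desired left-hand side plus the $\divv\a$ (resp.\ $\divv\aa$) term. A minor remark: if you compute $\nabb_4\tilde\b + 3\ka\tilde\b$ directly rather than $\nabb_4\tilde\b$ alone, the $\b$ and $\bF$ terms assemble immediately into $(2\rhoF^2-3\rho)(\nabb_4\bF + \tfrac32\ka\bF)$ without any need to re-express $\b$ in terms of $\tilde\b$ or to invoke the identity $\omb\ka = -\rho$; those extra tools you mention are not needed here.
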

\begin{proof} We compute, using Maxwell equations and Bianchi identities:
\beaa
\nabb_4 \tilde{\b} +3\ka \tilde{\b}&=& \nabb_4 (2\rhoF \b-3\rho \bF) +3\ka (2\rhoF \b-3\rho \bF)\\
&=&  2\nabb_4\rhoF \b+2\rhoF \nabb_4\b-3\nabb_4\rho \bF-3\rho \nabb_4\bF +3\ka (2\rhoF \b-3\rho \bF)\\
&=&  2(-\ka \rhoF) \b+2\rhoF (-2\ka \b +\divv \a +\rhoF \nabb_4 \bF)-3(-\frac 3 2 \ka \rho - \ka \rhoF^2) \bF-3\rho \nabb_4\bF \\
&&+3\ka (2\rhoF \b-3\rho \bF)\\
&=&2\rhoF \divv \a +(2\rhoF^2 -3\rho)(\nabb_4 \bF+ \frac 3 2 \ka\bF )
\eeaa
which proves \eqref{nabb-4-bF-tilde-b}. Similarly,
\beaa
\nabb_3 \tilde{\bb} +(3\kab+2\omb) \tilde{\bb}&=&  2\nabb_3\rhoF \bb+2\rhoF \nabb_3\bb-3\nabb_3\rho \bbF-3\rho \nabb_3\bbF +(3\kab+2\omb) (2\rhoF \bb-3\rho \bbF)\\
&=&  2(-\kab \rhoF) \bb+2\rhoF (-2\ka \b-2\omb \bb -\divv \aa-3\rho \xib +\rhoF (\nabb_3 \bbF+2\omb \bbF+2\rhoF \xib))\\
&&-3(-\frac 3 2 \kab \rho - \kab \rhoF^2) \bF-3\rho \nabb_3\bbF +(3\kab +2\omb)(2\rhoF \bb-3\rho \bbF)\\
&=&-2\rhoF \divv \aa +(2\rhoF^2 -3\rho)(\nabb_3 \bbF+ (\frac 3 2 \kab+2\omb)\bbF +2\rhoF \xib)
\eeaa  
\end{proof}

Recall the definition of the the charge aspect function $\check{\nu}$ in \eqref{definition-of-nu}. We compute in the following lemma the transport equation for the charge aspect function. 

\begin{lemma}\label{nabb-3-nu} The function $\check{\nu}$ verifies the following transport equation:
\bea
\nabb_4\check{\nu}&=&\frac 1 2 r^4\DDd_1\DDs_1(\check{\ka}, 0)-2r^4\rhoF^2\check{\ka}+  r^4\divv \divv \chih \label{nabb-4-check-nu}
\eea
\bea
\begin{split}
\nabb_3\check{\nu}=r^4 &\left(2 \DDd_1\DDs_1( \check{\omb}, 0)+\Big(\frac 1 2 \kab+2\omb \right)\divv \left(\ze-\eta\right) +\frac 1 2 \ka \divv\xib -\divv\divv \chibh\\
&-\frac 1 2 \DDd_1\DDs_1( \check{\kab}, 0)-2\rhoF^2 \left(\check{\kab}-\ka\check{\Omegab}\right) \Big)\label{nabb-3-check-nu}
\end{split}
\eea
\end{lemma}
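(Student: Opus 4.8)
The plan is to obtain each identity by differentiating the definition $\check{\nu}=r^4\left(\divv\ze+2\rhoF\check{\rhoF}\right)$ in \eqref{definition-of-nu} along $e_4$ (resp. $e_3$) and systematically feeding in the linearized transport, Bianchi and Maxwell equations of Section \ref{all-equations}, together with the background relations $\ka=2/r$, $\kab=2\Omegab/r$, $\rho=-2M/r^3+2Q^2/r^4$, $\rhoF=Q/r^2$, $\omb=M/r^2-Q^2/r^3$ (so that $2\omb+r\rho=0$, $r^2\rhoF=Q$, $\nabb_4 r=\tfrac r2\ka$, $\nabb_3 r=\tfrac r2\kab$ modulo quadratic error).

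For \eqref{nabb-4-check-nu}: I would write $\nabb_4\check{\nu}=\nabb_4(r^4)\,(\divv\ze+2\rhoF\check{\rhoF})+r^4\nabb_4(\divv\ze+2\rhoF\check{\rhoF})$ with $\nabb_4(r^4)=2\ka r^4$. Using the commutation formula $[\nabb_4,\DDd_1]=-\tfrac12\ka\DDd_1$ (hence $\nabb_4\divv\ze=\divv(\nabb_4\ze)-\tfrac12\ka\divv\ze$), the transport equation \eqref{nabb-4-ze-ze-eta} for $\nabb_4\ze$, and — for the second piece — $\nabb_4\rhoF=-\ka\rhoF$ together with \eqref{nabb-4-check-rhoF-ze-eta} for $\nabb_4\check{\rhoF}$, one obtains $\nabb_4(\divv\ze+2\rhoF\check{\rhoF})$ as a linear combination of $\divv\ze$, $\divv\b$, $\rhoF\divv\bF$, $\rhoF\check{\rhoF}$ and $\rhoF^2\check{\ka}$. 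One then eliminates $\divv\b$ by taking $\divv$ of the Codazzi relation \eqref{Codazzi-chi-ze-eta}, which trades $\divv\b$ for $\divv\divv\chih$, $\DDd_1\DDs_1(\check{\ka},0)$, $\ka\divv\ze$ and $\rhoF\divv\bF$. At this point the $\rhoF\divv\bF$ contributions cancel exactly, and when the $\ka$-weighted $\divv\ze$ and $\rhoF\check{\rhoF}$ terms are combined with the $2\ka r^4$ prefactor they drop out entirely, leaving \eqref{nabb-4-check-nu}. This cancellation is precisely why $\check{\nu}$ — rather than $\divv\ze$ alone — transports well: $\divv\b$ is "radiative" while $\divv\divv\chih$ and $\DDd_1\DDs_1(\check{\ka},0)$ are controllable.

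For \eqref{nabb-3-check-nu}: the same strategy along $e_3$. Here $\nabb_3(r^4)=2\kab r^4$, $[\nabb_3,\DDd_1]=-\tfrac12\kab\DDd_1$, one uses \eqref{nabb-3-ze-ze-eta} for $\nabb_3\ze$ (which introduces $\DDs_1(\check{\omb},0)$, $\eta$, $\xib$, $\bb$, $\rhoF\bbF$), the background relation $\nabb_3\rhoF=-\kab\rhoF$ and \eqref{nabb-3-check-rhoF-ze-eta} for $\nabb_3\check{\rhoF}$. One then removes $\divv\bb$ via $\divv$ of Codazzi \eqref{Codazzi-chib-ze-eta}; the $\kab$-weighted $\divv\ze$ and $\rhoF\check{\rhoF}$ terms are absorbed into the $2\kab r^4$ prefactor, and the remaining Maxwell-type ($\divv\bbF$) contributions are reorganized using \eqref{nabb-3-check-rhoF-ze-eta} — or, equivalently, the Bianchi identity \eqref{nabb-3-bb-ze-eta} for $\bb$ — together with the algebraic identities among the background functions, producing the stated expression in terms of $\DDd_1\DDs_1(\check{\omb},0)$, $\divv(\ze-\eta)$, $\divv\xib$, $\divv\divv\chibh$, $\DDd_1\DDs_1(\check{\kab},0)$ and $\rhoF^2(\check{\kab}-\ka\check{\Omegab})$.

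The main obstacle is the bookkeeping of the charge-weighted terms: one must track every $\rhoF$- and $\rhoF^2$-dependent contribution coming from the three independent sources — the $\rhoF\check{\rhoF}$ piece of $\check{\nu}$ itself, the $\rhoF\bF$/$\rhoF\bbF$ forcing in the equations for $\nabb_4\ze$/$\nabb_3\ze$, and the $\rhoF\check{\rhoF}$ forcing in Codazzi and in the $\check{\rhoF}$ equations — and verify that they conspire to the stated form; the $e_3$ case is more delicate because of the extra Maxwell term that must be re-absorbed. Closing the $e_3$ case also requires the correct use of the average-versus-checked error terms encoded in Corollary \ref{average-check} (this is why the combination $\check{\kab}-\ka\check{\Omegab}$, rather than $\check{\kab}$, appears), which is the one place where the Bondi-gauge relations of Lemma \ref{double-null-Ricci} genuinely enter.
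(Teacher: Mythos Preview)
Your approach is correct and essentially identical to the paper's: differentiate the definition of $\check{\nu}$, commute $\nabb_{3/4}$ with $\divv$, insert the linearized transport equations \eqref{nabb-4-ze-ze-eta}/\eqref{nabb-3-ze-ze-eta} and \eqref{nabb-4-check-rhoF-ze-eta}/\eqref{nabb-3-check-rhoF-ze-eta}, and then eliminate $\divv\b$ (resp.\ $\divv\bb$) via the divergence of Codazzi \eqref{Codazzi-chi-ze-eta} (resp.\ \eqref{Codazzi-chib-ze-eta}); the $r^4$-weight then absorbs the remaining $\ka$- and $\kab$-weighted copies of $\check{\nu}/r^4$. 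Two small over-complications in your sketch: for the $e_3$ identity the paper does \emph{not} need the Bianchi equation \eqref{nabb-3-bb-ze-eta} or a second pass through \eqref{nabb-3-check-rhoF-ze-eta} after the Codazzi substitution --- the $\rhoF\divv\bbF$ contributions balance directly; and the appearance of the combination $\check{\kab}-\ka\check{\Omegab}$ is not a consequence of Corollary \ref{average-check} but simply comes straight from the right-hand side of the linearized Maxwell equation \eqref{nabb-3-check-rhoF-ze-eta}.
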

\begin{proof} We compute, using \eqref{nabb-4-ze-ze-eta} and \eqref{nabb-4-check-rhoF-ze-eta}:
\beaa
\nabb_4\left(\frac{\check{\nu}}{r^4}\right)&=& \nabb_4\left( \divv\ze+2\rhoF\check{\rhoF}\right)\\
&=& \divv(\nabb_4\ze)-\frac 1 2 \ka \divv \ze+2(\nabb_4\rhoF)\check{\rhoF}+2\rhoF\nabb_4\check{\rhoF} \\
&=& \divv(-  \ka\ze - \b  -\rhoF\bF)-\frac 1 2 \ka \divv \ze-2\ka\rhoF\check{\rhoF}+2\rhoF(- \ka \check{\rhoF}- \rhoF\check{\ka} +\divv\bF)
\eeaa
We obtain
\beaa
\nabb_4\left(\frac{\check{\nu}}{r^4}\right)&=& -\divv \b  +\rhoF\divv\bF-\frac 3 2 \ka \divv \ze-4\ka\rhoF\check{\rhoF}-2\rhoF^2\check{\ka}
\eeaa
Taking the divergence of Codazzi equation \eqref{Codazzi-chi-ze-eta}, we can write
\beaa
 -\divv\b +\rhoF\divv\bF&=&\divv \divv \chih-\frac 1 2 \ka \divv\ze+\frac 1 2 \DDd_1\DDs_1(\ka, 0)
\eeaa
Substituting in the above we finally obtain
\beaa
\nabb_4\left(\frac{\check{\nu}}{r^4}\right)&=& \divv \divv \chih-\frac 1 2 \ka \divv\ze+\frac 1 2 \DDd_1\DDs_1(\ka, 0)-\frac 3 2 \ka \divv \ze-4\ka\rhoF\check{\rhoF}-2\rhoF^2\check{\ka}\\
&=&-2 \ka \divv \ze-4\ka\rhoF\check{\rhoF}+ \divv \divv \chih+\frac 1 2 \DDd_1\DDs_1(\ka, 0)-2\rhoF^2\check{\ka}\\
&=& -2 \ka \left(\frac{\check{\mu}}{r^4}\right)+ \divv \divv \chih+\frac 1 2 \DDd_1\DDs_1(\check{\ka}, 0)-2\rhoF^2\check{\ka}
\eeaa
as desired.
We also compute, using \eqref{nabb-3-ze-ze-eta} and \eqref{nabb-3-check-rhoF-ze-eta}:
\beaa
\nabb_3\left(\frac{\check{\nu}}{r^4}\right)&=& \nabb_3\left( \divv\ze+2\rhoF\check{\rhoF}\right)\\
&=& \divv(\nabb_3\ze)-\frac 1 2 \kab \divv \ze+2(\nabb_3\rhoF)\check{\rhoF}+2\rhoF\nabb_3\check{\rhoF}\\
&=& \divv(- \left(\frac 1 2 \kab-2\omb \right)\ze+2 \DDs_1( \check{\omb}, 0)-\left(\frac 1 2 \kab+2\omb \right)\eta +\frac 1 2 \ka \xib - \bb  -\rhoF\bbF)\\
&&-\frac 1 2 \kab \divv \ze-2\kab \rhoF\check{\rhoF}+2\rhoF (- \kab \check{\rhoF}- \rhoF\left(\check{\kab}-\ka\check{\Omegab}\right)  -\divv\bbF)\\
&=& - \left( \kab-2\omb \right)\divv\ze+2 \DDd_1\DDs_1( \check{\omb}, 0)-\left(\frac 1 2 \kab+2\omb \right)\divv\eta +\frac 1 2 \ka \divv\xib - \divv\bb  +\rhoF\divv\bbF\\
&&-4\kab \rhoF\check{\rhoF}-2\rhoF^2 \left(\check{\kab}-\ka\check{\Omegab}\right) 
\eeaa
Taking the divergence of Codazzi equation \eqref{Codazzi-chib-ze-eta}, we can write
\beaa
-\divv\bb+\rhoF\divv\bbF&=& -\divv\divv \chibh-\frac 1 2 \kab \divv\ze-\frac 1 2 \DDd_1\DDs_1( \check{\kab}, 0) 
\eeaa
Substituting in the above, we finally have
\beaa
\nabb_3\left(\frac{\check{\nu}}{r^4}\right)&=& - \left( \frac 3 2 \kab-2\omb \right)\divv\ze+2 \DDd_1\DDs_1( \check{\omb}, 0)-\left(\frac 1 2 \kab+2\omb \right)\divv\eta +\frac 1 2 \ka \divv\xib -\divv\divv \chibh-\frac 1 2 \DDd_1\DDs_1( \check{\kab}, 0)\\
&&-4\kab \rhoF\check{\rhoF}-2\rhoF^2 \left(\check{\kab}-\ka\check{\Omegab}\right) \\
&=& -  2 \kab \divv\ze-4\kab \rhoF\check{\rhoF}+2 \DDd_1\DDs_1( \check{\omb}, 0)+\left(\frac 1 2 \kab+2\omb \right) \left(\divv \ze-\divv\eta\right) +\frac 1 2 \ka \divv\xib -\divv\divv \chibh\\
&&-\frac 1 2 \DDd_1\DDs_1( \check{\kab}, 0)-2\rhoF^2 \left(\check{\kab}-\ka\check{\Omegab}\right) 
\eeaa
as desired.
\end{proof}

Recall the definition of the the mass-charge aspect function $\check{\mu}$ in \eqref{definition-of-mu}. We compute in the following lemma the transport equation for the mass-charge aspect function. 

\begin{lemma}\label{nabb-3-mu} The function $\check{\mu}$ verifies the following transport equations:
\bea\label{nabb-4-check-mu}
\nabb_4\check{\mu}&=&-r^3\left(\frac 3 2 \rho-7\rhoF^2\right)\check{\ka}+ O(r^{-1-\de}u^{-1+\de}) 
\eea
\bea\label{nabb-3-check-mu}
\begin{split}
\nabb_3\check{\mu}&=r^3\Big(2 \DDd_1\DDs_1( \check{\omb}, 0)+\left(\frac 1 2 \kab+2\omb \right)(\divv\ze-\divv\eta) +\frac 1 2 \ka \divv\xib - 2\divv\bb  +2\rhoF\divv\bbF\\
& -\left(\frac 3 2  \rho-3\rhoF^2\right)(\check{\kab}-\ka \check{\Omegab}) -4\omb r\rhoF \divv\bF +2r\rhoF \DDd_1\DDs_1(\check{\rhoF}, \check{\sigmaF}) -4r\rhoF^2 \divv\eta\Big)
\end{split}
\eea
\end{lemma}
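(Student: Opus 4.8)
The statement to prove is Lemma~\ref{nabb-3-mu}: the transport equations \eqref{nabb-4-check-mu} and \eqref{nabb-3-check-mu} for the mass-charge aspect function $\check{\mu}=r^3\left(\divv\ze+\check{\rho}-4\rhoF\check{\rhoF}\right)-2r^4\rhoF\divv\bF$.

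\textbf{Approach.} The plan is to differentiate the defining expression term by term, using the linearized null structure equations, Maxwell equations and Bianchi identities from Section~\ref{all-equations}, exactly as was done for $\check{\nu}$ in Lemma~\ref{nabb-3-nu}. Concretely, I would first compute $\nabb_4\left(r^{-3}\check{\mu}\right)=\nabb_4\left(\divv\ze+\check{\rho}-4\rhoF\check{\rhoF}-2r\rhoF\divv\bF\right)$ and $\nabb_3\left(r^{-3}\check{\mu}\right)$ separately, then multiply back by $r^3$ using Proposition~\ref{nabb-4-int-f} (so that $\nabb_4(r^3 f)=r^3(\nabb_4 f+\tfrac32\ka f)$ and similarly for $\nabb_3$, up to the error terms that are absorbed into the stated $O(\cdot)$).

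\textbf{Key steps for the $e_4$ equation.} First commute $\nabb_4$ with $\divv$ using \eqref{commutator-nabb-4-divv}: $\nabb_4\divv\ze=\divv\nabb_4\ze-\tfrac12\ka\divv\ze$, then insert \eqref{nabb-4-ze-ze-eta} for $\nabb_4\ze$. Next use \eqref{nabb-4-check-rho-ze-eta} for $\nabb_4\check{\rho}$ and \eqref{nabb-4-check-rhoF-ze-eta} for $\nabb_4\check{\rhoF}$ (together with $\nabb_4\rhoF=-\ka\rhoF$), and \eqref{nabb-4-bF-ze-eta} for $\nabb_4\bF$ after commuting with $\divv$; note $\nabb_4(2r\rhoF)=0$ since $r^2\rhoF=Q$ is constant and $\nabb_4 r=\tfrac r2\ka$. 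Collecting terms, the divergence of the Codazzi equation \eqref{Codazzi-chi-ze-eta} should be used to eliminate the combination $-\divv\b+\rhoF\divv\bF$ in favor of $\divv\divv\chih$, and then — crucially — all the $\divv\divv\chih$ contributions should cancel against each other, as they do in the $\check{\nu}$ computation, since $\check{\mu}$ is designed precisely so that its $e_4$-derivative is a transport equation with no curvature derivatives on the right, only $\check{\ka}$ and the gauge-invariant lower-order terms estimated as $O(r^{-1-\de}u^{-1+\de})$ via Theorem~\ref{estimates-theorem}. I expect the residual coefficient of $\check{\ka}$ to work out to $-r^3\left(\tfrac32\rho-7\rhoF^2\right)$ after carefully combining the contributions from $\check{\rho}$, $\check{\rhoF}$ and the $\rho+\rhoF^2$ factors appearing in \eqref{nabb-4-check-rho-ze-eta}; this bookkeeping is the routine-but-error-prone part.

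\textbf{Key steps for the $e_3$ equation and the main obstacle.} The $e_3$ computation is more involved: one commutes $\nabb_3$ with $\divv$ via \eqref{commutator-nabb-4-divv}, uses \eqref{nabb-3-ze-ze-eta}, \eqref{nabb-3-check-rho-ze-eta}, \eqref{nabb-3-check-rhoF-ze-eta} and \eqref{nabb-3-bF-ze-eta} (noting that $\nabb_3$ of $r\rhoF$ is not zero — here $\nabb_3 r=\tfrac r2(\kab+\underline A)$ and $\nabb_3\rhoF=-\kab\rhoF$, with the $\underline A$ error absorbable), and then eliminates $-\divv\bb+\rhoF\divv\bbF$ using the divergence of Codazzi \eqref{Codazzi-chib-ze-eta} in favor of $-\divv\divv\chibh-\tfrac12\kab\divv\ze-\tfrac12\DDd_1\DDs_1(\check{\kab},0)$. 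The $\DDd_1\DDs_1(\check{\kab},0)$ and $\divv\divv\chibh$ terms must then be checked to reassemble into the stated right-hand side of \eqref{nabb-3-check-mu}. The main obstacle — as always in these identities — is keeping track of the many $O(r^{-p}u^{-q})$-type error terms: the Bianchi and Maxwell equations carry factors like $\rho+\rhoF^2$ and products $\rhoF\divv(\cdot)$ which are \emph{not} negligible in the $e_3$ equation (they appear explicitly in \eqref{nabb-3-check-mu}) but whose \emph{lower-order} pieces involving gauge-invariant quantities must be correctly identified so that nothing is silently dropped; in the $e_4$ equation one additionally needs the sharp $O(r^{-1-\de}u^{-1+\de})$ bound, which forces one to track exactly which combinations are expressible through $\a$, $\ff$, $\tilde\b$ rather than through gauge-dependent quantities. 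Once the cancellations are verified, no further ideas are needed — the lemma follows by direct substitution.
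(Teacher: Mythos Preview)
Your approach is essentially the same as the paper's --- differentiate the definition term by term using the linearized equations and collect --- and would succeed, but two points deserve correction.

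First, there is no linearized equation ``\eqref{nabb-4-bF-ze-eta}'' for $\nabb_4\bF$ in Section~\ref{all-equations}; the Maxwell system only gives $\nabb_3\bF$ and $\nabb_4\bbF$. The paper instead uses the derived identity \eqref{nabb-4-bF-tilde-b}, which writes $\nabb_4\bF+\tfrac32\ka\bF$ purely in terms of the gauge-invariant quantities $\tilde\b$ and $\a$. This is exactly what is needed to certify that the $-2r^4\rhoF\divv\bF$ contribution to $\nabb_4\check\mu$ produces only $4r^3\rhoF\divv\bF$ plus an $O(r^{-1-\de}u^{-1+\de})$ error; without it you cannot justify the error bound.

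Second, the Codazzi substitutions you plan are unnecessary and in the $e_3$ case would take you away from the stated form. For $\nabb_4\check\mu$, the $-\divv\b-\rhoF\divv\bF$ coming from $\nabb_4\divv\ze$ cancels \emph{directly} against the $+\divv\b+\rhoF\divv\bF$ from $\nabb_4\check\rho$ (this is the reason $\check\rho$ is included in $\check\mu$, unlike in $\check\nu$); no $\divv\divv\chih$ ever appears. For $\nabb_3\check\mu$, the contributions $-\divv\bb-\rhoF\divv\bbF$ from both $\nabb_3\divv\ze$ and $\nabb_3\check\rho$, together with $+4\rhoF\divv\bbF$ from the $-4\rhoF\check\rhoF$ term, combine to $-2\divv\bb+2\rhoF\divv\bbF$, which is precisely the term appearing in \eqref{nabb-3-check-mu}; the paper does not convert these via Codazzi into $\divv\divv\chibh$ or $\DDd_1\DDs_1(\check\kab,0)$. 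Finally, the passage from $\nabb_3(r^{-3}\check\mu)$ to $\nabb_3\check\mu$ absorbs the leftover $-\tfrac32\kab$ multiples of $\divv\ze,\check\rho,\rhoF\check\rhoF,r\rhoF\divv\bF$ by recognizing them as $-\tfrac32\kab\,r^{-3}\check\mu$.
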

\begin{proof} We compute $\nabb_4\check{\mu}$.

Commuting \eqref{nabb-4-ze-ze-eta} with $r\divv$ we obtain
\beaa
\nabb_4 (r\divv\ze)+\ka r\divv\ze &=&-r\divv\b  -\rhoF r\divv\bF
\eeaa
which can be written as 
\bea\label{nabb-4-nu-part-1}
\nabb_4 (r^3\divv\ze)&=&-r^3\divv\b  -\rhoF r^3\divv\bF
\eea
Equation \eqref{nabb-4-check-rho-ze-eta} can be written as 
\bea\label{nabb-4-nu-part-2}
\nabb_4(r^3\check{\rho})&=&-r^3\left(\frac 3 2 \rho+\rhoF^2\right)\check{\ka}-4r^2\rhoF\check{\rhoF} +r^3\divv\b+r^3\rhoF \ \divv\bF
\eea
Equation \eqref{nabb-4-check-rhoF-ze-eta} implies
\beaa
\nabb_4(\rhoF\check{ \rhoF})&=& -\ka\rhoF \check{ \rhoF}+\rhoF (- \ka \check{\rhoF}-\rhoF \check{\ka}+\divv\bF)\\
&=& -2\ka\rhoF \check{ \rhoF}-\rhoF^2 \check{\ka}+\rhoF \divv\bF
\eeaa
which can be written as 
\bea\label{nabb-4-nu-part-3}
\nabb_4(-4r^3\rhoF\check{ \rhoF})&=& 4 r^2 \rhoF \check{ \rhoF}-4r^3\rhoF \divv\bF+4r^3\rhoF^2 \check{\ka}
\eea
Commuting \eqref{nabb-4-bF-tilde-b} with $\divv$ and using estimates for $\tilde{\b}$ we obtain
\beaa
\nabb_4 \divv\bF+ 2 \ka\divv\bF&=&O(r^{-3-\de}u^{-1+\de}) 
\eeaa
which implies 
\beaa
\nabb_4(\rhoF \divv \bF)&=& -\ka\rhoF \divv \bF+\rhoF (- 2 \ka\divv\bF+O(r^{-3-\de}u^{-1+\de}) )\\
&=& -3\ka\rhoF \divv \bF+ O(r^{-5-\de}u^{-1+\de}) 
\eeaa
which can be written as
\bea\label{nabb-4-nu-part-4}
\nabb_4(-2r^4\rhoF \divv \bF)&=& 4r^3\rhoF \divv \bF+ O(r^{-1-\de}u^{-1+\de}) 
\eea
Summing \eqref{nabb-4-nu-part-1}, \eqref{nabb-4-nu-part-2}, \eqref{nabb-4-nu-part-3}, \eqref{nabb-4-nu-part-4}, we obtain
\beaa
\nabb_4\check{\mu}&=& -r^3\divv\b  -\rhoF r^3\divv\bF-4r^2\rhoF\check{\rhoF} +r^3\divv\b+r^3\rhoF \ \divv\bF\\
&&+4 r^2 \rhoF \check{ \rhoF}-4r^3\rhoF \divv\bF+4r^3\rhoF^2 \check{\ka}+4r^3\rhoF \divv \bF-r^3\left(\frac 3 2 \rho-3\rhoF^2\right)\check{\ka}\\
&&+ O(r^{-1-\de}u^{-1+\de}) \\
&=&-r^3\left(\frac 3 2 \rho-7\rhoF^2\right)\check{\ka}+ O(r^{-1-\de}u^{-1+\de}) 
\eeaa
Using the definition \eqref{definition-of-mu}, we compute
\beaa
\nabb_3\left(\frac{\check{\mu}}{r^3}\right)&=&\nabb_3( \divv \ze+ \check{\rho}-4\rhoF \check{\rhoF} -2r\rhoF \divv\bF)\\
&=&- \left( \kab-2\omb \right)\divv\ze+2 \DDd_1\DDs_1( \check{\omb}, 0)-\left(\frac 1 2 \kab+2\omb \right)\divv\eta +\frac 1 2 \ka \divv\xib - \divv\bb  -\rhoF\divv\bbF\\
&&-\frac 3 2 \kab \check{\rho} -\left(\frac 3 2  \rho+\rhoF^2\right)(\check{\kab}-\ka \check{\Omegab})-2\kab\rhoF\check{\rhoF}-\divv\bb-\rhoF \ \divv\bbF\\
&&+4\kab\rhoF \check{\rhoF}-4\rhoF (- \kab \check{\rhoF}- \rhoF\left(\check{\kab}-\ka\check{\Omegab}\right)  -\divv\bbF ) \\
&&-\kab r\rhoF \divv\bF+2r\kab\rhoF \divv\bF
-2r\rhoF (-\left( \kab-2\omb\right)\divv \bF -\DDd_1\DDs_1(\check{\rhoF}, \check{\sigmaF}) +2\rhoF \divv\eta)
\eeaa
which gives
\beaa
\nabb_3\left(\frac{\check{\mu}}{r^3}\right)&=&- \frac 3 2 \kab\divv\ze+2 \DDd_1\DDs_1( \check{\omb}, 0)+\left(\frac 1 2 \kab+2\omb \right)(\divv\ze-\divv\eta) +\frac 1 2 \ka \divv\xib - 2\divv\bb  +2\rhoF\divv\bbF\\
&&-\frac 3 2 \kab \check{\rho} -\left(\frac 3 2  \rho-3\rhoF^2\right)(\check{\kab}-\ka \check{\Omegab})+6\kab\rhoF\check{\rhoF}\\
&&+\left( 3\kab-4\omb\right) r\rhoF \divv\bF +2r\rhoF \DDd_1\DDs_1(\check{\rhoF}, \check{\sigmaF}) -4r\rhoF^2 \divv\eta
\eeaa
Using again the definition of $\check{\mu}$ in the above, we obtain \eqref{nabb-3-check-mu}. 
\end{proof}



\end{document}